\pgfplotsset{compat=1.16}
\newtheorem{theorem}{Theorem}[section]
\newtheorem{corollary}[theorem]{Corollary}
\newtheorem{lemma}[theorem]{Lemma}
\newtheorem{claim}[theorem]{Claim}
\newtheorem{definition}[theorem]{Definition}
\newtheorem{observation}[theorem]{Observation}
\newtheorem{result}{Result}
\numberwithin{equation}{section}
\renewcommand{\tilde}{\widetilde}
\renewcommand{\bar}{\overline}
\newcommand{\indicator}{\mathbbm 1}
\newcommand{\ind}{k}
\newcommand{\opt}{\textbf{OPT}}
\newcommand{\greedy}{\mathcal{A}_{BPB}}
\DeclareMathOperator{\poly}{poly}
\DeclareMathOperator{\spn}{\textbf{Span}}
\DeclareMathOperator{\OPT}{OPT}
\DeclareMathOperator*{\argmin}{arg\,min}
\newcommand{\rank}{\textbf{Rank}}
\let\vec\mathbf
\def\min{\qopname\relax n{min}}
\def\max{\qopname\relax n{max}}
\def\argmin{\qopname\relax n{argmin}}
\def\argmax{\qopname\relax n{argmax}}
\def\Pr{\qopname\relax n{\mathbf{Pr}}}
\def\I{\mathcal{I}}
\def\M{\mathcal{M}}
\def\R{\mathbb{R}}
\newcommand{\cpv}{{\normalfont\textsc{Cpv}}}
\newcommand{\optbpb}{{\normalfont\textsc{OptBpb}}}
\newcommand{\vecc}{\vec{c}}
\newcommand{\vecv}{\vec{v}}
\newcommand{\vecp}{\vec{p}}
\def\eps{\varepsilon}
\renewcommand{\vec}{\mathbf}
\newcommand{\eat}[1]{}
\newcommand{\SE}{S_{\vec {\bar p}}}
\newcommand{\barp}{\vec {\bar p}}
\newenvironment{lp*}{\begin{equation*}  \begin{array}{lll}}{\end{array}\end{equation*}}
\newcommand{\instance}{\langle M, \vec v, \vec c, \mathcal I, B\rangle }
\newcommand{\rejectedCoords}[1]{
\addplot[only marks, color=red, mark=x, mark size=3pt] coordinates {#1}
}
\newcommand{\acceptedCoords}[1]{
\addplot[only marks, color=blue] coordinates {#1}
}
\newcommand{\uninspectedCoords}[1]{
\addplot[only marks, mark=text, text mark=?, color=blue] coordinates {#1}
}
\title{Equilibria and Learning in Modular Marketplaces}
\author{
    Kshipra Bhawalkar \\ Google \\ \texttt{kshipra@google.com} \and
    Jeff Dean \\ Google \\ \texttt{jeff@google.com} \and
    Christopher Liaw \\ Google \\ \texttt{cvliaw@google.com} \and
    Aranyak Mehta \\ Google \\ \texttt{aranyak@google.com} \and
    Neel Patel \\ University of Southern California \\ \texttt{neelbpat@usc.edu}
}
\date{\today}
\newcommand{\plotWeightedMatroidIterationOne}[1]{
\begin{tikzpicture}[scale=#1]
  \begin{axis}[
    xlabel={$v$ (value)},
    ylabel={$p$ (price)},
    title={Budget 15, step 1},
    xmin=0, xmax=6,
    ymin=0, ymax=11,
    axis x line*=bottom,
    axis y line*=left,
  ]
    \acceptedCoords{
      (3, 1) 
    };

    \uninspectedCoords{
      (4, 2)
      (1, 1)
      (5, 6)
      (1, 5)
    };

    \node[above] at (axis cs:3,1) {$1$};
    \node[below] at (axis cs:3,1) {$(3, 1)$};

    \node[above] at (axis cs:4,2) {$2$};
    \node[below] at (axis cs:4,2) {$(4, 2)$};

    \node[above] at (axis cs:1,1) {$3$};
    \node[below] at (axis cs:1,1) {$(1, 1)$};

    \node[above] at (axis cs:5,6) {$4$};
    \node[below] at (axis cs:5,6) {$(5, 6)$};
    
    \node[above] at (axis cs:1,5) {$5$};
    \node[below] at (axis cs:1,5) {$(1, 5)$};

    \addplot[dashed, domain=0:6] {x / 3};
  \end{axis}
\end{tikzpicture}
}
\newcommand{\plotWeightedMatroidIterationTwo}[1]{
\begin{tikzpicture}[scale=#1]
  \begin{axis}[
    xlabel={$v$ (value)},
    ylabel={$p$ (price)},
    title={Budget 15, step 2},
    xmin=0, xmax=6,
    ymin=0, ymax=11,
    axis x line*=bottom,
    axis y line*=left,
  ]
    \acceptedCoords{
      (3, 1.5)
      (4, 2)
    };

    \uninspectedCoords{
      (1, 1)
      (5, 6)
      (1, 5)
    };

    \node[above] at (axis cs:3,1.5) {$1$};
    \node[below] at (axis cs:3,1.5) {$(3, 1.5)$};

    \node[above] at (axis cs:4,2) {$2$};
    \node[below] at (axis cs:4,2) {$(4, 2)$};

    \node[above] at (axis cs:1,1) {$3$};
    \node[below] at (axis cs:1,1) {$(1, 1)$};

    \node[above] at (axis cs:5,6) {$4$};
    \node[below] at (axis cs:5,6) {$(5, 6)$};
    
    \node[above] at (axis cs:1,5) {$5$};
    \node[below] at (axis cs:1,5) {$(1, 5)$};

    \addplot[dashed, domain=0:6] {x / 2};
  \end{axis}
\end{tikzpicture}
}
\newcommand{\plotWeightedMatroidIterationThree}[1]{
\begin{tikzpicture}[scale=#1]
  \begin{axis}[
    xlabel={$v$ (value)},
    ylabel={$p$ (price)},
    title={Budget 15, step 3},
    xmin=0, xmax=6,
    ymin=0, ymax=11,
    axis x line*=bottom,
    axis y line*=left,
  ]
    \acceptedCoords{
      (3, 3)
      (4, 4)
      (1, 1)
    };

    \rejectedCoords{
      (5, 6)
    };
    \uninspectedCoords{
      (1, 5)
    };

    \node[above] at (axis cs:3,3) {$1$};
    \node[below] at (axis cs:3,3) {$(3, 3)$};

    \node[above] at (axis cs:4,4) {$2$};
    \node[below] at (axis cs:4,4) {$(4, 4)$};

    \node[above] at (axis cs:1,1) {$3$};
    \node[below] at (axis cs:1,1) {$(1, 1)$};

    \node[above] at (axis cs:5,6) {$4$};
    \node[below] at (axis cs:5,6) {$(5, 6)$};
    
    \node[above] at (axis cs:1,5) {$5$};
    \node[below] at (axis cs:1,5) {$(1, 5)$};

    \addplot[dashed, domain=0:6] {x};
  \end{axis}
\end{tikzpicture}
}
\newcommand{\plotWeightedMatroidIterationFour}[1]{
\begin{tikzpicture}[scale=#1]
  \begin{axis}[
    xlabel={$v$ (value)},
    ylabel={$p$ (price)},
    title={Budget 15, step 4},
    xmin=0, xmax=6,
    ymin=0, ymax=11,
    axis x line*=bottom,
    axis y line*=left,
  ]
    \acceptedCoords{
      (4, 4.8)
      (1, 1.2)
      (5, 6)
    };

    \rejectedCoords{
      (3, 3.6)
     };
    \uninspectedCoords{
      (1, 5)
    };

    \node[above] at (axis cs:3,3.6) {$1$};
    \node[below] at (axis cs:3,3.6) {$(3, 3.6)$};

    \node[above] at (axis cs:4,4.8) {$2$};
    \node[below] at (axis cs:4,4.8) {$(4, 4.8)$};

    \node[above] at (axis cs:1,1.2) {$3$};
    \node[below] at (axis cs:1,1.2) {$(1, 1.2)$};

    \node[above] at (axis cs:5,6) {$4$};
    \node[below] at (axis cs:5,6) {$(5, 6)$};
    
    \node[above] at (axis cs:1,5) {$5$};
    \node[below] at (axis cs:1,5) {$(1, 5)$};

    \addplot[dashed, domain=0:6] {1.2*x};
  \end{axis}
\end{tikzpicture}
}
\newcommand{\plotWeightedMatroidIterationFive}[1]{
\begin{tikzpicture}[scale=#1]
  \begin{axis}[
    xlabel={$v$ (value)},
    ylabel={$p$ (price)},
    title={Budget 15, step 5},
    xmin=0, xmax=6,
    ymin=0, ymax=11,
    axis x line*=bottom,
    axis y line*=left,
  ]
    \acceptedCoords{
      (4, 6)
      (1, 1.5)
      (5, 7.5)
    };

    \rejectedCoords{
      (3, 3.6)
      (1, 5)
    };

    \node[above] at (axis cs:3,3.6) {$1$};
    \node[below] at (axis cs:3,3.6) {$(3, 3.6)$};

    \node[above] at (axis cs:4,6) {$2$};
    \node[below] at (axis cs:4,6) {$(4, 6)$};

    \node[above] at (axis cs:1,1.5) {$3$};
    \node[below] at (axis cs:1,1.5) {$(1, 1.5)$};

    \node[above] at (axis cs:5,7.5) {$4$};
    \node[below] at (axis cs:5,7.5) {$(5, 7.5)$};
    
    \node[above] at (axis cs:1,5) {$5$};
    \node[below] at (axis cs:1,5) {$(1, 5)$};

    \addplot[dashed, domain=0:6] {1.5*x};
  \end{axis}
\end{tikzpicture}
}
\begin{document}

\maketitle

\begin{abstract}
We envision a marketplace where diverse entities offer specialized ``modules'' through APIs, allowing users to compose the outputs of these modules for complex tasks within a given budget. This paper studies the market design problem in such an ecosystem, where module owners strategically set prices for their APIs (to maximize their profit) and a central platform orchestrates the aggregation of module outputs at query-time. One can also think about this as a first-price procurement auction with budgets. The first observation is that if the platform’s algorithm is to find the optimal set of modules then this could result in a poor outcome, in the sense that there are price equilibria which provide arbitrarily low value for the user. We show that under a suitable version of the ``bang-per-buck'' algorithm for the knapsack problem, an $\eps$-approximate equilibrium always exists, for any arbitrary $\eps > 0$. Further, our first main result shows that with this algorithm any such equilibrium provides a constant approximation to the optimal value that the buyer could get under various constraints including (i) a budget constraint and (ii) a budget and a matroid constraint. Finally, we demonstrate that these efficient equilibria can be learned  through  decentralized  price  adjustments  by  module  owners  using  no-regret  learning  algorithms.
\end{abstract}

\newcommand{\tikzscale}{0.6}

\section{Introduction}
Our work is inspired by a marketplace model where sellers may participate to provide some product or service for a fee and a budget-constrained buyer whose goal is to purchase from some subset of sellers to maximize some objective function. Our primary motivation is in a computing ecosystem where the sellers may be offering specialized ``modules'' which are accessed through Application Programming Interfaces (API). The sellers would offer a price which must be paid every time a buyer issues an API call to their module. These modules may be specialized for particular tasks and may be built with proprietary data or with proprietary techniques.

We define and study the market design problem inherent in building such an ecosystem. We consider the following setup:
\begin{enumerate}[label=(\alph*),itemsep=0pt,topsep=0pt]
\item Module owners provide their module API at a price per API call.

\item Users submit requests and a budget to a central platform.

\item The platform calls a feasible subset of available APIs, respecting the budget constraint.

\item The platform aggregates the results which is provided back to the user.
\end{enumerate}

One particular example of such an ecosystem stems from the recent development of large language models (LLMs). While there may be a few flagship models that boast multiple generic capabilities, we are also witnessing a rise in specialized LLMs tailored for specific domains such as code generation (e.g., \cite{li2022competition, roziere2023code, guo2024deepseek}), medical knowledge (e.g., \cite{singhal2023large, singhal2023towards, luo2022biogpt}), legal tasks (e.g., \cite{wehnert2022applying, shui2023comprehensive}), and mathematical problem-solving (e.g., \cite{trinh2024solving}). In such an ecosystem, one can imagine that a user first enters a prompt into the generic LLM along with a budget constraint. Next, the generic LLM produces an answer by combining results from a subset of the available modules that respects the budget constraint.
Another example that fits into our marketplace model are labor markets such as Upwork \cite{upwork}. In such a market, the platform may allow workers to post hourly wages and then buyers need to assemble teams based on the workers' skills and a total budget.

The marketplace problem in this paper can also be thought of as a procurement auction. More specifically, the mechanism we consider in this paper is some type of a first-price auction. The vast majority of the literature on procurement auctions focuses on designing mechanisms that are truthful \citep{singer2010budget,chen2011approximability, balkanski2022deterministic, han2024triple, ensthaler2014dynamic, jarman2017ex}. This necessitates having a centralized platform that determines the price of every module depending on the prices of every other module. In contrast, a marketplace can be more decentralized; the price of every module can be chosen more or less independently.
One work that studies first-price procurement auctions is \cite{milgrom2020clock}.
\citet{milgrom2020clock} show that descending auctions, which are truthful, can be converted to first-price auctions.
However, one issue with such a reduction, as pointed out by \citet{milgrom2020clock}, is that these could produce additional equilibria for which there are no guarantees on their efficiency.
Thus we can also view our work as understanding what types of allocation ensure that a first-price auction contain only (approximately) efficient equilibria.
We discuss more about procurement auctions in Section~\ref{subsec:related}.

\subsection{Model and Results}
\label{subsec:model_and_results}
In this model, API prices are set by the module owners.  These prices depend on various factors.
First, module owners incur costs for module production and maintenance, including data licensing, partnerships, engineering, and compute. The module owner thus might have a private cost $c$ per API call that it wishes to recoup. The module owner can post a price $p$ per API call with the goal of optimizing its profit of $p - c$, whenever the API is called. 

Depending on a particular request, the different modules may be complementary or substitutable. Going back to the modular LLM example in the introduction, it might be that the user issued a prompt to understand treatment options for some condition. A medical module may be used to help diagnose the patient while a legal module may be useful to clarify insurance policies. However, multiple medical or legal modules may provide limited value and thus are substitutable. We assume a value function that captures the value from any subset of modules. We also assume a budget $B$ on the total that can be spent on any request. The goal is to pick a subset of modules to optimize the value subject to the budget constraint.

The value function and the budget create competition between the module owners and the module owners must strategically set their prices to maximize their profit while ensuring that they are selected. Our results concern the equilibria of this price setting game, where module owners facing repeated queries from the users do not wish to deviate from the API price they have set. 

We would like to understand the existence, potential inefficiency, and learnability of such equilibria. 
Our benchmark for inefficiency is $\opt$, which is the value obtained by the optimal allocation when each module owner sets their price equal to their cost $c$.
Our first, fairly trivial, result shows that the most natural selection rule fails miserably.
\begin{result}
If, given a set of prices, the platform always selects the optimal set of modules subject to the budget constraint then there is an equilibrium where all modules set prices equal to the budget and the user can obtain only one module. 
\end{result}
To see this, imagine that there are $n$ modules who all have cost $0$ and our budget is $1$.
Module $1$ has value $1+\eps$ while all other modules have value $1$.
The optimal value is $n+\eps$ but if all modules set a price of $1$ then we only obtain module $1$ for a value of $1+\eps$.
Moreover, no module can deviate by itself to get selected so this is, in fact, an equilibrium.

At this point, the most pressing question is whether or not there even exists an allocation rule which guarantees that all equilibria are efficient.
A second question is whether or not the allocation can be computed efficiently.
Our main result is that there is an efficient algorithm, based on a natural greedy algorithm, that guarantees that all equilibria are approximately optimal.
This result holds for a large set of value functions generalizing additive values, as described next.

\paragraph{Generalized Additive Valuations}
We assume that the value for a set of modules is additive over the set of selected modules. 
Further, we also assume that there is a matroid constraint that specifies which subset of modules is feasible to select; this captures substitutability between modules.
Matroid feasibility constraints capture many diverse settings.
For instance, a linear matroid models scenarios where a set of modules $S_1$ ``covers'' the content of another set $S_2$ 
(e.g., $S_2$ is linearly dependent on $S_1$), rendering $S_2$ redundant. A partition matroid, a special case, divides modules by expertise, allowing only one module per partition (e.g., medical or legal modules).

We consider the natural ``bang-per-buck'' algorithm, $\greedy$, which works as follows.
We first sort the modules from highest to lowest bang-per-buck (i.e.~$\frac{\vec v(i)}{\vec p(i)}$).
Then we select modules in this order while maintaining both budget and matroid feasibility.
The stopping rule of this algorithm is the \emph{first} time that the budget constraint is violated.
We give a precise definition of this algorithm can be found in Algorithm~\ref{alg:greedy}.

Our first main result, stated informally here, is that for such a selection rule, equilibria exist.
\begin{result}
\label{result:eq_exists}
When the platform implements $\greedy$, equilibria exists where no seller can profitably deviate.
\end{result}
We note that our formal results prove the existence of ``$\eps$-equilibria'' where no module can gain more than $\eps$ utility by deviating.
This is mainly to deal with tie-breaking and we can make $\eps$ arbitrarily small.
The formal results are stated in Theorem~\ref{theorem:additive_eq_exists} (for only a budget constraint), Theorem~\ref{thm:unweighted_eq_exist} (for uniform values with matroid and budget constraints), and Theorem~\ref{thm:weighted_matroid_eq_exists} (for general additive values with matroid and budget constraints).

Our second main result is that all equilibria are efficient.
\begin{result}
\label{result:approx}
Suppose that the platform implements $\greedy$ for module selection and that the cost of any module is at most $\lambda B$ for some $\lambda < 1$ (that is, it is small relative to the budget).
Let $\OPT$ be the maximum value achievable when the costs are \emph{known}.
Then any equilibrium yields a value of at least:
\begin{itemize}
\item $\frac{(1-\lambda)^2}{2 - \lambda} \cdot \OPT$ when there is only a budget constraint (Theorem~\ref{thm:additive_approx});
\item $\frac{(1-\lambda)^2}{2 - \lambda} \cdot \OPT$ when there is a matroid and a budget constraint but the values are uniform across all modules (Theorem~\ref{theorem:unweighted_approx}); and
\item $\frac{(1-3\lambda)^2}{2-3\lambda} \cdot \OPT$ when there is a matroid and a budget constraint. For this result, we also require $\lambda < 1/3$ (Theorem~\ref{thm:weighted_matroid_approx}).
\end{itemize}
Note that all bounds approach $\frac{1}{2}$ as $\lambda \to 0$.
\end{result}
Note that the last approximation result stated above is a simplified but slightly worse version of the result in Theorem~\ref{thm:weighted_matroid_approx}.
Result~\ref{result:approx} only shows that all equilibria are good.
However, it does not show how the market can achieve such any equilibrium.
Indeed, equilibrium prices are a complex function of all the parameters of the market, including the set of modules $M$, their private costs $\vec c$, the value function $\vec v$ and the underlying constraint over the modules. It is impossible for market participants to compute an equilibrium with no prior knowledge of the parameters. However, one may expect module owners to employ algorithmic techniques to adjust prices through reasonable learning algorithms, especially if they anticipate long-term participation.
Such a learning algorithm would respond to the feedback obtained in past rounds at various price levels without knowledge of the other market participants or even the value function of the platform.

Our next result proves convergence guarantees if all modules use \emph{multiplicative weight update} type no-regret learning algorithms. 

\begin{result}
For any $\eps > 0$, if the platform implements $\greedy$ for module selection and each module, unaware of market parameters, employs a multiplicative weight update-style price learning algorithm, then the price dynamics converges to an $\eps$-equilibrium price with high probability under any arbitrary matroid constraint over the modules. Consequently as $\eps \rightarrow 0$, the total expected value for a query for the platform is at least $\frac{(1-3\lambda)^2}{2-3\lambda} \cdot \OPT$, with high probability.
\end{result}
We note that our convergence results assumes that every round is identical in that the buyer's valuation function does not change over each round. If there are multiple ``types'' of buyers, one can imagine that the type of the buyer is used as context for the learning agents. In such a setting, the problem reduces to the single buyer valuation setup in this paper.

\subsection{Related Work}
\label{subsec:related}
\paragraph{Procurement Auctions.}
Our work fits into the literature of procurement auctions. There is a large body of work on designing procurement auctions subject to a budget constraint that was initiated by \citet{singer2010budget}.
These procurement auctions can either be done via sealed-bid auctions  \cite{singer2010budget,chen2011approximability} or descending clock auctions \cite{balkanski2022deterministic, han2024triple, ensthaler2014dynamic, jarman2017ex}. 

When the buyer's objective function is submodular, \citet{balkanski2022deterministic} designed deterministic clock auctions with an approximation ratio of 4.5, improving upon the approximation ratio of both deterministic and randomized mechanisms from a series of works including \cite{chen2011approximability,jalaly2021simple}. For the special case of submodular functions, \citet{leonardi2017budget} designed a $4$-approximation mechanism for unweighted matroid value functions, and \citet{gravin2020optimal} designed an optimal $(1+\sqrt 2)$-approximate mechanism for additive valuations. In addition, a series of works \cite{anari2018budget,rubinstein2022beyond} designed mechanisms with a small seller assumption $( \lambda \rightarrow 0)$ and obtained an optimal $(1-1/e)$-approximation for additive valuations and a $1/2$-approximation for monotone submodular valuations. Our result on matroid value functions matches the current best-known bound under the small seller assumption.\footnote{In addition, we ensure that all potential price equilibria are $1/2$-approximate.}

In this paper, our goal is to formally understand the marketplace model where module owners simply post a public price for their API call. This is a model that is currently being employed in the industry (e.g.~\cite{togetherpricing, vertexaipricing}) due to its simplicity. Equivalently, our model is a first-price procurement auction with a budget constraint. In such a marketplaces, since the payment is decided by the sellers, the mechanism needs to ensure that all potential equilibria prices lead to efficient outcomes which is not necessarily satisfied by existing mechanisms.

Such a marketplace setting was formally studied by \citet{milgrom2020clock} in their beautiful work which proposed a clean characterization of descending clock auctions.
\citet{milgrom2020clock} prove that the outcome of a descending clock auction can be converted to a first-price auction.
Moreover, the outcome of the descending clock auction corresponds to equilibrium bids in the first-price auction.
However, the allocation rule in the first-price auction is somewhat complex as it is defined by ``simulating'' a clock auction. A larger issue is that the conversion they describe may result in additional equilibria unless the allocation satisfies a ``non-bossy'' assumption \citep[Appendix E]{milgrom2020clock}. This assumption requires that a bidder cannot change the allocation of any other bidder without changing their own allocation. This is a non-trivial assumption to satisfy. For example, the allocation in \cite{balkanski2022deterministic}, which achieves the best approximation ratio for procurement auctions, does not satisfy the non-bossy assumption. If the allocation rule does not satisfy the non-bossy condition then the additional equilibria may have poor efficiency. In fact, it is not difficult to observe that their allocation rule leads to an equilibria that obtains $O(1/n)$ fraction of optimal value for the same example for which the optimal selection rule leads to inefficient outcome. Thus a contribution of our work can also be seen as designing allocation rules where all equilibria in a first-price auction are approximately efficient.

In another previous work, \citet{immorlica2005first} study a first-price auction in a procurement auction setting, without a budget constraint.
They study path auctions where the agents correspond to edges and the constraint is a set of edges between a source and a sink with a specified flow capacity. This makes it a covering problem whereas we consider a packing problem.
In addition, the goal of the paper is to minimize the buyer's cost while our goal is to maximize the buyer's value.

Having a marketplace model also has practical implications in that it obviates the need for complex auction infrastructure such as second-price auctions or clock auctions.
While truthful reporting is no longer optimal, our results show that if the module owners simply implement some reasonable learning algorithm, they are guaranteed to converge to an equilibrium.
This essentially allows module owners to use learning algorithms to adjust their prices based on demand feedback.
\paragraph{Pricing and Bidding Dynamics.}
One of the goals of this paper is to understand pricing dynamics in markets.
For Arrow-Debreu markets \cite{mckenzie1954equilibrium,arrow1954existence}, there is a large literature on understanding the dynamics that lead to equilibria (see for example, \cite{rabani2021invisible,cheung2013tatonnement} and references therein).
However, the models studied in these papers assume that the goods are a finite resource and are divisible. In our setting, the goods are digital and are indivisible.
Furthermore, these papers assume that agents do not directly control prices, which are set in a centralized manner based on aggregate supply and demand. In contrast, our model allows agents to directly change their prices based on the response from the market.

Our work also fits into a growing literature of understanding bidding dynamics in auctions \cite{feng2021convergence,deng2022nash,kolumbus2022auctions}.
These papers assume that bidders use some sort of no-regret or mean-based learning algorithm.
\citet{feng2021convergence, deng2022nash, kolumbus2022auctions} aim to understand the convergence properties of these dynamics. \citet{kolumbus2022auctions} also study the incentive properties of different auctions when an agent uses no-regret learning algorithms.
Our paper also assumes that agents use no-regret learning algorithms to adjust their prices.
However, a key difference is that, to the best of our knowledge, our work is the first to study no-regret dynamics in a procurement setting.

Our convergence result falls within the "learning in games" literature \cite{hannan1957approximation,giannou2021survival,hart2000simple}, which considers the setting where players have no information about game parameters and interact with the platform over multiple iterations, using an online learning algorithm to select actions from a finite set. Crucially, here the game does not change over the interactions. Our convergence result is interesting because \cite{giannou2021survival} showed that if players implement a "follow the regularized leader" type learning algorithm (which includes multiplicative weight update), then the dynamics converge to a strict-pure Nash equilibrium. However, in our setting, strict pure Nash equilibria do not exist.  Instead, the "stability" properties of our allocation rule and equilibrium prices ensure that the learning sellers converge to a pure equilibrium price.

\section{Preliminaries} \label{sec:prelim}
\paragraph{Notation.}
We will generally use bolded letters such as $\vecv$ to denote vectors.
For an index $i$, we use $\vecv(i)$ to denote the $i$th coordinate of $\vecv$ and $\vecv_{-i}$ to denote the vector $\vecv$ without the $i$th coordinate.
For a set $S$ of indices, we use $\vecv(S) = \sum_{i \in S} \vecv(i)$.
For two vectors $\vecv^1, \vecv^2$, we write $\vecv^1 \leq \vecv^2$ to indicate that $\vecv^1(i) \leq \vecv^2(i)$ for all indices $i$ (and similarly for all other comparisons). For any positive integer $n$, we let $[n] = \{1,\dots , n\}$.
We use $\pi \colon [n] \to [n]$ to denote a permutation where $\pi(i)$ denotes the element at index $i$.
For a set $S \subseteq [n]$, we write $\pi(S) = \{ \pi(i) \,:\, i \in S\}$. We also denote $\pi[k] = \pi([k])$.
In the remainder of the paper, we will generally refer to a module only via its index $i$ instead of $m_i$. We provide basic preliminaries of matroid theory in Appendix~\ref{appendix:matroid_prelims}.

\paragraph{Price Competition Game}
Throughout this paper, we use buyer interchangeably with the platform and a seller interchangeably with a module owner.
An instance of the price competition game is defined by $\langle M, \vec v, \vec c, \mathcal I, B \rangle $ where $M= \{m_1, \dots, m_k\}$ are the set of modules, $\vec v(i)\in \mathbb R_+\cup \{0\}$ is the value of module $m_i$ for the buyer, $\vec c(i) \in \mathbb R_+\cup \{0\}$ is the cost of module $m_i$, $\mathcal I\subseteq 2^M$ is the set of feasible modules for the buyer, and $B$ is the buyer's budget.  

For the given instance of a price competition game $\langle M, \vec v, \vec c, \mathcal I, B \rangle $, we assume that each module owner has full information of the instance including the values $\vec v$, costs $\vec c$ and constraints $\mathcal I$. We do note that full information about the valuation function of the buyer, constraint $\mathcal I$, and cost of other module owners is unrealistic. However, understanding the price competition game in the full-information setting already turns out to be non-trivial and acts as a building block towards understanding the modular marketplace where the module owners do not have any information about the price competition instance except their private costs. In addition, we assume that the buyer does have the information about the values $\vec v$, constraint $\mathcal I$, and budget $B$ but does not have any information about the costs of the modules $\vec c$.

Given an instance of a price competition game $\langle M, \vec v, \vec c, \mathcal I, B\rangle $, the strategy space of the module owner $m_i\in M$ is a price of their module $\vec p(i) \in [\vec c(i), B]$.\footnote{
Note that $\vecp(i) < \vecc(i)$ is always dominated by setting a price of $\vecc(i)$ since bidding less than $\vecc(i)$ always results in non-positive utility for the seller.}
We let $\mathcal A \colon \mathbb R_+^M \rightarrow 2^M$ be the selection rule for the buyer that takes the prices of the modules $\vec p$ as an input and outputs a set of modules such that for any price vector $\vec p$, we have (a) $\sum_{i\in \mathcal A(\vec p)} \vec p(i) \leq B$ and (b) $\mathcal A(\vec p) \in \mathcal I$. 

For the given price vector $\vec p$, we let $u_i^\mathcal A (\vec p) = ( \vec p(i) - \vec c(i)) \cdot \mathbbm 1[i\in \mathcal A(\vec p)]$ be the utility of module $i$ when the buyer implements selection rule $\mathcal A$.
We say that $\vec p$ is an equilibrium price for the price competition game if no module can deviate from its posted price and increase its utility, i.e.~for all $i\in [k]$ and $\vec p(i)' \neq \vec p(i)$, we have,
$u_i(\vec p(i)',\vec p_{-i}) \leq u_i (\vec p(i), \vec p_{-i})$.  Similarly, for any $\eps >0$, we say that the prices $\vec p$ is an $\eps$-equilibrium price if  for all $i\in [k]$ and $\vec p(i)' \neq \vec p(i)$, we have,
$u_i(\vec p(i)',\vec p_{-i}) \leq u_i (\vec p(i) , \vec p_{-i}) + \eps$.

To evaluate the quality of the selection rule $\mathcal A$, we compare the value obtained by the buyers at any equilibrium to an omniscient benchmark $\opt$ which is the best possible value achieved by the buyer in the case when it knows the costs of the modules $\vec c$, i.e. $\max_{S\in \mathcal I, \vec c(S)\leq B} \vec v(S).$ We say that allocation rule $\mathcal A$ has an \emph{approximation ratio} of $\alpha \in [0,1]$ if for any equilibrium price $\vec p$ for the price competition instance $\langle M, \vec v, \vec c, \mathcal I, B \rangle $ with selection rule $\mathcal A$, we have $\sum_{i\in \mathcal A(\vec p)} \vec v(i) \geq \alpha \cdot \opt$.

\paragraph{Learning Dynamics in the Price Competition Game}
In the above discussion, we defined a price competition game between the modules where we assume that all the modules possess full information about the game instance  $\langle M, \vec v, \vec c, \mathcal I, B\rangle $.  However, the cost of each module is private information, i.e. only module $m_i$ has information of their private $\vec c(i)$, and the rest of the modules are unaware of the private cost of module $m_i$. In addition, the modules have no information about the value vector $\vec v$ and underlying constraints $\mathcal I$. As a result, without full information about market parameters, the modules cannot implement equilibrium prices directly.

In this section, we formally define the learning dynamics of the price competition market, where each module owner employs a learning algorithm to set their prices. More specifically, we fix an instance of the price competition game $\langle M, \vec v, \vec c, \mathcal I, B\rangle $, and we consider a setting where a single platform repeatedly interacts with the modules over $T$ rounds.  At each round, a price competition game instance $\langle M, \vec v, \vec c, \mathcal I, B\rangle $ is played. 

We assume that module $m_i\in M$ implements an \emph{online learning algorithm} $\mathcal L(i)$ to set its price.
At round $t$, the algorithm selects a price $\vec p^t(i)$ (possibly at random) from the set $\mathcal B = \{\delta, 2\cdot \delta , \dots ,B\}$,
where $\delta \in (0, 1)$ represents the minimum price increment.
The price $\vec p^t(i)$ at round $t$ is determined based on the history of the game up to round $t-1$. At each round $t$, the platform provides feedback to module $m_i$ regarding the maximum price at which the module could have been selected in that round.
We denote the history at round $t$ for module $m_i$ as the set of prices, the set of selected modules, and the received feedback from the platform up to round $t-1$.

The goal of each module's learning algorithm $\mathcal L(i)$ is to adjust its price iteratively based on the feedback received, with the aim of converging towards an optimal pricing strategy. In particular, the platform's feedback provides each module with information that helps refine its price selection to maximize its reward over time. This repeated interaction across $T$ rounds allows modules to "learn" about the market conditions and the behavior of competitors, even with incomplete information about other modules' costs and the platform's value structure.

Given the price competition game instance $\langle M, \vec v, \vec c, \mathcal I, B\rangle $ and learning algorithms of the modules $\{\mathcal L(i) \,:\, i\in M\}$ induces a price dynamics denoted as a sequence of price vectors $\{\vec p^t \,:\, t=1,2,\dots\}$. We next define convergence of the price dynamics of the price competition game which is a generalization of the standard definition of convergence of the sequences. 
\begin{definition}\label{def:multiplicative_weight_learning}
The learning dynamics of the price competition game instance $\langle M, \vec v, \vec c, \mathcal I , B \rangle$ with learning algorithms of the modules $\{\mathcal L(i) \,:\, i\in M\}$ converges to the price vector $\vec p$ if for any $\gamma >0$, there exists $T^* \geq  g(\gamma)$ such that 
\begin{equation*}
    \Pr \left[ |\vec p^t(i) - \vec p(i)|\leq \sqrt \delta: \forall i\in M \text{ and } t\geq T^* \right]\geq 1 - \gamma.
\end{equation*}
\end{definition}
In this work, we consider the dynamics where each module implements \emph{multiplicative weight update} type learning algorithm to set the price which we define in the following definition.

\begin{definition}[Multiplicative Weight Type Learning Algorithm]
Given set of bids $\mathcal B$ and total cumulative reward of price $p\in \mathcal B$ denoted as $\sigma_t(a) = \sum_{t'\leq t} u_i^\mathcal A(p,\vec p_{-i}^t)$, we say that the algorithm $\mathcal L$ is multiplicative weight type with learning rate $\gamma_t$ if for any two prices $p,p' \in \mathcal A$ and some constant $\tau>0$, whenever $\sigma_{t}(p') \leq \sigma_t(p) - \tau \cdot t$ then the probability that the algorithm sets price $p'$ at round $t+1$ is less than $ \exp(-\gamma_t \cdot \tau \cdot t)$.
\end{definition}

\section{Technical Overview}
\label{sec:technical_overview}
In this section, we give a high-level overview of our techniques.
First, Algorithm~\ref{alg:greedy} defines the greedy algorithm that we use in this paper.
Throughout this paper, we refer to Algorithm~\ref{alg:greedy} as $\greedy$.
Note that $\greedy$ exits as soon as the budget constraint is violated.

\begin{algorithm}
\caption{Greedy Algorithm ($\greedy$)}
\label{alg:greedy}
\begin{algorithmic}[1]
\State \textbf{Input}: price vector $\vec p$, values $\vec v$, budget $B$, feasible sets $\I \in 2^{[n]}$.
\State \textbf{Output}: set $G$ of modules.
\State Initialize $G \gets \emptyset$.
\State Re-arrange modules such that $\frac{\vecv(1)}{\vecp(1)} \geq \frac{\vecv(2)}{\vecp(2)} \geq \dots \geq \frac{\vecv(n)}{\vecp(n)}$ with ties broken arbitrarily.
\For{$i=1,2 \dots ,n$}
\If{$i \in \spn(G)$}
\State Let $C$ be the unique circuit in $G \cup \{i\}$.
\State Let $j \in \argmin\{ \vecv(j') \,:\, j' \in C\}$.
\State Set $G' \gets G \cup \{i\} \setminus \{j\}$.
\Else
\State Set $G' \gets G \cup \{i\}$.
\EndIf
\If{$\vecp(G') > B$}
\State \textbf{break}
\EndIf
\State $G \gets G'$.
\EndFor
\State \textbf{return} $G$
\end{algorithmic}
\end{algorithm}

\subsection{Existence of Equilibria}

In this paper, we will give constructive proofs that $\eps$-equilibria exist in the pricing game when the allocation rule is given by Algorithm~\ref{alg:greedy} (see the formal results and proofs in Section~\ref{subsec:eq_exists}, Section~\ref{sec:matroid_price_eqm_existance}, and Section~\ref{sec:weighted_matroid_equilibrium}).
In this section, we give a high-level intuition of how to make such constructions.
For simplicity, let us assume that the modules are sorted in increasing order of ``cost-per-value'' and that these values are distinct, i.e.~$\frac{\vecc(1)}{\vecv(1)} < \ldots < \frac{\vecc(n)}{\vecv(n)}$.
Note that this is equivalent to decreasing bang-per-buck order.

First, assume there is no matroid constraint.
At a high-level, the algorithm starts by assuming every module bids their cost.
However, this means that the module with the highest cost-per-value (the inverse of the bang-per-buck) can raise its bid since it is the first module inspected by $\greedy$.
So it increases its cost-per-value up until the next module inspected by $\greedy$.
Then the two modules, as a coalition, increase their price to the third module inspected by $\greedy$.
This process continues until a step $k$ where increasing modules $1, \ldots, k$ to the cost-per-value of module $k+1$ would cause the first $k+1$ modules to exceed the budget constraint.
When this happens, we instead raise the cost-per-value of the first $k$ modules as high as possible so that the first $k$ modules fit within the budget constraint.
We note that the set of module which increases their price remains selected by the greedy algorithm. In addition, increasing the price of these module can only decrease the utility of the rest of the module. Therefore, 
intuitively, this should be an equilibrium since no accepted module can increase their price while remaining accepted by $\greedy$ and no rejected module is willing to lower their price since it is already bidding its cost.

The argument becomes more complex when a matroid constraint is introduced.  Imagine we try to iteratively increase the prices of the modules, similar to how we did in the additive case. With a matroid constraint, a module (or set of modules) might have its price increased and still be selected by the greedy algorithm.  This price update, however, could cause the algorithm to drop a module with the lowest bang-per-buck (due to the matroid constraint) and swap it for a module with a higher bang-per-buck that wasn't previously selected at the original price.

This means that modules rejected earlier during the price update process can suddenly become ``important'' and might be willing to increase their prices in future updates.  This ``non-monotonic'' behavior makes things tricky. To handle this, we have carefully designed an algorithm (Algorithm~\ref{alg:equilibrium_dynamics_weighted}) that keeps track of these modules that are suddenly selected. In our equilibrium construction, we make sure that if such modules exhibit profitable price deviations, we update their prices accordingly. 

Figure~\ref{fig:matroidAlg} illustrates a sample run of this process for unweighted matroid using the graphic matroid shown in the top left.
In other words, the goal is to pick a subset of the edges with the constraint that no cycle is formed. We note that the unweighted case does satisfies certain nice monotonicity properties (Claim~\ref{claim:winners_stays_winners}) which is not present in the weighted case which requires more careful technical treatment for the quality of equlibria. 

In Figure~\ref{fig:matroidAlg}, each point corresponds to a module.
The value above the module is the name of the module while the coordinates below the module correspond to the value and price, respectively, of the module.
Each plot corresponds to a separate step $t$ for the construction of the equilibrium.
In addition, each plot only shows what $\greedy$ would do up until iteration $t$.
Question marks correspond to modules that have not been reached at the current step of the construction.
The \textcolor{blue}{blue dots} indicate modules \textcolor{blue}{accepted} by $\greedy$ by its $t$ iteration and \textcolor{red}{red crosses} indicate modules \textcolor{red}{rejected} by $\greedy$ by its $t$th iteration.

In the first three steps of the construction, we slowly raise the cost-per-value.
In each of these steps $t$, the first $t$ iterations of $\greedy$ ensures that all modules are accepted.
Let us now consider step $4$ of the construction and assume that $\greedy$ tie-breaks in lexicographic order.
In this case, $\greedy$ would start by accepting modules $1$, $2$, and $3$.
When it reaches module $4$, it would realize that $4$ forms a cycle with $1$ and $2$.
Module $1$ has the lowest value in the cycle so it is replaced by module $4$.
At this point, the price of module $1$ is moot but to better illustrate the operation of $\greedy$, we assume that it freezes at price.

Finally, let us look at step $5$.
The lowest cost-per-value module is module $1$.
So $\greedy$ begins by taking module $1$.
It then takes modules $2$ and $3$ and finally replaces $1$ with $4$.
Once it takes module $4$, the budget is exhausted and so $\greedy$ terminates.

Observe that the result in step $5$ is indeed an equilibrium.
Modules $2$, $3$, and $4$ cannot increase their prices since if they do, they would be fourth in the order and be rejected.
Module $1$ will always be replaced by $4$ whenever it becomes before $4$ and module $5$'s cost is too high for it be relevant.

\begin{figure}[h]
\centering
\begin{subfigure}{0.3\textwidth}
  \centering
  \scalebox{0.9}{
\begin{tikzpicture}[
    node distance=1cm
]
  \node[circle,draw,fill] (a) {};
  \node[circle,draw,fill,right of=a] (b) {};
  \node[circle,draw,fill,right of=b] (c) {};
  \node[circle,draw,fill,right of=c] (d) {};
  \node[circle,draw,fill,right of=d] (e) {};

  \draw (a) -- node[above] {1} (b);
  \draw (b) -- node[above] {2} (c);
  \draw (c) -- node[above] {3} (d);
  \draw (d) -- node[above] {5} (e);
  \draw (a) edge[bend right=60,looseness=1.5] node[above] {4} (c);
  
  \node[above of=c] {\normalsize The graphic matroid};
\end{tikzpicture}
}
\end{subfigure}%
\hfill
\begin{subfigure}{0.3\textwidth}
  \centering
  \plotWeightedMatroidIterationOne{\tikzscale}
\end{subfigure}%
\hfill
\begin{subfigure}{0.3\textwidth}
  \centering
  \plotWeightedMatroidIterationTwo{\tikzscale}
\end{subfigure}%

\medskip

\begin{subfigure}{0.3\textwidth}
  \centering
  \plotWeightedMatroidIterationThree{\tikzscale}
\end{subfigure}%
\hfill
\begin{subfigure}{0.3\textwidth}
  \centering
  \plotWeightedMatroidIterationFour{\tikzscale}
\end{subfigure}%
\hfill
\begin{subfigure}{0.3\textwidth}
  \centering
  \plotWeightedMatroidIterationFive{\tikzscale}
\end{subfigure}%
\caption{Sample run for equilibrium construction. See main text for description.}
\label{fig:matroidAlg}
\end{figure}
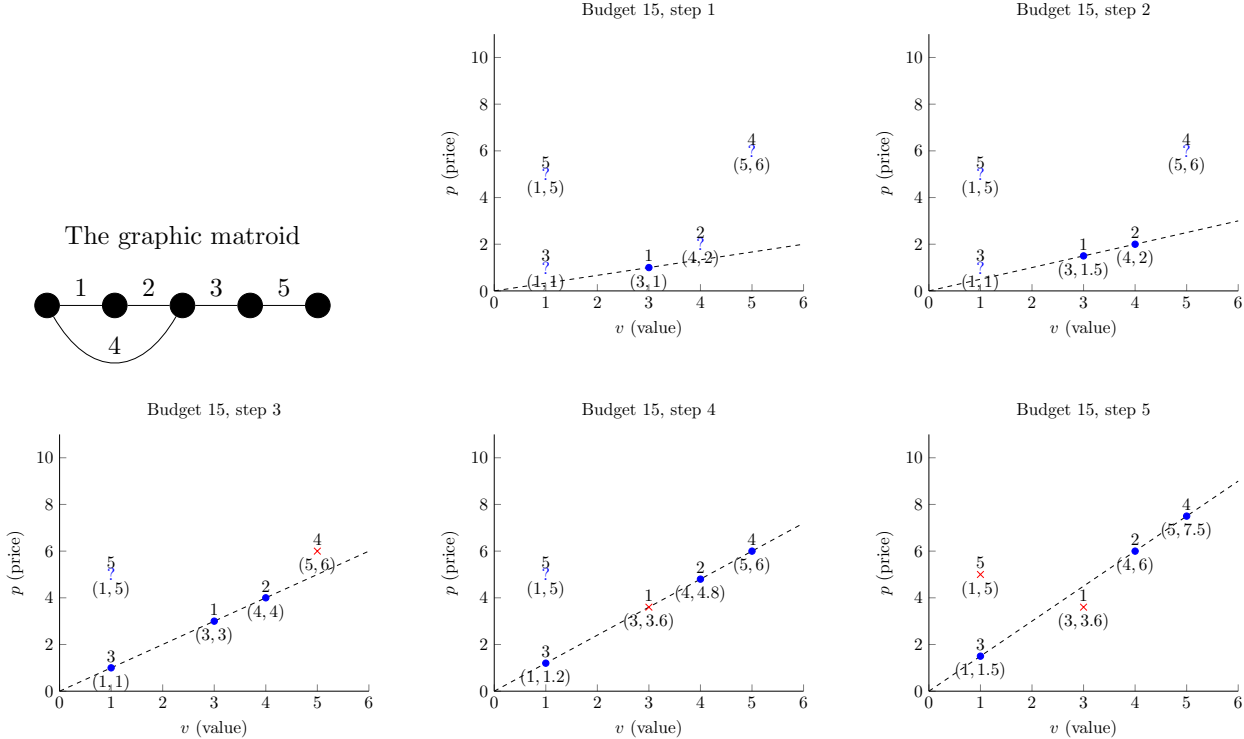

\subsection{Quality of Equilibria}
\paragraph{Characterizing worst-case equilibria.}
Our first step to prove approximation results of equilibria is to first characterize the set of ``worst-case'' equilibria.
We first prove a technical result that shows that in the worst-case equilibria, roughly speaking, all rejected modules bid their cost.
This statement is exactly true in the setting without a matroid constraint and in the setting with a matroid constraint but with uniform values.
For the setting with general values, this statement is only partly true and we can only make this characterization up to a single module; we will leave the technical details of this to Section~\ref{sec:weighted_matroid}.

Our overall approach is to demonstrate that, starting from any equilibrium price, we can reach another equilibrium through a sequence of price deviations.  This sequence will ensure that all "crucial" non-selected module sets have their prices equal to their costs, while simultaneously decreasing the total value for the platform. We then analyze the value of such equilibria using properties of the greedy algorithm, allowing us to directly compare it to the optimal value, OPT.

The main technical challenge lies in the following: once we update the price of a non-selected module to its cost, it can create several new potential profitable deviations for other modules. These deviations must be addressed to reach a new equilibrium without changing the price of the non-selected module we just adjusted.  Our core technical analysis focuses on tracking all these potential new deviations and arguing that our specific sequence of deviations ultimately leads to an equilibrium where the "crucial" non-selected modules have their prices set equal to their costs.

To give more high-level intuition of our approach, we start with an equilibrium price vector $\vecp$ and some module $i$ such that $\vecp(i) > \vecc(i)$.
We then decrease the price of module $i$ to $\vecc(i)$.
Note that module $i$ must remain rejected because otherwise it could have deviated in $\vecc(i)$.\footnote{Here, we ignore the edge case where module $i$ can be accepted with non-negative utility if and only if it bids $\vecc(i)$.}
Now consider the new price vector given by $(\vecc(i), \vecp_{-i})$.
Let $S$ denote the set of modules selected by $\greedy$ when $\greedy$ considers module $i$.
If $i$ is budget-feasible when $\greedy$ inspects it but is then rejected anyway then it turns out this is already an equilibrium.
The reason is as follows.
Observe that $i$'s deviation could result in one of two possibilities.
First, it could be that when $\greedy$ reaches module $i$, including $i$ would have already formed a circuit and module $i$ would be the least valuable module in that circuit.
The other is that $\greedy$ temporarily adds module $i$. In this case, it must be that module $i$ is later removed since it forms a circuit.
In both of these cases, the intuition is that the price of $i$ is somewhat moot (provided that it bids above its cost).
In which case, it did not really matter than $i$ had placed a bid above its cost in the first place.

The difficult case is when module $i$ could have been added to $S$ by $\greedy$ but was not added because doing so would violate the budget constraint.
This may cause modules after $i$ to now be rejected which were previously accepted.
To resolve this, we need a two-step process.
First, we modify the prices of all newly rejected modules so that their price is the greater of their own cost or the price given by module $i$'s current bang-per-buck.
Note that not all of these modules may be able to accept the new price.
This can allow some modules which come before $i$ and that were previously rejected to now be accepted.
These modules now have an incentive to deviate.
So the second step is to raise their prices up to the bang-per-buck of $i$.
We show that this modification results in another equilibrium but that this equilibrium can only be worse.

Figure~\ref{fig:worst_equil_sketch} pictorially shows the construction in the latter case where including $i$ would violate the budget constraint.
For simplicity, we assume that all sets are independent.
The \textcolor{red}{red crosses} correspond to modules that would be \textcolor{red}{rejected} by $\greedy$ and the \textcolor{blue}{blue dots} correspond to modules that would be \textcolor{blue}{accepted} by $\greedy$.
In the left plots, the number above the points correspond to the module name while the numbers below the points correspond to their value, current price, and true cost, respectively.
For the middle and right plots, we only label the true cost of each module.
In the top row, module $4$ deviates while in the bottom row, module $3$ deviates.
In both cases, the budget is $4.4$.

The left plots show an  initial equilibrium where modules $1$ are accepted at a budget of $4.4$.
This is indeed an equilibrium since module $3$'s true cost already exceeds the buyer's budget and module $4$'s bang-per-buck at its true cost is much larger than $1$, the current bang-per-buck in the equilibrium.

\paragraph{Top row: module $4$ deviates.}
This is the most straightforward case since module $4$'s deviation has no affect on any other module.
Thus, this is already an equilibrium.

\paragraph{Bottom row: module $3$ deviates.}
In this case, module $3$'s deviation affects the outcome for modules $1$ and $2$ because module $3$ is now dictating the bang-per-buck, which is approximately $0.643$.
Previously rejected modules, namely module $4$, have no incentive to deviate since they were already unwilling to accept the previously higher bang-per-buck.
Newly rejected modules, namely modules $1$ and $2$ will now decrease their price until either (i) they hit (or slightly below) the current bang-per-buck or (ii) they hit their cost.
In this case, we see that module $1$ can decrease below the current bang-per-buck while module $2$ stops at a price of $2.1$ (equivalently, a bang-per-buck of $0.7 > 0.643$).

\newcommand{\plotAdditiveEquilibriumOne}[1]{
\begin{tikzpicture}[scale=#1]
  \begin{axis}[
    xlabel={$v$ (value)},
    ylabel={$p$ (price)},
    title={Initial equilibrium},
    xmin=0, xmax=8.1,
    ymin=0, ymax=8.1,
    axis x line*=bottom,
    axis y line*=left,
  ]
    \acceptedCoords{
      (1, 1)
      (3, 3)
    };

    \rejectedCoords{
      (7, 7.1)
      (1, 6)
    };

    \node[above] at (axis cs:1,1) {$1$};
    \node[below] at (axis cs:1,1) {$(1, 1); 0$};

    \node[above] at (axis cs:3,3) {$2$};
    \node[below] at (axis cs:3,3) {$(3, 3); 2.5$};

    \node[above] at (axis cs:7,7.1) {$3$};
    \node[below] at (axis cs:6.8,7.1) {$(7, 7 + \delta); 4.5$};

    \node[above] at (axis cs:1,6) {$4$};
    \node[below] at (axis cs:1,6) {$(1, 6); 4$};

    \addplot[dashed, domain=0:10] {x};
  \end{axis}
\end{tikzpicture}
}

\newcommand{\plotAdditiveEquilibriumTwo}[1]{
\begin{tikzpicture}[scale=#1]
  \begin{axis}[
    xlabel={$v$ (value)},
    ylabel={$p$ (price)},
    title={Module $3$ deviates},
    xmin=0, xmax=8.1,
    ymin=0, ymax=8.1,
    axis x line*=bottom,
    axis y line*=left,
  ]

    \rejectedCoords{
      (1, 1)
      (3, 3)
      (7, 7.1)
      (7, 4.5)
      (1, 6)
    };
    
    \addplot[color=red,->, ultra thick] coordinates {
      (7, 6.9)
      (7, 4.7)
    };

    \node[below] at (axis cs:1,6) {$6$};
    
    \node[below] at (axis cs:1,1) {$1$};
    
    \node[below] at (axis cs:3,3) {$3$};

    \node[below] at (axis cs:7,4.5) {$4.5$};

    \addplot[dashed, domain=0:10] {x * 4.5 / 7};
  \end{axis}
\end{tikzpicture}
}

\newcommand{\plotAdditiveEquilibriumThree}[1]{
\begin{tikzpicture}[scale=#1]
  \begin{axis}[
    xlabel={$v$ (value)},
    ylabel={$p$ (price)},
    title={Modules $1$ and $2$ update},
    xmin=0, xmax=8.1,
    ymin=0, ymax=8.1,
    axis x line*=bottom,
    axis y line*=left,
  ]
    \acceptedCoords{
      (1, 0.64)
    };

    \rejectedCoords{
      (1, 1)
      (3, 3)
      (3, 2.5)
      (1, 6)
      (7, 4.5)
    };
    
    \addplot[color=red,->, ultra thick] coordinates {
      (2.8, 3.1)
      (2.8, 2.5)
    };
    
    \addplot[color=red,->, ultra thick] coordinates {
      (0.8, 1.1)
      (0.8, 0.64)
    };

    \node[below] at (axis cs:1,6) {$6$};
    
    \node[right] at (axis cs:1,0.64) {$0.642$};
    
    \node[below] at (axis cs:3,2.5) {$2.5$};

    \node[below] at (axis cs:7,4.5) {$4.5$};

    \addplot[dashed, domain=0:10] {x * 0.64};
  \end{axis}
\end{tikzpicture}
}


\newcommand{\plotAdditiveEquilibriumBOne}[1]{
\begin{tikzpicture}[scale=#1]
  \begin{axis}[
    xlabel={$v$ (value)},
    ylabel={$p$ (price)},
    title={Initial equilibrium},
    xmin=0, xmax=8.1,
    ymin=0, ymax=8.1,
    axis x line*=bottom,
    axis y line*=left,
  ]
    \acceptedCoords{
      (1, 1)
      (3, 3)
    };

    \rejectedCoords{
      (7, 7.1)
      (1, 6)
    };

    \node[above] at (axis cs:1,1) {$1$};
    \node[below] at (axis cs:1,1) {$(1, 1); 0$};

    \node[above] at (axis cs:3,3) {$2$};
    \node[below] at (axis cs:3,3) {$(3, 3); 2.5$};

    \node[above] at (axis cs:7,7.1) {$3$};
    \node[below] at (axis cs:6.8,7.1) {$(7, 7 + \delta); 4.5$};

    \node[above] at (axis cs:1,6) {$4$};
    \node[below] at (axis cs:1,6) {$(1, 6); 4$};

    \addplot[dashed, domain=0:10] {x};
  \end{axis}
\end{tikzpicture}
}

\newcommand{\plotAdditiveEquilibriumBTwo}[1]{
\begin{tikzpicture}[scale=#1]
  \begin{axis}[
    xlabel={$v$ (value)},
    ylabel={$p$ (price)},
    title={Module $4$ deviates},
    xmin=0, xmax=8.1,
    ymin=0, ymax=8.1,
    axis x line*=bottom,
    axis y line*=left,
  ]
    \acceptedCoords{
      (1, 1)
      (3, 3)
    };

    \rejectedCoords{
      (7, 7.1)
      (1, 6)
      (1, 4)
    };
    
    \addplot[color=red,->, ultra thick] coordinates {
      (1, 5.9)
      (1, 4.1)
    };

    \node[below] at (axis cs:1,4) {$4$};
    
    \node[below] at (axis cs:1,1) {$1$};
    
    \node[below] at (axis cs:3,3) {$3$};

    \node[below] at (axis cs:7,7.1) {$7+\delta$};

    \addplot[dashed, domain=0:10] {x};
  \end{axis}
\end{tikzpicture}
}

\newcommand{\plotAdditiveEquilibriumBThree}[1]{
\begin{tikzpicture}[scale=#1]
  \begin{axis}[
    xlabel={$v$ (value)},
    ylabel={$p$ (price)},
    title={Final prices},
    xmin=0, xmax=8.1,
    ymin=0, ymax=8.1,
    axis x line*=bottom,
    axis y line*=left,
  ]
    \acceptedCoords{
      (1, 1)
      (3, 3)
    };

    \rejectedCoords{
      (7, 7.1)
      (1, 4)
    };
    
    \node[below] at (axis cs:1,4) {$4$};
    
    \node[below] at (axis cs:1,1) {$1$};
    
    \node[below] at (axis cs:3,3) {$3$};

    \node[below] at (axis cs:7,7.1) {$7+\delta$};

    \addplot[dashed, domain=0:10] {x};
  \end{axis}
\end{tikzpicture}
}
\begin{figure}[h]
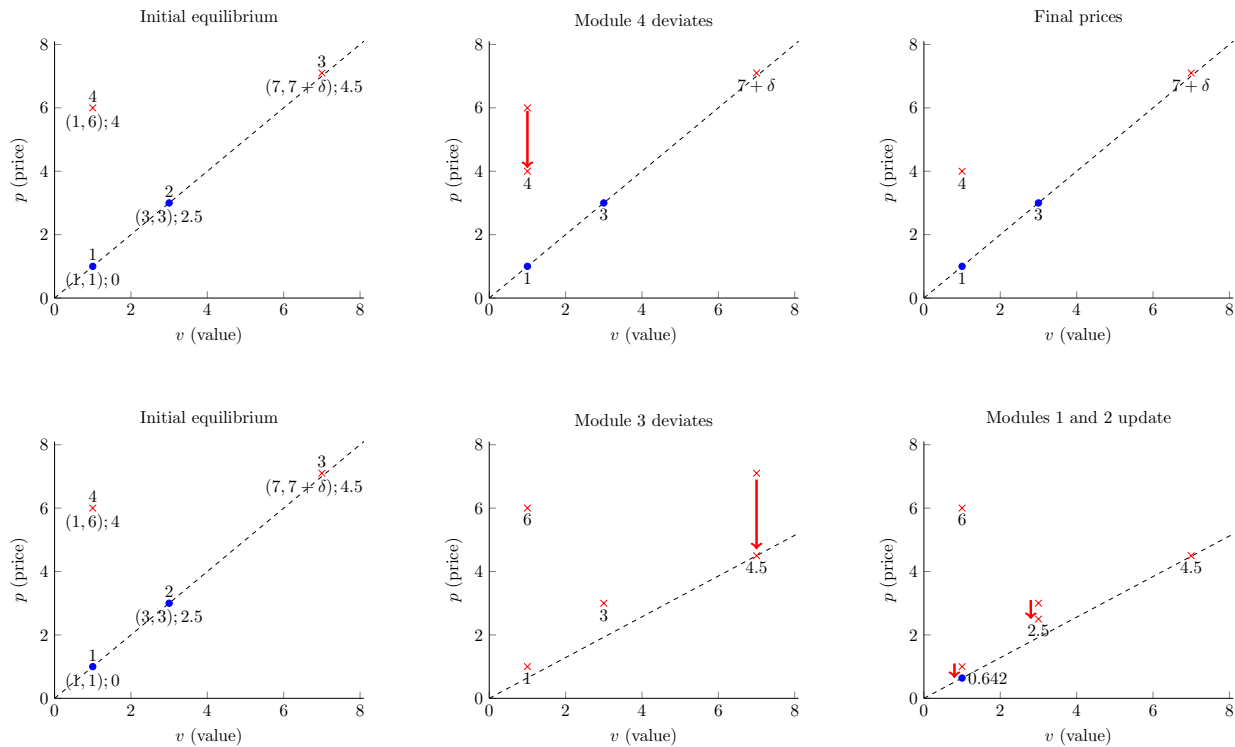

\centering
\begin{subfigure}{0.3\textwidth}
  \centering
  \plotAdditiveEquilibriumBOne{\tikzscale}
\end{subfigure}%
\hfill
\begin{subfigure}{0.3\textwidth}
  \centering
  \plotAdditiveEquilibriumBTwo{\tikzscale}
\end{subfigure}%
\hfill
\begin{subfigure}{0.3\textwidth}
  \centering
  \plotAdditiveEquilibriumBThree{\tikzscale}
\end{subfigure}%

\medskip

\begin{subfigure}{0.3\textwidth}
  \centering
  \plotAdditiveEquilibriumOne{\tikzscale}
\end{subfigure}%
\hfill
\begin{subfigure}{0.3\textwidth}
  \centering
  \plotAdditiveEquilibriumTwo{\tikzscale}
\end{subfigure}%
\hfill
\begin{subfigure}{0.3\textwidth}
  \centering
  \plotAdditiveEquilibriumThree{\tikzscale}
\end{subfigure}%
\caption{
This figure gives a pictorial sketch of how we go from an equilibrium where some rejected modules may be bidding above cost to a potentially worse equilibrium where strictly fewer rejected modules are bidding above cost.
}
\label{fig:worst_equil_sketch}
\end{figure}

\paragraph{Computing the Approximation Ratio.}
With the characterization above, we show the approximation result in two steps.
First, we show that running $\greedy$ on the price vector $\bar{\vecp}$ described above results in a constant approximation of what is achievable when running $\greedy$ on the price vector $\vecc$.
Second, we then relate the performance of $\greedy$ on the price vector $\vecc$ with the optimal solution given the price vector $\vecc$.

In particular, we show that if $\max_i \vecc(i)$ is small compared to the budget then the former is a $2$-approximation.
It is well-known that under this ``small-cost'' assumption that $\greedy$ gets very close to the optimal solution.
Thus overall, it shows that if we use $\greedy$ then any equilibrium is roughly a $2$-approximation.
As discussed in Subsection~\ref{subsec:model_and_results} above, we parameterize our results in terms of $\lambda = \max_i \vecc(i) / B$.

\subsection{Convergence to Equilibria}
Finally, we show that natural dynamics lead to convergence to an equilibrium in this pricing game.
In particular, we focus on the multiplicative weights update algorithm.   More formally, for any price competition game instance $\instance$ with $\I$ being a matroid constraint and budget $B=1$, we show that our dynamics convergences to an equilibrium that can be constructed via Algorithm~\ref{alg:equilibrium_dynamics_weighted}.

To begin with, we let $\barp$ be the equilibrium price computed by Algorithm~\ref{alg:equilibrium_dynamics_weighted} and $\SE$ be the set of selected modules at price $\barp$. For simplicity, we assume that $\barp(\SE) = 1$. We first, prove the following structural properties of the computed equilibrium price:
\paragraph{Weak-Dominance of $\barp$: }We show that if module $i\in \SE$ gets selected at some price vector $\vec p$ then it also gets selected at price $\barp(i)$. We emphasize that setting a price equal to $\barp(i)$ is not a dominating strategy as bidding higher than $\barp(i)$ can lead to higher utility.
\paragraph{Iterative Peeling of the Worst-Bid: }We show that given any price vector $\vec p$ satisfying $\forall i\in \SE: \vecp(i) \geq \barp(i)$ then the module with the module with worst-bang-per-buck at price $\vec p$ from the set $\SE$ does not get selected.
\paragraph{Stability of $\barp$: }Finally, we show that if all modules $i\in \SE$ set their price $\vec p(i) \in [\barp(i) - \delta , \barp(i) + \delta]$ then modules $i\in \SE$ can always deviate to price $\barp(i)$ and can get selected. In addition, no module $i\in \SE$ can get selected if they set their price significantly higher than $\barp$.    
Above, the second and the third property follows from the fact that $\barp(\SE) = 1$. When $\barp(\SE) < 1$, we prove such properties by leveraging the existence of a module that can swap module from $\SE$ that sets price greater than $\barp(i)$. The first property is slightly tricky and relies on the invariant of the equilibrium construction algorithm (Algorithm~\ref{alg:equilibrium_dynamics_weighted}) proved in Claim~\ref{claim:weighted_invarient}.

We assume that modules only place bids in a discretization of the unit interval, i.e.~$\{\delta, 2\delta, \ldots, 1\}$ for some small $\delta > 0$.
Due to Property~1, we first observe that for module $i\in \SE$, its cumulative utility obtained by any price $p <\barp(i)$ is strictly smaller than the cumulative utility of price $\barp(i)$. To steadily increase the difference in the cumulative utility $\sum_{t'\leq t} u_i(\barp(i),\vec p_{-i}^t) - u_i(p,\vec p_{-i}^t)$, we slightly distort the payments by giving extra $\delta^2\cdot p$ for any submitted price $p$. This essentially implies that after a sufficiently large number of rounds $\geq \poly \left( n, \frac{1}{\delta}\right)$, module $i \in \SE$ will stop submitting price sufficiently smaller than $\barp(i)$ with high probability. 

Next, we then condition on the event that all modules $i\in \SE$ sets their price at least $\barp(i) - O(\delta)$. Then we utilize the second property to iteratively rule out modules bidding that would lead to the worst bang-per-buck. More formally,  we iteratively define a pair of modules and their bid and order them from the worst to the best bang-per-buck, 
$$(  b_{(k)},   i_{(k)}): = \argmin_{\{(b,i)\in \bar{ \mathcal B}  \setminus \{(   b_{(1)},   i_{(1)}),\dots ,(   b_{(k-1)},   i_{(k-1)}) \} \}} \left \{v(i) / b \right \}.$$
Here, $(   b_{(1)},   i_{(1)}): = \argmin_{(b,i) \in \bar{\mathcal B}} \{ \vec v(i) / b \}$ and $\bar{\mathcal B}:=\bigcup_{i\in \SE'} \{(\barp(i) + \delta, i) , \dots , (1,i) \}$. We observe that once we condition on the event that all modules $i\in \SE$ sets their price at least $\barp(i) - O(\delta)$, we observe that module $i_{(1)}$ can not be selected at price $b_{(1)}$. Therefore, the gap in the cumulative utility of $\sum_{t'\leq t} u_{i_{(1)}}(\barp(i_{(1)}),\vec p_{-i_{(1)}}^t) - u_{i_{(1)}}(b_{(1)},\vec p_{-i_{(1)}}^t)$ keeps on growing. Again, to accelerate the convergence, we start incentivizing modules to lower their bid by paying extra $\frac{\delta^4}{b}$ for bid $b$.  
This essentially implies that after $\poly \left( n, \frac{1}{\delta}\right)$ many rounds, module $i_{(1)} \in \SE$ will stop submitting price $b_{(1)}$ in future with sufficiently high probability using the property of multiplicative weight update algorithm. This argument can be repeated until we are at convergence. Once modules in $\SE$ start submitting prices closer to their equilibrium price, Property 3 ensures that no module has an incentive to deviate from their convergent prices which are their constructed equilibrium prices.

\section{Warm-up: Price Competition for Additive Buyer}\label{sec:additive_price_equilibrium}

We first present the equilibrium quality analysis for the case when the buyer's value is an additive function over the set of modules.
We consider the greedy algorithm that is presented in Algorithm~\ref{alg:knapsack} which is just Algorithm~\ref{alg:greedy} for the unconstrained case, i.e.~$\I = 2^M$.
The algorithm first sorts all modules in terms of their ``bang-per-buck'' $\vecv(i) / \vecp(i)$ where $\vecv(i)$ is the value of module $i$ and $\vecp(i)$ is the submitted bid by module $i$.
It then selects modules in decreasing bang-per-buck order and exits as soon as it finds a module which does not fit into the budget without taking that module.

\begin{algorithm}
\caption{Greedy Algorithm for Knapsack Problem}
\label{alg:knapsack}
\begin{algorithmic}[1]
\State \textbf{Input}: price vector $\vec p$, values $\vec v$, budget $B$, \textbf{Output}: set $G$ of modules.
\State Initialize $G \gets \emptyset$.
\State Re-arrange modules such that $\frac{\vecv(1)}{\vecp(1)} \geq \frac{\vecv(2)}{\vecp(2)} \geq \dots \geq \frac{\vecv(n)}{\vecp(n)}$ with ties broken arbitrarily.
\For{$i=1,2 \dots ,n$}
\If{$\vec p(G) + \vec p(i) > B$}
\State \textbf{break}
\EndIf
\State $G \gets G \cup \{i\}$
\EndFor
\State \textbf{return} $G$
\end{algorithmic}
\end{algorithm}

In Subsection~\ref{subsec:eq_exists}, we show that $\eps$-equilibria exist when the buyer uses $\greedy$.
Note that some slack in the equilibrium notion is required.
In Subsection~\ref{subsec:need_eps_eq}, we show that exact equilibria can be very brittle and in fact, it may not exist.

In Subsection~\ref{subsec:eq_quality}, we then analyze the quality of all $\eps$-equilibria.
We show that all $\eps$-equilibria obtain at least half of the optimal solution when costs are known provided that (i) the cost of all modules relative to the budget is small and (ii) $\eps$ is small relatively to the cheapest module.
The first condition is similar to the ``small-bid'' assumption that appears in the literature (e.g.~\cite{mehta2007adwords}).
The second condition turns out to be necessary to prove an approximation result since there could be many very cheap and low value modules, which in aggregate, could contribute a significant amount of value.
Details of this necessity can be found in Subsection~\ref{subsec:need_cost_lb}.

\subsection{Existence of Equilibrium}
\label{subsec:eq_exists}
In this section, we prove the existence of an $\eps$-equilibrium.
To slightly simplify the exposition, we assume that the ratio $\frac{\vecv(i)}{\vecc(i)}$ is distinct for all modules.
If this is not the case we can perturb the value of each module by adding an independent random value drawn uniformly from the interval $[0, \eps]$ so that the aforementioned ratio is almost surely distinct for all modules.
In the argument below, this means that each module may actually be able to increase its utility by at most $\eps$ more so the argument below finds a $2\eps$-equilibrium instead of an $\eps$-equilibrium.
Henceforth, we assume $\frac{\vecv(i)}{\vecc(i)}$ are all distinct.
To simplify notation, in this section we sort modules in decreasing order of the bang-per-buck value w.r.t.~their cost, i.e.~$\frac{\vec v(1)}{\vec c(1)} > \frac{\vec v(2)}{\vec c(2)} > \dots > \frac{\vec v(n)}{\vec c(n)}$.
Let
\begin{equation}
    \label{eqn:cpv_additive}
    \cpv^* = \sup\left\{ \cpv \in \R \,:\, \sum_{i=1}^n \vecv(i) \cdot \cpv \cdot \indicator[\vecc(i) \leq \vecv(i) \cdot \cpv] \leq B \right\}.
\end{equation}
The quantity $\cpv^*$ is the smallest cost-per-value such that all modules that are taking all modules that are willing to accept the price given by this cost-per-value already exceeds our budget.
We claim that $\cpv^*$ essentially defines the equilibrium price.
\begin{theorem}
\label{theorem:additive_eq_exists}
Let $\cpv^*$ be as defined in Eq.~\eqref{eqn:cpv_additive}.
For any $\eps > 0$, there exists $\delta \geq 0$ such that the price vector $\vecp$ defined as $\vecp(i) \coloneqq \max\{ (\cpv^* - \delta) \cdot \vecv(i), \vecc(i) \}$ is an $\eps$-equilibrium.
\end{theorem}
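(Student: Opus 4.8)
The plan is to verify directly that $\vecp$ is an $\eps$-equilibrium once $\delta>0$ is chosen small enough; the only real content is bookkeeping around the choice of $\delta$ and the order in which $\greedy$ (Algorithm~\ref{alg:knapsack}) inspects modules. First I would set up the ``demand curve'' implicit in Eq.~\eqref{eqn:cpv_additive}. Recall the modules are sorted so that $\vecc(1)/\vecv(1) < \cdots < \vecc(n)/\vecv(n)$ (equivalently, decreasing bang-per-buck at cost) and that these ratios are distinct. For a cost-per-value level $\cpv$, let $P(\cpv) := \{ i : \vecc(i) \le \cpv \cdot \vecv(i)\}$, which by the sorting is a prefix of $[n]$, and let $f(\cpv) := \cpv \cdot \vecv(P(\cpv))$ be the total spend when every module willing to accept $\cpv$ bids exactly $\cpv\cdot\vecv(i)$. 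Then $f$ is non-decreasing, right-continuous, and piecewise linear with an upward jump at each breakpoint $\vecc(i)/\vecv(i)$, and $\cpv^* = \sup\{\cpv : f(\cpv)\le B\}$. Let $P=\{1,\dots,m\}$ be the prefix $P(\cpv)$ for $\cpv$ just below $\cpv^*$. Since there are finitely many breakpoints, I may pick $\delta>0$ small enough that $P(\cpv^*-\delta)=P$ and, simultaneously, small enough that $\delta\cdot\vecv([n])\le \eps$; this second condition is what will cap deviation gains.

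Next I would pin down $\greedy(\vecp)$. Under $\vecp$, every $i\in P$ bids $(\cpv^*-\delta)\vecv(i)\ge\vecc(i)$, so all of $P$ is tied at the maximal bang-per-buck $1/(\cpv^*-\delta)$, while every $i\notin P$ bids $\vecc(i)$ with bang-per-buck $\vecv(i)/\vecc(i) < 1/(\cpv^*-\delta)$. Hence $\greedy$ considers $P$ first and accepts all of it (spend $(\cpv^*-\delta)\vecv(P)=f(\cpv^*-\delta)\le B$), and then reaches module $m+1$. By the extremal property of $\cpv^*$ — the bundle demanded at cost-per-value $\cpv^*$ already overshoots $B$ — one checks that for $\delta$ small enough $(\cpv^*-\delta)\vecv(P)+\vecc(m+1)>B$, so $\greedy$ halts and $\greedy(\vecp)=P$. (In the boundary case $f(\cpv^*)=B$ exactly, $\greedy$ also accepts $m+1$ at cost; one then takes $P'=P\cup\{m+1\}$ as the selected set and $m+2$ as the new ``marginal'' module, and everything below goes through verbatim. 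The degenerate case $P=[n]$ is similar and easier.)

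Finally I would rule out profitable deviations. A rejected module $i\notin P$ only gets non-positive utility by bidding below $\vecc(i)$, and bidding above $\vecc(i)$ pushes it even later in $\greedy$'s order, past the point where the budget is spent, so it stays rejected: no gain. An accepted module $i\in P$ loses utility by lowering its bid (it stays accepted but earns less). If it raises its bid to $(\cpv^*-\delta)\vecv(i)+x$ with $x>0$, its bang-per-buck drops below $1/(\cpv^*-\delta)$, so $\greedy$ now processes $P\setminus\{i\}$ first, leaving residual budget $B-(\cpv^*-\delta)(\vecv(P)-\vecv(i))$ when it reaches $i$; as long as $i$ is still processed before the marginal module $m+1$, it is accepted iff $(\cpv^*-\delta)\vecv(P)+x\le B$, i.e.\ iff $x\le B-(\cpv^*-\delta)\vecv(P)$. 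A short computation shows that once $x$ is large enough to drop $i$ behind $m+1$, inspecting $m+1$ drives $\greedy$'s running total past $B$ before $i$ is reached (here the stopping condition $(\cpv^*-\delta)\vecv(P)+\vecc(m+1)>B$ from the previous step is used), so $i$ is rejected. Combining the two regimes, the extra utility $i$ can obtain is at most $\delta\cdot\vecv(P)\le\delta\cdot\vecv([n])\le\eps$. Hence no module gains more than $\eps$, so $\vecp$ is an $\eps$-equilibrium.

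The main obstacle is the interplay between the choice of $\delta$ and the ordering argument: I must choose $\delta$ small enough that $P(\cpv^*-\delta)$ is the stable prefix, that $\greedy(\vecp)$ is exactly $P$ (or $P\cup\{m+1\}$), and that $\delta\cdot\vecv([n])\le\eps$, and then argue carefully that an accepted module's price increase either keeps it ahead of the marginal module $m+1$ — in which case the residual budget after $P\setminus\{i\}$ caps the gain — or drops it behind $m+1$, in which case it is rejected outright because, by the very definition of $\cpv^*$, the budget is already exhausted by the time $m+1$ is processed. Handling the boundary case $f(\cpv^*)=B$ and the large tie at bang-per-buck $1/(\cpv^*-\delta)$ cleanly is the fiddly part, though conceptually routine.
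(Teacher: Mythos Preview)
Your argument is correct and matches the paper's proof closely: both pick $\delta$ small, identify the accepted prefix, and show that an accepted module's upward deviation either keeps it ahead of the marginal module $m+1$ (gain $O(\delta)\le\eps$) or pushes it past $m+1$, where the stopping condition $(\cpv^*-\delta)\vecv(P)+\vecc(m+1)>B$ forces rejection. The paper organizes the case split slightly differently---by whether $\cpv^*$ coincides with some $\vecc(k)/\vecv(k)$---and in the non-breakpoint case simply takes $\delta=0$ to get an exact equilibrium, but the content is the same.
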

\begin{proof}
Let $\cpv_i = \frac{\vecc(i)}{\vecv(i)}$.
Recall that we assume $\cpv_1 < \ldots < \cpv_n$.
We have two cases.
\paragraph{Case 1: $\cpv^* = \cpv_k$ for some $k \in [n]$.}
In this case, we choose $\delta$ satisfying (i) $\delta > 0$, (ii) $\delta < \cpv^* - \cpv_{k-1}$, and (iii) $\delta \vecv(i) \leq \eps / 2n$ for all $i \in [n]$.
Note that, by (ii), $\vecp(i) > \vecc(i)$ for all $i < k$ and $\vecp(i) = \vecc(i)$ for all $i \geq k$.
By definition of $\cpv^*$, $\vecp([k-1]) \leq B$ so modules $1$ through $k-1$ are accepted.

We need to show that if any module raises their bid by $\eps$ then it is rejected.
Let $i \in [n]$ and suppose module $i$ deviates to $\vecp'(i) = \vecp(i) + \eps$.
Note that
\[
    \vecp(i) + \eps \geq \cpv_k \cdot \vecv(i) - \delta \cdot \vecv(i) + \eps > \cpv_k \cdot \vecv(i)
\]
because $\delta \vecv(i) \leq \eps / 2n < \eps$.
So module $i$, with its deviation, comes after module $k$ (or is module $k$) in the cost-per-value ordering.
In that case, the budget utilization by $\greedy$ up to and including $i$ is at least
\begin{align*}
\vecp([k]) \geq \cpv_k \vec([k]) - \delta \vecv([k]) + \eps
\geq B - \eps / 2 + \eps > B.
\end{align*}
In particular, module $i$ is not accepted.
Thus any module $i < k$ cannot increase its bid by at least $\eps$ while being accepted and any module $i \geq k$ cannot choose a bid to achieve utility at least $\eps$.

\paragraph{Case 2: $\cpv^* \neq \cpv_k$ for all $k \in [n]$.}
Let $k = \max\{i \,:\, \cpv_i < \cpv^*\}$.
By definition of $\cpv^*$ and $k$, this means that for the price vector $\vecp$ given by $\vecp(i) = \max\{\cpv^* \vecv(i), \vecc(i)\}$, we have that $\greedy$ picks modules $1$ through $k$ and $\vecp([k]) = B$.
No accepted module $i \leq k$ can increase its bid by more than $\eps$ because doing so would make it go after $[k] \setminus \{i\}$ and including it would exceed the budget constraint.
On the other hand, no rejected module $i > k$ can decrease its bid because it is already bidding its cost.
We thus conclude that $\vecp$ is already an equilibrium price vector.
\end{proof}

\newcommand{\additiveiterstikzscale}{0.6}

\subsection{Equilibrium Quality}
\label{subsec:eq_quality}
Lemma~\ref{lemma:worst_equil} is the main structural result needed in this section which, at a high level, says that the worst equilibrium is achieved when rejected modules are bidding near their cost.
The high-level idea of the proof is illustrated in Figure~\ref{fig:worst_equil_sketch} which illustrates how rejected modules may lower their price to cause other modules to be rejected.
The proof is relegated to Appendix~\ref{app:proof_worst_equil}.
\begin{lemma}
\label{lemma:worst_equil}
Let $\vec p$ be a vector of $\eps$-equilibrium prices and let $S_{\vec p}$ be the set of modules selected by $\greedy$.
Fix an arbitrary $i \notin S_{\vec p}$.
There exists $\eps$-equilibrium prices $\vec{\bar{p}}$ such that (i) $\vec{\bar{p}}(i) \leq \vec{c}(i) + \eps$, (ii) $S_{\vec{\bar{p}}} \subseteq S_{\vec{p}}$, and (iii) $\bar{\vecp} \leq \vecp$ (coordinate-wise).
\end{lemma}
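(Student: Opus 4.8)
The plan is to give an explicit two-step price modification that, starting from an $\eps$-equilibrium $\vecp$ with a distinguished rejected module $i \notin S_{\vecp}$, produces a new $\eps$-equilibrium $\bar\vecp$ in which $i$ bids (essentially) its cost, the selected set only shrinks, and no price goes up. First I would lower $\vecp(i)$ to $\vecc(i)$ (or, in the degenerate tie case, $\vecc(i) + \eps$) and ask what $\greedy$ does on $(\vecc(i), \vecp_{-i})$. Since $\vecp$ was an $\eps$-equilibrium, module $i$ cannot be profitably selected, so after the decrease it is still rejected — but the decrease may have \emph{helped} other modules get rejected if $i$ now sits earlier in the bang-per-buck order and consumes budget. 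I would split on whether, at the moment $\greedy$ reaches $i$, adding $i$ is budget-feasible. If it is budget-feasible but $i$ is still dropped (which in the unconstrained case of this section cannot happen, but I would phrase the argument so it transfers), then essentially $i$'s price was irrelevant and $(\vecc(i), \vecp_{-i})$ is already the desired $\eps$-equilibrium with $S$ unchanged. The substantive case is when including $i$ would violate the budget: here some modules after $i$ that were selected under $\vecp$ become rejected under $(\vecc(i), \vecp_{-i})$.

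To repair this into an equilibrium, I would carry out the two-step adjustment sketched in Figure~\ref{fig:worst_equil_sketch}. Let $\rho = \vecc(i)/\vecv(i)$ be $i$'s new cost-per-value. Step one: for every module $j$ that was selected under $\vecp$ but is rejected once $i$ is inserted, reset $\vecp(j) \leftarrow \max\{\vecc(j), \rho\,\vecv(j)\}$ — i.e.\ push it down to the cost-per-value dictated by $i$, unless its own cost is higher. These modules either accept the lowered price and are re-admitted behind $i$, or bottom out at their cost and stay rejected. Step two: lowering those prices may free up budget earlier in the order, so some module $j'$ that was rejected under $\vecp$ but appears before $i$ could now want to be admitted; for each such $j'$, raise its price up to $\rho\,\vecv(j')$ (the highest bang-per-buck it can have without jumping ahead of $i$), so it is selected at exactly $i$'s cost-per-value and has no incentive to move. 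I would argue these two passes stabilize (each module's price moves monotonically in a single pass, and priorities relative to $i$ are fixed), and then verify the three conclusions: (i) holds by construction of the first step applied to $i$ itself; (iii) holds because the only prices that \emph{changed} either went down (step one, and $\vecp(i)$) or were previously below cost / previously rejected and are set to $\rho\,\vecv(j') \le \vecp(j')$ — here I would use that $\rho \le \cpv^*\!$-type bang-per-buck of the original equilibrium, so raising to $i$'s cost-per-value still does not exceed the old price; and (ii) follows since every module selected at the end is either in the old $S_{\vecp}$ or was forced out in a way that only removes modules.

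The equilibrium check is the crux: I must show no module can gain more than $\eps$ under $\bar\vecp$. Accepted modules are all pinned at cost-per-value $\rho$ (except possibly those pinned at their own cost), and raising any of them by more than a negligible amount pushes them behind $i$, at which point the budget is exhausted by the time $\greedy$ reaches them — this is the same argument as in the proof of Theorem~\ref{theorem:additive_eq_exists}, adapted to the fact that the binding bang-per-buck is now $\rho$ rather than $\cpv^*$. Rejected modules under $\bar\vecp$ are each bidding their own cost (either untouched from $\vecp$, where they were already at cost by the original equilibrium reasoning applied after the worst-case reductions, or pushed to cost in step one), so they have no profitable deviation downward, and any upward deviation keeps them rejected because $\rho\,\vecv(\cdot)$ is already the best bang-per-buck a non-selected module can hope to match. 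I expect the main obstacle to be bookkeeping around \emph{iteration} of this construction and the interaction of the two steps — specifically, ruling out a cascade where step two re-admits a module, which shifts the budget, which forces yet another module out, requiring step one again. I would handle this by observing that the set of modules appearing \emph{before} $i$ in the bang-per-buck order is determined once $\vecp(i)$ is fixed, that step two only touches those modules and only raises their prices up to $\rho\,\vecv(\cdot)$, and that the total price of the prefix is thereby monotonically controlled, so the process terminates after a single combined pass rather than genuinely iterating; making this termination argument airtight (and correctly handling ties in bang-per-buck, which is why the lemma is stated with $+\eps$ slack) is where the real care is needed.
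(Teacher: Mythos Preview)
Your construction---drop $\vecp(i)$ to $\vecc(i)+\eps$, then for each newly rejected module $j\in S_{\vecp}\setminus S_{\vecp'}$ reset its price to $\max\{\vecc(j),\rho\,\vecv(j)\}$---is exactly the paper's (the paper subtracts $\eps/2$ from the second term to break ties in front of $i$). Your ``step two'' is unnecessary in the additive setting and your own reasoning about it is internally inconsistent: in this section $\greedy$ accepts a \emph{prefix} of the bang-per-buck order, so any $j'\notin S_{\vecp}$ with $j'\neq i$ has bang-per-buck strictly worse than every accepted module and hence strictly worse than $1/\rho$ (once $i$ jumps ahead of the old cutoff). Thus no previously rejected module can appear before $i$, and the situation you describe never arises. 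Moreover, if such a $j'$ \emph{did} appear before $i$, you would have $\vecp(j')<\rho\,\vecv(j')$ by definition, so setting it to $\rho\,\vecv(j')$ is a genuine increase, contradicting your claim that this preserves $\bar\vecp\le\vecp$.

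The substantive gap is in your equilibrium verification. You assert that ``accepted modules are all pinned at cost-per-value $\rho$'' and that ``rejected modules under $\bar\vecp$ are each bidding their own cost.'' Neither holds. Modules in $S_{\vecp}\cap S_{\vecp'}$---those that sat before $i$'s new position and were accepted both before and after $i$'s drop---keep their \emph{original} prices, possibly with cost-per-value much smaller than $\rho$; for such a $j$, raising by $\eps$ need not push it behind $i$ at all, so your ``pushed behind $i$, budget exhausted'' argument does not apply. Likewise, a rejected module $j\neq i$ with $j\notin S_{\vecp}$ has $\bar\vecp(j)=\vecp(j)$, which may be far above $\vecc(j)$; assuming it is at cost begs the question the lemma is trying to answer. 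In both cases the paper's argument is different and essential: it \emph{inherits} the equilibrium property from $\vecp$. Because $\bar\vecp\le\vecp$ coordinatewise, any deviation by $j$ under $\bar\vecp$ faces at least as many modules ahead of it, at prices no higher, as the same deviation under $\vecp$; since the deviation was unprofitable under $\vecp$ (by hypothesis), it remains unprofitable under $\bar\vecp$. You have the right construction but are missing this inheritance idea for the modules whose prices you never touched, and that is what actually closes the proof.
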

The following lemma shows that running $\greedy$ on any equilibrium yields roughly half of the value compared to running $\greedy$ on an instance where the costs are known. The proof is delegated to Appendix~\ref{proof_eps_additive}
\begin{lemma}\label{lem:additive_eps_eq_deviation}
Let $0 < m < M \leq B$ and $\lambda = M / B$.
Suppose that the cost of every module is in $[m, M-\eps]$.
Let $\vecp$ be any $\eps$-equilibrium and let $S_{\vecp}$ (resp.~$S_{\vecc}$) be the set returned when using $\greedy$ on $\vecp$ (resp.~$\vecc$).
Then $\frac{\vecv(S_{\vecc})}{\vecv(S_{\vecp})} \leq 2 + \frac{\eps}{m} + \left( 1 + \frac{\eps}{m} \right) \cdot \frac{\lambda}{1-\lambda}$.
\end{lemma}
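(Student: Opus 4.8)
The plan is to first replace $\vecp$ by a more structured ``worst-case'' $\eps$-equilibrium, and then run a short prefix-versus-prefix comparison of the two greedy runs. Starting from $\vecp$, I would repeatedly invoke Lemma~\ref{lemma:worst_equil}: as long as the current equilibrium $\vecp'$ has some module $j \notin S_{\vecp'}$ with $\vecp'(j) > \vecc(j) + \eps$, I apply the lemma with this $j$ to obtain a new $\eps$-equilibrium. Each step weakly shrinks the selected set, and whenever it leaves the selected set unchanged it strictly decreases the number of rejected modules bidding above $\vecc + \eps$ (a module already ``fixed'' stays fixed, since prices only decrease and the rejected set only grows). Hence, ordering states lexicographically by $(|S_{\vecp'}|, \#\text{violators})$, the process strictly descends and terminates. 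The result is an $\eps$-equilibrium $\bar\vecp$ with $S_{\bar\vecp} \subseteq S_{\vecp}$ and $\bar\vecp(j) \le \vecc(j) + \eps$ for every $j \notin S_{\bar\vecp}$. Since $\vecv(S_{\bar\vecp}) \le \vecv(S_{\vecp})$ (values are nonnegative) while $S_{\vecc}$ does not depend on the equilibrium, it suffices to prove the stated bound with $\bar\vecp$ in place of $\vecp$; I may also assume $\vecv(S_{\vecp}) > 0$, as otherwise the statement is vacuous.

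Write $S = S_{\bar\vecp}$ and $T = S_{\vecc}$. If $\greedy$ on $\bar\vecp$ takes every module then $T \subseteq S$ and the ratio is at most $1$, so assume there is a first rejected module $i^*$ and set $\rho = \vecv(i^*)/\bar\vecp(i^*)$, the bang-per-buck at the cut; then $\vecv(i)/\bar\vecp(i) \ge \rho$ for $i \in S$ and $\vecv(i)/\bar\vecp(i) \le \rho$ for $i \notin S$, regardless of ties. The stopping rule of Algorithm~\ref{alg:knapsack} gives $\bar\vecp(S) + \bar\vecp(i^*) > B$, and since $i^* \notin S$ we have $\bar\vecp(i^*) \le \vecc(i^*) + \eps \le M = \lambda B$, so $\bar\vecp(S) > (1-\lambda)B$. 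Summing $\vecv(i) \ge \rho\,\bar\vecp(i)$ over $i \in S$ yields $\vecv(S) \ge \rho\,\bar\vecp(S) > \rho(1-\lambda)B$, i.e.\ $\rho B < \frac{\vecv(S)}{1-\lambda}$.

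For the upper bound, split $T = (T \cap S) \cup (T \setminus S)$, so $\vecv(T \cap S) \le \vecv(S)$. For $i \in T \setminus S$ we have $i \notin S$, hence $\vecv(i) \le \rho\,\bar\vecp(i) \le \rho(\vecc(i) + \eps)$; summing and using $\vecc(T \setminus S) \le \vecc(T) \le B$ (the greedy-on-costs set is budget feasible) together with $|T \setminus S| \le |T| \le B/m$ (every cost is at least $m$), we get $\vecv(T \setminus S) \le \rho\bigl(B + \tfrac{\eps}{m}B\bigr) = \rho B\bigl(1 + \tfrac{\eps}{m}\bigr) < \frac{1 + \eps/m}{1-\lambda}\,\vecv(S)$. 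Combining, $\vecv(T) \le \vecv(S)\bigl(1 + \frac{1+\eps/m}{1-\lambda}\bigr)$, and expanding $\frac{1}{1-\lambda} = 1 + \frac{\lambda}{1-\lambda}$ rewrites the factor as $2 + \frac{\eps}{m} + \bigl(1 + \frac{\eps}{m}\bigr)\frac{\lambda}{1-\lambda}$. Since $\vecv(S) = \vecv(S_{\bar\vecp}) \le \vecv(S_{\vecp})$, this gives $\frac{\vecv(S_{\vecc})}{\vecv(S_{\vecp})} \le \frac{\vecv(T)}{\vecv(S)} \le 2 + \frac{\eps}{m} + \bigl(1 + \frac{\eps}{m}\bigr)\frac{\lambda}{1-\lambda}$, as claimed.

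I expect the main obstacle to be the reduction step: verifying that iterating Lemma~\ref{lemma:worst_equil} terminates while simultaneously driving all rejected modules down to near their costs, and in particular checking that newly rejected modules cannot undo progress already made. Everything afterwards is the routine knapsack greedy estimate, the only mild care being the treatment of ties in the bang-per-buck order, which only affects how the cut value $\rho$ is defined and not the inequalities themselves.
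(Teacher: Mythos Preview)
Your proposal is correct and follows essentially the same approach as the paper: first reduce via Lemma~\ref{lemma:worst_equil} to an $\eps$-equilibrium in which every rejected module bids within $\eps$ of its cost, then lower-bound $\vecv(S_{\bar\vecp})$ by (bang-per-buck at the cut) times the spent budget $(1-\lambda)B$, and upper-bound the value missed using that same cut ratio together with the budget-feasibility of $S_{\vecc}$. The paper phrases the second step through an auxiliary module $k'$ (the rejected module with best $\vecv/\vecc$) rather than your explicit split $T = (T\cap S)\cup(T\setminus S)$, but the underlying inequalities and the final arithmetic are identical.
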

In particular, if the largest cost $M$ is small compared to the budget then as $\eps \to 0$, the solution obtained in an $\eps$-equilibrium becomes a $2$-approximation to the solution obtained via $\greedy$ when the costs are known a priori.
Next, we show that $S_{\vecc}$ is itself a good approximation to $S_{\OPT}$.
The proof is standard and relegated to Appendix~\ref{app:additive_greedy_approx_with_cost}.
\begin{lemma}
\label{lemma:additive_greedy_approx_with_cost}
Let $S_{\OPT} \in \argmax \{ \vecv(S) \,:\, \vecc(S) \leq B \}$ be an optimal bundle.
Suppose that $\vecc(i) \leq M$ for every $i \in [n]$.
Let $\lambda = M / B$.
Then $\frac{\vecv(S_{\OPT})}{\vecv(S_{\vecc})} \leq 1 + \frac{\lambda}{1-\lambda}$.
\end{lemma}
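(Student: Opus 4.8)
The plan is to run the standard fractional-greedy analysis of the knapsack problem; the claim is vacuous when $\lambda \ge 1$, so assume $\lambda < 1$. First I would observe that since $\greedy$ is invoked here with $\vec p = \vec c$ and $\I = 2^M$, it orders the modules by decreasing $\vecv(i)/\vecc(i)$ and adds them one by one until the first budget violation, so $S_{\vecc}$ is a prefix $\{1,\dots,r\}$ of that order. If $r = n$ then $S_{\OPT} \subseteq [n] = S_{\vecc}$ and the bound is immediate, so suppose the loop broke at module $r+1$, i.e.\ $\vecc(S_{\vecc}) + \vecc(r+1) > B$. Since $\vecc(r+1) \le M \le B$, the set $S_{\vecc}$ is nonempty and $\vecc(S_{\vecc}) > B - M = (1-\lambda)B$. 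Let $\rho = \vecv(r+1)/\vecc(r+1)$ denote the threshold bang-per-buck, so that $\vecv(i) \ge \rho\,\vecc(i)$ for $i \le r$ and $\vecv(i) \le \rho\,\vecc(i)$ for $i \ge r+1$.

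Second, I would bound $\vecv(S_{\OPT})$ by the value of the knapsack LP relaxation, whose optimum (for items in bang-per-buck order) is attained by taking $\{1,\dots,r\}$ integrally plus a $(B - \vecc(S_{\vecc}))/\vecc(r+1)$-fraction of module $r+1$:
\[
\vecv(S_{\OPT}) \ \le\ \vecv(S_{\vecc}) + \rho \cdot \bigl(B - \vecc(S_{\vecc})\bigr).
\]
To avoid quoting the LP fact one can argue elementarily: split $S_{\OPT} = A \cup A'$ with $A = S_{\OPT} \cap [r]$, use $\vecv(A') \le \rho\,\vecc(A') \le \rho\,(B - \vecc(A))$ and $\vecv(A) - \rho\,\vecc(A) \le \vecv(S_{\vecc}) - \rho\,\vecc(S_{\vecc})$ (each summand $\vecv(i) - \rho\,\vecc(i)$, $i \le r$, is nonnegative), then add, using $\vecc(S_{\OPT}) \le B$.

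Third, I would combine the two estimates: $B - \vecc(S_{\vecc}) < \vecc(r+1) \le M = \lambda B$, while $\vecv(S_{\vecc}) = \sum_{i=1}^r \vecv(i) \ge \rho \sum_{i=1}^r \vecc(i) = \rho\,\vecc(S_{\vecc}) > \rho\,(1-\lambda)B$ gives $\rho < \vecv(S_{\vecc})/((1-\lambda)B)$. Hence $\rho\,(B - \vecc(S_{\vecc})) < \rho\,\lambda B < \frac{\lambda}{1-\lambda}\,\vecv(S_{\vecc})$, and plugging into the displayed inequality yields $\vecv(S_{\OPT}) < \bigl(1 + \frac{\lambda}{1-\lambda}\bigr)\vecv(S_{\vecc})$, which is the lemma. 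There is no real obstacle here: the argument is textbook, and the only care needed is with the boundary cases (greedy exhausting all modules, or $\vecc(S_{\vecc}) = B$ exactly, both of which only make the bound easier) and with tracking strict versus non-strict inequalities so that the ratio comes out exactly as claimed.
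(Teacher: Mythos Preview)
Your proof is correct and follows essentially the same standard greedy/fractional-knapsack argument as the paper: both sort by bang-per-buck, bound $\vecv(S_{\OPT}) - \vecv(S_{\vecc})$ by $\rho \cdot M$ with $\rho = \vecv(r+1)/\vecc(r+1)$, and lower-bound $\vecv(S_{\vecc}) \ge \rho(B-M)$ to conclude. The paper is terser (it just asserts that the first $k+1$ modules upper-bound $\vecv(S_{\OPT})$), while you spell out the LP/elementary justification, but the substance is identical.
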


\begin{theorem}
\label{thm:additive_approx}
Let $0 < m < M \leq B$ and let $\lambda = M / B$.
Suppose that the cost of every module is in $[m, M-\eps]$.
Let $\vecp$ be any $\eps$-equilibrium and let $S_{\vecp}$ be the set returned when using $\greedy$ on $\vecp$.
Let $S_{\OPT} \in \argmax \{ \vecv(S) \,:\, \vecc(S) \leq B \}$ be an optimal bundle.
Then
\[
    \frac{\vecv(S_{\OPT})}{\vecv(S_{\vecp})}
    \leq
    \left(2 + \frac{\eps}{m} + \left( 1 + \frac{\eps}{m} \right) \cdot \frac{\lambda}{1-\lambda}\right) \cdot \left(1 + \frac{\lambda}{1-\lambda} \right).
\]
\end{theorem}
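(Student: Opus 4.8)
The plan is to prove this theorem as a one-line composition of the two preceding lemmas, since the statement is exactly the product of their bounds. First I would check that the hypotheses of both lemmas are met under the assumptions of the theorem: we are given $0 < m < M \le B$, $\lambda = M/B$, and that $\vecc(i) \in [m, M-\eps]$ for every module $i$. The first of these is precisely the hypothesis of Lemma~\ref{lem:additive_eps_eq_deviation}, and since $\vecc(i) \le M - \eps \le M$ for all $i$, the hypothesis $\vecc(i) \le M$ of Lemma~\ref{lemma:additive_greedy_approx_with_cost} also holds. So both lemmas apply verbatim to the instance and to the chosen $\eps$-equilibrium $\vecp$.

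Next I would invoke Lemma~\ref{lem:additive_eps_eq_deviation}, which gives
\[
    \frac{\vecv(S_{\vecc})}{\vecv(S_{\vecp})} \leq 2 + \frac{\eps}{m} + \left(1 + \frac{\eps}{m}\right) \cdot \frac{\lambda}{1-\lambda},
\]
where $S_{\vecc}$ is the output of $\greedy$ on the true cost vector $\vecc$. Then I would invoke Lemma~\ref{lemma:additive_greedy_approx_with_cost}, which gives $\frac{\vecv(S_{\OPT})}{\vecv(S_{\vecc})} \leq 1 + \frac{\lambda}{1-\lambda}$ for the optimal budget-feasible bundle $S_{\OPT} \in \argmax\{\vecv(S) : \vecc(S) \le B\}$. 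Multiplying the two inequalities, and using that $\frac{\vecv(S_{\OPT})}{\vecv(S_{\vecp})} = \frac{\vecv(S_{\OPT})}{\vecv(S_{\vecc})} \cdot \frac{\vecv(S_{\vecc})}{\vecv(S_{\vecp})}$, yields exactly the claimed bound. (If $\vecv(S_{\vecp}) = 0$ then the displayed ratio should be read as the trivial case; since the first module inspected by $\greedy$ has cost at most $M-\eps \le B$ it is always selected, so $\vecv(S_{\vecp}) > 0$ whenever $\OPT > 0$, and the chaining is valid.)

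There is essentially no obstacle here: the work has already been done in Lemmas~\ref{lemma:worst_equil}, \ref{lem:additive_eps_eq_deviation}, and \ref{lemma:additive_greedy_approx_with_cost}, and this theorem is the bookkeeping step that records their combination. The only thing one needs to be slightly careful about is that the same parameter $\lambda = M/B$ and the same cost bounds feed into both lemmas consistently, which I verified above, and that the $\eps$-equilibrium used in both invocations is the same $\vecp$ — which it is, since $S_{\vecc}$ depends only on the costs and not on $\vecp$. As a sanity check at the end I would note that as $\eps \to 0$ and $\lambda \to 0$ the right-hand side tends to $2$, recovering the informally stated $\tfrac12$-approximation of Result~\ref{result:approx}.
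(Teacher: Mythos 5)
Your proof is correct and is exactly the argument the paper intends: Theorem~\ref{thm:additive_approx} is the straightforward product of the bounds in Lemma~\ref{lem:additive_eps_eq_deviation} and Lemma~\ref{lemma:additive_greedy_approx_with_cost}, both of which apply under the theorem's hypotheses. The paper gives no separate proof for this theorem, treating it as an immediate corollary in precisely the way you describe.
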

As $\eps \to 0$, the above ratio becomes $\frac{2-\lambda}{(1-\lambda)^2}$.
Thus as $\lambda \to 0$ (i.e.~the cost of every module becomes small relative to the budget), we get a $2$-approximation.
\section{Price Competition for Buyer with Matroid Constraints}
\label{sec:weighted_matroid}
In this paper, we consider two extensions of the price competition game to buyers with matroid constraints.
In the first extension, we focus on the setting where the buyer's value $\vecv(i) = 1$ for all modules $i$.
In this case, we consider a natural generalization of Algorithm~\ref{alg:knapsack} in that a module is automatically rejected if it is infeasible in the current set selected thus far (see Algorithm~\ref{alg:modified_greedy}).
Note that this is not the same as Algorithm~\ref{alg:greedy} since Algorithm~\ref{alg:greedy} allows modules that come later in the bang-per-buck order to ``replace'' modules that have already been accepted.
However, for this setting we are able to prove that an $\eps$-equilibrium always exists (Theorem~\ref{thm:unweighted_eq_exist}) and and we obtain a $(2-\lambda) / (1-\lambda)^2$-approximation where $\lambda = \max_i \vecc(i) / B$ (Theorem~\ref{theorem:unweighted_approx}).
All the details are relegated to Appendix~\ref{app:price_comp_matroid}.
In the rest of the section, we consider the pricing game when the platform uses Algorithm~\ref{alg:greedy} for module selection for any matroid and value vector. 

\subsection{Existence of Equilibria}
To prove the existence of equilibria, we consider Algorithm~\ref{alg:equilibrium_dynamics_weighted} that construct an equilibria for any given price competition instance with a matroid constraint. The algorithm begins by initializing $\vecp$ as $\vec p = \vec c$ and an ordering $\pi^0$ which is decreasing in their bang-per-buck w.r.t.~$\vecp$ at round zero. We then define the set $A^\ind$ that denotes the set of modules that increase their price at iteration $\ind$ and $\pi^\ind$ that denotes the bang-per-buck order over modules at the price $\vec p^\ind$. 

We can describe the algorithm as follows.
First, we raise the prices of all modules in $A^{k-1}$ so that their bang-per-buck meets that of module $\pi^k(k)$, whose price is still at its cost.
Next, we check whether or not $\pi^k(k) \in \spn(A^{k-1} \cup T^{k-1})$.
If not we add it to $A^{k-1}$ to obtain $A^k$.
Otherwise, we find the (unique) circuit in $A^{k-1} \cup  \{\pi^k(k)\}$. Then we remove the lowest valued module from $C^k$ and replace it with the module $\pi^0(k)$. 

Finally, if adding $\pi^0(k)$ to the set $A^k$ increases the total price to exceed the budget then we remove the module $\pi^0(k)$ from the set $A^k$ and adjust the price of all the modules in $A^{k-1}$ such that it satisfies the budget constraint and no module in $A^{k-1}$ has a profitable deviation. Throughout the algorithm, we maintain the following invariants at iteration $\ind$ whose proof is delegated to Appendix~\ref{proof_of_weighted_invariant} 
\begin{claim}\label{claim:weighted_invarient}
For any iteration $k\leq k^*$ of Algorithm ~\ref{alg:equilibrium_dynamics_weighted}, the following invariants holds:
\begin{enumerate}
    \item $\pi^{k-1}([k-1]) = \pi^k([k-1])$ and $\pi^{k-1}(k) = \pi^{k}(k)$.
    \item when the prices are $\vecp^k$, the bang-per-buck of any module in $A^{k}$ is at least that of any module in $\pi^k([n]) \setminus \pi^k([k])$.
    \item $\vecp^k(A^{k}) \leq B$.
    \item for $k<k^*$, $\rank(A^k) = \rank(\pi([k]))$. 
    \item if $i \leq k<k^*$ and $\pi^{k}(i) \in \pi[k] \setminus A^k$ then $\pi^k(i)$ forms a circuit $C$ with $A^k$ where $\pi^k(i)$ has the lowest value on $C$.
    \item $S_{\bar{\vecp}} = A^{k^*}$. In addition, $\SE$ is a maximum weight independent set in matroid $\mathcal I$ restricted to $\pi^0(1), \dots , \pi^0(k^*)$
\end{enumerate}
\end{claim}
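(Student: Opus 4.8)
The plan is to prove all six invariants \emph{simultaneously} by induction on the iteration counter $k$, since they are mutually reinforcing: the rank identity (4) is what guarantees that the ``unique circuit'' referenced both in invariant (5) and in the circuit-swap step of Algorithm~\ref{alg:equilibrium_dynamics_weighted} genuinely exists and is unique, while invariant (2) is what guarantees that the price raise performed at the top of iteration $k$ does not reorder the length-$k$ prefix and hence keeps invariant (1). The base case $k=1$ is immediate: $A^0$ is empty, $\pi^0$ is by definition the bang-per-buck order at $\vecp=\vecc$, and each invariant holds vacuously. For the inductive step I would assume the invariants at $k-1$ and split into the three branches of the algorithm: (a) $\pi^k(k)\notin\spn(A^{k-1}\cup T^{k-1})$ with the budget still respected after insertion; (b) $\pi^k(k)\in\spn(A^{k-1}\cup T^{k-1})$, triggering a circuit swap; (c) the budget is exceeded, so we do the final price readjustment and $k=k^*$.

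\textbf{Invariants (1)--(3).} The only event in iteration $k$ that can change the bang-per-buck order is the price raise, which lifts every module of $A^{k-1}$ to exactly the bang-per-buck of $\pi^k(k)=\pi^{k-1}(k)$; by the induction hypothesis on (2) those modules already sat (weakly) above every module outside $\pi^{k-1}([k-1])$, so the raise merely compresses them down to a common ratio just above $\pi^{k-1}(k)$'s, leaving the relative order of $\pi^{k-1}([k])$ intact and giving (1). Invariant (2) at $k$ then follows because $A^k$ is, after the raise, a set of modules all at the bang-per-buck of $\pi^k(k)$ together with either $\pi^k(k)$ itself (case a) or the inserted module $\pi^0(k)$, and $\pi^0(k)$'s cost-bang-per-buck is at least that of $\pi^k(k)$ by the definition of $\pi^0$ and invariant (1); every module of $\pi^k([n])\setminus\pi^k([k])$ has bang-per-buck at most that of $\pi^k(k)$ by the sorting. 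Invariant (3) holds by construction in cases (a) and (b), where the branch is only taken when the budget check passes, and by the explicit readjustment in case (c).

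\textbf{Invariants (4)--(5).} These are the matroid-structural core and I would argue them through the circuit and span axioms. For (4) in case (a): from $\pi^k(k)\notin\spn(A^{k-1}\cup T^{k-1})\supseteq\spn(A^{k-1})$ and the induction hypothesis $\rank(A^{k-1})=\rank(\pi[k-1])$, the element $\pi^k(k)$ is independent of both $A^{k-1}$ and $\pi[k-1]$, so $\rank(A^k)=\rank(A^{k-1})+1=\rank(\pi[k-1])+1=\rank(\pi[k])$. In case (b) the swap replaces the minimum-value element of the unique circuit $C^k\subseteq A^{k-1}\cup\{\pi^k(k)\}$ by $\pi^0(k)$, which is a basis exchange inside $\spn(A^{k-1}\cup\{\pi^k(k)\})$ and hence preserves rank, while $\rank(\pi[k])=\rank(\pi[k-1])$ because $\pi^k(k)$ is dependent on $A^{k-1}$. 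For invariant (5) I would note that a module in $\pi[k]\setminus A^k$ is precisely one that was evicted as the least-valued element of a circuit at its own (or an earlier) iteration; by the induction hypothesis it was least-valued on a circuit $C$ with the then-current $A$, and one shows $C$ remains a circuit with $A^k$ on which that module is still least-valued, using that every later step only introduces elements of weakly larger value in a way that cannot split $C$ --- this is essentially the optimality certificate of the weighted matroid greedy.

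\textbf{Invariant (6) and the main obstacle.} Invariant (6) is the payoff. Once the readjustment in case (c) fixes $\barp$, one traces $\greedy$ against the order $\pi^{k^*}$: invariants (2) and (4)--(5) force $\greedy$ to accept exactly the modules of $A^{k^*}$ (each module of $\pi[k^*]\setminus A^{k^*}$ is dropped via precisely the circuit isolated in (5), and any module beyond position $k^*$ overflows the budget), so $S_{\barp}=A^{k^*}$. That $\SE$ is a maximum-weight independent set of $\I$ restricted to $\{\pi^0(1),\dots,\pi^0(k^*)\}$ then follows from invariant (5): each excluded element is the cheapest-in-value element of one of its circuits, which is exactly the exchange characterization of the greedy-optimal weighted independent set. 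I expect the genuine obstacle to be invariant (5) in case (b): showing that a circuit which currently certifies ``module $\pi^k(i)$ is correctly excluded'' is not destroyed by a \emph{later} circuit swap that removes one of that circuit's own members. Making this work is exactly why the algorithm carries the auxiliary set $T^{k-1}$ of ``suddenly selected'' modules and tests membership in $\spn(A^{k-1}\cup T^{k-1})$ rather than $\spn(A^{k-1})$, and the crux of the proof is verifying that this enlarged span test keeps all the eviction-certifying circuits mutually coherent across iterations.
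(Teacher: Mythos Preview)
Your inductive framework and the identification of invariant (5), case (b), as the crux are both correct and match the paper. However, your proposed resolution of that crux contains a misreading of the algorithm and, as a consequence, a genuine gap.

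\textbf{The misreading.} Algorithm~\ref{alg:equilibrium_dynamics_weighted} (the weighted construction, which is what Claim~\ref{claim:weighted_invarient} concerns) carries \emph{no} auxiliary set $T^{k-1}$; it tests only $\pi(k)\notin\spn(A^{k-1})$. You are conflating it with Algorithm~\ref{alg:equilibrium_dynamics}, the unweighted construction in Appendix~\ref{app:price_comp_matroid}. Your diagnosis that ``the enlarged span test keeps all the eviction-certifying circuits mutually coherent'' therefore points to a mechanism that is not present in the algorithm you are analyzing.

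\textbf{The gap in invariant (5).} You assert that the old circuit $C$ certifying the exclusion of $\pi^k(i)$ ``remains a circuit with $A^k$.'' This is exactly what can fail: the swap at iteration $k$ may evict some $j\in C\cap A^{k-1}$ (because $j$ happens to be the minimum-value element on the \emph{new} circuit $C^k$), so $C$ is no longer contained in $A^k\cup\{\pi^k(i)\}$. The paper does not argue that $C$ survives; it manufactures a \emph{new} certifying circuit via the circuit-elimination axiom. Since $j\in C\cap C^k$, there is a circuit $C'\subseteq (C\cup C^k)\setminus\{j\}$; a short rank count (using invariant (4), which says $A^k$ has full rank in $\pi[k]$) then gives $\pi^k(i)\in\spn\bigl((C\cup C^k)\setminus\{i,j\}\bigr)$ with $(C\cup C^k)\setminus\{i,j\}\subseteq A^k$. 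Every element of this set has value strictly larger than $\vecv(\pi^k(i))$: those in $C\setminus\{i\}$ because $i$ was minimal on $C$, and those in $C^k\setminus\{j\}$ because $j$ was minimal on $C^k$ while $j\in C\setminus\{i\}$ forces $\vecv(j)>\vecv(\pi^k(i))$. This explicit use of circuit elimination is the missing ingredient; the phrase ``later steps only introduce elements of weakly larger value in a way that cannot split $C$'' does not handle the removal of $j$.
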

\label{sec:weighted_matroid_equilibrium}
\begin{algorithm}
\caption{Algorithm to Compute Equilibrium for Weighted Matroid}
\label{alg:equilibrium_dynamics_weighted} 
\begin{algorithmic}[1]
\State Initialize price $\vecp^0 \gets \vecc$ and $\pi^0$ such that $\frac{\vecv(\pi^0(i))}{\vecc(\pi^0(i))}\geq \frac{\vecv(\pi^0(j))}{\vecc(\pi^0(j))}$ for $i < j$.
\State Initialize $A^0 \gets \emptyset$.
\For{$\ind = 1, \ldots, n$}

\State Copy $\pi^{\ind-1}, \vec p^{\ind-1}$ into $\pi,\vec p$ for simplicity of notation.
\State Update prices: $\vecp^k(\pi(j)) = \frac{\vecp(\pi(k))}{\vecv(\pi(k))} \cdot \vecv(\pi(j))$ if $j \in A^{k-1} $, otherwise, $ \vecp^k(\pi(j)) = \vecp(\pi(j))$.
\If{$(\pi(\ind) \notin \spn( A^{k-1})$)} \label{line:if_weighted}
    \Comment{Accept $\pi(k)$.}
    \State $A^k \gets A^{k-1} \cup \{\pi(k)\}$
\Else
    \State Let $C^k$ be the unique circuit in $A^{k-1} \cup \{\pi(k)\}$.
    \State Let module $i \in C^k$ has the lowest value.
    \State $A^k \gets (A^{k-1} \cup \pi(k)) \setminus \{i\}$
\EndIf

\State Set $\pi^k$ as bang-per-buck ordering with respect to $\vecp^k$, breaking ties according to $\pi^0$.
\If{$\vecp^k(A^k) > B$} \label{line:budget_check_weighted}
\Comment{Reject last module, update prices, and terminate.}
\State  $A^k \gets A^{k-1}$.
\State Update $\vecp^k(i) \gets \min\left\{\vecv(i) \cdot \frac{B - \vecp^{k-1}(A^{k-1})}{\vecv(A^{k-1})}, \vecp^k(i) \right\}$ for $i \in A^{k-1}$. 
\State \textbf{return} $\vecp^k, \pi^k$
\EndIf
\EndFor

\State Update $\vecp^n(i) \gets \vecv(i) \cdot \frac{B - \vecp^n(A^{n})}{\vecv(T^n)}$ for $i \in A^n$.
\State \textbf{return} $\vecp^n, \pi^n$
\end{algorithmic}
\end{algorithm}
The above invariant leads to the main theorem of this sub section whose proof is delegated to Appendix~\ref{proof_weighted_matroid_exists}.
\begin{theorem}
\label{thm:weighted_matroid_eq_exists}
 Let $\vec {\bar p}$ be the prices computed by Algorithm~\ref{alg:equilibrium_dynamics_weighted} then $\bar{\vecp}$ are equilibrium prices given that ties are broken in favor of modules with higher $\frac{\vec v(\cdot )}{\vec c(\cdot)}$. 
 \end{theorem}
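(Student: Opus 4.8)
The plan is to verify the equilibrium condition module by module: for every module $i$ and every bid $\vecp'(i)$ we must check $u_i(\vecp'(i),\bar{\vecp}_{-i}) \le u_i(\bar{\vecp})$. As a preliminary step I would extract from Claim~\ref{claim:weighted_invarient} and the terminal price updates of Algorithm~\ref{alg:equilibrium_dynamics_weighted} the following facts. First, every module $i \notin S_{\bar{\vecp}}$ bids exactly $\vecc(i)$, because the algorithm only raises a module's price once it has entered some set $A^k$ and, by Invariant~6, the surviving set is precisely $A^{k^*} = S_{\bar{\vecp}}$. Second, $\greedy$ run on $\bar{\vecp}$ selects exactly $A^{k^*}$, which Invariant~6 identifies as the maximum-weight independent set of the matroid restricted to $\pi^0([k^*])$. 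Third, in the non-trivial case the budget is tight, $\bar{\vecp}(S_{\bar{\vecp}}) = B$, coming from the update on Line~\ref{line:budget_check_weighted} (or the final update of the loop); the only exception is the trivial situation in which the greedy never selects anything, where the equilibrium claim is immediate.

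For a rejected module $i \notin S_{\bar{\vecp}}$ we have $u_i(\bar{\vecp}) = 0$, and bidding below $\vecc(i)$ is dominated, so it suffices to rule out selection at any bid $\vecp'(i) > \vecc(i)$. Such a bid only lowers $i$'s bang-per-buck, pushing it strictly later in the greedy ordering while leaving the relative order of all other modules unchanged. If $i$ already lies after position $k^*$ at $\bar{\vecp}$, then by Invariant~2 and the matroid-greedy structure of Invariant~6 the algorithm selects all of $A^{k^*}$ and exhausts $B$ before it ever reaches $i$, and moving $i$ further back keeps it unreached. If instead $i \in \pi^0([k^*]) \setminus A^{k^*}$, then by the circuit structure underlying Invariants~5 and~6 the module $i$ forms a circuit $C$ with $A^{k^*}$ on which $i$ is the unique lowest-value element --- here the stipulated tie-break toward higher $\frac{\vecv(\cdot)}{\vecc(\cdot)}$ is what makes it unique --- and since $C \setminus \{i\} \subseteq A^{k^*}$ is processed before $i$, when $\greedy$ reaches $i$ it lies in the span of the current set and is discarded as the minimum-value element of the circuit. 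Either way $i$ stays rejected, so it has no profitable deviation.

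For a selected module $i \in S_{\bar{\vecp}}$, lowering the bid can only weakly decrease its utility (which stays at most $\bar{\vecp}(i) - \vecc(i)$ regardless of whether $i$ remains selected), so the substantive case is a bid $\vecp'(i) > \bar{\vecp}(i)$, for which I want to show that $\greedy$ drops $i$, giving utility $0$. The higher bid again strictly lowers $i$'s bang-per-buck and moves it strictly later, fixing all other relative positions. Using Invariant~2 (everything in $A^{k^*}$ has bang-per-buck at least that of anything outside $\pi^{k^*}([k^*])$), Invariants~4 and~6 (the set $\greedy$ has built by the time it reaches the new position of $i$ has the same rank as $A^{k^*}$), and budget tightness $\bar{\vecp}(A^{k^*}) = B$, I would argue that by then $\greedy$ has committed an independent set of cost $B - \bar{\vecp}(i) < B$ that already spans $i$; hence adding $i$ at its inflated price either overflows the budget or makes $i$ the lowest-value element of the resulting circuit, so $i$ is discarded, with the tie-breaking rule settling the boundary case where $i$'s deviated bang-per-buck exactly equals a competitor's. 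Thus $u_i(\vecp'(i),\bar{\vecp}_{-i}) = 0 \le u_i(\bar{\vecp})$, and combined with the rejected-module analysis no module can profitably deviate, so $\bar{\vecp}$ is an equilibrium.

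The step I expect to be the main obstacle is the selected-module case, because matroid circuit swaps are non-monotone: as $i$ slides backward in the ordering, the elements $\greedy$ swaps in at intermediate circuits can change, so one cannot simply claim that $\greedy$ reproduces its old behavior on $A^{k^*} \setminus \{i\}$. Making this precise requires leaning on Invariants~4--6 to certify that whatever independent set $\greedy$ assembles before reaching the deviated $i$ has the same rank as $A^{k^*}$ and already uses the full budget, leaving no room for $i$; this bookkeeping, together with the separate handling of the boundary ties via the stated tie-breaking rule and of the degenerate termination, is where the bulk of the work lies, whereas the rejected-module case is comparatively routine.
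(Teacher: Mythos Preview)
Your overall structure is right, but two of your preliminary claims are false and they are precisely what your arguments rest on.

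\textbf{Budget tightness can fail.} The terminal update in Algorithm~\ref{alg:equilibrium_dynamics_weighted} sets $\vecp^k(i)$ to the \emph{minimum} of the budget-exhausting price and the price that matches the bang-per-buck of $\pi^0(k^*)$. When the latter is smaller, one has $\bar{\vecp}(S_{\bar{\vecp}}) < B$. The paper's proof treats this case separately: when $\bar{\vecp}(\SE) < B$, a selected module $i$ that raises its price is blocked not by the remaining budget, but because $\greedy$ now accepts $\pi^0(k^*)$ (whose bang-per-buck equals $\optbpb$, so it precedes the deviated $i$) and the resulting max-weight set on $\pi^0([k^*])$ costs more than $B$, so $\greedy$ exhausts the budget before reaching $i$. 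Your selected-module argument, which concludes by saying $\greedy$ has spent $B - \bar{\vecp}(i)$ and hence overflows upon adding $i$, simply does not apply in this regime.

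\textbf{For rejected modules inside $\pi^0([k^*])$, the circuit need not be processed first.} You assert that for $i \in \pi^0([k^*]) \setminus A^{k^*}$, any deviation $\vecp'(i) > \vecc(i)$ places $i$ after all of $C \setminus \{i\} \subseteq A^{k^*}$. But every module in $A^{k^*}$ sits at bang-per-buck exactly $\optbpb$ under $\bar{\vecp}$, whereas $\frac{\vecv(i)}{\vecc(i)} \geq \optbpb$; so if $i$ deviates to any $\vecp'(i)$ with $\frac{\vecv(i)}{\vecp'(i)} > \optbpb$, it comes \emph{before} (some of) $C \setminus \{i\}$ and is temporarily accepted by $\greedy$. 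Whether it is eventually swapped out then depends on whether $\greedy$ reaches the last circuit element before the budget runs out. The paper handles this by contradiction: if $i$ survives, some $j \in C \setminus \{i\} \subseteq \SE$ was never reached, forcing $i$ to precede $j$; combining $\frac{\vecv(i)}{\vecp'(i)} \geq \frac{\vecv(j)}{\bar{\vecp}(j)}$ with $\vecv(j) > \vecv(i)$ yields $\vecp'(i) < \bar{\vecp}(j)$, and since $\bar{\vecp}(\SE) \leq B$ this contradicts budget exhaustion before $j$. Your ``processed-before'' shortcut bypasses exactly this step and does not survive the small-deviation case. (As a minor related point, your claim that every rejected module bids cost is also false for modules that were once in some $A^{k'}$ and later swapped out, though this does not by itself break the argument.)
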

 
\subsection{Quality of Equilibria}
The following is the main theorem of this section. 
\begin{theorem}
\label{thm:weighted_matroid_approx}
Let $\lambda = \max_i \vecc(i) / B$ and $S_{\OPT} \in \argmax \{ S \,:\, \vecc(S) \leq B, S \in \I \}$.
For any $\eps$-equilibrium $\vecp$, we have
\[
    \vecv(S_{\OPT})
    \leq \left( 1 + \frac{2\lambda}{1-3\lambda} + \frac{1+\eps / \lambda B}{1-\lambda - \eps / B} \right)
    \left( 1 + \frac{\lambda}{1-\lambda} \right) \cdot \vecv(S_{\vecp}).
\]
\end{theorem}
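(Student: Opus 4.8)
The plan is to mirror the three–step argument used for the additive case (Theorem~\ref{thm:additive_approx}): reduce an arbitrary $\eps$-equilibrium to a structurally simple ``worst-case'' equilibrium, compare that equilibrium to $\greedy$ run on the true cost vector $\vecc$, and compare $\greedy$ on $\vecc$ to the omniscient optimum $S_{\OPT}$. The last step is the matroid analogue of Lemma~\ref{lemma:additive_greedy_approx_with_cost}: $\greedy$ on $\vecc$ is the classical greedy for a matroid with a knapsack-style budget side constraint, and since every cost is at most $\lambda B$, a standard exchange argument gives $\vecv(S_{\OPT}) \le \bigl(1 + \tfrac{\lambda}{1-\lambda}\bigr)\vecv(S_{\vecc})$, which is exactly the second factor in the statement. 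So essentially all the work is concentrated in the first two steps, and the bound in the theorem is the product of a ``$\greedy$ on an equilibrium vs.\ $\greedy$ on $\vecc$'' factor (the first parenthesis) and this matroid-greedy-vs-OPT factor.

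For the reduction I would prove a matroid analogue of Lemma~\ref{lemma:worst_equil}: from any $\eps$-equilibrium $\vecp$ there is another $\eps$-equilibrium $\barp$ with $\vecv(S_{\barp}) \le \vecv(S_{\vecp})$ in which every rejected module \emph{except possibly one} bids within $\eps$ of its cost. The construction is the iterative ``peel the worst above-cost rejected module'' process sketched in Section~\ref{sec:technical_overview} and Figure~\ref{fig:worst_equil_sketch}: pick a rejected module $i$ with $\vecp(i) > \vecc(i)+\eps$, lower its bid to $\vecc(i)$, and repair the profitable deviations this creates. As in the additive proof there are two cases — (a) if $i$ is budget-feasible when $\greedy$ inspects it but rejected for matroid reasons (it closes a circuit and has minimum value on it), then $i$'s bid was irrelevant and no repair is needed; (b) if $i$ would have been accepted but for the budget, the later modules it pushes out, and any previously-rejected module the reshuffling now ``resurrects'', must be raised to the greater of their cost and the new bang-per-buck threshold. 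The new difficulty over the additive case is that $\greedy$'s circuit-swap step destroys the ``winners stay winners'' monotonicity (Claim~\ref{claim:winners_stays_winners}) available in the unweighted case; one must track the cascade of newly-rejected/newly-resurrected modules, show it terminates via a monovariant (e.g.\ the number of rejected modules bidding strictly above cost, which never increases and strictly decreases across each outer round), and verify the endpoint is a genuine $\eps$-equilibrium. The single exceptional module is the one that may be left over after the final circuit-swap and cannot be driven to cost without re-creating a deviation; it is responsible for the extra $\tfrac{2\lambda}{1-3\lambda}$ term and for the side condition $\lambda < 1/3$.

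For the second step I would prove the matroid analogue of Lemma~\ref{lem:additive_eps_eq_deviation}: for the canonical equilibrium $\barp$ above, $\vecv(S_{\vecc}) \le \bigl(1 + \tfrac{2\lambda}{1-3\lambda} + \tfrac{1+\eps/\lambda B}{1-\lambda-\eps/B}\bigr)\,\vecv(S_{\barp})$. Let $\theta$ be the equilibrium's bang-per-buck threshold (the bang-per-buck of the first module $\greedy$ rejects for the budget on $\barp$). Every $j\in S_{\barp}$ has $\vecv(j)/\barp(j)\ge\theta$, so $\vecv(S_{\barp})\ge\theta\cdot\barp(S_{\barp})$, and since $\barp(S_{\barp})$ is within one module's price of $B$ (at least $(1-\lambda)B-\eps$) this lower-bounds $\vecv(S_{\barp})$ in terms of $\theta B$. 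For the upper bound on $\vecv(S_{\vecc})$, split $S_{\vecc}$ according to whether a module's cost-bang-per-buck $\vecv(i)/\vecc(i)$ is at least $\theta$. Modules below $\theta$ contribute value at most $\theta\cdot\vecc(S_{\vecc})\le\theta B$ up to lower-order $\eps$ terms. Modules at or above $\theta$ are exactly those bidding their cost in $\barp$, hence are inspected by $\greedy(\barp)$ before the budget is exhausted; by a matroid-exchange/charging argument each such module either lies in $S_{\barp}$ or was swapped out of a circuit in favour of a module of no smaller value in $S_{\barp}$, so — using invariant~6 of Claim~\ref{claim:weighted_invarient} that $S_{\barp}$ is a maximum-weight independent set of the inspected prefix — their total value is at most $\vecv(S_{\barp})$ plus the single uncharged exceptional module of value $\le\lambda B$. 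Combining these estimates with the $\theta B$ lower bound on $\vecv(S_{\barp})$ and chasing constants gives the stated factor, with the $\eps$-dependence being routine bookkeeping exactly as in the additive case.

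The main obstacle is the first step: making the ``peel a rejected module to its cost'' argument work in the presence of $\greedy$'s swap rule. Without the winners-stay-winners property one must simultaneously (i) control a potentially long cascade of modules that get rejected and then resurrected, (ii) certify the cascade is finite, and (iii) check that the terminal price vector is genuinely an $\eps$-equilibrium rather than merely a fixed point of the repair heuristic — in particular that no resurrected module, once raised to the threshold, wants to deviate again. Pinning down exactly which single module must be exempted, and hence the precise $\tfrac{2\lambda}{1-3\lambda}$ loss and the necessity of $\lambda<1/3$, is the delicate quantitative part of the proof.
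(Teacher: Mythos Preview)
Your three-step architecture matches the paper's exactly: reduce to a worst-case equilibrium in which rejected modules (with one exception) bid near cost, compare that equilibrium to $\greedy$ on $\vecc$, and compare $\greedy$ on $\vecc$ to $S_{\OPT}$. Step~3 is the paper's Lemma~\ref{lem:greedy_approx}, and your Step~1 sketch is in the right spirit (the paper organizes it as two separate lemmas, Lemma~\ref{lemma:unselected_lt_k_bid_cost} for rejected modules before $k^*$ and Lemma~\ref{lemma:unselected_gt_k_to_cost} for those after, each handled by a direct case analysis rather than a monovariant argument, but the outcome is the same).

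The genuine gap is in Step~2, specifically your treatment of the exceptional module. You write that it has ``value $\le \lambda B$'', but $\lambda B$ bounds its \emph{cost}, not its value; nothing in your argument constrains $\vecv(k^*+1)$. If you plug ``value $\le \lambda B$'' into your own inequalities you get a term of order $\lambda /(\theta(1-\lambda))$, which blows up when $\theta$ is small. The paper's bound is $\vecv(k^*+1) \le \tfrac{2\lambda}{1-3\lambda}\,\vecv(S_{\vecp})$, and it is obtained by a \emph{deviation} argument that genuinely uses the equilibrium hypothesis (not just the structure of $\greedy$): one first shows there is a subset $A\subseteq S_{\vecp}$ with $\vecp(A)\ge \lambda B$ and every $i\in A$ satisfying $\vecv(i)/\vecp(i)\le \vecv(S_{\vecp})/((1-3\lambda)B)$; then if $\vecv(k^*+1)$ exceeded the claimed bound, module $k^*+1$ could deviate to price $2\lambda B\ge \vecc(k^*+1)+\eps$, which would place it ahead of $A$ in the bang-per-buck order, leave enough budget when $\greedy$ inspects it, and (since $k^*+1$ lies in the max-weight independent set of $[k^*+1]$) force $\greedy$ to retain it --- contradicting equilibrium. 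This argument is what produces both the $\tfrac{2\lambda}{1-3\lambda}$ term and the restriction $\lambda<1/3$; neither falls out of the purely structural matroid-exchange reasoning you outline.
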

In particular, as $\eps, \lambda \to 0$, the approximation ratio approaches $2$. To prove the above theorem, we first prove the following Lemma (proof delegated to Appendix~\ref{proof_of_lemma_poa_high_bpb}) that ensures that for any $\eps$-equilibrium price $\vec p$, there exists an equilibrium price such that all non-selected module with relatively high bang-per-buck sets their price equals to their cost. 

\begin{lemma}
\label{lemma:unselected_lt_k_bid_cost}
Let $\vecp$ be an $\eps$-equilibrium price.
Let $\frac{\vecv(1)}{\vecp(1)} \geq \ldots \geq \frac{\vecv(n)}{\vecp(n)}$ be the bang-per-buck order at $\vecp$ and $S_{\vec p}$ be the set of modules selected by $\greedy$.
If module $i < k^*$ is not selected then there exists an equilibrium price $\bar{\vecp}$ such that (i) $\bar{\vecp} \leq \vecp$, (ii) $\bar{\vecp}(i) \leq \vecc(i) + \eps$, and (iii) $\vecv(S_{\bar{\vecp}}) = \vecv(S_{\vecp})$.
\end{lemma}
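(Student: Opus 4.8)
Let $\vecp$ be an $\eps$-equilibrium and let $i < k^*$ be a non-selected module, i.e.\ $i \notin S_{\vecp}$ but $i$ comes before position $k^*$ in the bang-per-buck order at $\vecp$. Because the matroid constraint makes the greedy rule non-monotone (a swap can ``revive'' earlier-rejected modules), I cannot simply lower $\vecp(i)$ to $\vecc(i)$ and be done; I need to repair the profitable deviations that this creates. The plan is to mimic, in a single controlled step, the two-phase repair described in the technical overview and in Figure~\ref{fig:worst_equil_sketch}, and then argue that the resulting price vector $\bar\vecp$ is again an equilibrium with $\bar\vecp \le \vecp$, $\bar\vecp(i) \le \vecc(i)+\eps$, and (crucially, for the weighted case) the \emph{value} of the selected set is unchanged, $\vecv(S_{\bar\vecp}) = \vecv(S_{\vecp})$.

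\textbf{The construction.} First I decrease $\vecp(i)$ to $\vecc(i)$ and consider $\vecp' = (\vecc(i), \vecp_{-i})$. I claim $i$ stays unselected under $\vecp'$: if $i$ were selected at price $\vecc(i)$ it would be selected at every price in $[\vecc(i), \vecp(i)]$ too (its bang-per-buck only weakens), contradicting $i \notin S_\vecp$ --- modulo the edge case noted in the overview where $i$ is selectable only exactly at $\vecc(i)$, which I handle by taking $\vecc(i)+\eps$ instead. Now let $S$ be the set held by $\greedy$ at the moment it inspects $i$ under $\vecp'$. As in the overview there are two cases. If $i$ is budget-feasible when inspected (so either it forms a circuit immediately and is the min-value element, or it is added and later swapped out), then its price was essentially moot and $\vecp'$ is already an $\eps$-equilibrium with the desired properties; in particular no module's allocation changed, so $S_{\vecp'} = S_\vecp$ and the value is preserved. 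The hard case: including $i$ would violate the budget, so $\greedy$ stops, and all modules after $i$ that were previously selected are now dropped. For each such newly-dropped module $j$, I raise its price to $\max\{\vecc(j),\ \theta \cdot \vecv(j)\}$ where $\theta = \vecc(i)/\vecv(i)$ is $i$'s current cost-per-value; some $j$ cannot absorb this and stay rejected, others become just-barely-at-$i$'s-bang-per-buck. This may revive some module $j' < i$ that was rejected under $\vecp$; in a second phase I raise each such $j'$ up to cost-per-value $\theta$ as well. I let $\bar\vecp$ be the result (with an infinitesimal slack to break ties toward the modules that were originally selected, as in Theorem~\ref{thm:weighted_matroid_approx}'s tie-breaking convention).

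\textbf{Verifying the three properties.} Property~(i), $\bar\vecp \le \vecp$: only module $i$ and modules whose price I \emph{raised} change, and I only raise prices up to cost-per-value $\theta$, which is at most the equilibrium cost-per-value at $\vecp$ (since $i$ was rejected at $\vecp$ while sitting before position $k^*$, its bang-per-buck there was no better than the threshold, i.e.\ $\vecp(i)/\vecv(i) \le \theta'$ for the equilibrium threshold $\theta'$, and $\vecc(i)/\vecv(i) \le \vecp(i)/\vecv(i)$); one checks the raised prices never exceed the old ones. Property~(ii) is immediate from the construction ($\bar\vecp(i) = \vecc(i)$, or $\vecc(i)+\eps$ in the edge case). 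Property~(iii), value preservation, is the key new ingredient versus the additive Lemma~\ref{lemma:worst_equil}: I must show that although the \emph{identities} of the selected modules may shuffle, the matroid-weighted greedy still returns a set of the same total value. Here I invoke the matroid exchange structure --- $\greedy$ on a matroid with weights $\vecv$ returns a maximum-weight independent set among the modules it has enough budget to reach, and the price perturbations I performed only affect the \emph{order} in which modules of already-committed bang-per-buck tiers are processed and which low-value module gets swapped out of a circuit, never the attained weight. Formally I would compare the run of $\greedy$ on $\vecp'$ (in the hard case, where it stopped early) with its run on $\bar\vecp$ and show, tier by tier in bang-per-buck order, that the rank and total value of the accepted set coincide at the point the budget binds.

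\textbf{Checking $\bar\vecp$ is an equilibrium and the main obstacle.} Finally I must verify no module has an $\eps$-profitable deviation at $\bar\vecp$. Selected modules cannot raise their price: doing so pushes them past the binding position in the bang-per-buck order, after which the budget is exhausted before they are reached --- this is where the budget-tightness ($\bar\vecp(S_{\bar\vecp})$ essentially equals $B$, enforced by the last repair step) is used. Modules bidding their cost cannot profitably lower. The delicate modules are exactly those I raised to cost-per-value $\theta$ in phases one and two: I need that none of them can do better by bidding something else. They cannot bid higher (same order argument); they will not bid lower because at cost-per-value just above $\theta$ they are already marginal, and bidding strictly below $\theta$ either keeps them selected at lower profit or is the situation I already engineered them into. \emph{The main obstacle} is precisely controlling this second-phase revival: when I raise a newly-dropped module $j$, the greedy reordering can trigger a further circuit swap that drops yet another module and revives still more, so I need to argue this cascade terminates and that the terminal configuration is stable. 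I expect to handle this by a potential/monotonicity argument --- each repair step either strictly increases the set of modules bidding their cost or strictly decreases $\sum_j \vecp(j)$ among the ``above-cost'' modules, so the process is finite --- together with the observation (analogous to Claim~\ref{claim:winners_stays_winners} in the unweighted case, but needing the weighted invariants of Claim~\ref{claim:weighted_invarient}) that once a module is pinned at cost-per-value $\theta$ it is not dislodged by later steps. The remaining bookkeeping (the exact choice of infinitesimal slacks, the $\eps$ accounting) is routine.
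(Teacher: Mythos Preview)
Your proposal misses the central structural fact that makes this lemma easy and its proof short: for $i<k^*$, the ``hard case'' (adding $i$ violates the budget) simply cannot occur, so the entire two-phase repair you sketch is vacuous. The paper's argument is just: set $\bar\vecp(i)=\vecc(i)+\eps$, leave everything else alone, and observe via Claim~\ref{claim:matroid_first_k_opt} that $\greedy$ selects exactly $S_{\vecp}$ again. The reason the budget never binds at $i$'s new (earlier) position is that the running price $\bar\vecp(G)$ is nondecreasing along the greedy run (each step either adds an element or swaps in a higher-value, hence higher-price, element), and at position $k^*$ one has $\bar\pi[k^*]=\pi[k^*]$, so $G$ there equals $S_{\vecp}$ with $\bar\vecp(S_\vecp)\le\vecp(S_\vecp)\le B$; monotonicity then forces $\bar\vecp(G)\le B$ at every earlier step. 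You never establish this, and instead spend most of the proposal on a case that does not arise.

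Two further problems. First, your argument that ``$i$ stays unselected under $\vecp'$'' is wrong as stated: you claim that if $i$ were selected at price $\vecc(i)$ it would be selected at every higher price, but selection in $\greedy$ with a matroid and budget is \emph{not} monotone in one's own price (raising your price can push you past the budget cutoff). The correct reason $i$ remains unselected is price-independent: $i$ is not in the max-weight independent set of $\pi[k^*]$ (that set depends only on values), and $\greedy$ under $\bar\vecp$ still reaches position $k^*$, so it still outputs that set. Second, in your hard-case analysis you assert value \emph{equality}, $\vecv(S_{\bar\vecp})=\vecv(S_{\vecp})$, but in the genuine hard case (modules with index $>k^*$, handled in Lemma~\ref{lemma:unselected_gt_k_to_cost}) the value can strictly drop --- that lemma only claims $\vecv(S_{\bar\vecp})\le\vecv(S_{\vecp})$. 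Your ``tier-by-tier'' sketch cannot deliver equality there, so if the hard case actually occurred your proof of (iii) would fail. It is only because the hard case is vacuous for $i<k^*$ that the conclusion survives, and that vacuity is exactly the observation you are missing.
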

Next, we state the key technical lemma that ensures that given any $\eps$-equilibrium price $\vec p$, there exists a non-selected module $q$ at price $\vec p$ with relatively lower bang-per-buck and equilibrium price $\bar p$ such that $\bar p(q) \leq \vec c(q)+\eps$, the price of the rest of modules $\bar p(i)\leq \vec p(i)$, and $v(S_{\vec{\bar p}})\leq v(S_{\vec p})$. The proof of the lemma is technical and delegated to Appendix~\ref{proof_of_key_lemma_poa}.
\begin{lemma}
\label{lemma:unselected_gt_k_to_cost}
Let $\vecp$ be an $\eps$-equilibrium.
Let $\frac{\vec v(1)}{\vec p(1)}\geq \dots \geq \frac{\vec v(n)}{\vec p(n)}$ be the bang-per-buck order at price $\vec p$ and $S_{\vec p}$ be the set of selected modules by $\greedy$ such that $k^*$ be the module with the worst bang-per-buck.
Suppose that $n \geq k^* + 2$ and that if $i < k^*$ is rejected then $\vecp(i) \leq \vecc(i) + \eps$.
Let $q$ be the largest index of an un-selected module such that $\vecp(q) > \vecc(q) + \eps$.
There exists equilibrium prices such that (i) $\bar{\vecp}(i) \leq \vecp(i)$ for $i \geq k^* + 1$, (ii) $\bar{\vecp}(q) \leq \vecc(q) + \eps$, (iii) $\bar{\vecp}(i) \leq \vecc(i) + \eps$ for un-selected modules $i < k^*$, and (iv) $\vecv(S_{\bar{\vecp}}) \leq \vecv(S_{\vecp})$.
\end{lemma}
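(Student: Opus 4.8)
The plan is to imitate the additive argument behind Lemma~\ref{lemma:worst_equil}: beginning from the $\eps$-equilibrium $\vecp$, lower $\vecp(q)$ to roughly its cost, then restore an equilibrium through a prescribed sequence of price changes, checking that (i)--(iv) are maintained at every step. Concretely, set $\vecp'(q) := \vecc(q)+\eps$ and leave all other prices fixed. Since $\vecp$ is an $\eps$-equilibrium and $\vecp(q) > \vecc(q)+\eps$, module $q$ remains un-selected at $\vecp'$ (otherwise it has a deviation at $\vecp$ gaining more than $\eps$); the degenerate case where $q$ is acceptable to $\greedy$ \emph{only} at exactly its cost is handled separately and already gives the claim. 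Lowering $q$'s price only increases its bang-per-buck, so $q$ moves weakly earlier in the bang-per-buck order while the relative order of all other modules is untouched. Let $S$ be the partial independent set $\greedy$ holds just before it inspects $q$ at $\vecp'$, and split on whether adding $q$ to $S$ respects the budget.

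\emph{Easy case: $q$ is budget-feasible when inspected.} Since $q\notin S_{\vecp'}$, the only way $\greedy$ drops it is a circuit move --- either $q\in\spn(S)$ and $q$ has the lowest value on the circuit it closes, or $q$ is tentatively inserted and then removed by a later circuit move in which $q$ is the lowest-value element. In both sub-cases the trajectory of $\greedy$ on every other module agrees with the run in which $q$ is simply ignored, so $S_{\vecp'} = S_{\vecp}$, no module becomes newly selected, and no module's utility changes. Hence $\vecp'$ is already an $\eps$-equilibrium, and since $\vecp'\le\vecp$ with $\vecp'(q) = \vecc(q)+\eps$ it satisfies (i)--(iv).

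\emph{Hard case: adding $q$ exceeds the budget.} Then $\greedy$ breaks at $q$, so $S_{\vecp'}=S$; modules that were selected at $\vecp$ but follow $q$ in the new order have been dropped and want to undercut. Writing $\rho := \vecv(q)/\vecp'(q)$ for $q$'s new bang-per-buck, we repair in two passes. \emph{Pass~1}: for each newly-rejected module $j$, reset $\bar{\vecp}(j) := \max\{\vecc(j),\, \rho\cdot\vecv(j)\}$; if $\vecc(j) > \rho\vecv(j)$ the module merely rests at its cost and stays rejected. After Pass~1 every such module and $q$ have bang-per-buck at most $\rho$, the run of $\greedy$ up to $q$ is unchanged, $q$ still triggers the budget break, and $\bar{\vecp}(q) = \vecc(q)+\eps$ survives. \emph{Pass~2}: re-running $\greedy$ after Pass~1, some module $i$ preceding $q$ that was rejected at $\vecp$ may now be accepted because a lower-value member of its circuit was dropped in Pass~1; raise each such $i$ to $\bar{\vecp}(i) := \rho\cdot\vecv(i)$, breaking ties so these modules sit after the genuinely-selected ones. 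Using the matroid exchange property (in the spirit of Claim~\ref{claim:weighted_invarient}) one argues that this changes neither the accepted set nor anyone's incentives, so the process reaches a fixed point $\bar{\vecp}$, which is the claimed $\eps$-equilibrium. Conditions (i)--(iii) then follow from the construction: $q$'s price dropped and no module with original index $\ge k^*+1$ other than $q$ was raised (those with index $>q$ are already at their cost by the choice of $q$; those in $(k^*,q)$ are never touched), and no rejected module with index $<k^*$ is raised, since every module we raise becomes selected. For (iv), in the easy case $S_{\bar{\vecp}}=S_{\vecp}$, and in the hard case $S_{\bar{\vecp}}$ is a maximum-weight independent set greedily chosen from a set of inspected modules no richer in value than the one producing $S_{\vecp}$, so $\vecv(S_{\bar{\vecp}})\le\vecv(S_{\vecp})$ by matroid-greedy optimality; the precise counting is deferred to the appendix.

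\textbf{Main obstacle.} As the technical overview already warns, the whole weight of the proof is in the hard case: showing that Passes~1--2 genuinely terminate at an equilibrium rather than igniting a further cascade of deviations, and that the selected value never rises in the process. The culprit is the non-monotonicity of the matroid greedy --- lowering one rejected module's price can promote a previously-irrelevant module via a circuit swap --- so I expect the appendix argument to be a careful invariant that tracks the bang-per-buck multiset of the selected set, the pivot value $\rho$, and the set of inspected modules, combined with the matroid exchange property to rule out any selected module profitably raising its price past $\rho$ while remaining budget-feasible.
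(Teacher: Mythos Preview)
Your plan matches the paper's: drop $q$ to $\vecc(q)+\eps$, split on whether adding $q$ is budget-feasible when $\greedy$ reaches it, and in the hard case repair via two passes (lower the newly-rejected, raise the newly-accepted). Two specific assertions in your sketch, however, do not survive.

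First, the Pass~2 claim that ``every module we raise becomes selected'' is false. A module $i$ raised to bang-per-buck $\rho$ now sits just before $q$, but it need not lie in the max-weight independent set of that prefix, so $\greedy$ may still reject it. The paper does not avoid this: it raises such modules anyway, verifies the resulting $\bar{\vecp}$ is an $\eps$-equilibrium, and then as a final cleanup applies Lemma~\ref{lemma:unselected_lt_k_bid_cost} to push any raised-but-still-rejected $i<k^*$ back to $\vecc(i)+\eps$, which restores (iii) without disturbing the selected set or the equilibrium.

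Second, the invariant you reach for is not Claim~\ref{claim:weighted_invarient} (which is about Algorithm~\ref{alg:equilibrium_dynamics_weighted}). The pivot of the paper's hard case is a dedicated statement, Claim~\ref{claim:Suq_contains_q}: after both passes, the maximum-value independent set in $S_q\cup\{q\}$ still \emph{contains} $q$ and has total price exceeding $B$. This is what guarantees $\greedy$ breaks precisely at $q$, so no accepted module can raise past $\rho$ and stay in. Proving it requires a $\delta$-perturbation --- the repaired modules are priced at cost-per-value $\tfrac{\vecc(q)+\eps(1-\delta)}{\vecv(q)}$, strictly better than $q$'s --- and a budget comparison between the new selected set $S_q$ and the old one $A_q$; your formula $\rho\cdot\vecv(j)$ (which also inverts bang-per-buck and price-per-value; you want $\vecv(j)/\rho$) lacks this slack.

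Two smaller points. The paper uses a finer preliminary split (its Cases~1--2, based on where $q$'s new bang-per-buck lands relative to $\tfrac{\vecv(k^*+1)}{\vecp(k^*+1)}$ and to $\min_{i\in S_{\vecp}}\tfrac{\vecv(i)}{\vecp(i)+\eps}$) before your easy/hard dichotomy even applies; in those cases $q$'s deviation alone is already an equilibrium. And in your easy case, the assertion that $\greedy$'s trajectory is unchanged while $q$ is temporarily in the working set needs the budget-monotonicity argument of the paper's Case~3c: holding $q$ can raise the running spend, so one must check $\greedy$ does not terminate early before $q$ is swapped out.
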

The above lemma allows us to only focus on analysing the value of equilibria where all non-selected modules are setting their price within $\eps$ additive factor of their cost. This characterizes all the bad equilibria of the pricing game and allows us to complete the proof of the main theorem in Appendix~\ref{proof_of_quality_matroid}.  
\section{Convergence Dynamics of Learning Modules}
\label{sec:convergence_dynamics_of_learning}
Our main result in this section shows that the learning dynamics of the price competition game $\instance$ with multiplicative weight learning algorithm $\{\mathcal L(i):i\in M\}$ converges to the price $\bar \vecp$ computed by Algorithm~\ref{alg:equilibrium_dynamics} under mild assumption on $\vec c, \vec v$ and disretization $\delta$, when $\mathcal I$ is a matroid constraints and slightly distorted payments rule: for initial rounds  $t\leq T_0$, each selected module $i\in M$ gets payment of $\vec p^t(i) + \delta^2\cdot \vec p^t(i)$  and the rest of the modules get payment of $\delta^2 \cdot \vec p^t(i)$ and later after round $t > T_0$, all selected module $m_i\in M$ gets payment of $\vec p^t(i) + \frac{\delta^4}{ \vec p^t(i)}$  and the rest of the modules get payment of $\frac{\delta^4}{ \vec p^t(i)}$. 

The main reason for the distorted payment is the following:  note that at the equilibrium price $\vec{\bar p}$ computed by Algorithm~\ref{alg:equilibrium_dynamics_weighted}, all the modules which are selected at equilibrium prices $\SE$ have identical bang-per-buck $\optbpb$ while the not selected with $\frac{\vec v(i)}{\vec c(i)} < \optbpb$ at price $\bar p$ sets their price equals to their cost. Therefore to speed up the convergence, we initially incentivize modules in $\SE$ to bid higher, and later once modules in $\SE$ start bidding higher than their equilibrium prices, we then incentivize these modules to lower their price if they are bidding much higher than their equilibrium price.  As a result,  we distort the payment rule by an additive factor of $O(\delta^3)$. It is easy to observe that if $\vec S_\vec p$ is constructed via a greedy algorithm with budget $B-\delta$, and $\vec p$ is an $\delta$-equilibrium price then the price vector $\vec p$ is a $(\delta + \delta^2)$-equilibrium price with the new payment rule.

The following is the main theorem of this section. 
\begin{theorem}\label{thm:matroid_convergence}
Given the price competition instance $\instance$ with matroid constraint and one unit of budget, if the platform implements a greedy algorithm (Algorithm~\ref{alg:greedy}) for module selection with distorted payment rule and each module, unaware of market parameters, employs a multiplicative weight update-style price learning algorithm, then the induced price dynamics converges to the price $\bar \vecp$ computed by Algorithm~\ref{alg:equilibrium_dynamics_weighted}. More precisely, 
each module $i\in M$ submits their price $\vec p^T(i) \in [ \bar {\vec {p}}(i) -\sqrt \delta  , \bar {\vec p}(i) + \sqrt \delta]$ with probability at least $1 - \exp \left( - \poly \left(\frac 1 \delta  \right)\right)$ for $T \geq \Omega \left( \poly \left(n , \frac 1 \delta \right)\right)$, where $\bar {\vec p}$ is the equilibrium prices computed by Algorithm~\ref{alg:equilibrium_dynamics_weighted}.
\end{theorem}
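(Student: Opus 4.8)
The plan is to reduce Theorem~\ref{thm:matroid_convergence} to three structural properties of the price vector $\barp$ produced by Algorithm~\ref{alg:equilibrium_dynamics_weighted}, and then to steer the multiplicative-weight dynamics using these properties over the two payment regimes. Write $\SE$ for the set selected by $\greedy$ at $\barp$ and $\optbpb$ for the common bang-per-buck of the modules in $\SE$. The three properties are: \textbf{(P1) weak dominance} --- if $i \in \SE$ is selected by $\greedy$ at some profile $\vecp$, then $i$ is still selected when it unilaterally moves to $\barp(i)$; \textbf{(P2) iterative peeling} --- if every $j \in \SE$ bids at least $\barp(j)$, then the module of $\SE$ with the worst bang-per-buck at that profile is rejected; \textbf{(P3) stability} --- if every $j \in \SE$ bids in $[\barp(j)-\delta, \barp(j)+\delta]$, then each $i \in \SE$ is selected at $\barp(i)$ but is rejected if it bids far above $\barp(i)$. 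I would prove (P1) from the invariants in Claim~\ref{claim:weighted_invarient}, chiefly that $\SE = A^{k^*}$ is a maximum-weight independent set of the matroid restricted to $\pi^0(1),\dots,\pi^0(k^*)$ together with the bang-per-buck ordering invariant; (P2) and (P3) are immediate when $\barp(\SE)=1$ since then $\SE$ exhausts the budget, and when $\barp(\SE)<1$ they follow by exhibiting the ``enforcer'' module guaranteed by the construction that swaps out any $i\in\SE$ which overbids.

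For the first regime $t \le T_0$, I would show that for every $i\in\SE$ and every discretized bid $p \le \barp(i)-\delta$, the per-round advantage of bidding (the discretization of) $\barp(i)$ over bidding $p$ is at least $\delta^2(\barp(i)-p) \ge \delta^3$: by (P1), whenever $i$ is selected at $p$ it is also selected at $\barp(i)$, where it collects a strictly larger base payment, and whenever $i$ is not selected at $p$ the distortion term $\delta^2 p$ is already strictly below $\delta^2 \barp(i)$. Hence the cumulative-utility gap $\sigma_t(\barp(i)) - \sigma_t(p)$ grows at rate $\Omega(\delta^3)$, so by the multiplicative-weight property the probability that $i$ plays such a $p$ at round $t = \poly(n,1/\delta)$ is $\exp(-\poly(1/\delta))$. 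A union bound over the $O(n/\delta)$ pairs $(i,p)$ gives $T_0 = \poly(n,1/\delta)$ after which, with the claimed probability, every $i\in\SE$ bids at least $\barp(i)-\delta$; and the same gap estimate (now driven by the $\delta^4/p$ term, which still favors the larger bid on the selected branch) keeps this ``lower fence'' in force during the second regime.

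For the second regime $t > T_0$, I would condition on the lower-fence event and peel the over-bids one at a time. Order the pairs $(b,i)$ with $i\in\SE$ and $b \ge \barp(i)+\delta$ from worst to best bang-per-buck as $(b_{(1)},i_{(1)}),(b_{(2)},i_{(2)}),\dots$ as in Section~\ref{sec:technical_overview}. Conditioned on every $j\in\SE$ bidding at least $\barp(j)-O(\delta)$ and on no module playing an already-peeled pair, (P2) shows $i_{(1)}$ is rejected whenever it bids $b_{(1)}$ (its bang-per-buck there is bounded away below $\optbpb$), so the per-round advantage of bidding $\barp(i_{(1)})$ over $b_{(1)}$ is at least $\delta^4/\barp(i_{(1)}) - \delta^4/b_{(1)} = \Omega(\delta^5)$ (plus a nonnegative base term), and the multiplicative-weight property forces $i_{(1)}$ to abandon $b_{(1)}$ after another $\poly(n,1/\delta)$ rounds. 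Repeating along the schedule $(b_{(2)},i_{(2)}),(b_{(3)},i_{(3)}),\dots$ and taking a union bound over the $O(n/\delta)$ pairs shows that eventually every $i\in\SE$ bids within $\delta$ of $\barp(i)$. A dual argument handles a module $i\notin\SE$: bidding strictly below $\vecc(i)$ risks being selected at a loss (its bang-per-buck then exceeds $\optbpb$, putting it at the front of $\greedy$), while among the bids at which it is rejected the $\delta^4/p$ term favors the smallest, namely its cost $\barp(i)=\vecc(i)$, so it too converges to $\barp(i)$.

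Putting the two regimes together, with probability $1 - \exp(-\poly(1/\delta))$ every module bids within $\delta \le \sqrt\delta$ of $\barp$ for all $t \ge \Omega(\poly(n,1/\delta))$; by (P3) the resulting profile is a $(\delta+\delta^2)$-equilibrium of the distorted game, so Theorem~\ref{thm:weighted_matroid_approx} (equivalently, the simplified bound $\tfrac{(1-3\lambda)^2}{2-3\lambda}\OPT$) yields the stated value guarantee up to an $O(\delta)$ loss. The step I expect to be the main obstacle is (P1): establishing weak dominance requires unpacking precisely how the greedy swap rule of Algorithm~\ref{alg:greedy} interacts with the maximum-weight-independent-set structure of $\SE$ recorded in Claim~\ref{claim:weighted_invarient}, and, when $\barp(\SE) < 1$, producing and tracking the enforcer module against an arbitrary competitor profile. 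A secondary difficulty is purely quantitative bookkeeping: the union bound must range over all (module, bid) pairs and all $\poly(n,1/\delta)$ rounds, the peeling schedule fixes the order in which modules are driven down, and the discretization error (since $\barp(i)$ need not be a multiple of $\delta$) must be absorbed into the $\sqrt\delta$ slack throughout.
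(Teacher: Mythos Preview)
Your skeleton matches the paper's: the same three structural properties (your (P1)--(P3) are the paper's Lemmas~\ref{lem:eqm_price_dominates}, \ref{lem:module_with_worst_bpb_rejected}, \ref{lem:stability_of_eqm_price}), the same two-regime payment scheme, and the same peel-the-worst-bang-per-buck schedule. Where your proposal diverges from the paper is in the quantitative accounting, and this is a genuine gap rather than a detail.

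The specific error is in your claim that ``the same gap estimate (now driven by the $\delta^4/p$ term, which still favors the larger bid on the selected branch) keeps this lower fence in force during the second regime.'' On the \emph{not}-selected branch the distortion is $\delta^4/p$, which favors the \emph{smaller} bid, and (P1) does not rule this branch out: if $i$ would be rejected at $\barp(i)$ it is also rejected at every $p<\barp(i)$, and then the per-round advantage of $\barp(i)$ over $p$ is $\delta^4/\barp(i)-\delta^4/p<0$. In the second regime, before the peeling has finished, the other modules in $\SE$ may well be overbidding, so $i$ at $\barp(i)$ can indeed be rejected for many rounds. Thus the lower fence is not preserved by any per-round inequality; it can erode. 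The paper handles this by a cumulative balance: the advantage banked in regime~1 (both the $\delta^2 p$ distortion and, crucially, the base-payment gain $\Omega(\barp(i)-p)$ on every round where $\Delta_i$ would be selected) must be shown to dominate the total second-regime erosion. This forces the paper to introduce an auxiliary event $\mathcal F^0$ bounding how often $\Delta_i$ is selected, and to split the analysis into the $\mathcal F^0$ and $\neg\mathcal F^0$ cases.

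The same issue bites your peeling step. You argue a per-round advantage $\Omega(\delta^5)$ of $\barp(i_{(k)})$ over $b_{(k)}$ once the previous pairs have been peeled---but the MWU comparison is against \emph{cumulative} utility from round $1$, and before round $T_{k-1}$ the bid $b_{(k)}$ may have been selected many times, building up a large head start. The paper's Lemma~\ref{lem:bounding_iterations} is exactly the missing piece: it proves a recursive bound $T_k\le \delta^{-6}\bigl(\sum_{i<k}\sqrt{T_i}+(1+\delta^2)T_0\bigr)$ by tracking, case-by-case on whether $i_{(k)}=i_{(k-1)}$, how much cumulative utility $b_{(k)}$ could have accrued by time $T_{k-1}$, and showing the post-$T_{k-1}$ drift eventually overtakes it. Without something like this recursion you cannot bound $T_k$ and hence cannot take the union bound over rounds (the paper sums $\exp(-\gamma_s\delta^6 s)$ over $s\ge T_k^*$, using $\gamma_s\sim 1/\sqrt s$). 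Your proposal should replace the per-round claims in both regimes with this cumulative bookkeeping.
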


\subsection*{Structural Properties of Algorithm~\ref{alg:equilibrium_dynamics_weighted} and Implications}
Before we start proving the main theorem, we make useful structural observations about the equilibrium prices computed by Algorithm~\ref{alg:equilibrium_dynamics_weighted} which will be the main ingredient to show convergence of learning dynamics to equilibrium prices. 
Throughout the section, we let $L := \{  i\in M\setminus \SE: \frac{\vec v(i)}{\vec c(i)} < \optbpb \}$ and $H:= \{  i\in M\setminus \SE: \frac{\vec v(i)}{\vec c(i)} \geq  \optbpb \}$. The following structural lemmas are the key ingredients of the convergence analysis. First, we show that if some module $i\in \SE$ gets selected at price vector $\vec p$ with $\vec p(i)< \barp (i)$ then it can also be selected at price $(\barp(i), \vec p_{-i})$. This implies that each module prefers bidding $\barp(i)$ rather than some price smaller than $\barp(i)$. Throughout this section, we let $\Delta_i = \barp(i) - 10\cdot \delta$ and assume that $\Delta_i <1 - 10 \cdot \delta$.

\begin{lemma}\label{lem:eqm_price_dominates}
Let $\barp$ be the equilibrium price computed by Algorithm~\ref{alg:equilibrium_dynamics_weighted} and $\SE$ be the set of the selected module at price $\barp$. Then if module $i$ is selected by the greedy algorithm at price $\vec p$ then module $i$ also gets selected by the greedy algorithm at price $\vec p' = (\barp(i), \vec p_{-i})$.
\end{lemma}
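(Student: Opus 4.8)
The plan is to exploit the structural invariants of Algorithm~\ref{alg:equilibrium_dynamics_weighted} established in Claim~\ref{claim:weighted_invarient}, in particular the facts that $S_{\bar{\vecp}} = A^{k^*}$, that all modules in $\SE$ share a common bang-per-buck $\optbpb$ at price $\barp$, and most importantly invariant (6), which says that $\SE$ is a maximum-weight independent set of $\I$ restricted to $\{\pi^0(1), \dots, \pi^0(k^*)\}$. Fix $i \in \SE$ and a price vector $\vecp$ at which $\greedy$ selects $i$. We may assume $\vecp(i) > \barp(i)$ is not the interesting direction (the statement as used only needs the case $\vecp(i) < \barp(i)$, but the argument below does not actually need a sign assumption); set $\vecp' = (\barp(i), \vecp_{-i})$. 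Lowering $i$'s price from $\vecp(i)$ to $\barp(i)$ only \emph{increases} $i$'s bang-per-buck, so in the sorted order used by $\greedy$ at $\vecp'$, module $i$ moves weakly earlier than its position under $\vecp$; every other module keeps its price and their relative order is unchanged.

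The first key step is to argue about when $\greedy$ inspects $i$ under $\vecp'$. Let $S$ be the set $\greedy$ has accumulated just before inspecting $i$ under $\vecp'$. Because $i$ moved earlier, $S$ is a subset of the set $\greedy$ had accumulated before inspecting $i$ under $\vecp$ (using that the prefix up to $i$'s new position is a sub-multiset of the prefix up to $i$'s old position, and $\greedy$'s state is determined by the prefix processed so far). Since $i$ was selected under $\vecp$, when $\greedy$ inspected $i$ under $\vecp$ the running set plus $i$ (after the circuit swap, if any) was budget-feasible; as $S$ is contained in that earlier running set and $\barp(i) \le \vecp(i)$, the candidate set $S \cup \{i\}$ (or its post-swap version) is also budget-feasible under $\vecp'$. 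This handles the budget check in Algorithm~\ref{alg:greedy}. The remaining task is to show that the matroid/circuit logic does not \emph{discard} $i$: either $i \notin \spn(S)$, in which case $i$ is added outright; or $i \in \spn(S)$ and we must show $i$ is not the minimum-value element of the unique circuit $C \subseteq S \cup \{i\}$.

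The crux — and the step I expect to be the main obstacle — is ruling out the case where $i \in \spn(S)$ and $i$ has the strictly lowest value on the circuit $C$, which would cause $\greedy$ to swap $i$ out. Here is where invariant (6) does the work: every module in $S$ that lies in $\spn$ together with $i$ and has value $\le \vecv(i)$ must itself be a member of $\SE$ or come from $H \cup L$; using that $\SE$ is a max-weight independent set on the relevant ground set and the exchange property of matroids, one shows that replacing $i$ by a lower-value element to form an independent set contradicts either the maximality of $\SE$ or the fact that $i$ was kept by the construction algorithm at the iteration where it entered $A^{k}$. Concretely, I would invoke Claim~\ref{claim:weighted_invarient}(5): at the iteration $k$ when $i$ was finalized into $A^k$, any earlier-in-order module not in $A^k$ forms a circuit with $A^k$ on which it has the \emph{lowest} value — so the "heavy" elements that could out-rank $i$ in a circuit are precisely those that $\greedy$ would itself demote, not $i$. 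Translating this invariant from the algorithm's internal order $\pi^0$ to the arbitrary order induced by $\vecp'$ is the delicate part: one must check that any circuit $C \subseteq S \cup \{i\}$ arising under $\vecp'$ also satisfies that $i$ is not its min-value element, by comparing $C$ against the circuit structure recorded by the algorithm and using that $S \subseteq A^{k^*} = \SE$ together with circuit-elimination (if two circuits both contain $i$, their symmetric difference analysis pins down the min-value element). Once this is established, $\greedy$ either adds $i$ or swaps out someone other than $i$, so $i \in \greedy(\vecp')$, completing the proof.
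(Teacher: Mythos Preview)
There is a genuine gap. Your argument hinges on ``lowering $i$'s price from $\vecp(i)$ to $\barp(i)$ \ldots\ so module $i$ moves weakly earlier,'' but the direction that matters---and the one you yourself flag as interesting---is $\vecp(i) < \barp(i)$: the price is \emph{raised} to $\barp(i)$, $i$ moves \emph{later} in the bang-per-buck order, and the running set $S$ that $\greedy$ has accumulated before inspecting $i$ under $\vecp'$ is a \emph{superset}, not a subset, of the corresponding set under $\vecp$. Your budget-feasibility and containment arguments therefore run in the wrong direction. Even restricted to the lowering case the outline is incomplete: you only control what happens at the instant $\greedy$ first inspects $i$, but being added at that instant does not mean $i$ survives---any later module with larger value that forms a circuit with the working set containing $i$ will swap $i$ out, and nothing in your plan rules this out.

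The paper's argument is structurally different and does not track the inspection moment. It instead identifies the first module $\pi(\bar i)$ (in the $\vecp$-order) whose processing would eject $i$ from the running max-weight independent set, and notes that since $i$ is selected at $\vecp$, $\greedy$ must have terminated on budget before ever reaching $\pi(\bar i)$. The structural input your plan does not isolate is the partition $M\setminus\SE = L \cup H$: every $j\in L$ satisfies $\vecv(j)/\vecp(j) \le \vecv(j)/\vecc(j) < \optbpb = \vecv(i)/\barp(i)$ at \emph{every} price vector, whereas (using invariants (5)--(6) of Claim~\ref{claim:weighted_invarient}) $i\in\SE$ belongs to the max-weight independent set of any subset of $\SE\cup H$ containing it. Together these force any ``dangerous'' module $\pi(\bar i)$ to have bang-per-buck strictly below $\optbpb$, so the prefix $\{\pi(1),\dots,\pi(\bar i)\}$ coincides as a set with $\{\pi'(1),\dots,\pi'(\bar i)\}$. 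Since only $i$'s price changed between $\vecp$ and $\vecp'$, $\greedy$ under $\vecp'$ again cannot reach $\pi(\bar i)$, and hence $i$ remains in the output. The step you spend most effort on---showing $i$ is not the min-value element of a circuit at its inspection---is subsumed by this global argument and is not where the real difficulty lies.
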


Next, we show that if all the modules in the set $\SE$ set their prices larger than their equilibrium prices then the module with the worst bang-per-buck ratio does not get selected by the greedy algorithm. This lemma essentially puts ``down pressure" on the module's prices once they start bidding larger than their equilibrium prices. 

\begin{lemma}\label{lem:module_with_worst_bpb_rejected}
Let $\barp$ be the equilibrium price computed by Algorithm~\ref{alg:equilibrium_dynamics_weighted} and $\SE$ be the set of the selected module at price $\barp$. We consider set $\SE':=\SE \cup \{\pi^0(k^*)\}$ where $k^*$ is the last iteration of Algorithm~\ref{alg:equilibrium_dynamics_weighted}. For any price vector $\vec p$ satisfying $\vec p(i) \geq  \barp(i)$  for all $i\in \SE$, we let $S_\vec p$ be the selected module at price $\vec p$. Then for module $i^*:=\argmin \left\{i\in \SE: \frac{\vec v(i)}{\vec p(i)} > \optbpb \right\}$, we have $i^* \notin S_\vec p$. In addition, we have $\vec p(i^* \cup S_\vec p)> B$. 
\end{lemma}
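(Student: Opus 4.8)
The plan is to trace a run of $\greedy$ on the price vector $\vec p$ and pin down what happens when it reaches $i^*$. Set $\beta := \vec v(i^*)/\vec p(i^*)$, so $\beta > \optbpb$ by the choice of $i^*$, and recall the structure of $\barp$ produced by Algorithm~\ref{alg:equilibrium_dynamics_weighted}: each module of $\SE$ sits at bang-per-buck either exactly $\optbpb$ (the modules that were ``raised'' to the common level) or strictly above $\optbpb$ (the modules that the $\min$ in the terminating update kept at a low price). Since $\vec p(i) \ge \barp(i)$ for every $i\in\SE$, the bang-per-buck of any such module at $\vec p$ is at most its value at $\barp$; hence $i^*$ is one of the ``frozen'' modules, every ``raised'' module of $\SE$ has bang-per-buck $\le\optbpb<\beta$ at $\vec p$ and is therefore processed by $\greedy$ \emph{after} $i^*$, while the frozen modules with bang-per-buck $>\beta$ are processed \emph{before} $i^*$ (ties handled by the tie-breaking rule, with genericity supplied by the usual perturbation of values).

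The first tool is a monotonicity observation about $\greedy$ itself: within a single run, the $\vec p$-price of the working set is non-decreasing, because whenever a circuit swap $G\mapsto G\cup\{i\}\setminus\{j\}$ occurs, $j$ is the least valuable element of the circuit and was processed before $i$, so $\vec v(i)\ge\vec v(j)$ and $\operatorname{bpb}(i)\le\operatorname{bpb}(j)$, whence $\vec p(i)\ge\vec p(j)$. Consequently $\vec p(S_{\vec p})$ dominates the $\vec p$-price of every intermediate working set, and it suffices to exhibit one intermediate working set $G$ with $\vec p(G)+\vec p(i^*)>B$: then $\greedy$ cannot lock $i^*$ in (if $i^*\notin\spn(G)$ the set $G\cup\{i^*\}$ overshoots the budget and $\greedy$ halts; if $i^*\in\spn(G)$ then, by invariant~5 of Claim~\ref{claim:weighted_invarient} and the fact that the modules preceding $i^*$ in the $\vec p$-order already carry $\SE$'s high-value content, $i^*$ is the least valuable element of its circuit and is never permanently retained), and moreover $\vec p(\{i^*\}\cup S_{\vec p})\ge\vec p(i^*)+\vec p(G)>B$.

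To produce such a $G$ I would split on whether the budget is tight at $\barp$. If $\barp(\SE)=B$, then $\vec p(\SE)\ge B$, and once $\greedy$ on $\vec p$ has processed the raised modules of $\SE$ its working set — which, by invariants~4--6 of Claim~\ref{claim:weighted_invarient}, has the same rank as $\SE$ and agrees with $\SE$ up to matroid exchanges — must have $\vec p$-price at least $B-\vec p(i^*)$, since realizing $\SE$'s value at full rank under the elevated prices costs at least $\barp(\SE)=B$ minus the contribution of $i^*$. If $\barp(\SE)<B$, I would instead use the module $\pi^0(k^*)$ (this is why $\SE':=\SE\cup\{\pi^0(k^*)\}$ appears): by Claim~\ref{claim:weighted_invarient} it forms a circuit with $\SE$ in which it has the lowest value, and combining this with Lemma~\ref{lem:eqm_price_dominates} and $\vec p\ge\barp$ on $\SE$ shows that when the worst-bang-per-buck module $i^*$ of $\SE'$ is handled it is evicted in favour of a strictly more valuable module; a short accounting using the near-tightness of $\barp$ then yields $\vec p(\{i^*\}\cup S_{\vec p})>B$ as well.

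The step I expect to be the main obstacle is precisely the budget bookkeeping when $\barp(\SE)<B$: the frozen modules (including $i^*$) are \emph{cheap} — that is exactly why the $\min$ froze them — so the modules processed before $i^*$ cannot be shown to exhaust the budget by themselves, and one genuinely has to marry the matroid exchange structure (to argue $\greedy$'s working set carries $\SE$'s value at full rank after the raised modules are processed) with the budget identity at $\barp$ to conclude that this forces a $\vec p$-price of nearly $B$. Everything else — the ties in the bang-per-buck order, the distinction between $i^*$ being skipped immediately versus added and later evicted, and the degenerate sub-case $\pi^0(k^*)\in\SE$ — is routine case-checking that should need no idea beyond Claim~\ref{claim:weighted_invarient}, Lemma~\ref{lem:eqm_price_dominates}, and the monotonicity of the greedy working-set price.
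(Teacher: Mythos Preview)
Your structural picture of $\barp$ is off, and this undermines the whole plan. In Algorithm~\ref{alg:equilibrium_dynamics_weighted} there is no distinction between ``frozen'' and ``raised'' modules: at every iteration the prices of \emph{all} of $A^{k-1}$ are raised to the common bang-per-buck of $\pi(k)$, and the terminating $\min$ still assigns the same bang-per-buck to every module of $A^{k-1}$ (both arguments are proportional to $\vecv(i)$). Consequently every module of $\SE$ sits at bang-per-buck \emph{exactly} $\optbpb$ under $\barp$; there is no sub-population with bang-per-buck strictly above $\optbpb$. You appear to be importing the $A^k/T^k$ split from Algorithm~\ref{alg:equilibrium_dynamics}, which is not present here. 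Once this is corrected, your account of which modules precede $i^*$ in the $\vec p$-order collapses, and with it the case split on $\barp(\SE)=B$ versus $\barp(\SE)<B$: the ``frozen modules with bang-per-buck $>\beta$ are processed before $i^*$'' step simply has no content.

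The paper's argument avoids all of this by going straight to the exchange. Let $\pi$ be the bang-per-buck order at $\vec p$ and $i^*=\pi(\ell)$; let $S^*$ be the maximum-weight independent set in $\pi[\ell]$. The point is that any $j\in S^*\setminus \SE'$ must lie in $H$ (it has $\vec v(j)/\vec p(j)\ge \vec v(j)/\vec c(j)\ge\optbpb$ forces $j\notin L$), and by the matroid exchange between $S^*$ and $\SE'$ such a $j$ displaces some $i\in\SE'$ with $\vec v(j)>\vec v(i)$; combining $\vec v(j)/\vec p(j)\le\optbpb=\vec v(i)/\barp(i)$ with $\vec v(j)>\vec v(i)$ gives $\vec p(j)>\barp(i)$. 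Summing over the exchange yields $\vec p(S^*)\ge\barp(\SE')>B$, and since $\greedy$'s working set is $S^*$ (or a proper predecessor) when it reaches $i^*$, the budget is already blown. Your monotonicity observation about the working-set price is correct and is used implicitly here, but the case split and the appeal to Lemma~\ref{lem:eqm_price_dominates} are unnecessary; the single exchange argument handles both the tight and slack budget cases uniformly via $\SE'=\SE\cup\{\pi^0(k^*)\}$.
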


Next, finally, we show that if all modules bid close to the equilibrium price then no module can deviate significantly to improve their utility.  

\begin{lemma}\label{lem:stability_of_eqm_price}
Let $\barp$ be the equilibrium price computed by Algorithm~\ref{alg:equilibrium_dynamics_weighted} and $\SE$ be the set of the selected module at price $\barp$. We consider set $\SE':=\SE \cup \{\pi^0(k^*)\}$ where $k^*$ is the last iteration of Algorithm~\ref{alg:equilibrium_dynamics_weighted}. For any price vector $\vec p$ satisfying $\barp (i) + 10\cdot \delta \geq \vec p(i) \geq  \barp(i) - 10\cdot \delta$ for all $i\in \SE'$, we let $S_\vec p$ be the selected module at price $\vec p$. Then any module $i\in \SE$ gets selected at price $\barp(i)$ and does not get selected at price $\Delta_i + \sqrt \delta > \barp(i) \cdot (1 +\sqrt \delta)$. 
\end{lemma}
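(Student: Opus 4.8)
The plan is to establish the two assertions separately --- the \textbf{lower bound} that $i$ bidding $\barp(i)$ is selected, and the \textbf{upper bound} that $i$ bidding $\Delta_i+\sqrt\delta$ is rejected --- and in both to first dispose of the case $\barp(\SE)=B$ (here $B=1$), which is the one the technical overview singles out. The general case $\barp(\SE)<B$ is reduced to it exactly as for Lemmas~\ref{lem:eqm_price_dominates} and~\ref{lem:module_with_worst_bpb_rejected}: the equilibrium construction guarantees a module outside $\SE$ that can swap in for any $\SE$-module bidding above $\barp$, which lets one re-run the same arguments with $B$ replaced by $\barp(\SE)$ up to $O(\delta)$ slack. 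I will also fix $\delta$ small enough that $\sqrt\delta>10n\delta$ and that $10\delta$ is small relative to $\min_{j\in\SE}\barp(j)$; both are part of the ``mild assumptions on $\vec c,\vec v,\delta$'' in the theorem, and they let me treat every $10\delta$-perturbation of an $\SE'$-price as changing the corresponding bang-per-buck by only an $O(\delta)$ relative amount.

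\textbf{Lower bound.} Let $\vec p'$ agree with $\vec p$ off coordinate $i$ and set $\vec p'(i)=\barp(i)$, so $i$ has bang-per-buck exactly $\optbpb$. I run $\greedy$ on $\vec p'$ and track three groups of competitors. (a) Every module $j$ with $\vec v(j)/\vec c(j)<\optbpb$ has bang-per-buck strictly below $\optbpb$ at \emph{every} admissible bid (since $\vec p'(j)\ge\vec c(j)$), so $\greedy$ reaches it only after $i$ and it can never eject $i$. (b) Every other non-selected module $j\notin\SE$ has, by the equilibrium invariant (Claim~\ref{claim:weighted_invarient}, items (5)--(6)), the property that it forms a circuit with $\SE$ on which it is the minimum-value element; matroid circuits do not depend on prices, so the same structural fact holds at $\vec p'$, and a short induction on the greedy order shows such a $j$ is either never added or is later exchanged out without ever ejecting an $\SE$-module. (c) The $\SE$-modules are only $10\delta$-perturbed, so the prefix of the greedy order inspected before $\greedy$ terminates is, after reordering within $\SE$, the same family inspected at $\barp$. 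Combining these, when $\greedy$ considers $i$ its current set $G$ is contained in $\SE\setminus\{i\}$, hence $G\cup\{i\}\subseteq\SE$ is independent and $i$ is added; and $\vec p'(G\cup\{i\})\le\barp(\SE)+O(n\delta)=B+O(n\delta)$, so for $\delta$ small the budget is not violated at that step and $i$ survives to the output.

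\textbf{Upper bound.} Now let $\vec p'$ agree with $\vec p$ off $i$ with $\vec p'(i)=\Delta_i+\sqrt\delta=\barp(i)-10\delta+\sqrt\delta$. Since $\sqrt\delta>10\delta$ this bid strictly exceeds $\barp(i)$, and the bang-per-buck of $i$ at it equals $\optbpb\cdot\barp(i)/(\barp(i)-10\delta+\sqrt\delta)$, below $\optbpb$ by an \emph{additive} $\Omega(\sqrt\delta)$; every other $\SE$-module is within $O(\delta)$ (relative) of $\optbpb$, so $i$ now has strictly the worst bang-per-buck inside $\SE$ and $\greedy$ inspects all of $\SE\setminus\{i\}$ before it. Moreover $\vec p'(\SE)\ge\barp(\SE)-10n\delta+\sqrt\delta>B$ by the choice of $\delta$, so $\SE$ does not fit in the budget. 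Running $\greedy$ on $\vec p'$: either the budget is exhausted before $i$ is reached (so $i$ is trivially rejected), or, when $i$ is reached, the current independent set $G$ contains $\SE\setminus\{i\}$ (again using (a) and (b) to rule out anything else having consumed that budget); then $G\cup\{i\}\supseteq\SE$ is independent, so $i$ is not spanned and $\greedy$ tries to add it, but $\vec p'(G\cup\{i\})\ge\vec p'(\SE)>B$ triggers the break and $i$ is rejected. One can instead phrase this through Lemma~\ref{lem:module_with_worst_bpb_rejected}, after noting that lifting each $\vec p(j)$, $j\in\SE$, up to $\barp(j)$ only pushes $j$ later in the order and keeps $i$ strictly worst, which puts us in that lemma's hypotheses.

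I expect the genuine difficulty to be part~(b) of the competitor analysis: non-selected modules $j\notin\SE$ with $\vec v(j)/\vec c(j)\ge\optbpb$ have high enough bang-per-buck to be processed \emph{before} the $\SE$-modules, so $\greedy$ may add them temporarily and exchange them out only once the relevant $\SE$-modules arrive, and one must show this shuffling never displaces $i$ and never leaves such a $j$ in the output or occupying budget. The right tool is the invariant maintained by Algorithm~\ref{alg:equilibrium_dynamics_weighted} (Claim~\ref{claim:weighted_invarient}, items (4)--(6)) --- in particular that $\SE$ is a maximum-weight independent set of $\mathcal I$ restricted to $\pi^0(1),\dots,\pi^0(k^*)$, and that every inspected-but-unselected module sits as the min-value element of a circuit with the current accepted set --- together with the robustness of these circuit/rank facts under $O(\delta)$ price perturbations. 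The remaining bookkeeping --- calibrating how small $\delta$ must be (beating $1/n$ and $\min_{j\in\SE}\barp(j)$) and checking the $\barp(\SE)<B$ reduction --- is routine.
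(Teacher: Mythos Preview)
Your argument shares the paper's two-part scaffold and the $\barp(\SE)=B$ versus $\barp(\SE)<B$ case split, and your upper bound is essentially the paper's proof: once $i$ bids $\Delta_i+\sqrt\delta$ it is strictly last in $\SE$ by bang-per-buck, and $\vec p'(\SE)\ge\barp(\SE)-10n\delta+\sqrt\delta>B$ (with the parallel $\SE'$ computation when $\barp(\SE)<B$, using the standing assumption $\barp(\SE')>B+2\sqrt\delta$).

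The lower bound is where the two routes diverge. The paper does \emph{not} argue directly about $i$ at price $\barp(i)$. It first shows that at the given perturbed vector $\vec p$ (all of $\SE'$ within $10\delta$ of $\barp$), greedy runs out of budget while still inside $\pi^0[k^*]$---because $\vec p(\SE')>B$---and therefore selects $i$; only then does it invoke Lemma~\ref{lem:eqm_price_dominates} to transfer selection to the deviation price $\barp(i)$. That lemma is doing real work: it absorbs exactly the $H$-module interference you flag as the hard part~(b).

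Your direct route has a gap at this point. In~(b) you correctly allow that an $H$-module $j$ may be added and ``later exchanged out'', yet you then assert ``when $\greedy$ considers $i$ its current set $G$ is contained in $\SE\setminus\{i\}$''. These are inconsistent: if $j$'s circuit with $\SE$ runs through $i$, or through $\SE$-modules bidding above $\barp$ (hence processed \emph{after} $i$), then $j$ is still sitting in $G$ when $i$ arrives. Your budget estimate $\vec p'(G\cup\{i\})\le\barp(\SE)+O(n\delta)$ then no longer applies, and even granted it overshoots $B$ rather than falling below it when $\barp(\SE)=B$. You can repair the direct route by additionally invoking the price-monotonicity of greedy's working set (every swap replaces a module by a strictly higher-priced one, as shown inside the proof of Claim~\ref{claim:matroid_greedy_weighted_output}), which caps intermediate spend by the final spend and lets the budget argument go through; the cleaner repair is simply to follow the paper and close via Lemma~\ref{lem:eqm_price_dominates}.
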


\subsection*{Proof Sketch of Theorem~\ref{thm:matroid_convergence}}
We first leverage Lemma~\ref{lem:eqm_price_dominates} to define an event stating that after a sufficiently large number of iterations, all modules in $\SE$ start setting their price higher than their equilibrium prices.  More formally, we let $T_0$ be sufficiently large the round (determined later) and define $\mathcal E^0 = \{ \forall t\geq T_0:\vec p^t(i) > \Delta_i,  \forall i \in \SE  \}$.

We first observe that given any price vector $\vec p$ and conditioned on the event $\mathcal E^0$, the module with the worst bang-per-buck in the set $\SE'$ defined as $\SE' :=\SE \cup \{\pi^0(k^*)\}$ where $k^*$ is the last iteration of Algorithm~\ref{alg:equilibrium_dynamics_weighted}, will not be selected due to Lemma~\ref{lem:module_with_worst_bpb_rejected}.
Our overall proof approach is to show that conditioned on the event $\mathcal E^0$, the modules in $\SE'$ stop posting the prices that lead to smaller bang-per-buck for the platform since it will not be selected at such a higher price. In order to demonstrate that, we define an order over the possible prices w.r.t.~their bang-per-buck values for the buyer. Next, we define an order over the prices with respect to their bang-per-buck value. We iteratively define 
$$(  b_{(k)},   i_{(k)}): = \argmin_{\{(b,i)\in \bar{ \mathcal B}  \setminus \{(   b_{(1)},   i_{(1)}),\dots ,(   b_{(k-1)},   i_{(k-1)}) \} \}} \left \{v(i)/b \right \}.$$
Here, $(   b_{(1)},   i_{(1)}): = \argmin_{(b,i) \in \bar{\mathcal B}} \{v(i)/b \}$ and $\bar{\mathcal B}:=\bigcup_{i\in \SE'} \{(\Delta_i + \delta, i) , \dots , (1,i) \}$.  First, we prove the following claim that shows that module $i_{(1)}$ will never set price $b_{(1)}$ with high probability after $\poly(1/\delta)$ many rounds as it will never be selected.

Next, we extend this argument and iteratively show that the modules will stop posting prices larger than their respective $\Delta_i$'s. We let $T_1 < T_2 <\dots <T_K$ and $T^*_1 < T^*_2 <\dots <T^*_K$ such that $T_k \leq T_{k}^*$ for some $K < \frac{n}{\delta}$ and the events $$\mathcal E^k:= \{ \vecp^s(i(k)) < b_{(k)} \text{ for all } s \geq T_k\},$$
we have that the $T_k$'s and $T_k^*$'s satisfy the following conditions.
\begin{enumerate}
	\item We inductively define $T_k^*$ as follows. Suppose we are given $T_1,\dots , T_{k-1}$; $T^*_0,\dots , T^*_{k-1}$. Conditioned on the events $\mathcal E^1,\dots , \mathcal E^{k-1}$, we let $T_k^* > T^*_{k-1}$ be the smallest round (if it exists) such that, 
	\begin{equation}\label{condition1_weighted}
		\sum_{s\leq T^*_k} u_{i_{(k)},s}(\Delta_{i_{(k)}},\vec p^s_{-i_{(k)}}) - u_{i_{(k)},s}(b,\vec p^s_{-i_{(k)}}) \geq \delta^6 \cdot T_k^*
	\end{equation}
	for all $b \geq \Delta_{i(k)} + \delta$ (and in the discretization).
	\item Next, we define $T_k$ as follows. We let $T_k \leq T_k^*$ be the smallest stage such that there exists some
    $b_{i_{(k)}}' < b_{i_{(k)}}$
	for all $s \in (T_k,T_k^*]$, such that, 
	\begin{equation}\label{condition2_weighted}
		\sum_{t\leq s} u_{i_{(k)},t} (b'_{i_{(k)},t},\vec p^s_{-i_{(k)}}) - u_{i_{(k)},t}(b_{i_{(k)}}, \vec p^s_{-i_{(k)}}) \geq \frac{C}{\gamma_s \cdot \delta}.
	\end{equation}
\end{enumerate}
Above, $\mathcal E^k$ captures the event where the module $i_{(k)}$ sets their price with higher bang-per-buck than $k$-th lowest bang-per-buck in the set of bids. In addition, we can observe that by the definition of $T_k$, we have that the module $i_{(k)}$ begins setting the price lower than $b_{(k)}$ with high probability due to the existence of some lower bid $b$ which has significantly higher cumulative utility. Therefore, if we show an existence of bounded $T_k$'s then we can essentially show that any module $i\in \SE'$ starts setting price lower than their corresponding $\Delta_i + 2\cdot \delta$ after finitely many rounds while event $\mathcal E^0$ ensures that module $i\in \SE'$ is bidding higher than $\Delta_i$. 

Above, we note that if Condition~\eqref{condition1_weighted} is satisfied for some $k$ then for $T_k = T_k^*$, Condition~\eqref{condition2_weighted} is trivially satisfied. We
first prove an upper bound on $T_i$ which shows an existence of desired $T_i$s which is one of the most crucial steps in proving Theorem~\ref{thm:matroid_convergence}. We show that (Lemma~\ref{lem:bounding_iterations}) that $T_k \leq \poly \left (\frac{1}{\delta}, k \right)$. Finally, once we have that all the modules stop posting prices very far from their computed equilibrium prices, using Lemma~\ref{lem:stability_of_eqm_price} and standard probability calculations, we show that $\Pr\left [\bigcap_{k=1}^K \mathcal E^k\right] \geq 1 - O(\poly(1/\delta)) \cdot \exp(-1/\delta)$. This ensures that after $\poly(1/\delta)$ many rounds, all the sellers converges to their equilibrium price.

\newpage
\appendix
\section{Technical Assumptions}
\subsection{Necessity of \texorpdfstring{$\eps$}{eps}-equilibrium}
\label{subsec:need_eps_eq}
In this brief sub-section, we argue the necessity of studying $\eps$-equilibria instead of exact equilibrium.
Consider the following example with three modules $A, B, C$.
The values are $v_A = v_B = v_C = 1$ so that the greedy algorithm simply selects modules in order of prices and a tie-breaking rule.
The costs are $c_A = 2, c_B = 3, c_C = 4$ while the budget is $10$.
Suppose that ties are broken in favor of modules $C$ then $A$ then $B$.
We consider a few cases.

\paragraph{Case 1: both $A$ and $B$ bid exactly $4$.}
We consider two smaller subcases.
Suppose first that $C$ also bids $4$.
Then the winners are $C$ and $A$. But this means $B$ can deviate, to $3.5$ for example, to increase their utility.
On the other hand, if $C$ bids strictly more than $4$ then both $A$ and $B$ are selected.
However, both of these modules can slightly increase their bid to ensure they remain selected while being paid slightly more.
So there is no equilibrium where both $A$ and $B$ bid exactly $4$.

\paragraph{Case 2: at least one of $A$ and $B$ bid strictly more than $4$.}
In this case, any equilibrium must include $C$ as a winner because if $C$ is not selected then it could always choose some bid strictly above $4$ to be selected and thus obtain positive utility.
In general (not just in this case), any equilibrium must also include $A$ and at a price of at least $4-\eps$ because if not, it could just bid $4-\eps$ for any $\eps > 0$.
Since $C$ can never bid below $4$, the budget utilization when $A$ is being considered is at most $4-\eps$.
Thus, module $B$ must be rejected but this is not an equilibrium since $B$ can deviate to $3.5$ to ensure it is selected at strictly above its price.

\paragraph{Case 3: at least one of $A$ and $B$ bids strictly less than $4$.}
If either $A$ or $B$ bids strictly less than $4$ then it is certainly accepted since $C$ cannot bid less than $4$.
But if they do bid strictly less than $4$ then they can slightly raise their bid to gain slightly more utility.

\subsection{Necessity for a lower bound on cost}
\label{subsec:need_cost_lb}
Here, we show that if we analyze $\eps$-equilibria then it is necessary to assume a lower bound on the cost of each module.
Given any fixed $\eps > 0$, consider the following example where the budget is $1$.
There are $n+1$ modules all with cost $0$.
Module $1$ has value $1$ while all other modules have value $\eps / 2$.
The optimal solution is to take all modules for a total value of $1 + n\eps / 2$.
On the other hand, an $\eps$-equilibrium is for all modules to set a price of $1$.
Module $1$ is taken because it has the highest bang-per-buck.
For all the other modules, they need to set a bid of at most $\eps / 2$ in order to be selected.
Thus, they cannot deviate to gain more than $\eps / 2$ utility.
Note that this equilibrium receives only value $1$.
As $n \to \infty$, the efficiency of this $\eps$-equilibrium becomes arbitrarily worse.

\section{Matroid Theory Preliminaries} \label{appendix:matroid_prelims}
A \emph{matroid} $\M=(E,\I)$ is a structure with elements $E$ and a family of independent sets $\I \subseteq 2^E$ satisfying the three \emph{matroid axioms}: (i) $\emptyset \in \I$, (ii) if $A \subseteq B$ and $B \in \I$ then $A \in \I$, and (iii) if $A, B \in \I$ and $|A| < |B|$ then there exists $x \in B \setminus A$ such that $A \cup \{x\} \in \I$.
A \emph{weighted matroid} incorporates a matroid $\M=(E, \I)$ with weights $w \in \R^E$ for its elements.

The rank function of a matroid $\M=(E, \I)$ is denoted by $\rank^\M$, where $\rank^\M(S) \coloneqq \max\{|T| : T\subseteq S, T \in \I\}$. The weighted version of the rank function $\rank_w^\M$ is defined for weighted matroids $(\M,w)$ as $\rank^\M_w(S) = \max\{w(T): T\subseteq S, T \in \I\}$. The span function of a matroid $\M$ is defined as $\spn^\M(S) \coloneqq \{e \in E: \rank^\M(S \cup \{e\})=\rank^\M(S)\}$.

Our proofs will make use of the following basic fact that for a matroid if $S, A$ are sets and $x$ is an element such that (i) $S \in \I$, (ii) for every $a \in A$, $S \cup \{a\} \notin \I$, and (iii) $x \not in \spn(S)$, then $x \notin \spn(S)$.

\begin{claim}
\label{claim:matroid_first_k_opt}
For any $G_k$ be the set of selected elements by Algorithm~\ref{alg:greedy} that satisfies $\vec p(G'_k)<B$ then $G_k$ is maximum weighted independent set in matroid $\mathcal I$ restricted on $\pi[k]$.  
\end{claim}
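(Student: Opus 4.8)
The plan is to prove, by induction on the number $k$ of iterations that Algorithm~\ref{alg:greedy} has completed, the slightly stronger invariant that $G_k$ is simultaneously (a) a basis of the matroid $\mathcal I$ restricted to $\pi[k]$ (that is, a maximal independent subset of $\pi[k]$) and (b) a maximum-weight independent subset of $\pi[k]$, where the weights are the values $\vecv$ and $\pi[k]$ denotes the first $k$ modules in the bang-per-buck order used by the algorithm. (The hypothesis $\vecp(G'_k)<B$ guarantees that iteration $k$ does not trigger the \textbf{break}, so $G_k$ is indeed the set after $k$ successful iterations.) Because $\vecv\ge 0$, every independent subset of $\pi[k]$ extends to a basis of $\mathcal I$ restricted to $\pi[k]$ without decreasing its weight, so (b) is equivalent to $\vecv(G_k)=W_k$ where $W_k:=\max\{\vecv(I): I\subseteq\pi[k],\ I\in\mathcal I\}$; statement (b) is then exactly the claim. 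The base case $k=0$ is trivial.

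For the inductive step, write $x=\pi(k)$ for the module inspected in iteration $k$. First, if $x\notin\spn(G_{k-1})$ then $G_k=G_{k-1}\cup\{x\}$: since $G_{k-1}$ already spans $\pi[k-1]$ by the inductive hypothesis, we get $x\notin\spn(\pi[k-1])$, hence $\rank(\pi[k])=\rank(\pi[k-1])+1$ and $G_k$ is a basis of $\mathcal I$ restricted to $\pi[k]$; moreover every independent $I\subseteq\pi[k]$ has $\vecv(I)\le W_{k-1}+\vecv(x)$ (consider separately $x\in I$ and $x\notin I$), so $W_k=W_{k-1}+\vecv(x)=\vecv(G_k)$. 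Second, if $x\in\spn(G_{k-1})$, let $C$ be the unique circuit of $G_{k-1}\cup\{x\}$, let $j\in\argmin_{z\in C}\vecv(z)$, and $G_k=(G_{k-1}\cup\{x\})\setminus\{j\}$; deleting one element of that circuit keeps $G_k$ independent and of the same cardinality as $G_{k-1}$, so $G_k$ remains a basis of $\mathcal I$ restricted to $\pi[k]$ (as here $\rank(\pi[k])=\rank(\pi[k-1])$), and $\vecv(G_k)=W_{k-1}+\vecv(x)-\vecv(j)\ge W_{k-1}$ since $\vecv(x)\ge\vecv(j)$.

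The one genuinely delicate step — and the place the proof must be careful — is the matching upper bound $W_k\le W_{k-1}+\vecv(x)-\vecv(j)$ in this second case. I would argue it as follows: let $I^*\subseteq\pi[k]$ be a maximum-weight independent set. If $x\notin I^*$, then $W_k=\vecv(I^*)\le W_{k-1}\le W_{k-1}+\vecv(x)-\vecv(j)$ and we are done. If $x\in I^*$, then $x$ is not a loop, so $C\setminus\{x\}\ne\emptyset$; since a circuit minus one of its elements spans that element we have $x\in\spn(C\setminus\{x\})$, while $x\notin\spn(I^*\setminus\{x\})$ by independence of $I^*$, so $C\setminus\{x\}\not\subseteq\spn(I^*\setminus\{x\})$ and I can pick $z^*\in C\setminus\{x\}$ with $z^*\notin\spn(I^*\setminus\{x\})$. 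Then $(I^*\setminus\{x\})\cup\{z^*\}$ is an independent subset of $\pi[k-1]$, whence $W_{k-1}\ge\vecv(I^*)-\vecv(x)+\vecv(z^*)\ge\vecv(I^*)-\vecv(x)+\vecv(j)$, using $\vecv(z^*)\ge\vecv(j)$ because $z^*\in C$ and $j$ was chosen of minimum value on $C$; rearranging gives the claimed bound, and this also covers the corner case $j=x$. Hence $\vecv(G_k)=W_k$, completing the induction. The crux — and the main obstacle — is locating the replacement element $z^*$ inside the circuit $C$ (which forces $\vecv(z^*)\ge\vecv(j)$ by the choice of $j$), rather than merely somewhere in $\pi[k-1]$; this is exactly where the uniqueness of the circuit in $G_{k-1}\cup\{x\}$ is used. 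Everything else — keeping $G_k$ a basis, the $j=x$ degeneracy, and reducing ``maximum-weight independent set'' to ``maximum-weight basis'' — is routine once we use $\vecv\ge 0$.
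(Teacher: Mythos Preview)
Your proof is correct and shares the same inductive skeleton as the paper's: both induct on $k$ and split on whether $\pi(k)\in\spn(G_{k-1})$, and both maintain that $G_k$ is a basis of the restriction to $\pi[k]$. The difference is in how the optimality bound is established in the circuit case. The paper argues by reduction to the classical weight-sorted matroid greedy: it observes that the minimum-weight element of the circuit $C$ is spanned by higher-weight elements and hence is skipped by that greedy, so the optimum on $\pi[k]$ coincides with $G_k$. You instead prove the inequality $W_k\le W_{k-1}+\vecv(x)-\vecv(j)$ directly, via an exchange argument: starting from an optimal $I^*$ containing $x$, you locate $z^*\in C\setminus\{x\}$ with $z^*\notin\spn(I^*\setminus\{x\})$ and swap it for $x$. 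Your route is more self-contained (no appeal to an external optimality theorem) and makes transparent exactly why $z^*$ must lie on the circuit $C$, which is what forces $\vecv(z^*)\ge\vecv(j)$; the paper's route is terser if one is happy to cite the standard greedy result as a black box.
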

\begin{proof}
We prove this claim via induction on $k$. The claim trivially holds for $k=1$. Suppose, the claim holds until iteration $k-1$. In this case, we observe that $G^{k-1}$ is a full rank set in matroid $\mathcal I$ restricted on $\pi[k-1]$. If $\pi(k)\notin \spn(G^{k-1})$ then maximum weight independent set in $\pi[k]$ is $G^{k-1} \cup \pi[k]$ which is precisely the set $G_{k+1}$. 

In other case when $\pi(k)\in \spn(G^{k-1})$ then due to \cite[Proposition 1.1.6]{oxley2006matroid}, $\pi(k)$ forms a unique circuit $C$ with $G^{k-1}$. Now, we consider the classic greedy algorithm on matroid that orders elements in the decreasing order of weights and selects an element if it is not spanned by higher weight element. We note that minimum weight element $i$ on circuit $C$ will not be the part of optimal set as it is spanned by the set of higher weight elements. If $\pi(k)$ is the minimum weight element on $C$ then $\pi(k)$ can not be a part of the optimal solution and the current set $G^{k-1}$ remains optimal. In the other case, the optimal greedy algorithm selects the same set of elements in $G^k$ as it selected in $G^{k-1}$ whose weight is higher than weight of $\pi(k)$. Therefore, while making decision about $\pi^k$, the optimal algorithm selects $\pi(k)$ as it does not form a cycle with $G^{k-1}$ on which $\pi(k)$ has the lowest weight.
\end{proof}

\section{Missing Proofs From Section~\ref{sec:additive_price_equilibrium}}
\label{app:proofs_additive}
\subsection{Proof of Lemma~\ref{lem:additive_eps_eq_deviation}} \label{proof_eps_additive}
Since we want to bound the result of the worst $\eps$-equilibrium, we assume that $\vecp(i) \leq \vecc(i) + \eps$ for any module $i$ that is not accepted (by Lemma~\ref{lemma:worst_equil}).

Fix any $\eps$-equilibrium $\vecp$ and let $k$ denote the first module in the bang-per-buck order, according to $\vecp$, that was not selected by $\greedy$.
Since it was not selected by $\greedy$, it must be that $\greedy$ had spent $B - \vecp(k)$ up to this point.
So in $\vecp$, $\greedy$ obtained value $\vecv(S_{\vecp}) \geq \frac{\vecv(k)}{\vecp(k)} \cdot (B - \vecp(k))$.

Now, let $k'$ denote the module with the largest bang-per-buck, according to $\vecc$, that was not selected by $\greedy$ in the equilibrium with respect to $\vecp$.
The difference in value between $\greedy$ that knows the cost and $\greedy$ under the equilibrium is upper bounded by $\frac{\vecv(k')}{\vecc(k')} \cdot B$.

We have $B - \vecp(k) \geq B - M$ and
\[
    \frac{\vecv(k)}{\vecp(k)}
    \geq \frac{\vecv(k')}{\vecp(k')}
    \geq \frac{\vecv(k')}{\vecc(k')+\eps}
    \geq \frac{\vecv(k')}{(1+\eps / m) \cdot \vecc(k')}.
\]
In other words, $\vecv(S_{\vecp}) \geq \frac{\vecv(k')}{(1+\eps / m)\vecc(k')} \cdot \frac{B - M}{B} \cdot B$
Thus,
\[
    \vecv(S_{\vecc}) - \vecv(S_{\vecp})
    \leq \frac{\vecv(k')}{\vecc(k')} \cdot B
    \leq (1+\eps / m) \cdot \frac{B}{B - M} \cdot \vecv(S_{\vecp}).
\]
Rearranging, we conclude that $\vecv(S_{\vecc}) \leq \left[ 2 + \frac{\eps}{m} + \left( 1 + \frac{\eps}{m} \right) \cdot \frac{M}{B-M} \right] \cdot \vecv(S_{\vecp})$.
Observing that $\frac{M}{B-M} = \frac{\lambda}{1-\lambda}$ gives the result.

\subsection{Proof of Lemma~\ref{lemma:worst_equil}}
\label{app:proof_worst_equil}
\begin{proof}[Proof of Lemma~\ref{lemma:worst_equil}]
Let $i$ be as in the statement of the lemma and suppose that $\vec{p}(i) > \vec{c}(i) + \eps$ (otherwise the claim is trivial).
We consider a two-step process.
First, we modify the price of $i$ so that $\vec{p}'(i) = \vec{c}(i) + \eps$.
Next, define the price vector
\begin{equation}
    \label{eqn:vecp_def_additive}
    \bar{\vec{p}}(j) = \begin{cases}
        \max\left\{\vecc(j), \frac{\vecc(i) + \eps}{\vec{v}(i)} \cdot \vec{v}(j) - \frac{\eps}{2} \right\} & j \in S_{\vec p} \setminus S_{\vecp'} \\
        \vec{p}'(j) & \text{otherwise}
    \end{cases}.
\end{equation}

We first check the last three assertions in the lemma.
The first assertion is true by construction.
The third assertion follows from the following claim.
The second assertion follows from Claim~\ref{claim:Sbarvecp_subset_Svecp}.
\begin{claim}
\label{claim:barvecp_leq_vecp}
We have $\bar{\vecp} \leq \vecp' \leq \vecp$.
\end{claim}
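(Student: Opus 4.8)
The plan is to prove the two inequalities $\vecp' \le \vecp$ and $\bar{\vecp} \le \vecp'$ separately, where $\vecp'$ is the intermediate vector from the first step of the construction, with $\vecp'(i) = \vecc(i)+\eps$ and $\vecp'(j) = \vecp(j)$ for every $j \ne i$. The first inequality is immediate: $\vecp'$ and $\vecp$ agree on every coordinate except $i$, and there $\vecp'(i) = \vecc(i)+\eps < \vecp(i)$ by the standing assumption $\vecp(i) > \vecc(i)+\eps$.

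For the second inequality, fix a module $j$. If $j \notin S_{\vecp}\setminus S_{\vecp'}$ then $\bar{\vecp}(j) = \vecp'(j)$ directly from Eq.~\eqref{eqn:vecp_def_additive}, so assume $j \in S_{\vecp}\setminus S_{\vecp'}$. Since $i \notin S_{\vecp}$ we have $j \ne i$, hence $\vecp'(j) = \vecp(j) \ge \vecc(j)$ (a price is always at least the cost). In view of Eq.~\eqref{eqn:vecp_def_additive} it thus suffices to show
\[
    \frac{\vecc(i)+\eps}{\vecv(i)}\,\vecv(j) \;\le\; \vecp(j),
\]
i.e.\ that the bang-per-buck of $j$ at $\vecp(j)$ is no larger than the bang-per-buck of $i$ at the reduced price $\vecp'(i) = \vecc(i)+\eps$; this already implies $\bar{\vecp}(j) = \max\{\vecc(j),\ \tfrac{\vecc(i)+\eps}{\vecv(i)}\vecv(j) - \eps/2\} \le \vecp(j) = \vecp'(j)$, since both arguments of the max are $\le \vecp(j)$.

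The crux is this last bang-per-buck comparison, which I would prove by contradiction. Suppose $\tfrac{\vecv(j)}{\vecp(j)} > \tfrac{\vecv(i)}{\vecc(i)+\eps}$. Because $\vecp(i) > \vecc(i)+\eps$ we also get $\tfrac{\vecv(i)}{\vecp(i)} < \tfrac{\vecv(i)}{\vecc(i)+\eps} < \tfrac{\vecv(j)}{\vecp(j)}$, so $i$ lies strictly after $j$ in the bang-per-buck order both under $\vecp$ and under $\vecp'$ (moving from $\vecp$ to $\vecp'$ only lowers $i$'s price and leaves $j$'s bang-per-buck unchanged). Since $\vecp$ and $\vecp'$ differ only in the price of $i$, the sorted prefix of modules up to and including $j$ — and all prices on that prefix, and the tie-breaking among those modules — is identical under $\vecp$ and $\vecp'$. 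Hence $\greedy$ (Algorithm~\ref{alg:knapsack}) runs identically on that prefix in both cases, in particular making the same accept-or-break decision at $j$, so $j \in S_{\vecp} \iff j \in S_{\vecp'}$, contradicting $j \in S_{\vecp}\setminus S_{\vecp'}$. Rearranging the resulting inequality $\tfrac{\vecv(j)}{\vecp(j)} \le \tfrac{\vecv(i)}{\vecc(i)+\eps}$ gives the displayed bound and finishes the proof.

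I expect the only genuine subtlety to be the bookkeeping in the contradiction step: one must check that lowering $i$'s price cannot flip whether $\greedy$ accepts a module $j$ that sits strictly above $i$ in the order, which rests on the fact that $\greedy$'s "break at the first infeasible module" rule and its tie-breaking depend only on the prefix ending at $j$, and that prefix and all its prices are untouched by passing from $\vecp$ to $\vecp'$. Everything else is routine.
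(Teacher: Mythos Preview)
Your proof is correct and follows essentially the same approach as the paper: both argue that for $j \in S_{\vecp}\setminus S_{\vecp'}$, module $i$ must precede $j$ in the bang-per-buck order under $\vecp'$ (you do this by contradiction, the paper by a direct argument noting $i\notin A_j$ under $\vecp$ and then concluding $i$ must have jumped ahead of $j$ under $\vecp'$), which yields the inequality $\tfrac{\vecv(j)}{\vecp(j)} \le \tfrac{\vecv(i)}{\vecc(i)+\eps}$ and hence $\bar{\vecp}(j)\le\vecp(j)$.
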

\begin{proof}
The fact that $\vecp' \leq \vecp$ is trivial since the only difference between $\vecp'$ and $\vecp$ is to lower module $i$'s price to $\vecc(i) + \eps$.
We now need to show that for $j \in S_{\vecp} \setminus S_{\vecp'}$, we have $\bar{\vecp}(j) \leq \vecp(j)$.
Clearly, $\vecc(j) \leq \vecp(j)$ since no module sets a price below their cost.
It remains to show that $\frac{\vecc(i) + \eps}{\vecv(i)} \cdot \vecv(j) - \frac{\eps}{2} \leq \vecp(j)$.
Let $A_j$ be the set of modules that $\greedy$ considered before $j$ when the price vector is $\vecp$.
We have that $\vecp(A_j) + \vecp(j) \leq B$ because $j \in S_{\vecp}$.
Also note that $i \notin A_j$ otherwise $i \in S_{\vecp}$.
Thus, $\vecp'(A_j) + \vecp'(j) = \vecp(A_j) + \vecp(j) \leq B$.
Since $j \in S_{\vecp} \setminus S_{\vecp'}$ and the only difference is module $i$'s price, it must be that $\greedy$ considers module $i$ before module $j$ when the price vector is $\vecp'$.
This implies that $\frac{\vecp(j)}{\vecc(j)} \geq \frac{\vecc(i)+\eps}{\vecv(i)}$, as desired.
\end{proof}

\begin{claim}
\label{claim:i_notin_Sbarp}
We have that $i \notin S_{\bar{\vecp}}$.
\end{claim}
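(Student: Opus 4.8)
The plan is to route the argument through the intermediate vector $\vecp' := (\vecc(i)+\eps,\vecp_{-i})$ used in the proof. Recall that $\bar{\vecp}$ differs from $\vecp'$ only in that the prices of the modules in $R := S_{\vecp}\setminus S_{\vecp'}$ are (weakly) lowered, that $\bar{\vecp}(i)=\vecp'(i)=\vecc(i)+\eps$ (as $i\notin S_{\vecp}$, so $i\notin R$), and that $\bar{\vecp}\le\vecp'$ by Claim~\ref{claim:barvecp_leq_vecp}. Hence it suffices to prove (Step 1) that $i\notin S_{\vecp'}$, and (Step 2) that the extra price cuts applied to $R$ cannot turn $i$ into a selected module.

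For Step 1 I would use that $\vecp$ is an $\eps$-equilibrium: since $i$ obtains utility $0$ under $\vecp$, any price $q$ at which $\greedy$ would select $i$ (holding $\vecp_{-i}$ fixed) yields utility $q-\vecc(i)$, hence $q-\vecc(i)\le\eps$, i.e. $q\le\vecc(i)+\eps$. Therefore $i$ could be selected at price exactly $\vecc(i)+\eps$ only in the degenerate situation where $\vecc(i)+\eps$ is the largest price at which $i$ is ever selected; I would discard this boundary case as is done elsewhere (it disappears under an infinitesimal perturbation of the values, as in Subsection~\ref{subsec:eq_exists}), giving $i\notin S_{\vecp'}$.

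For Step 2 I would split on whether $\greedy$, run on $\vecp'$, ever reaches module $i$. If it breaks at some module $\mu$ strictly before $i$, then because $\vecp$ and $\vecp'$ differ only in $i$'s price and $\mu$ precedes $i$ in the bang-per-buck order under both vectors, $\greedy$ processes exactly the same modules, at the same prices, in the same order, on $\vecp$ and on $\vecp'$ up to and including the break at $\mu$; hence $S_{\vecp'}=S_{\vecp}$, so $R=\emptyset$, $\bar{\vecp}=\vecp'$, and Step 1 finishes the proof. Otherwise $\greedy$ reaches $i$ on $\vecp'$; by Step 1 it rejects $i$, and in this (additive, unconstrained) setting the only way a reached module is rejected is that the budget is exceeded, so $S_{\vecp'}$ is exactly the set of modules preceding $i$ in the $\vecp'$-order and $\vecp'(S_{\vecp'})+\vecp'(i)>B$. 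Consequently every module of $R$ comes after $i$ in the $\vecp'$-order, and passing to $\bar{\vecp}$ only lowers the prices of $R$; so under $\bar{\vecp}$ the modules preceding $i$ form a set $S_{\vecp'}\cup R'$ with $R'\subseteq R$, while $\bar{\vecp}$ still agrees with $\vecp'$ on $S_{\vecp'}\cup\{i\}$. Running $\greedy$ on $\bar{\vecp}$, either it breaks before reaching $i$ (so $i\notin S_{\bar{\vecp}}$), or it reaches $i$ having accepted all of $S_{\vecp'}\cup R'$, in which case the budget already spent is at least $\bar{\vecp}(S_{\vecp'})=\vecp'(S_{\vecp'})$, so that accepting $i$ would bring the total to at least $\vecp'(S_{\vecp'})+\vecp'(i)>B$ and $i$ is rejected again. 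Either way $i\notin S_{\bar{\vecp}}$.

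The step I expect to be the main obstacle is the second one, specifically the monotonicity bookkeeping inside it: one must check that the price cuts on $R$ can only move $R$-modules earlier (toward $i$) in the bang-per-buck order, never move a module of $S_{\vecp'}$ past $i$, and never change the prices of the modules that already precede $i$ — it is precisely this that makes the budget consumed before $i$ is reached non-decreasing when we pass from $\vecp'$ to $\bar{\vecp}$, which is the crux. The only other delicate point is the tie/threshold situation in Step 1, which is absorbed by the same value-perturbation device already in use.
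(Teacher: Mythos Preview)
Your proposal is correct and follows essentially the same approach as the paper. The paper argues slightly more compactly by working directly with the sets $A_i'$ and $\bar A_i$ of modules preceding $i$ under $\vecp'$ and $\bar\vecp$: it observes $A_i'\subseteq\bar A_i$, that $\bar\vecp$ agrees with $\vecp'$ on $A_i'$ (since $A_i'\cap(S_\vecp\setminus S_{\vecp'})=\emptyset$), and then chains $B<\vecp'(A_i')+\vecp'(i)=\bar\vecp(A_i')+\bar\vecp(i)\le\bar\vecp(\bar A_i)+\bar\vecp(i)$, which is exactly the budget-monotonicity you spell out in your second sub-case; your first sub-case ($\greedy$ breaks before $i$, forcing $R=\emptyset$) is simply absorbed into the same inequality in the paper's treatment. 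Your handling of the boundary issue in Step~1 (utility exactly $\eps$) is also in line with the paper, which likewise asserts without further comment that $i$ cannot be accepted at price $\vecc(i)+\eps$.
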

\begin{proof}
Let $A_i'$ (resp.~$\bar{A}_i$) be the set of modules that $\greedy$ considers before module $i$ when the price vector is $\vecp'$ (resp.~$\bar{\vecp}$).
Note that (i) $A_i' \subseteq \bar{A}_i$ and (ii) $\bar{\vecp}(A_i') = \vecp'(A_i')$.
The first assertion follows from Claim~\ref{claim:barvecp_leq_vecp} since module $i$ has the same price in both $\vecp'$ and $\bar{\vecp}$.
The second assertion is because $\greedy$ makes exactly the same decisions for $A_i'$ under $\vecp$ and $\vecp'$.
In particular, $A_i' \cap (S_{\vecp} \setminus S_{\vecp'}) = \emptyset$.
We thus conclude, using the definition of $\bar{\vecp}$ (Eq.~\eqref{eqn:vecp_def_additive}), that $\bar{\vecp}(j) = \vecp'(j) = \vecp(j)$ for all $j \in A_i'$.

The hypothesis of the lemma means that module $i$ cannot modify its price to $\vec{c}(i) + \eps$ and be in the accepted set.
In particular,
\[
    B < \vecp(A_i') + \vecc(i) + \eps = \vecp'(A_i') + \vecp'(i) = \bar{\vecp}(A_i') + \bar{\vecp}(i)
    \leq \bar{\vecp}(\bar{A}_i) + \bar{\vecp}(i),
\]
where the first equality is because $\vecp(j) = \vecp'(j)$ for $j \neq i$,
the second equality is by (ii) above,
and the last equality is by (i) above.
We conclude that $i \notin S_{\bar{\vecp}}$ since including it when it is considered by $\greedy$ would violate the budget constraint.
\end{proof}

\begin{claim}
\label{claim:Sbarvecp_subset_Svecp}
$S_{\bar{\vecp}} \subseteq S_{\vecp}$.
\end{claim}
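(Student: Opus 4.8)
The plan is to deduce the inclusion from the prefix structure of Algorithm~\ref{alg:knapsack}: on any price vector it returns an initial segment of the bang-per-buck order and halts at the first module that overflows the budget. Since $i \notin S_{\bar\vecp}$ by Claim~\ref{claim:i_notin_Sbarp}, the prefix $S_{\bar\vecp}$ ends strictly before $i$ in the $\bar\vecp$-order, so $S_{\bar\vecp} \subseteq \bar A_i$, where $\bar A_i$ denotes the set of modules $\greedy$ considers before $i$ under $\bar\vecp$ (the same set used in the proof of Claim~\ref{claim:i_notin_Sbarp}). It therefore suffices to prove $\bar A_i \subseteq S_{\vecp}$.

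To this end I would take an arbitrary $m \in \bar A_i$ and split on whether its price changed. If $\bar\vecp(m) \neq \vecp(m)$, then by the definition of $\bar\vecp$ in Eq.~\eqref{eqn:vecp_def_additive} we have $m \in S_{\vecp} \setminus S_{\vecp'} \subseteq S_{\vecp}$ and we are done (note that $m \neq i$ since $m$ precedes $i$). If instead $\bar\vecp(m) = \vecp(m)$, then $m$ has the same bang-per-buck at $\vecp$ and at $\bar\vecp$, and since $m$ precedes $i$ at $\bar\vecp$ we get $\vecv(m)/\vecp(m) \geq \vecv(i)/(\vecc(i)+\eps)$, the bang-per-buck of $i$ at price $\vecc(i)+\eps$. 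Now suppose toward a contradiction that $m \notin S_{\vecp}$. Because $S_{\vecp}$ is a prefix of the $\vecp$-order, $m$ sits weakly behind the first module $f$ that $\greedy$ rejects on $\vecp$, hence $\vecv(m)/\vecp(m) \leq \vecv(f)/\vecp(f)$. It remains to show $\vecv(f)/\vecp(f) < \vecv(i)/(\vecc(i)+\eps)$, which contradicts the previous two inequalities. If $f = i$, this is exactly the standing hypothesis $\vecp(i) > \vecc(i)+\eps$ of the lemma. If $f \neq i$, I would argue that $i$ at price $\vecc(i)+\eps$ must precede $f$ in the greedy order: otherwise $\greedy$ on $\vecp' = (\vecc(i)+\eps, \vecp_{-i})$ breaks at $f$ before it ever reaches $i$, so $S_{\vecp'} = S_{\vecp}$, whence $S_{\vecp}\setminus S_{\vecp'} = \emptyset$, $\bar\vecp = \vecp'$ and $S_{\bar\vecp} = S_{\vecp'} = S_{\vecp}$ --- which already proves the claim; so we may assume $i$ precedes $f$, i.e.~$\vecv(i)/(\vecc(i)+\eps) > \vecv(f)/\vecp(f)$, as desired.

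The main obstacle is the case $f \neq i$ and the dichotomy it forces: lowering $i$'s price to $\vecc(i)+\eps$ either leaves $i$ still behind the first-rejected module $f$ in the greedy order --- in which case $\greedy$ never revisits $i$, $S_{\vecp}\setminus S_{\vecp'} = \emptyset$, $\bar\vecp = \vecp'$, and the claim is immediate --- or pushes $i$ ahead of $f$, which is precisely what validates the inequality $\vecv(f)/\vecp(f) < \vecv(i)/(\vecc(i)+\eps)$. Making this split explicit, and dispatching the degenerate subcases ($S_{\vecp} = \emptyset$, and ties in the greedy order, which one handles with the usual genericity assumptions of Section~\ref{subsec:eq_exists}), is the only delicate point; everything else follows in a few lines from the prefix/first-overflow structure of Algorithm~\ref{alg:knapsack} together with the two preceding claims.
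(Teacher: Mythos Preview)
Your proof is correct but takes a genuinely different route from the paper. The paper argues the contrapositive: for each $j \notin S_{\vecp}$ with $j \neq i$, it compares the predecessor sets $A_j'$ (under $\vecp'$) and $\bar A_j$ (under $\bar\vecp$), shows $A_j' \subseteq \bar A_j$ and $\bar\vecp(A_j') = \vecp'(A_j')$, and then pushes the budget-violation inequality from $\vecp$ through to $\bar\vecp$. Your approach instead exploits the prefix structure of the additive greedy once: since $S_{\bar\vecp}$ is a prefix excluding $i$, every selected module precedes $i$, and you then show (outside the degenerate branch) that every $\bar\vecp$-predecessor of $i$ lies in $S_{\vecp}$. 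Your argument is a bit cleaner for the additive case; the paper's module-by-module contrapositive is the template that later carries over to the matroid sections, where the output of $\greedy$ is no longer a simple prefix and your reduction to $\bar A_i$ would not be available.

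One small wrinkle worth tightening: you write ``it therefore suffices to prove $\bar A_i \subseteq S_{\vecp}$,'' but in the branch where $i$ at price $\vecc(i)+\eps$ still sits at or after $f$, that containment is actually false (indeed $f \in \bar A_i \setminus S_{\vecp}$ there). You handle this correctly by bailing out and proving $S_{\bar\vecp} = S_{\vecp}$ directly, so there is no logical gap, but the phrasing is slightly misleading; it would be cleaner to make the dichotomy on the relative position of $i$ and $f$ up front, dispose of the trivial branch, and only then assert $\bar A_i \subseteq S_{\vecp}$ in the remaining branch.
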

\begin{proof}
We will prove the contra-positive: if $j \notin S_{\vecp}$ then $j \notin S_{\bar{\vecp}}$.
If $j = i$ then this follows from Claim~\ref{claim:i_notin_Sbarp}.
So now assume $j \neq i$.
Let $A_j'$ (resp.~$\bar{A}_j$) be the set of modules that $\greedy$ considers before module $j$ when the price vector is $\vecp'$ (resp.~$\bar{\vecp}$).
Note that the ordering, according to $\greedy$, between $j$ and $i$ is the same in both the price vectors $\vecp'$ and $\bar{\vecp}$ since both of their prices are unchanged between the two price vectors.
If $i \in A_j'$ (and hence $i \in \bar{A}_j$) then $j \notin S_{\bar{\vecp}}$) since $\greedy$ terminates after considering $i$.
Henceforth, we assume that $j \notin A_j'$.

In this case, the proof is very similar to Claim~\ref{claim:i_notin_Sbarp}.
We again have that (i) $A_j' \subseteq \bar{A}_j$ and (ii) $\bar{\vecp}(A_j') = \vecp'(A_j')$.
The first assertion follows from Claim~\ref{claim:barvecp_leq_vecp} since module $j$ has the same price in both $\vecp'$ and $\bar{\vecp}$.
The second assertion is because $\greedy$ makes exactly the same decisions for $A_j'$ under $\vecp$ and $\vecp'$.
In particular, $A_j' \cap (S_{\vecp} \setminus S_{\vecp'}) = \emptyset$.

Since $j \notin S_{\vecp}$, we have that
\[
    B < \vecp(A_j') + \vecp(j) = \vecp'(A_j') + \vecp'(j) = \bar{\vecp}(A_j') + \bar{\vecp}(j)
    \leq \bar{\vecp}(\bar{A}_j) + \bar{\vecp}(j),
\]
where the first equality is because $\vecp(j) = \vecp'(j)$ for $j \neq i$,
the second equality is by (ii) above and Eq.~\eqref{eqn:vecp_def_additive},
and the last equality is by (i) above.
We conclude that $j \notin S_{\bar{\vecp}}$ since including it when it is considered by $\greedy$ would violate the budget constraint.
\end{proof}

Finally, we check that $\bar{\vecp}$ is still an $\eps$-equilibrium which follows from the following case analysis.
\paragraph{Case 1: $j \in S_{\vec p} \cap S_{\vec p'}$.}
In this case, we have $\vec p(j) = \bar{\vecp}(j)$.
Since $\vecp$ is an $\eps$-equilibrium price, we have that module $j$ is rejected in the price vector $\vec{q} = (\vecp(j) + \eps, \vecp_{-j})$.
Hence, $j$ is also rejected in the price vector $\vec{q}' = (\vecp(j) + \eps, \vecp'_{-j})$ since all modules that came before $j$ in $\vec{q}$ also come before $j$ in $\vec{q}'$ (module $i$ is the only module with a different price between $\vec{q}$ and $\vec{q}'$).
Finally, consider $\bar{\vec{q}} = (\vecp(j) + \eps, \bar{\vecp}_{-j})$.
If $j$ comes before $i$ according to $\greedy$ in $\bar{\vec{q}}$ then $\greedy$ coincides on all modules up to and including module $j$ on $\bar{\vec{q}}$ and $\vec{q}'$.
So $j$ is rejected.
On the other hand, if $j$ comes after $i$ according to $\greedy$ then $j$ must be rejected.
This is because (i) the set of modules inspected by $\greedy$ up to and including $i$ in $\bar{\vecp}$ is a subset of the module inspected by $\greedy$ up to and including $j$ in $\bar{\vec{q}}$ and (ii) $\bar{\vecp} \leq \bar{\vec{q}}$.
We conclude that $j$ cannot raise its price by at least $\eps$ and still be included in $S_{\bar{\vecp}}$.

\paragraph{Case 2: $j \in S_{\vec p} \setminus S_{\vec p'}$.}
Since $\bar{\vec{p}}(j) \geq \frac{\vec{c}(i) + \eps}{\vec{v}(i)} - \frac{\eps}{2}$ then module $j$ cannot deviate to gain more than $\eps$ since it would have to increase its bid to at least $\frac{\vec{c}(i) + \eps}{\vec{v}(i)} + \frac{\eps}{2}$ to do so.
However, it would come after module $i$ in the bang-per-buck order and module $i$ is not accepted so neither would module $j$.

\paragraph{Case 3: $j \notin S_{\vec p}$.}
For $j = i$, we established above (Claim~\ref{claim:i_notin_Sbarp}) that $i \notin S_{\bar{\vecp}}$ and any possible deviation gives module $i$ utility strictly less than $\eps$ since $\bar{\vecp}(i) = \vecc(i) + \eps$.

Now suppose $j \neq i$.
Consider the price vector $\vec{q} = \left( \vecc(j) + \eps, \vecp \right)$.
Module $j$ must be rejected because otherwise $\vecp$ is not an $\eps$-equilibrium.
Now consider the price vector $\vec{q}' = \left( \vecc_j, \vecp' \right)$.
In this case, $j$ is still rejected because (i) if $j$ comes before $i$ according to $\greedy$ in $\vec{q}'$ then its outcome under $\greedy$ in $\vec{q}$ is the same as in $\vec{q}'$ or (ii) if $j$ comes after $i$ according to $\greedy$ then $j$ is rejected.
Finally, consider the price vector $\bar{\vec{q}} = (\vecc_j, \bar{\vecp})$.
If $j$ comes before $i$ then there may now be additional modules before $j$ in $\bar{\vec{q}}$ compared to $\vec{q}'$.
So module $j$ remains rejected.
If $j$ comes after $i$ then $j$ is rejected because $i$ is rejected.

Since module $j$ is rejected in $(\vecc(j) + \eps, \bar{\vecp}_{-j})$, it must also be rejected if it modifies its bid to any bid strictly more than $\vecc(j) + \eps$.
On the other hand, any bid lower than $\vecc(j) + \eps$ provides $j$ with strictly less than $\eps$ utility.
So any bid placed by $j$ results in utility strictly less than $\eps$.
\end{proof}

\subsection{Proof of Lemma~\ref{lemma:additive_greedy_approx_with_cost}}
\label{app:additive_greedy_approx_with_cost}
\begin{proof}[Proof of Lemma~\ref{lemma:additive_greedy_approx_with_cost}]
By rearranging, we can assume that $\frac{\vecv(1)}{\vecc(1)} > \ldots > \frac{\vecv(n)}{\vecc(n)}$.
This means that $S_{\vecc} = [k]$ for some $k \in [n]$.
An upper bound on $\vecv(S_{\OPT})$ is to take the first $k + 1$ modules.
Thus
\[
    \vecv(S_{\OPT}) - \vecv(S_{\vecc}) \leq \vecv(k+1) \leq M \cdot \frac{\vecv(k+1)}{\vecc(k+1)}.
\]
Since $k+1 \notin S_{\vecc}$, we have $\vecc([k]) \geq B - \vecc([k+1]) \geq B - M$.
Thus
\[
    \vecv(S_{\vecc}) \geq (B - M) \cdot \frac{\vecv(k+1)}{\vecc(k+1)}
\]
Combining the above two inequalities, we conclude that
\[
    \vecv(S_{\OPT}) \leq \left( 1 + \frac{M}{B - M} \right) \vecv(S_\vecc) = \left( 1 + \frac{\lambda}{1-\lambda} \right) \vecv(S_{\vecc}). \qedhere
\]
\end{proof}
\section{Price Competition for Unweighted Matroid}
\label{app:price_comp_matroid}
In this section, we present the equilibrium analysis for the case when the buyer's constraint is a matroid constraint.
This section looks at an algorithm (Algorithm~\ref{alg:modified_greedy}) which is a natural generalization of Algorithm~\ref{alg:knapsack} in that it skips over any element that is not feasible in the bang-per-buck order given the modules selected so far.
Note that Algorithm~\ref{alg:modified_greedy} is equivalent to Algorithm~\ref{alg:greedy} when $\vecv(i) = 1$ for all $i$.
In this section, we show that for all $\vecv$, an equilibrium exists but we only prove an equilibrium quality result for the case where $\vecv(i) = 1$ for all $i$.

\begin{algorithm}
\caption{Modified Greedy Algorithm}
\label{alg:modified_greedy}
\begin{algorithmic}[1]
\State \textbf{Input}: price vector $\vec p$, values $\vec v$, budget $B$, feasible sets $\I \in 2^{[n]}$.
\State \textbf{Output}: set $G$ of modules.
\State Initialize $G \gets \emptyset$.
\State Re-arrange modules such that $\frac{\vecv(1)}{\vecp(1)} \geq \frac{\vecv(2)}{\vecp(2)} \geq \dots \geq \frac{\vecv(n)}{\vecp(n)}$ with ties broken arbitrarily.
\For{$i=1,2 \dots ,n$}
\If{$G \cup \{i\} \notin \I$}
\State \textbf{continue}
\EndIf
\If{$\vec p(G) + \vec p(i) > B$}
\State \textbf{break}
\EndIf
\State $G \gets G \cup \{i\}$
\EndFor
\end{algorithmic}
\end{algorithm}
We divide the proof of the main theorem of this section into two parts. First, in Subsection~\ref{sec:matroid_price_eqm_existance}, we show the existence of $\eps$-equilibrium prices and later in Subsection~\ref{sec:matroid_eq_analysis} we analyze the quality of all $\eps$-equilibrium prices.

\subsection{Existence of Price Equilibrium}\label{sec:matroid_price_eqm_existance}
We begin our analysis by showing the existence of $\eps$-equilibrium prices. To this end, we provide an algorithm that explicitly constructs an $\eps$-equilibrium price vector for the pricing game between the modules.

Algorithm~\ref{alg:equilibrium_dynamics} shows how to construct an equilibrium.
In addition, to simplify the notations and reduce clutter, we assume that the buyer breaks ties in favor of modules with higher $\frac{\vecv(i)}{\vecc(i)}$ and as a result, we show that Algorithm~\ref{alg:equilibrium_dynamics} converges to an exact equilibrium.
More specifically, we assume that the buyer breaks ties in the initial permutation $\pi^0$ defined in Algorithm~\ref{alg:equilibrium_dynamics}.

We note that since the buyer has no information about the module's private cost, the buyer cannot implement such a tie-breaking rule while selecting modules.
In order to simulate the desired tie-breaking rule, we can lower the price of accepted modules by an infintesimal amount which results in an $\eps$-equilibrium for any small $\eps > 0$.

The algorithm begins by initializing $\vecp$ as $\vec p = \vec c$ and an ordering $\pi^0$ which is decreasing in their bang-per-buck w.r.t.~$\vecp$ at round zero. We then define the set $T^\ind$ that denotes the set of modules that increase their price at iteration $\ind$ and $\pi^\ind$ that denotes the bang-per-buck order over modules at the price $\vec p^\ind$. Throughout the algorithm, we maintain the following invariants at iteration $\ind$: (a) the price of modules $\pi^0(\ind), \dots, \pi^0(n)$ are unchanged and equal to their cost, (b) the bang-per-buck of the modules higher than $ \frac{\vec v(\pi(\ind))}{\vec c(\pi(\ind))}$ remains higher than $ \frac{\vec v(\pi(\ind))}{\vec c(\pi(\ind))}$ at price $\vec p^\ind$,  (c) for every module in $i\in T^k$, we have $\frac{\vec v(i)}{\vec p^\ind(i)} = \frac{\vec v(\pi^0(\ind))}{\vec c(\pi^0(\ind ))}$.  

We can describe the algorithm as follows.
First, we raise the prices of all modules in $T^{k-1}$ so that their bang-per-buck meets that of module $\pi^k(k)$, whose price is still at its cost.
Next, we check whether or not $\pi^k(k) \in \spn(A^{k-1} \cup T^{k-1})$.
If not we add it to $T^{k-1}$ to obtain $T^k$.
Otherwise, we find the (unique) circuit in $A^{k-1} \cup T^{k-1} \cup \{\pi^k(k)\}$ and add it, without $\pi^k(k)$, to $A^{k-1}$ to get $A^k$.
Doing so freezes the price of these modules.
We then remove $A^{k-1}$ from $T^{k-1}$ to get $T^k$, the set of modules which can still raise their prices.
Next, we check whether or not $A^{k} \cup T^{k}$ can fit within the budget at the price $\vecp^k$.
If not then, if necessary, we lower the prices for $T^{k-1}$ (whose prices we had just raised) so that $A^{k-1} \cup T^{k-1}$ fits within the budget.
In effect, this ensures that $\greedy$ rejects $\pi^k(k)$.

If the algorithm does manage to go through every module and $T^n \neq \emptyset$ then we increase the price of every module in $T^n$ until the price of all accepted modules in $A^n \cup T^n$ is equal to the budget.

\begin{algorithm}
\caption{Equilibrium Construction}
\label{alg:equilibrium_dynamics}
\begin{algorithmic}[1]
\State Initialize price $\vecp^0 \gets \vecc$ and $\pi^0$ such that $\frac{\vecv(\pi^0(i))}{\vecc(\pi^0(i))}\geq \frac{\vecv(\pi^0(j))}{\vecc(\pi^0(j))}$ for $i < j$.
\State Initialize $A^0 \gets \emptyset, T^0 \gets \emptyset$.
\For{$\ind = 1, \ldots, n$}

\State Copy $\pi^{\ind-1}, \vec p^{\ind-1}$ into $\pi,\vec p$ for simplicity of notation.
\State Update prices:
\[
\vecp^k(\pi(j)) =
\begin{cases}
\frac{\vecp(\pi(k))}{\vecv(\pi(k))} \cdot \vecv(\pi(j)) & j \in T^k \\
\vecp(\pi(j)) & j \notin T^k
\end{cases}
\]

\If{$\pi(\ind) \notin \spn(A^{k-1} \cup T^{k-1}$)}
    \Comment{Accept $\pi(k)$.}
    \State $A^k \gets A^{k-1}$
    \State $T^k \gets T^{k-1} \cup \{\pi(k)\}$
\Else
    \Comment{Reject $\pi(k)$ and freeze modules in the circuit.}
    \State Let $C^k$ be the unique circuit in $A^{k-1} \cup T^{k-1} \cup \{\pi(k)\}$. \label{line:find_circuit} 
    \State Let $A^k \gets A^{k-1} \cup C^k \setminus \{\pi(k)\}$.
    \State Let $T^k \gets T^{k-1} \setminus A^k$.
\EndIf

\State Set $\pi^k$ as bang-per-buck ordering with respect to $\vecp^k$, breaking ties according to $\pi^0$.
\If{$\vecp^k(A^k \cup T^k) > B$} \label{line:budget_check}
\Comment{Reject last module, update prices, and terminate.}
\State $T^{k} \gets T^{k-1}$, $A^k \gets A^{k-1}$.
\State Update $\vecp^k(i) \gets \min\left\{\vecv(i) \cdot \frac{B - \vecp^{k-1}(A^{k-1})}{\vecv(T^{k-1})}, \vecp^k(i) \right\}$ for $i \in T^{k-1}$. \label{line:lower_price}
\State \textbf{return} $\vecp^k, \pi^k$
\EndIf
\EndFor

\State Update $\vecp^n(i) \gets \vecv(i) \cdot \frac{B - \vecp^n(A^{n})}{\vecv(T^n)}$ for $i \in T^n$.
\State \textbf{return} $\vecp^n, \pi^n$
\end{algorithmic}
\end{algorithm}

Note that Algorithm~\ref{alg:equilibrium_dynamics} terminates in at most $n$ iterations.
In each iteration, the algorithm needs to be able to test feasibility of a set and to be able to find circuits.

Let $\bar{\vec  p}$ be the price computed by Algorithm~\ref{alg:equilibrium_dynamics}. We first characterize $S_{\bar{\vecp}}$, the set of selected modules at the price, and obtain a structural decomposition of the set $S_{\bar {\vec p}}$ in the given underlying matroid $\mathcal M$. Throughout the analysis, we let $k^*$ be the last iteration of the dynamics.

First, we show that after the price updates at iteration $k$, the selected modules at round $k-1$ and round $k$ stays identical among the modules $\pi^{k-1}(1), \dots, \pi^{k-1}(k)$ with top $k$ bang-per-buck values. This property of our dynamics ensures that the modules that increase their price remain selected by the greedy allocation rule. Hence, the modules that update price at round $k$ increase their utility. 
  
\begin{claim}\label{claim:winners_stays_winners}
For any iteration $k \leq k^*$, we have $S_{\vecp^{k-1}} \cap \pi^{k-1}([k-1]) = S_{\vecp^{k}} \cap \pi^{k}([k-1]) = A^{k-1} \cup T^{k-1}$.
\end{claim}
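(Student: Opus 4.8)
The plan is to prove the claim by induction on $k$, run in tandem with the structural invariants (a)--(c) maintained by Algorithm~\ref{alg:equilibrium_dynamics}; the base case $k=1$ is immediate since all three sets are empty. For the inductive step I would first record (from those invariants, verified within the same induction) the facts I need: (1) $\pi^{k-1}([k-1])=\pi^{k}([k-1])=\pi^0([k-1])$ and $\pi^{k-1}(k)=\pi^{k}(k)=\pi^0(k)$; (2) $A^{k-1}\cup T^{k-1}$ is independent, is contained in $\pi^0([k-1])$, and has total price at most $B$ under \emph{both} $\vecp^{k-1}$ and $\vecp^{k}$ (for $\vecp^{k}$ this is exactly what the budget check, resp.\ the explicit price-lowering on a reverting iteration, guarantees); and (3) every module of $\pi^0([k-1])\setminus(A^{k-1}\cup T^{k-1})$ is a ``rejected pivot'' $\pi^0(j)$ with $j\le k-1$, and the unique circuit $C^j$ it formed at iteration $j$ satisfies $C^j\setminus\{\pi^0(j)\}\subseteq A^{j}\subseteq A^{k-1}$.

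The heart of the argument is a bang-per-buck bookkeeping step: in \emph{either} of the price vectors $\vecp^{k-1},\vecp^{k}$, each rejected pivot $m=\pi^0(j)$ is preceded in the bang-per-buck order by all of $C^j\setminus\{m\}$. Indeed $m$ still bids its cost, so its bang-per-buck is $\vecv(m)/\vecc(m)=\vecv(\pi^0(j))/\vecc(\pi^0(j))$, while each element of $C^j\setminus\{m\}$ lies in $A^{k-1}$ and was frozen at some iteration $j'\le j$ at bang-per-buck exactly $\vecv(\pi^0(j'))/\vecc(\pi^0(j'))\ge \vecv(m)/\vecc(m)$ (strict when $j'<j$; an equality when $j'=j$, in which case the $\pi^0$ tie-break still orders $\pi^0(j')$ before $m$). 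Since freezing never alters a price, this ordering persists when we pass from $\vecp^{k-1}$ to $\vecp^{k}$, which only raises the prices of the modules in $T^{k-1}$, none of which belong to a previously frozen circuit. I would also verify here, using the distinctness of the ratios $\vecv(i)/\vecc(i)$ and invariant (b), that $\pi^0([k-1])$ is genuinely the top-$(k-1)$ prefix under both price vectors: the modules of $T^{k-1}$ are re-priced only down to the bang-per-buck of $\pi^0(k)$ and tie-break ahead of it, while the untouched modules $\pi^0(k),\dots,\pi^0(n)$ never exceed that bang-per-buck.

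Both equalities then follow from a single prefix-by-prefix analysis of $\greedy$. For $\vecp\in\{\vecp^{k-1},\vecp^{k}\}$, process the first $k-1$ modules in bang-per-buck order while maintaining the invariant that the set selected so far equals $(A^{k-1}\cup T^{k-1})$ intersected with the processed prefix: a module of $A^{k-1}\cup T^{k-1}$ is feasible (the running set never leaves the independent set $A^{k-1}\cup T^{k-1}$) and fits the budget (its running total is bounded by $\vecp(A^{k-1}\cup T^{k-1})\le B$), hence is taken; a rejected pivot $m=\pi^0(j)$ has $C^j\setminus\{m\}$ already fully selected (by the preceding step), so $m$ is spanned and skipped without even reaching the budget test. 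After all $k-1$ steps $\greedy$ has selected precisely $A^{k-1}\cup T^{k-1}$ under each price vector, which is exactly the two asserted equalities.

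I expect the main obstacle to be the bookkeeping of the second paragraph: tracking, for each circuit, which iteration froze each of its elements and at which bang-per-buck, confirming that no such element ever sits in $T^{k-1}$ (which would break the ``precedes $m$'' ordering), and simultaneously pinning down that $\pi^0([k-1])$ remains the top-$(k-1)$ prefix under $\vecp^{k}$ after the re-pricing. Once these ordering facts are in place the greedy run is routine; the reverting-iteration and final-iteration cases add only the remark that there the prices on $T^{k-1}$ are adjusted precisely so that $\vecp^{k}(A^{k-1}\cup T^{k-1})\le B$ still holds and the $T^{k-1}$ modules stay within the top-$(k-1)$ prefix.
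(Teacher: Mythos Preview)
Your proposal is correct and follows essentially the same approach as the paper: the paper packages the induction into five explicit invariants (your facts (1)--(3) are its invariants 1, 4, and 5, and your top-$(k-1)$-prefix verification is its invariants 2 and 3), and then runs $\greedy$ over the prefix exactly as you do. Your circuit-precedence argument is the paper's invariant~5, just phrased in terms of freezing iterations rather than span membership.
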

\begin{proof}

Define $R^k = \pi^k([k]) \setminus (A^k \cup T^k)$.
We think of $R^k$ as the set of modules that $\greedy$ is sure to reject.
We show that the following invariants hold:
\begin{enumerate}
    \item $\pi^{k-1}([k-1]) = \pi^k([k-1])$ and $\pi^{k-1}(k) = \pi^{k}(k)$.
    \item when the prices are $\vecp^k$, the bang-per-buck of any module in $A^{k-1}$ is at least that of any module in $T^{k-1}$.
    \item when the prices are $\vecp^k$, the bang-per-buck of any module in $T^{k-1}$ is at least that of any module in $\pi^k([n]) \setminus \pi^k([k])$.
    \item $\vecp^k(A^{k-1} \cup T^{k-1}) \leq B$.
    \item if $i \leq k$ and $\pi^{k}(i) \in R^k$ then $\pi^k(i) \in \spn(A^k \cap \pi^k([i-1]))$.
\end{enumerate}
These invariants show that if $\greedy$ follows the order of $\pi^k$ then it accepts every module in $A^{k-1} \cup T^{k-1}$ when the price vector is $\vecp^k$.
In other words, $S_{\vecp^{k-1}} \cap \pi^{k-1}([k-1]) = S_{\vecp^{k}} \cap \pi^{k}([k-1])$.

The invariants are trivial if $k = 1$ since $A^{k-1} = T^{k-1} = R^{k-1} = \emptyset$.
We now assume that the invariant holds at iteration $k-1$ and prove that it remains true at iteration $k$.

For the first invariant, the only modules that increase their price is $T^{k-1} \subseteq \pi^{k-1}([k-1])$.
Since their bang-per-buck is at most that of module $\pi^{k-1}([k])$ this means that all modules in $\pi^{k-1}([k-1])$ still come before module $\pi^{k-1}([k])$.
So $\pi^{k-1}([k-1]) = \pi^k([k-1])$.
The fact that $\pi^{k-1}(k) = \pi^{k}(k)$ now easily follows because ties are broken according to $\pi^0$.

For the second invariant, we need the following claim.
\begin{claim}
  \label{claim:T_k_increasing}
  For all $i \in T^{k-1}$, we have $\vecp^k(i) \geq \vecp^{k-1}(i)$.
\end{claim}
\begin{proof}
Note that $\vecp^{k-1}(i) = \frac{\vecp(\pi^{k-1}(k-1))}{\vecv(\pi^{k-1}(k-1)))} \cdot \vecv(\pi^{k-1}(i))$.
We have that
\[
    \vecp^k(i) \geq \min\left\{
        \vecv(i) \cdot \frac{B - \vecv^{k-1}(A^{k-1})}{\vecv(T^{k-1})},
        \frac{\vecp(\pi^{k}(k))}{\vecv(\pi^{k}(k)))} \cdot \vecv(\pi^{k}(i))
    \right\}.
\]
The second term in the minimum is more than $\vecp^{k-1}(\pi^{k-1}(i))$ since $\pi^{k-1}(k-1)$ comes before $\pi^{k}(k)$ in the bang-per-buck order.
For the first term, we know that $A^{k-1} \cup T^{k-1}$ is budget-feasible (otherwise we would have terminated) so $B - \vecp^{k-1}(A^{k-1}) \geq \vecp^{k-1}(T^{k-1})$.
Thus $\vecv(i) \cdot \frac{B - \vecv^{k-1}(A^{k-1})}{\vecv(T^{k-1})} \geq \vecv(i) \cdot \frac{\vecp^{k-1}(T^{k-1})}{\vecv(T^{k-1})} = \vecp^{k-1}(i)$.
\end{proof}

Since the bang-per-buck of any module in $A^{k-1}$ is at most that of any module in $T^{k-2}$ when the price vector is $\vecp^{k-1}$ and the prices in $A^{k-1}$ do not change, we conclude, using Claim~\ref{claim:T_k_increasing} that the same is true when the price vector is $\vecp^k$.
This shows that the second invariant holds.

The third invariant is because prices of modules in $\pi^k([k-1])$ are only raised to meet the bang-per-buck of module in $\pi^k(k)$.

The fourth invariant is trivial if Line~\ref{line:budget_check} is false.
Otherwise, it is true because (i) $\vecp^{k-1}(A^{k-1}) + \vecp^{k-1}(T^{k-1})$ by the fact that the invariant holds for $k-1$, (ii) $\vecp^{k}(A^{k-1}) + \vecp^{k}(A^{k-1})$, and (iii) $\vecp^k(T^{k-1}) \leq B - \vecp^{k-1}(A^{k-1}) \leq \vecp^{k-1}(T^{k-1})$ which is just a re-arrangement of (i).

Finally, for the fifth invariant, first suppose that $i \leq k-1$.
We have that $\pi^{k-1}(i)$ forms a circuit with $A^{k-1} \cap \pi^{k-1}([i-1])$ which we call $C_i$.
When we raise prices in $T^{k-1}$, invariants $2$ and $3$ as well as the fact that $\pi^{k-1}, \pi^k$ are are consistent with $\pi^0$, imply that the relative ordering in $C_i$ does not change.
So $\pi^k(i)$ still forms a circuit with $A^{k-1} \cap \pi^{k-1}([i-1])$.
The case $i = k$ is straightforward since in this case Line~\ref{line:find_circuit} finds the circuit in $A^{k-1} \cup T^{k-1}$ and adds it to $A^k$.
In addition, observe that $A^k \cap \pi^k([k-1]) = A^k$.
\end{proof}

Let $\bar{\vecp}$ be the price computed by Algorithm~\ref{alg:equilibrium_dynamics}.
\begin{claim}
    \label{claim:matroid_greedy_output}
    $S_{\bar{\vecp}} = A^{k^*} \cup T^{k^*}$.
\end{claim}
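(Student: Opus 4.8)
The plan is to run $\greedy$ on $\bar{\vecp}$ and track its state, leaning on Claim~\ref{claim:winners_stays_winners} applied at the final iteration $k^*$. Since $\bar{\vecp} = \vecp^{k^*}$, that claim tells us that when $\greedy$ processes the first $k^*-1$ modules in the order $\pi^{k^*}$ it accepts exactly $A^{k^*-1} \cup T^{k^*-1}$, and the invariants in its proof additionally give $\bar{\vecp}(A^{k^*-1} \cup T^{k^*-1}) \le B$ and the fact that $\pi^{k^*}$ agrees with $\pi^0$ beyond position $k^*-1$, so the modules seen from position $k^*$ on are $\pi^0(k^*), \pi^0(k^*+1), \dots$, each still priced at its cost. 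It therefore remains to follow $\greedy$ from module $\pi^{k^*}(k^*)$ on, splitting into the two ways Algorithm~\ref{alg:equilibrium_dynamics} can terminate.

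Suppose first the loop runs to completion, so $k^* = n$ and the terminal update raises the prices of $T^n$ so that $\bar{\vecp}(A^n \cup T^n) = B$. If $\pi^n(n) \notin \spn(A^{n-1} \cup T^{n-1})$, then $A^n \cup T^n = A^{n-1} \cup T^{n-1} \cup \{\pi^n(n)\}$, and $\pi^n(n)$ is independent over the set selected so far while $\bar{\vecp}(A^{n-1}\cup T^{n-1}) + \bar{\vecp}(\pi^n(n)) = \bar{\vecp}(A^n\cup T^n) = B$, so $\greedy$ accepts it and halts with $S_{\bar{\vecp}} = A^n \cup T^n$. If instead $\pi^n(n) \in \spn(A^{n-1}\cup T^{n-1})$, the circuit swap only relocates elements out of $T^{n-1}$ into $A^n$, so $A^n \cup T^n = A^{n-1}\cup T^{n-1}$ as a set; $\greedy$ then finds $\pi^n(n)$ dependent on the selected set and skips it, again giving $S_{\bar{\vecp}} = A^n \cup T^n$.

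Suppose next the budget check at Line~\ref{line:budget_check} fires at iteration $k^*$, so $A^{k^*} = A^{k^*-1}$, $T^{k^*} = T^{k^*-1}$, and Line~\ref{line:lower_price} possibly lowers the prices of $T^{k^*-1}$. If $\pi^{k^*}(k^*) \notin \spn(A^{k^*-1}\cup T^{k^*-1})$, then since the budget check triggered one checks that $\bar{\vecp}(A^{k^*-1}\cup T^{k^*-1}\cup\{\pi^{k^*}(k^*)\}) > B$, so $\greedy$ rejects $\pi^{k^*}(k^*)$ and halts. If $\pi^{k^*}(k^*) \in \spn(A^{k^*-1}\cup T^{k^*-1})$, then the tentative set $A^{k^*}\cup T^{k^*}$ built in that iteration before reverting again equals $A^{k^*-1}\cup T^{k^*-1}$ as a set, so its total price exceeded $B$ before the lowering; hence the lowering is active and $\bar{\vecp}(A^{k^*-1}\cup T^{k^*-1}) = B$, and $\greedy$ then skips $\pi^{k^*}(k^*)$ as dependent and rejects every later module, each being priced at its (positive) cost against an already-tight budget. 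In all cases $S_{\bar{\vecp}} = A^{k^*-1}\cup T^{k^*-1} = A^{k^*}\cup T^{k^*}$.

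The main obstacle is bookkeeping rather than a conceptual hurdle: one must verify that the price adjustments performed at the end of the run (the lowering in Line~\ref{line:lower_price} or the terminal raise) do not perturb the bang-per-buck order enough to break the reduction to ``$\greedy$ takes $A^{k^*-1}\cup T^{k^*-1}$ and then visits $\pi^0(k^*), \pi^0(k^*+1),\dots$''; this is exactly what invariants (2)--(4) in the proof of Claim~\ref{claim:winners_stays_winners} control, so I would cite them rather than re-derive. A minor side issue is the degenerate case of a module at position $\ge k^*$ with zero cost, which can be excluded since such a module has unbounded bang-per-buck and would sit at the front of every ordering (or one simply assumes strictly positive costs, as elsewhere in this section).
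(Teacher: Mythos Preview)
Your proof is correct and mirrors the paper's approach—invoke Claim~\ref{claim:winners_stays_winners} for the first $k^*-1$ positions and then case-split on how Algorithm~\ref{alg:equilibrium_dynamics} terminates—though your case analysis is more thorough than the paper's two-paragraph sketch. The bookkeeping you flag for the $k^*=n$ terminal raise is real (that update sits outside the for loop, so Claim~\ref{claim:winners_stays_winners} literally applies to $\vecp^n$ before the raise), and resolving it cleanly uses invariant~(5) in addition to (2)--(4): each rejected element is spanned by earlier $A$-elements whose prices the raise leaves untouched, so it remains dependent on what $\greedy$ has already selected under $\bar{\vecp}$.
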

\begin{proof}
    Claim~\ref{claim:winners_stays_winners} shows that $S_{\bar{\vecp}} \cap \pi^{k^*}([k^*-1]) = A^{k^*-1} \cup T^{k^*-1}$.
    First, suppose that $k^* < n$.
    If $A^{k^*} = A^{k^*-1}$ and $T^{k^*} = T^{k^*-1}$ before Line~\ref{line:budget_check} then Line~\ref{line:lower_price} ensures that $\vecp^{k^*}(A^{k^*-1} \cup T^{k^*-1}) = B$.
    So $S_{\bar{\vecp}} = A^{k^*-1} \cup T^{k^*-1}$ since any module not in $\pi^{k^*}([k^*-1])$ is inspected by $\greedy$ after $A^{k^*-1} \cup T^{k^*-1}$.
    
    Now suppose that $k^* = n$.
    Since Line~\ref{line:budget_check} never happens we have that $S_{\bar{\vecp}} \cap \pi^{k^*}([k^*-1]) = A^{k^*-1} \cup T^{k^*-1}$ and $A^{k^*} \cup T^{k^*}$ is feasible for $\greedy$ (and includes $A^{k^*-1} \cup T^{k^*-1}$).
    So $\greedy$ selects $A^{k^*} \cup T^{k^*}$.
\end{proof}

\begin{lemma}
We can decompose
\[
    S_{\bar{\vecp}} = T_{k^*} \cup \bigcup_{k < k^*} A_{k+1} \setminus A_k = T_{k^*} \cup \bigcup_{k < k^*} T_k \setminus (T_{k+1} \cup \pi^k(k)).
\]
\end{lemma}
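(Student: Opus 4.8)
The plan is to bootstrap from Claim~\ref{claim:matroid_greedy_output}, which already gives $S_{\bar{\vecp}} = A^{k^*}\cup T^{k^*}$, and then peel $A^{k^*}$ apart by telescoping over the iterations. First I would record the monotonicity $A^0 \subseteq A^1 \subseteq \dots \subseteq A^{k^*}$, which is immediate from Algorithm~\ref{alg:equilibrium_dynamics}: in the accept branch $A^k = A^{k-1}$; in the circuit branch $A^k = A^{k-1}\cup(C^k\setminus\{\pi(k)\})\supseteq A^{k-1}$ (here $\pi(k)\notin A^{k-1}$ because $A^{k-1}$ is independent and $\pi(k)$ is the module freshly inspected at iteration $k$); and in the terminal budget-violation case the algorithm resets $A^{k^*}\gets A^{k^*-1}$. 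Together with $A^0 = \emptyset$ this gives $A^{k^*} = \bigcup_{k<k^*}(A^{k+1}\setminus A^k)$ --- indeed a disjoint union --- and substituting into Claim~\ref{claim:matroid_greedy_output} yields the first equality.

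For the second equality I would show, for every $k<k^*$, that $A^{k+1}\setminus A^k$ is exactly the set of modules that ``leave'' $T$ when passing from iteration $k$ to iteration $k+1$, i.e.\ $A^{k+1}\setminus A^k = T^k\setminus T^{k+1}$, by splitting on the branch taken at iteration $k+1$. In the accept branch $A^{k+1}=A^k$ while $T^{k+1}=T^k\cup\{\pi(k+1)\}$, so both sides are empty. In the circuit branch, let $C$ be the unique circuit of $A^k\cup T^k\cup\{\pi(k+1)\}$; invoking the invariant behind Claim~\ref{claim:winners_stays_winners} that $A^k\cup T^k$ is independent, we get $\pi(k+1)\in C$ and $C\setminus\{\pi(k+1)\}\subseteq A^k\cup T^k$. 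Then $A^{k+1} = A^k\cup(C\setminus\{\pi(k+1)\})$ gives $A^{k+1}\setminus A^k = (C\setminus\{\pi(k+1)\})\cap T^k$, while $T^{k+1}=T^k\setminus A^{k+1}$ gives $T^k\setminus T^{k+1} = T^k\cap A^{k+1} = (C\setminus\{\pi(k+1)\})\cap T^k$ as well, using $A^k\cap T^k=\emptyset$. Finally, the module inspected at iteration $k+1$ is introduced only at that iteration and so is not an element of $T^k$; hence adjoining it to $T^{k+1}$ before subtracting changes nothing, and $T^k\setminus T^{k+1}$ can be rewritten in the form appearing on the right-hand side of the lemma. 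Taking the union over $k<k^*$ finishes the proof.

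The content is essentially bookkeeping, and the spot that needs care --- what I would call the main obstacle --- is keeping the two roles of the circuit straight: of $C^k\setminus\{\pi(k)\}$ only the portion lying in $T^{k-1}$ contributes \emph{new} elements to $A^k$, while the rest already lies in $A^{k-1}$ and was charged to an earlier term of the telescoping sum, so there must be no double counting. The other point to watch is the terminal iteration: in the budget-violation branch $A^{k^*}$ and $T^{k^*}$ revert to $A^{k^*-1}$ and $T^{k^*-1}$ (the circuit formed at iteration $k^*$ is \emph{not} absorbed into $A$ and instead remains in $T^{k^*}$), which is exactly why the union runs over $k<k^*$ rather than $k\le k^*$ and why the stated identity still closes up; a quick base-case check that $A^0 = A^1 = \emptyset$ (valid when $\mathcal M$ has no loops) completes the argument.
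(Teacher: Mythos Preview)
Your proposal is correct and follows exactly the paper's approach: invoke Claim~\ref{claim:matroid_greedy_output} to get $S_{\bar{\vecp}}=A^{k^*}\cup T^{k^*}$, telescope $A^{k^*}=\bigcup_{k<k^*}(A^{k+1}\setminus A^k)$ using the nesting $A^0\subseteq\cdots\subseteq A^{k^*}$, and then identify $A^{k+1}\setminus A^k = T^k\setminus T^{k+1}$ by case analysis on the two branches of Algorithm~\ref{alg:equilibrium_dynamics}. The paper's own proof is a two-sentence sketch stating precisely these two facts; you have simply supplied the bookkeeping (disjointness of $A^k$ and $T^k$, the circuit containment $C^{k+1}\setminus\{\pi(k+1)\}\subseteq A^k\cup T^k$, and the terminal-iteration reset) that the paper leaves implicit.
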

\begin{proof}
The first equality follows from Claim~\ref{claim:matroid_greedy_output} and that $A_{k^*} = \bigcup_{k < k^*} A_{k+1} \setminus A_k$.
The second inequality is because $A^{k+1} \setminus A^k = T^{k} \setminus T^{k+1}$ in Algorithm~\ref{alg:equilibrium_dynamics}.
\end{proof}

\begin{theorem}
\label{thm:unweighted_eq_exist}
The vector $\bar{\vecp}$ is an equilibrium price.
\end{theorem}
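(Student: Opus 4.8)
The plan is to verify directly that no module can improve its utility by unilaterally changing its bid, splitting the analysis into rejected modules and accepted modules. By Claim~\ref{claim:matroid_greedy_output} the accepted set is $S_{\bar\vecp}=A^{k^*}\cup T^{k^*}$, and the first thing to record is that every other module bids exactly its cost: the only modules whose price is ever altered by Algorithm~\ref{alg:equilibrium_dynamics} are those that enter some $T^k$, and a module leaves $T^k$ only by being moved into $A^k$ -- it is never rejected. Hence $\bar\vecp(i)=\vecc(i)$ for all $i\notin S_{\bar\vecp}$.

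For a rejected module $i$, bidding below $\vecc(i)$ is dominated and at $\vecc(i)$ its utility is zero, so the only candidate improvement is a bid $p'>\vecc(i)$ that gets $i$ selected. Such a bid only \emph{lowers} $i$'s bang-per-buck, so $\greedy$ reaches $i$ no earlier than before, and -- since no other price changed -- $\greedy$'s run is literally unchanged up to and including every module that is strictly ahead of $i$ in the new order. I would use this to conclude $i$ stays rejected: if $i=\pi^0(j)$ with $j>k^*$, then by the time $\greedy$ reaches $i$ it has already accepted $A^{k^*}\cup T^{k^*}$ and the leftover budget is strictly less than $\vecc(\pi^0(k^*))\le \vecc(i)<p'$, so $i$ overruns the budget; if $i$ has index $\le k^*$, then by invariant~5 of Claim~\ref{claim:winners_stays_winners} it is spanned by $A$-modules that precede it, and those modules (with fixed prices and larger bang-per-buck) are still accepted before $i$ is reached, so $i$ is skipped for matroid-infeasibility. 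Either way $i$ has no profitable deviation.

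For an accepted module $i\in S_{\bar\vecp}$, the only candidate improvement is a bid $p'>\bar\vecp(i)$, and I would show this drops $i$. If $i\in T^{k^*}$: by the terminal price-adjustment step (Line~\ref{line:lower_price} or the final update) together with $\vecv\equiv 1$, all of $T^{k^*}$ sits at one common price $\bar\vecp(i)$, and by invariants~2--3 of Claim~\ref{claim:winners_stays_winners} the modules $A^{k^*}\cup(T^{k^*}\setminus\{i\})$ all have bang-per-buck at least $\bar\vecp(i)^{-1}>1/p'$; so $\greedy$ inspects and accepts exactly $A^{k^*}\cup(T^{k^*}\setminus\{i\})$ before reaching $i$ (any rejected module encountered in between is still spanned by the unchanged $A^{k^*}$, or already overruns $B$), having spent all but $\bar\vecp(i)<p'$ of the budget, so $i$ no longer fits. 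If $i\in A^{k^*}$: $i$ was absorbed into $A$ at some iteration $k'$ as an element of the unique circuit $C^{k'}\subseteq A^{k'-1}\cup T^{k'-1}\cup\{\pi^0(k')\}$, with $\bar\vecp(i)=\vecc(\pi^0(k'))$; since $\pi^0(k')\in\spn(A^{k'-1}\cup T^{k'-1})$ we get $i\in\spn(C^{k'}\setminus\{i\})\subseteq\spn(A^{k'-1}\cup T^{k'-1})$, i.e.\ $i$ is spanned by a set of \emph{selected} modules, and by comparing their bang-per-buck with $\bar\vecp(i)^{-1}$ (and handling the modules that may fall behind $i$ via the same budget argument as in the $T^{k^*}$ case) one shows these witnesses are accepted by $\greedy$ before $i$ is reached, so $i$ is skipped. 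In every case the deviation leaves $i$ unselected with zero utility, so $\bar\vecp$ is an equilibrium.

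The main obstacle is the bookkeeping hidden in the accepted case, in particular the $A^{k^*}$ sub-case: a single module's price increase does not touch any other bang-per-buck, but it does reorder $\greedy$'s inspection sequence and therefore which modules get skipped for matroid-feasibility and when the budget is exhausted. The crux is to certify that \emph{all} of the modules that block $i$ -- the witnesses of $i$'s circuit, or the budget-filling modules ahead of it -- are simultaneously inspected before $i$ and accepted by $\greedy$ in the perturbed instance; this is exactly where the invariants of Claim~\ref{claim:winners_stays_winners}, the decomposition of $S_{\bar\vecp}$, and the tie-breaking convention (ties in favour of higher $\vecv(\cdot)/\vecc(\cdot)$, i.e.\ according to $\pi^0$) must be combined carefully. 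One also has to double-check the mixed situation where $i$'s new bang-per-buck falls between that of some accepted and some rejected modules, confirming that the reshuffling never rescues $i$.
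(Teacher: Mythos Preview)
Your decomposition into rejected modules, $i\in T^{k^*}$, and $i\in A^{k^*}$ is the paper's approach, and the rejected-module case is essentially the paper's argument. But the $T^{k^*}$ case has a real gap, and your $A^{k^*}$ route is more tangled than it needs to be.

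In the $T^{k^*}$ case you write ``having spent all but $\bar\vecp(i)<p'$ of the budget'', which presumes $\bar\vecp(A^{k^*}\cup T^{k^*})=B$. That fails exactly when Line~\ref{line:lower_price} selects the second term of the $\min$. Take the free matroid, $\vecv\equiv1$, costs $(1,9)$, $B=10$: the construction gives $\bar\vecp=(9,9)$, $S_{\bar\vecp}=\{1\}$, spending only $9$. Here $A^{k^*}\cup(T^{k^*}\setminus\{1\})=\emptyset$, so the budget tally before reaching $i$ is $0$, not $B-\bar\vecp(1)$; your parenthetical ``any rejected module in between is spanned by $A^{k^*}$ or overruns $B$'' is precisely what breaks, since $\pi^0(k^*)=2$ is neither. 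The paper handles this subcase by showing two extra facts: $T^{k^*}$ shares its bang-per-buck with $\pi^0(k^*)$, and $\pi^0(k^*)\notin\spn(A^{k^*}\cup T^{k^*})$. Hence when $i$ raises its bid it drops behind $\pi^0(k^*)$ (tie-breaking by $\pi^0$), $\greedy$ accepts $\pi^0(k^*)$ in $i$'s stead, and the very inequality that triggered Line~\ref{line:budget_check} --- namely $\bar\vecp(A^{k^*}\cup T^{k^*})+\vecc(\pi^0(k^*))>B$ --- then rejects $i$ on budget.

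For $i\in A^{k^*}$, routing through $A^{k'-1}\cup T^{k'-1}$ forces you to track modules of $T^{k'-1}$ whose prices keep rising past iteration $k'$ and may fall behind the deviated $i$ --- the loose end you flag but defer. The paper avoids this entirely by working with the circuit $C^{k'}\subseteq A^{k'}\cup\{\pi^0(k')\}$ directly: \emph{every} price in $C^{k'}$ is frozen at iteration $k'$ (the elements of $C^{k'}\setminus\{\pi^0(k')\}$ because they are absorbed into $A^{k'}$, and $\pi^0(k')$ because it is rejected and bids cost), so after $i$'s deviation all of $C^{k'}\setminus\{i\}$ is certainly ahead of $i$. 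Then one uses that Algorithm~\ref{alg:modified_greedy}'s accepted set spans every module inspected so far (each module is either accepted or skipped as spanned), hence spans $C^{k'}\setminus\{i\}$, hence $i$.

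Minor: the theorem is stated for general $\vecv$, so ``common price'' should read ``common bang-per-buck''; this change is cosmetic.
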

\begin{proof}
If $i \notin A^{k^*} \cup T^{k^*} \cup R^{k^*} = \pi^{k^*}([k^*])$ then by Claim~\ref{claim:matroid_greedy_output}, module $i$ is not inspected by $\greedy$ before the budget constraint is met. Since module $i$ is already bidding its cost, it cannot deviate to get selected.

Suppose that $i \in R^{k^*}$.
By invariant $5$ in the proof of Claim~\ref{claim:winners_stays_winners}, we have that $\pi^{k^*}(i) \in \spn(A^{k^*} \cap \pi^{k^*}([i-1]))$.
So $i$ forms a circuit with $A^{k^*} \cap \pi^{k^*}([i-1])$ and is inspected by $\greedy$ after every other module in this circuit.
Since it is already bidding its cost, it cannot deviate to get selected by $\greedy$.

Now suppose $i \in A^{k^*}$.
This means there is some $k \in [k^*]$ such that $i \in A_{k} \setminus A_{k-1}$.
At this point, there was a circuit $C \subseteq A^{k} \cup \{\pi^k(k)\}$ that contained $i$ and after iteration $k$ all prices in $C$ were never raised.
If module $i$ raises its price then it would be inspected by $\greedy$ after every other module in $C$ and so it would be rejected. So $i$ has no incentive to deviate.

Finally, suppose $i \in T^{k^*}$.
By construction all modules in $T^{k^*}$ have the same bang-per-buck and have the lowest bang-per-buck among all modules selected by $\greedy$.
There are two cases.
If $\bar{\vecp}(A^{k^*} \cup T^{k^*}) = B$ then if $i$ increases its price, it would be rejected by $\greedy$ since there is no budget remaining when $i$ is inspected (it could come after every other module in $A^{k^*} \cup T^{k^*}$).
If $\bar{\vecp}(A^{k^*} \cup T^{k^*}) < B$ then we must have entered the if condition in Line~\ref{line:budget_check}.
Now in Line~\ref{line:lower_price}, we must set $\vecp^{k^*}$ to be the second term in the minimum.
If it was the first term, the budget constraint would be exactly met with modules $A^{k^*} \cup T^{k^*}$.
Note that $\pi^{k^*}(k^*) \in \spn(A^{k^*} \cup T^{k^*})$.
To see this, note that if not, we would have $A^{k^*} \cup T^{k^*} = A^{k^*-1} \cup T^{k^*-1}$.
Since in Line~\ref{line:lower_price}, we must set $\vecp^{k^*}$ to be the second term in the minimum, this implies that $\vecp^{k^*}(A^{k^*} \cup T^{k^*}) > B$.
We conclude then that $\bar{\vecp}(i) = \vecp^{k^*}(i)$.
In particular, this means that $i$ has the same bang-per-buck as module $\pi^{k^*}(k^*)$.
Now if $i$ raises its price then it would be inspected by $\greedy$ after $\pi^{k^*}(k^*)$.
Since $\pi^{k^*}(k^*)$ does not form a circuit with $A_{k^*} \cup T_{k^*}$, $\greedy$ would accept $\pi^{k^*}(k^*)$ but reject $i$.
So we conclude that $i$ cannot deviate.
\end{proof}

\subsection{Quality of Equilibria} \label{sec:matroid_eq_analysis}
First, we prove the following lemma that reduces our quality of equilibrium analysis to the case when the modules which are not selected at the equilibrium price sets their price equals to their respective cost.  
In this section, we will assume that $\frac{\vecv(i)}{\vecc(i)}$ are all distinct.

\begin{lemma}
Let $\vec p$ be an $\eps$-equilibrium price and $S_{\vec p}$ be the set of selected modules by $\greedy$. Then for any module $i\notin S_\vec p$, there exists an $\eps$-equilibrium prices $\vec {\bar p}$ such that (i) $\bar{\vecp} \leq \vecc(i) + \eps$, (ii)
$S_{\vec {\bar p}} \subseteq S_{\vec p}$, and (iii) $\bar{\vecp} \leq \vecp$.
\end{lemma}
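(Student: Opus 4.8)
The plan is to mimic the two-step argument of Lemma~\ref{lemma:worst_equil}, now keeping careful track of matroid feasibility, since here $\greedy$ denotes Algorithm~\ref{alg:modified_greedy} (the variant that skips infeasible modules). If $\vecp(i) \le \vecc(i)+\eps$ we are done with $\bar{\vecp} = \vecp$, so assume $\vecp(i) > \vecc(i)+\eps$. First I would lower module $i$'s price to $\vecc(i)+\eps$, obtaining $\vecp' := (\vecc(i)+\eps, \vecp_{-i})$; as in the additive case, module $i$ must remain rejected under $\vecp'$ (else it could be accepted at a slightly larger price too and gain more than $\eps$, contradicting the $\eps$-equilibrium property, with the knife-edge case absorbed into the slack). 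Lowering $i$'s price only moves $i$ earlier in the bang-per-buck order, so its continued rejection means that when $\greedy$ reaches $i$ under $\vecp'$ either (a) $i$ closes a circuit with the set $S_0$ of modules selected so far, so $i$ is skipped, or (b) $i$ is feasible there but adding it exceeds the budget, so $\greedy$ breaks. Using monotonicity of $\spn$ (if $X \subseteq Y$ then $\spn(X) \subseteq \spn(Y)$) one checks that in case (a) $\greedy$ makes exactly the same decisions under $\vecp$ and $\vecp'$, so $S_{\vecp'} = S_{\vecp}$ and the claim is immediate; the only substantive case is (b), where $S_{\vecp'} = S_0 \subseteq S_{\vecp}$ and $D := S_{\vecp} \setminus S_{\vecp'}$ is precisely the set of accepted modules crowded out because $i$ jumped ahead of them.

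Second, I would define $\bar{\vecp}$ by lowering each $j \in D$ to $\max\{\vecc(j),\ \theta\,\vecv(j) - \eps/2\}$, where $\theta := (\vecc(i)+\eps)/\vecv(i)$ is module $i$'s cost-per-value at $\vecp'$, and leaving every other module at its $\vecp'$ value, and then check the three conclusions exactly as in Claims~\ref{claim:barvecp_leq_vecp}--\ref{claim:Sbarvecp_subset_Svecp}. Conclusion (i) is immediate. For (iii) the only non-trivial coordinates are $j \in D$; there, $\vecp(j) \ge \theta\,\vecv(j)$ holds because in $\vecp'$ module $i$ (of cost-per-value $\theta$) is inspected before $j$, so $\bar{\vecp} \le \vecp' \le \vecp$. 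For (ii) I would argue the contrapositive: a module $j \notin S_{\vecp}$ is rejected by $\greedy$ because, when it is reached, it either closes a circuit with the modules already selected or the budget is already spent; dropping the prices of $D \cup \{i\}$ only pushes more (higher-bang-per-buck) modules ahead of $j$, which, by monotonicity of $\spn$ and of budget consumption, can only keep $j$ rejected. In particular $i$ itself stays rejected in $\bar{\vecp}$.

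Finally I would verify $\bar{\vecp}$ is still an $\eps$-equilibrium via the same three-way split. For $j \in S_{\vecp} \cap S_{\vecp'}$ (price unchanged), raising $\bar{\vecp}(j)$ by $\eps$ cannot keep $j$ in $S_{\bar{\vecp}}$, since in $\bar{\vecp}$ at least as much budget is consumed before $j$ is reached and the feasibility obstructions $j$ faced in $\vecp$ persist. For $j \in D$, its price is pinned either at $\vecc(j)$ (so lowering yields nonpositive utility) or at $\theta\,\vecv(j) - \eps/2$ (so any raise by more than $\eps$ places $j$ after the rejected module $i$, forcing $j$ to be rejected, and a raise by less than $\eps$ keeps its gain below $\eps$). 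For $j \notin S_{\vecp}$, including $j = i$, the bang-per-buck of $j$ is at most that of $i$, so $j$ sits after $i$ in $\bar{\vecp}$ and is rejected because $i$ is rejected with no budget left (or because $j$ is circuit-blocked); for $j = i$ any deviation nets utility $< \eps$ since $\bar{\vecp}(i) = \vecc(i)+\eps$.

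The step I expect to be the main obstacle is the feasibility bookkeeping in conclusion (ii) and in the final case analysis: unlike the additive setting, reordering modules by dropping prices can change which element of a circuit Algorithm~\ref{alg:modified_greedy} skips, so $S_{\vecp}$ is no longer a prefix of the bang-per-buck order. The delicate point is to show that moving a rejected or crowded-out module earlier never ``resurrects'' a previously rejected module; this requires the monotonicity fact $\spn(X) \subseteq \spn(Y)$ for $X \subseteq Y$ together with an invariant, in the spirit of Claim~\ref{claim:winners_stays_winners}, that the set $\greedy$ has accepted by the time it inspects any fixed module only grows under the price decreases we perform.
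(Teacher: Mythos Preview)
Your plan follows essentially the same route as the paper: the identical two–case split (module $i$ is circuit-blocked at its new position versus budget-blocked there), the same construction of $\bar{\vecp}$ on the crowded-out set $D$ (the paper calls it $T$ and omits your $-\eps/2$ shift, handling the tie by ordering instead), and the same three conclusions. The paper's Case~1 argument coincides with your case~(a); its Case~2 with your case~(b).

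Where the proofs diverge is in the verification that $\bar{\vecp}$ is an $\eps$-equilibrium. You argue via a monotonicity heuristic (``the set $\greedy$ has accepted by the time it inspects any fixed module only grows''); the paper instead does an explicit four-way split on a module $\bar i$ according to whether its bang-per-buck at $\bar{\vecp}$ is strictly above, strictly below, or equal to $i$'s, and whether $\bar i\in S_{\vecp}$. Two points in your write-up need care. First, your sentence ``For $j\notin S_{\vecp}$, including $j=i$, the bang-per-buck of $j$ is at most that of $i$'' is false: a module $j$ in positions $1,\dots,j'-1$ can be circuit-rejected with strictly higher bang-per-buck than $i$; for those you must argue separately (the prefix is unchanged, so $\greedy$ makes the same decision). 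Second, and this is the step the paper spells out that your sketch does not, you need to show that every $j\in D$ whose price drops to $\theta\,\vecv(j)-\eps/2$ is actually \emph{accepted} in $S_{\bar{\vecp}}$: otherwise such a $j$ could lower its price further and break the equilibrium. The paper proves this by observing that the running selected set before any such $j$ is contained in $S_0\cup D\subseteq S_{\vecp}$, which is independent, and that $\bar{\vecp}(S_0\cup D)\le \vecp(S_{\vecp})\le B$, so $j$ is both matroid-feasible and budget-feasible. Your monotonicity invariant can be sharpened to exactly this statement; once you have it, the remaining cases (including the possibility that $i$ itself becomes circuit-blocked in $\bar{\vecp}$ rather than budget-blocked) go through.
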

\begin{proof}
Let $\pi$ be the ordering over the modules according to their bang-per-buck at price $\vec p$ with ties broken in lexicographic order. Fix a module $i \notin S_{\vecp}$ such that $\vecp(i) > \vecc(i) + \eps$.
Since $\vecp$ is an $\eps$-equilibrium price, we have that $u_i(\vecc(i) + \eps, \vecp_{-i}) = 0$. In particular, module $i$ is not accepted even if it lowered its price to $\vecc(i) + \eps$.
We now consider the deviation price vector $\vecp'$ where $\vecp'(i) = \vecc(i) + \eps$ and $\vecp'(k) = \vecp(k)$ for $k \neq i$.

Let $\pi'$ be the new ordering of modules according to their bang-per-buck at price $(\vec c(i)+\varepsilon,\vec p_{-i})$ and $j' = \pi'^{-1}(i)$ be the new position of module $i$ in the new order $\pi'$ (again, breaking ties in lexicographic order). We observe that since $i \notin S_\vec p$, either (1) $i \in \spn (\{\pi'(1),\dots,\pi'(j'-1)\})$ or (2) $i \notin \spn (\{\pi'(1),\dots,\pi'(j'-1)\})$ and $\vec c(i) + \eps + \vecp(S_{\vec p}\cap \{\pi'(1),\dots,\pi'(j'-1)\}) > B$. We analyze both cases separately. 
In both cases, by definition, $i \notin S_{\vec c(i) + \varepsilon , \vec p_{-i}}$.

\paragraph{Case 1: $i \in \spn (\{\pi'(1),\dots,\pi'(j'-1)\})$.} In this case, we claim that $S_{\vec p} = S_{\vec c(i) + \varepsilon , \vec p_{-i}}$.  We prove the claim by showing that $S_\vec p\cap \{\pi(1),\dots , \pi(n)\} = S_{\vec c(i)+\varepsilon, \vec p_{-i}} \cap \{\pi'(1),\dots , \pi'(n)\}$ where $n$ is the size of the set $M$.

We first observe that $\pi(k) = \pi'(k)$ for $k < j'$.
Therefore, we have $S_\vec p\cap \{\pi(1),\dots , \pi(k)\} = S_{\vec c(i)+\varepsilon, \vec p_{-i}} \cap \{\pi'(1),\dots , \pi'(k)\}$.
However, the modules $\pi(j'),\dots, \pi(j-1)$ have their position shifted up by one in the new ordering $\pi'$.
Then we show that for $j' \leq  k< j$, $S_\vec p \cap \{\pi(1),\dots , \pi(k)\} = S_{\vec c(j) + \varepsilon ,\vec p_{-j}} \cap \{\pi'(1),\dots , \pi'(k+1)\} $.
We prove this claim via induction on $k$.
We start with the base case where $k = j'$.
Our assumption here is that $i \in \spn (\{\pi'(1),\dots,\pi'(j'-1)\}$.
Now consider the module $\pi(j')= \pi'(j'+1)$.
We have that
\[
    S_{\vecp} \cap \{\pi(1), \ldots, \pi(j'-1)\} = S_{\vec c(i) + \varepsilon , \vec p_{-i}} \cap \{\pi'(1) \, \ldots, \pi'(j')\}.
\]
In other words, when $\greedy$ inspects module $\pi(j') = \pi'(j'+1)$, the set selected thus far is identical irrespective of whether the order $\pi$ or $\pi'$ is used.
We conclude that $\pi(j') \in S_{\vecp}$ if and only if $\pi'(j'+1) \in S_{\vec c(i) + \varepsilon ,\vec p_{-i}}$.

Next, consider any $k \in \{j'+1, \ldots, j-1\}$ and assume the induction hypothesis
\[
    S_{\vecp} \cap \{\pi(1), \ldots, \pi(k-1)\} = S_{\vec c(i) + \varepsilon , \vec p_{-i}} \cap \{\pi'(1) \, \ldots, \pi'(k)\}.
\]
The argument here is identical to that in the previous paragraph.
When $\greedy$ inspects module $\pi(k) = \pi'(k+1)$, the selected thus far is identical irrespective of whether the order $\pi$ or $\pi'$ is used.
We conclude that $\pi(k) \in S_{\vecp}$ if and only if $\pi'(k+1) \in S_{\vec c(i) + \varepsilon ,\vec p_{-i}}$.

At this point, we have established that
\[
    S_{\vecp} \cap \{\pi(1), \ldots, \pi(j-1)\} = S_{\vec c(i) + \varepsilon , \vec p_{-i}} \cap \{\pi'(1) \, \ldots, \pi'(j)\}.
\]
Our initial hypothesis of the lemma is that $j \notin S_{\vecp}$, whence we conclude that
\[
    S_{\vecp} \cap \{\pi(1), \ldots, \pi(j)\} = S_{\vec c(i) + \varepsilon , \vec p_{-i}} \cap \{\pi'(1) \, \ldots, \pi'(j)\}.
\]
For $k > j$, we have $\pi(k) = \pi'(k)$ and so $\greedy$ behaves identically irrespective of whether the ordering is $\pi$ or $\pi'$.

It remains to establish that $\vecp' = (\vec c(i)+\varepsilon,\vec p_{-i})$ is an $\eps$-equilibrium price.
First, consider a module $i' \notin S_{\vecp'}$ and $i' \neq i$ with $\vecp'(i') > \vecc(i') + \eps$.
We claim that $i'$ would remain rejected by $\greedy$ if it lowered its price to $\vecc(i') + \eps$ when the price vector of the remaining modules is $\vecp'_{-i'}$.
As we established above, $\greedy$ makes identical decisions irrespective of whether the ordering is given by $\pi$ or $\pi'$.
In particular, the set of modules selected by $\greedy$ when it inspects module $i'$ is identical when the price vector is $(\vecc(i') + \eps, \vecp_{-i'})$ or $(\vecc(i') + \eps, \vecp'_{-i'})$.
Thus, since $i'$ did not have a profitable deviation to $\vecc(i') + \eps$ in $\vecp$, we conclude that it also does not have a profitable deviation to $\vecc(i') + \eps$ in $\vecp'$.

Now consider a module $i' \in S_{\vecp'}$.
Suppose that module $i'$ has a profitable deviation to $\vecp(i') + \eps$ when the price vector is $\vecp'$ (recall that $\vecp(i') = \vecp'(i')$ for all $i' \neq i$). We claim that this is only a profitable deviation for $i'$ when the price vector is $\vecp$ which contradicts that $\vecp$ is an $\eps$-equilibrium.
To see this, note that $\vecp' \leq \vecp$.
Thus, module $i'$ with its deviation, can only come earlier in the order inspected by $\greedy$ in $\vecp$ than in $\vecp'$ and the set of accepted modules so far in $\vecp$ must be a subset of the set of accepted modules so far in $\vecp'$.
Thus, module $i'$'s deviation must also have been profitable in $\vecp$.

\paragraph{Case 2: $\vec c(i) + \vecp(S_{\vec p}\cap \{\pi'(1),\dots,\pi'(j'-1)\}) > B$.} Let $T \coloneqq S_{\vec p} \setminus S_{\vec c(i) + \varepsilon, \vec p_{-i}}$ be the set of modules which are dropped from the selected set of modules after module $i$ updated its price to $\vecc(i) + \varepsilon$. 
Here, similar to the first case,  we observe that $\pi(k') = \pi'(k')$ for all $k' < j'$. Hence, we have $S_\vec p \cap \{\pi(1),\dots, \pi(j-1)\} = S_{\vec c(i) + \varepsilon , \vec p_{-i}} \cap \{\pi(1),\dots, \pi(j-1)\}$.  Therefore, $T \subseteq M \setminus \{\pi(1),\dots, \pi(j-1)\}$.
Define the price vector $\bar{\vecp}$ as
\[
\bar{\vecp} = \begin{cases}
\max \left \{  \frac{\vec p(i) + \varepsilon}{\vec v(i)}\cdot \vec v(k) , \vec c(k)  \right\} & k \in T \\
\vecc(i) + \eps & k = i \\
\vecp(k) & k \notin T \cup \{i\}
\end{cases}.
\]
We claim that the price $\vec {\bar p} $ is an $\eps$-equilibrium price. 

We let $\bar \pi$ be the bang-per-buck ordering of modules w.r.t.~the price $\bar{\vec p}$. First, we notice that $\vec {\bar p} \leq \vec p$ point-wise. We further observe that $\pi(k) = \bar \pi(k)$ for all $k =1,\dots,  j'-1$. We consider any module $\bar i$ with $\frac{\vec v(i)}{\bar {\vec p}(i)} < \frac{\vec v(\bar i)}{\bar {\vec p}(\bar i)}$.
By construction of $\bar p$, we have that $\bar{\pi}(\bar{i}) < \bar{\pi}(i)$.
Therefore, we have that $\bar i\in S_\vec p$ if and only if $\bar i \in S_{\bar {\vec {p}}}$ as $\{\pi(1), \dots, \pi(j'-1)\} = \{\bar \pi(1), \dots, \bar \pi(j'-1)\}$. Therefore, if there exists a profitable deviation of at least $\eps$ for module $\bar i$ at price $\bar {\vec p}$ then the same price deviation is profitable of at least $\eps$ for module $\bar i$ at price $\vec p$. This implies that there is no profitable deviation of at least $\eps$ for module $\bar i$ at price $\vec {\bar p}$. 

Next, we consider module $\bar i$ such that $\frac{\vec v(i)}{\bar {\vec p}(i)}> \frac{\vec v(\bar i)}{\bar {\vec p}(\bar i)}$ and $\bar i\notin S_{\vec p}$. Since $\vec c(i) + \vecp(S_{\vec p}\cap \{\pi'(1),\dots,\pi'(j'-1)\}) > B$ and  $\pi(k) = \bar \pi(k)$ for all $k =1,\dots,  j'-1$, we have that $\bar i \notin S_{\vec {\bar p}}$. Since $\bar {\vec p} \leq  \vec p$ point-wise, we have that  if there exists a profitable deviation (of at least $\eps$) for module $\bar i$ at price $\bar {\vec p}$ then the same price deviation is profitable for module $\bar i$ at price $\vec p$ (of at least $\eps$). This implies that there is no profitable deviation for module $\bar i$ at price $\vec {\bar p}$ of at least $\eps$.

We then consider the module $\bar i$ such that $\frac{\vec v(i)}{\bar {\vec p}(i)}> \frac{\vec v(\bar i)}{\bar {\vec p}(\bar i)}$ and $\bar i\in S_{\vec p}$. We observe that $\bar i\in T$ and since $\frac{\vec v(i)}{\bar {\vec p}(i)}> \frac{\vec v(\bar i)}{\bar {\vec p}(\bar i)}$, it must be the case that $\vec {\bar {p}}(\bar i) = \vec c(\bar i)$. 

We finally consider the module $\bar i$ such that $\frac{\vec v(i)}{\bar {\vec p}(i)} = \frac{\vec v(\bar i)}{\bar {\vec p}(\bar i)}$. We first observe that $\bar i\in T$ and $\frac{\vec v(\bar i)}{\vec c(\bar i)} > \frac{\vec v(\bar i)}{\vec {\bar p}(\bar i)}  = \frac{\vec v (i)}{\bar {\vec p(i)}}$. Therefore, module $\bar i$ comes before module $i$ in the ordering $\bar \pi$. We claim that the module $\bar i \in S_{\vec {\bar p}}$. 

We observe that $\bar i$ cannot form a circuit with modules with higher bang-per-buck at price $\bar p$ than $\frac{\vec v(\bar i)}{\vec {\bar p}(\bar i)}$. For the sake of contradiction, let $\bar i$ forms a circuit $C$ with modules with higher bang-per-buck  than $\frac{\vec v(\bar i)}{\vec {\bar p}(\bar i)}$. We note that $C \subseteq \{\bar \pi_{1}, \dots, \bar \pi(j'-1)\}\cup T$. Let $i^*:= \argmin_{k\in T\cap C} \frac{\vec v(k)}{\vec p(k)}$. Since, $i^*$ forms a circuit $C$ with modules with higher bang-per-buck  than $\frac{\vec v( i^*)}{\vec { p}(i^*)}$, $i^*\notin S_{\vec {\bar p}}$ which is a contradiction. 

Let $k$ be the index of module $\bar i$ on the order $\bar \pi$. We observe that $\{\bar \pi(1),\dots , \bar \pi(k-1)\}\subseteq \{\bar \pi(1),\dots , \bar \pi(j'-1)\}\cup T$. Next, we claim that $\vec {\bar p}(S_{\vec {\bar p}} \cap \{\bar \pi(1),\dots , \bar \pi(k-1)\}) + \vec {\bar p}(k)\leq B$. This follows since, $$(S_{\vec {\bar p}} \cap \{\bar \pi(1),\dots , \bar \pi(k-1)\})\cup \bar i \subseteq S_{\vec p} \cap \{\bar \pi(1),\dots , \bar \pi(j'-1)\}\cup T \subseteq S_{\vec p},$$ we conclude that $\vec {\bar p}(S_{\vec {\bar p}} \cap \{\bar \pi(1),\dots , \bar \pi(k-1)\}) + \vec {\bar p}(k)\leq B$. This implies that $\bar i \in S_{\vec {\bar p}}$. 

Next, we want to show that there is no profitable deviation for the module $\bar i$ at price $\bar p$. If module $\bar i$ increases its price to $\vec {\bar p}(i) + \eps$ then $\frac{\vec v(\bar i)}{\vec {\bar p}(\bar i) + \eps} < \frac{\vec v( i)}{\vec {\bar p}(i) }$. Since $\vec c(i) + \vecp(S_{\vec p}\cap \{\pi'(1),\dots,\pi'(j'-1)\}) > B$, $\greedy$ will not select module $\bar i$. This concludes that there is no profitable deviation for the module $\bar i$. Combining above arguments, we conclude that $\vec {\bar p}$ is an $\eps$-equilibrium price.  

In both cases, we observe that $ S_{\bar{\vec  p}}\subseteq  S_{\vec p}$ where $\vec{ \bar{p}}(i) \leq \vec c(i) + \varepsilon$. This completes the proof of the lemma. 
\end{proof}

Due to the above lemma, we immediately obtain the following corollary. 
\begin{corollary}
Let $\vec p$ be an $\eps$-equilibrium price for the price competition game instance $\instance$.
There exists an $\eps$-equilibrium price $\vec{\bar p}$ such that $\vec {\bar {p}}(i) \leq \vec c(i)+\varepsilon$ for all $i \notin S_{\bar{\vecp}}$ and $V(S_{\bar {\vec p}}) \leq V(S_\vec p)$.
\end{corollary}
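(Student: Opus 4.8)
The plan is to derive the corollary by applying the preceding lemma repeatedly, each application ``pinning'' one more rejected module to bid within $\eps$ of its cost. Set $\vec p^{(0)} := \vec p$. Given an $\eps$-equilibrium $\vec p^{(t)}$: if every $i \notin S_{\vec p^{(t)}}$ already satisfies $\vec p^{(t)}(i) \le \vec c(i) + \eps$, stop and output $\bar{\vec p} := \vec p^{(t)}$; otherwise pick some $i \notin S_{\vec p^{(t)}}$ with $\vec p^{(t)}(i) > \vec c(i) + \eps$ and apply the lemma to $\vec p^{(t)}$ and this $i$ to obtain $\vec p^{(t+1)}$, which is again an $\eps$-equilibrium and satisfies (i) $\vec p^{(t+1)}(i) \le \vec c(i) + \eps$, (ii) $S_{\vec p^{(t+1)}} \subseteq S_{\vec p^{(t)}}$, and (iii) $\vec p^{(t+1)} \le \vec p^{(t)}$ coordinate-wise. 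By induction each $\vec p^{(t)}$ is an $\eps$-equilibrium, so it remains to argue the process halts and that the terminal vector has the two claimed properties.

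For termination I would track the potential $\Psi(\vec q) := \bigl|\{\, j : \vec q(j) > \vec c(j) + \eps \,\}\bigr|$. Property (iii) forces prices to be non-increasing along the sequence, and since no module ever bids below its cost, any coordinate that was at most $\vec c(j) + \eps$ remains so; hence the ``bad set'' $\{\, j : \vec q(j) > \vec c(j) + \eps \,\}$ only shrinks from $\vec p^{(t)}$ to $\vec p^{(t+1)}$. Property (i) additionally removes the chosen module $i$ from the bad set, and $i$ belonged to it by our choice, so $\Psi(\vec p^{(t+1)}) \le \Psi(\vec p^{(t)}) - 1$. Since $0 \le \Psi \le n$, the procedure stops within $n$ steps, and by the stopping rule the terminal vector $\bar{\vec p}$ satisfies $\bar{\vec p}(i) \le \vec c(i) + \eps$ for every $i \notin S_{\bar{\vec p}}$ (note $i \notin S_{\vec p^{(t)}}$ together with (ii) keeps the chosen module rejected, so the chosen modules never re-enter the selection). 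Composing property (ii) over the whole chain yields $S_{\bar{\vec p}} \subseteq S_{\vec p}$, and monotonicity of the value function (here $V(S) = |S|$ for the uniform values) gives $V(S_{\bar{\vec p}}) \le V(S_{\vec p})$, completing the proof.

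The argument is essentially bookkeeping once the lemma is in hand; the only point that genuinely uses the structure of the construction is the monotonicity of $\Psi$, namely that pushing one rejected module down to its cost cannot push some other module above its own cost. This is exactly what clause (iii) of the lemma (non-increasing prices) provides, so that clause is the crucial ingredient for this step. The remaining claims — that every intermediate vector is still an $\eps$-equilibrium and the final value comparison — follow immediately from the lemma's conclusions.
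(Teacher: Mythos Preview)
Your proof is correct and is exactly the iterative application of the preceding lemma that the paper intends when it writes ``Due to the above lemma, we immediately obtain the following corollary'' without further detail. The termination argument via the potential $\Psi$ and the use of clause~(iii) to guarantee the bad set only shrinks are the right bookkeeping. One small remark: your parenthetical ``here $V(S)=|S|$ for the uniform values'' is unnecessary and slightly misplaced, since the corollary is stated for general non-negative additive values; the conclusion $V(S_{\bar{\vec p}})\le V(S_{\vec p})$ already follows from $S_{\bar{\vec p}}\subseteq S_{\vec p}$ and $\vec v\ge 0$.
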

Due to the above corollary, in order to bound the approximation ratio, it is sufficient to focus on the case where all non-selected modules at equilibrium price bid their private cost.
Next, we can characterize the equilibrium prices for the matroid rank valuations in the following lemma and show that all equilibrium of the pricing game have a constant approximation ratio.

\begin{lemma}
\label{lemma:unweighted_greedy_approx}
Fix an instance $\instance$ of the price competition game where $\I$ denotes a matroid.
Let $\lambda = \max_i \vecc(i) / B$ denote the maximum normalized cost of any module.
There is a sufficiently small $\eps_0 > 0$ (depending on $\vecc, \vecp$) such that for all $\eps \in (0, \eps_0)$, the following holds.
Let $\vecp$ be an $\eps$-equilibrium price and let $S_{\vecp}$ be the set of modules selected by $\greedy$ when $\vecp(i) \leq \vecc(i) + \eps$ for all $i \notin S_{\vecp}$.
Then
\begin{enumerate}
\item $S_{\vecp} \subseteq S_{\vecc}$;
\item there exists $i \in S_{\vecc}$ such that $S_{\vecp} = S_{\vecc} \cap \pi_{\vecc}[\pi_{\vecc}^{-1}(i)]$; and 
\item $\vecv(S_{\vecc}) \leq \left( 1 + \frac{B}{B-\lambda B-\eps} \cdot \frac{\lambda B+\eps}{\lambda B} \right) \vecv(S_{\vecp}).$
\end{enumerate}
\end{lemma}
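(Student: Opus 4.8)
The plan is to fix an $\eps$-equilibrium $\vecp$ under the stated normalization that every module $i \notin S_{\vecp}$ bids $\vecp(i) \leq \vecc(i)+\eps$, and to prove the three assertions in order, each leaning on the previous one. Write $\pi_{\vecp}$ (resp.\ $\pi_{\vecc}$) for the order in which $\greedy$ inspects modules when the bids are $\vecp$ (resp.\ the costs $\vecc$); since $\vecv\equiv 1$ this is simply the cheapest-first order, and recall $\vecp \geq \vecc$ coordinate-wise with equality off $S_{\vecp}$. The single observation that drives everything is: \emph{any module $w\notin S_{\vecp}$ with $\vecc(w) < \max_{s\in S_{\vecp}}\vecc(s)$ lies in $\spn$ of the $\vecc$-cheaper members of $S_{\vecp}$}. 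Indeed, such a $w$ is inspected by the $\vecp$-run before some $s\in S_{\vecp}$ (because $\vecp(w)=\vecc(w) < \vecc(s)\leq\vecp(s)$), is not selected, and cannot have triggered the budget break there (else $s$ would not be selected), so $\greedy$ skipped it because it completed a circuit with the then-selected subset $G_w\subseteq S_{\vecp}$; every module of $G_w$ is $\vecp$-cheaper, hence $\vecc$-cheaper, than $w$.

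\textbf{Assertion (1): $S_{\vecp}\subseteq S_{\vecc}$.} I would run $\greedy$ on $\vecc$ and prove by induction along $\pi_{\vecc}$ that when it reaches a module $s\in S_{\vecp}$ its current selection is exactly $\{s'\in S_{\vecp}: \vecc(s')<\vecc(s)\}$. Every cheaper module outside $S_{\vecp}$ is skipped by the observation above (the needed span is available by the induction hypothesis); every cheaper module of $S_{\vecp}$ was kept, since $S_{\vecp}\in\I$; and the budget is never exceeded along the way, because the running selection is always a subset of $S_{\vecp}$ and $\vecc(S_{\vecp})\leq\vecp(S_{\vecp})\leq B$, so the $\vecc$-run never breaks before exhausting $S_{\vecp}$. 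Hence each $s\in S_{\vecp}$ is appended, so $S_{\vecp}\subseteq S_{\vecc}$.

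\textbf{Assertion (2): $S_{\vecp}$ is a $\pi_{\vecc}$-prefix of $S_{\vecc}$.} Given (1), suppose some $a\in S_{\vecc}\setminus S_{\vecp}$ precedes some $b\in S_{\vecp}$ in $\pi_{\vecc}$ and take the $\pi_{\vecc}$-smallest such $a$; then every selected module cheaper than $a$ lies in $S_{\vecp}$. Since $a\notin S_{\vecp}$ it bids its cost, so $\vecp(a)=\vecc(a)<\vecc(b)\leq\vecp(b)$ and $a$ also precedes $b$ in $\pi_{\vecp}$; as $a$ is unselected in the $\vecp$-run and did not trigger the break (that would exclude $b$), it was skipped, so $a\in\spn\!\big(S_{\vecp}\cap\{\text{modules before }a\text{ in }\pi_{\vecp}\}\big)$. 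Every such module of $S_{\vecp}$ has price, hence cost, below $\vecc(a)$, so it is among the $\vecc$-cheaper selected modules, all of which the $\vecc$-run has already taken; thus $a$ lies in the span of the $\vecc$-run's selection just before it reaches $a$, so the $\vecc$-run would have skipped $a$ --- contradicting $a\in S_{\vecc}$. Hence $S_{\vecp}=S_{\vecc}\cap\pi_{\vecc}[\pi_{\vecc}^{-1}(i)]$ with $i$ the $\pi_{\vecc}$-last element of $S_{\vecp}$.

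\textbf{Assertion (3): the cardinality bound.} If the $\vecp$-run exhausts all modules then $S_{\vecp}$ is a matroid basis and $\vecv(S_{\vecc})=|S_{\vecc}|\leq|S_{\vecp}|$, so the bound is trivial. Otherwise let $z$ be the module at which the $\vecp$-run breaks for budget: every selected module is inspected before $z$, so $\vecp(s)\leq\vecp(z)\leq\vecc(z)+\eps\leq\lambda B+\eps$ for all $s\in S_{\vecp}$, and the break condition gives $\vecp(S_{\vecp})>B-\vecp(z)$, whence $|S_{\vecp}|\geq\vecp(S_{\vecp})/\vecp(z)> B/\vecp(z)-1$. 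On the other side, any $w\in S_{\vecc}\setminus S_{\vecp}$ cannot have been skipped by the $\vecp$-run --- skipping would mean $S_{\vecp}\cup\{w\}$ is dependent, contradicting $S_{\vecp}\cup\{w\}\subseteq S_{\vecc}\in\I$ from (1) --- so $w$ is inspected at or after $z$, giving $\vecc(w)=\vecp(w)\geq\vecp(z)$, hence $|S_{\vecc}\setminus S_{\vecp}|\leq\vecc(S_{\vecc}\setminus S_{\vecp})/\vecp(z)\leq B/\vecp(z)$. Combining the two estimates and using $\vecp(z)\leq\lambda B+\eps$ gives $\vecv(S_{\vecc})\leq\big(1+\tfrac{B}{B-\lambda B-\eps}\big)\vecv(S_{\vecp})$, which is at most the stated bound (the looser stated form just multiplies the last fraction by $\tfrac{\lambda B+\eps}{\lambda B}\geq 1$).

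\textbf{Main obstacle.} The delicate step is Assertion~(1): one must rule out the $\vecc$-run ever diverging from $S_{\vecp}$ --- grabbing some spurious cheap module and thereby spending budget or matroid rank that blocks a genuine member of $S_{\vecp}$. The ``cheap unselected modules are spanned by $S_{\vecp}$'' observation is exactly what closes the induction, and carrying the span/order/budget bookkeeping through correctly (including the $\eps$-slack and the tie-breaking in favour of lower cost) is the technical heart of the lemma; Assertions~(2) and~(3) are then short consequences.
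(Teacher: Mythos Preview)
Your proposal is correct and follows the same three-step skeleton as the paper (containment, then prefix, then ratio bound), but the execution differs in two places worth noting.

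First, you specialize to $\vecv\equiv 1$ and work with the cheapest-first order throughout. The paper's proof is written for general $\vecv$ via bang-per-buck ratios $\vecv(i)/\vecc(i)$; although the section's downstream application only uses the unweighted case, the lemma as stated covers arbitrary $\vecv$, and your argument would need the obvious bang-per-buck substitution to match that generality. (It does extend: your ``cheap rejected $w$ is spanned by $\vecc$-cheaper members of $S_{\vecp}$'' becomes ``higher-bang-per-buck rejected $w$ is spanned by still-higher-bang-per-buck members of $S_{\vecp}$'', with the same $\eps_0$ slack.)

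Second, for Assertion~(1) the paper argues by contradiction --- it takes a hypothetical $i\in S_{\vecp}\setminus S_{\vecc}$ and splits into a budget case and a span case, each time producing a module $i'$ that forces a contradiction. Your forward induction is cleaner: you show the $\vecc$-run tracks $S_{\vecp}$ exactly up through its last member, which also hands you Assertion~(2) almost for free. Both proofs rest on the same underlying fact (your ``key observation''), but your packaging is more direct.

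For Assertion~(3) the two arguments are essentially the same; the paper anchors on the first post-$S_{\vecp}$ element $j$ of $S_{\vecc}$ where you anchor on the break module $z$, but the resulting estimates coincide. One small imprecision: you write $\vecc(w)=\vecp(w)$ for $w\notin S_{\vecp}$, while the hypothesis only gives $\vecp(w)\leq\vecc(w)+\eps$. Carrying that $\eps$ through your counting (so $\vecc(w)\geq\vecp(z)-\eps$ rather than $\geq\vecp(z)$) is exactly what produces the extra factor $\tfrac{\lambda B+\eps}{\lambda B}$ in the stated bound that you flagged as ``looser''.
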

The first assertion of Lemma~\ref{lemma:unweighted_greedy_approx} shows that the set of modules selected by $\greedy$ in $\vecp$ is a subset of the set of modules selected by $\greedy$ in $\vecc$.
This is used as a stepping stone to prove the second assertion: that $\greedy$ in $\vecp$ actually selects a \emph{prefix} of the modules selected by $\greedy$ in $\vecc$.
Using this, we will prove that $\greedy$ in $\vecp$ gets, approximately, a $2$-approximation to $\greedy$ in $\vecc$.
\begin{proof}
We will choose $\eps_0$ such that for all $i \in [n-1]$, we have
\[
    \frac{\vecv(i)}{\vecc(i) + \eps_0} > \frac{\vecv(i+1)}{\vecc(i+1)}.
\]
Such an $\eps_0 > 0$ exists since we assume that $\frac{\vecv(i)}{\vecc(i)} > \frac{\vecv(i+1)}{\vecc(i+1)}$ for all $i \in [n-1]$.

Now fix $\eps \in (0, \eps_0)$. We begin with the first assertion that $S_{\vecp} \subseteq S_{\vecc}$.
For the sake of contradiction, suppose that $S_{\vecp} \setminus S_{\vecc} \neq \emptyset$ and let $i \in S_{\vecp} \setminus S_{\vecc}$.
We consider two cases and show that neither case can happen.
\paragraph{Case 1: $\vecc(S_{\vecc} \cap \pi_{\vecc}[\pi_{\vecc}^{-1}(i) - 1]) + \vecc(i) > B$.}
Observe that we have $\vecp(S_{\vecp} \cap \pi_\vecp[\pi_\vecp^{-1}(i)]) \leq B$ since we know that $i \in S_{\vecp}$.
For notation, let us write $S_{\vecc}' = S_{\vecc} \cap \pi_{\vecc}[\pi_{\vecc}^{-1}(i)-1]$ and $S_{\vecp}' = S_{\vecp} \cap \pi_{\vecp}[\pi_{\vecp}^{-1}(i) - 1]$.
We claim that $S_{\vecc}' \setminus S_{\vecp}' \neq \emptyset$.
If not then we would have $\vecc(S_{\vecc}' \cup \{i\}) \leq \vecp(S_{\vecp}' \cup \{i\}) \leq B$ which contradicts the premise of the case.
Choose an arbitrary module $i' \in S_{\vecc}' \setminus S_{\vecp}'$.
Note that module $i'$ comes before module $i$ in the bang-per-buck order according to $\vecc$ and if module $i'$ price is in $[\vecc(i'), \vecc(i') + \eps]$ then this ordering remains unchanged.
Further note that $\pi_{\vecc}[\pi_{\vecc}^{-1}(i')-1] \supseteq \pi_{\vecp}[\pi_{\vecp}^{-1}(i') - 1]$ as $\vecp \geq \vecc$ coordinate-wise.
As $i' \in S_{\vecc}'$, we know that $i' \notin \spn(\pi_{\vecc}[\pi_{\vecc}^{-1}(i')-1])$ and thus  $i' \notin \spn(\pi_{\vecp}[\pi_{\vecp}^{-1}(i')-1])$.
This means that $i'$ is matroid-feasible in $\vecp$ when inspected by $\greedy$ but as it was not selected, the budget constraint must have already been violated which must have caused $\greedy$ to terminate. As module $i$ comes after module $i'$, according to both $\pi_{\vecc}$ and $\pi_{\vecp}$, this contradicts that $i \in S_{\vecp}$.
So the premise of the present case is impossible.

\paragraph{Case 2: $\vecc(S_{\vecc} \cap \pi_{\vecc}[\pi_{\vecc}^{-1}(i) - 1]) + \vecc(i) \leq B$.}
In this case, module $i \in \spn(S_{\vecc} \cap \pi_{\vecc}[\pi_{\vecc}^{-1}(i)-1])$.
Let $C$ be the unique circuit in $S_{\vecc} \cap \pi_{\vecc}[\pi_{\vecc}^{-1}(i) - 1] \cup \{i\}$.
Note that module $i$ has the lowest bang-per-buck, according to $\vecc$, in $C$.
Since $i \in S_{\vecp}$ there must be another module $i' \in C$ such that $i' \notin S_{\vecp}$.
We know that $\vecp(i') \leq \vecc(i') + \eps$ and $\frac{\vecv(i')}{\vecp(i')} \geq \frac{\vecv(i')}{\vecc(i') + \eps} > \frac{\vecv(i)}{\vecc(i)}$.
Thus $i'$ comes before $i$ in $\vecp$.
As in the previous case, since $i'$ is independent in $\vecc$ when inspected by $\greedy$, it is also independent in $\vecp$ when inspected by $\greedy$.
Thus, $i'$ must have either been accepted by $\greedy$ or the budget constraint is violated.
In the former case, this contradicts that $i' \notin S_{\vecp}$ while in the latter case, this would contradict that $i \in S_{\vecp}$ since $\greedy$ would have terminated before inspecting module $i$.
In either case, the premise of this case is also impossible.

The two cases above prove that $S_{\vecp} \subseteq S_{\vecc}$.
We now prove the second assertion.
Suppose that the second assertion is false.
This means that there is some $i \in S_{\vecp}$ and some $i' \in S_{\vecc} \setminus S_{\vecp}$ such that $\pi_{\vecc}^{-1}(i') < \pi_{\vecc}^{-1}(i)$.
In other words, module $i'$ has better bang-per-buck than module $i$ (according to $\vecc$) but module $i'$ is not selected by $\greedy$ in $\vecp$ while module $i$ is selected by $\greedy$ in $\vecp$.
Following a similar argument in the above case analysis, it must be that module $i'$ is inspected by $\greedy$ before module $i$ when the price vector is $\vecp$ and $i'$ must be independent of the set selected by $\greedy$ thus far.
Thus if $i'$ is not selected then it must be that the budget constraint would have been violated which further implies that $i$ could not have been selected.
This would contradict that $i \in S_{\vecp}$.
This proves the second assertion.

Finally, we prove the last assertion.
If $S_{\vecp} = S_{\vecc}$ there is nothing to prove.
So let $i$ denote the module in the second assertion of the lemma and let $j = \argmax_{j' \in S_{\vecc} \setminus S_{\vecp}} \{\vecv(j') / \vecc(j')\}$; this is the next module in the bang-per-buck order according to $\vecc$.
By following a similar argument in the proof of the first assertion (particularly, for case 1),
since $j \notin \spn(S_{\vecc} \cap \pi_\vecc[\pi_\vecc^{-1}(i)])$, we also have $j \notin \spn(S_{\vecp} \cap \pi_\vecp[\pi_\vecp^{-1}(i)])$.
Thus, the reason that $j$ is not accepted by $\greedy$ in $\vecp$ is because the budget constraint is violated if module $j$ is taken.
Since $\vecp(j) \leq \vecc(j) + \eps \leq \lambda B + \eps$, we have $\vecp(S_{\vecp}) \geq B - \vecp(j) \geq B - \lambda B - \eps$.
Note also that $\frac{\vecv(j')}{\vecp(j')} \geq \frac{\vecv(j)}{\vecc(j) + \eps}$ for all $j' \in S_{\vecp}$ by definition of $\greedy$.
We conclude that
\begin{align*}
    \vecv(S_{\vecp})
    & \geq \frac{\vecv(j)}{\vecc(j) + \eps} \cdot (B - \lambda B - \eps) \\
    & = \frac{\vecv(j)}{\vecc(j)} \cdot \frac{\vecc(j)}{\vecc(j)+\eps} \cdot (B - \lambda B - \eps) \\
    & \geq \frac{\vecv(j)}{\vecc(j)} \cdot \frac{\lambda B}{\lambda B+\eps} \cdot (B - \lambda B - \eps).
\end{align*}
Next, we bound $\vecv(S_{\vecc} \setminus S_{\vecp})$.
Note that the highest bang-per-buck module (according to $\vecc$) in $S_{\vecc} \setminus S_{\vecp}$ is module $j$.
This gives the trivial bound that
\[
    \vecv(S_{\vecc} \setminus S_{\vecp})
    \leq B \cdot \frac{\vecv(j)}{\vecc(j)}
    \leq \frac{B}{B-\lambda B-\eps} \cdot \frac{\lambda B+\eps}{\lambda B} \cdot \vecv(S_{\vecp}).
\]
We conclude that
\[
    \vecv(S_{\vecc})
    = \vecv(S_{\vecp}) + \vecv(S_{\vecc} \setminus S_{\vecp})
    \leq \left( 1 + \frac{B}{B-\lambda B-\eps} \cdot \frac{\lambda B+\eps}{\lambda B} \right) \vecv(S_{\vecp}). \qedhere
\]
\end{proof}

For the remainder of this section, we assume that $\vecv(i) = 1$ for all modules $i$.
\begin{lemma}
Suppose that $\vecv(i) = 1$ for all modules $i$.
Further, suppose that $\vecc(i) / B \leq \lambda$ for all $i$.
Let $S_{\vecc}$ be the set selected by $\greedy$ when the costs are $\vecc$ and let $S_{\OPT} \in \argmax \{ |S| \,:\, \vecc(S) \leq B, S \in \I \}$. Then $|S_{\OPT}| \leq \left( 1 + \frac{\lambda}{1-\lambda} \right) |S_{\vecc}|$.
\end{lemma}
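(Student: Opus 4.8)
The plan is to compare $|S_{\OPT}|$ with $|S_{\vecc}|$ through the classical exchange analysis of the matroid greedy, using that when $\vecv\equiv 1$ the algorithm $\greedy$ simply inspects modules in nondecreasing cost order (this being the decreasing bang‑per‑buck order at prices $\vecp=\vecc$), adds a module iff it is independent of the current set and fits within $B$, and halts at the first module that is independent but overruns the budget. Throughout I assume $\lambda<1$ (otherwise the claimed bound is vacuous), and I discard loops of $\M$, since they lie in no independent set and after discarding them the cheapest module is always accepted, so $|S_{\vecc}|\geq 1$.

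First I would dispose of the easy case: if $\greedy$ scans all modules without ever triggering the budget stop, then $S_{\vecc}$ is a maximal independent set of $\M$, so $|S_{\vecc}|=\rank(\M)\geq|S_{\OPT}|$ and we are done. So assume $\greedy$ halts at module $j$. Since the halt is a budget violation at an independent module, $\vecc(S_{\vecc})+\vecc(j)>B$, and as $\vecc(j)\leq\lambda B$ we obtain the crucial inequality $\vecc(S_{\vecc})>(1-\lambda)B$. Moreover, because the budget was slack throughout the prefix $\{1,\dots,j-1\}$, every matroid decision $\greedy$ made coincides with that of the unconstrained matroid greedy on $\{1,\dots,j-1\}$; hence $S_{\vecc}$ is its output, and by the standard optimality of the matroid greedy its first $s:=|S_{\vecc}|=\rank(\{1,\dots,j-1\})$ picks form a \emph{minimum‑cost} independent set of size $s$ over all of $\M$, i.e.\ $\vecc(S_{\vecc})\leq\vecc(S)$ for every independent $S$ with $|S|=s$.

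The proof then finishes with elementary counting. If $|S_{\OPT}|\leq s$ the bound is immediate, so set $t:=|S_{\OPT}|>s$ and list $S_{\OPT}=\{o_1,\dots,o_t\}$ in nondecreasing cost. Applying the min‑cost property to the size‑$s$ independent subset $\{o_1,\dots,o_s\}$ gives $\vecc(S_{\vecc})\leq\sum_{i\leq s}\vecc(o_i)$, so $\sum_{i>s}\vecc(o_i)=\vecc(S_{\OPT})-\sum_{i\leq s}\vecc(o_i)\leq B-\vecc(S_{\vecc})<\lambda B$. On one hand, $s\,\vecc(o_s)\geq\sum_{i\leq s}\vecc(o_i)\geq\vecc(S_{\vecc})>(1-\lambda)B$, so $\vecc(o_s)>(1-\lambda)B/s$; on the other hand $(t-s)\,\vecc(o_s)\leq\sum_{i>s}\vecc(o_i)<\lambda B$. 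Dividing, $t-s<\lambda B/\vecc(o_s)<\lambda s/(1-\lambda)$, hence $|S_{\OPT}|=t<s/(1-\lambda)=\bigl(1+\tfrac{\lambda}{1-\lambda}\bigr)|S_{\vecc}|$, as claimed (this also feeds into the proof of Theorem~\ref{theorem:unweighted_approx} via Lemma~\ref{lemma:unweighted_greedy_approx}).

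The only real subtlety, and the step to be careful about, is the invocation of matroid‑greedy optimality: one must verify both that $S_{\vecc}$ is genuinely the matroid‑greedy output on the prefix $\{1,\dots,j-1\}$ (which is exactly why one needs that the budget is slack before module $j$) and that a greedy prefix of size $s$ is cheapest among all size‑$s$ independent sets of $\M$ — the usual termwise exchange bound $\vecc(g_i)\leq\vecc(o_i)$. Everything after that is arithmetic driven solely by the per‑module cost bound $\lambda B$.
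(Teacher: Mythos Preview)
Your proof is correct. Both your argument and the paper's hinge on the same two ingredients: (i) when the algorithm halts at the budget step, $\vecc(S_{\vecc})>(1-\lambda)B$, and (ii) $S_{\vecc}$ is the min-cost independent set of its size $s$ in all of $\M$ (matroid-greedy optimality on the slack prefix). Where you and the paper diverge is in how you finish. The paper observes that the $(s+1)$th greedy pick is exactly the halting module $j$, so the cheapest independent set of size $s+1$ already costs $\vecc(S_{\vecc})+\vecc(j)>B$; hence $|S_{\OPT}|-|S_{\vecc}|\le 1$, and then it bounds $1\le \frac{\lambda}{1-\lambda}\,|S_{\vecc}|$ using $|S_{\vecc}|\cdot\lambda B\ge \vecc(S_{\vecc})\ge(1-\lambda)B$. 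You instead keep $t=|S_{\OPT}|$ arbitrary and run an averaging argument on the sorted costs of $S_{\OPT}$ to get $t-s<\frac{\lambda}{1-\lambda}\,s$ directly. The paper's route is shorter and exploits the sharper structural fact that the gap is at most one; your route is a little longer but is self-contained and would continue to work in variants where the ``gap $\le 1$'' fact is unavailable. Your identification of the matroid-greedy optimality step as the only real subtlety is exactly right, and it is the same step the paper leans on (albeit more implicitly).
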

\begin{proof}
If $\greedy$ spends strictly less than $(1-\lambda) B$ then it must be that $|S_{\vecc}| = |S_{\OPT}|$ otherwise there was an independent module that could have been added, as the cost of all modules is strictly less than $\lambda B$.

So now suppose that $\vecc(S_{\vecc}) \geq (1-\lambda) B$.
Suppose that modules are sorted such that $\vecc(1) \leq \ldots \leq \vecc(n)$ and let $k = |S_{\vecc}|$.
We thus have that $k \geq \frac{(1-\lambda)B}{\vecc(k)}$.
Next, an upper bound on $|S_{\OPT}|$ is to take the first $k+1$ modules that form an independent set.
This will violate the budget constraint (otherwise $\greedy$ would have picked it).
Thus, $|S_{\OPT}| - |S_{\vecc}| \leq 1 \leq \frac{\lambda B}{\vecc(k^*)} \leq \frac{\lambda}{1-\lambda} |S_{\vecc}|$.
Rearranging gives the lemma.
\end{proof}

By combining the above two lemmas we have the following.
\begin{theorem}
\label{theorem:unweighted_approx}
Fix an instance $\instance$ where $\vecv(i) = 1$ for all $i$.
Further, suppose that $\vecc(i) / B \leq \lambda$ for all $i$.
Let $\eps > 0$ be sufficiently small, as required by Lemma~\ref{lemma:unweighted_greedy_approx}.
Let $\vecp$ be an $\eps$-equilibrium price, let $S_{\vecp}$ be the set selected by $\greedy$ when the prices are $\vecp$, and let $S_{\OPT} \in \argmax \{ |S| \,:\, \vecc(S) \leq B, S \in \I \}$. Then $|S_{\OPT}| \leq \left(1 + \frac{\lambda}{1-\lambda} \right)\left( 1 + \frac{1}{1-\lambda - \eps/B}\cdot \frac{\lambda + \eps / B}{\lambda} \right) \cdot |S_{\vecp}|$.
\end{theorem}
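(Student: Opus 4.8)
The plan is to chain the two lemmas immediately preceding the theorem, after first reducing to well-behaved equilibria via the corollary above. First I would take an arbitrary $\eps$-equilibrium price $\vecp$ and invoke the corollary following Lemma~\ref{lemma:unweighted_greedy_approx} to obtain an $\eps$-equilibrium price $\bar{\vecp}$ with the property that $\bar{\vecp}(i) \leq \vecc(i) + \eps$ for every $i \notin S_{\bar{\vecp}}$ and with $V(S_{\bar{\vecp}}) \leq V(S_{\vecp})$. Since in this section all values equal $1$, we have $V(S) = |S|$, so $|S_{\bar{\vecp}}| \leq |S_{\vecp}|$, and it therefore suffices to prove the claimed bound with $\bar{\vecp}$ in place of $\vecp$. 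In other words, it is enough to handle $\eps$-equilibria in which all rejected modules bid essentially their cost.

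Next I would apply Lemma~\ref{lemma:unweighted_greedy_approx} to $\bar{\vecp}$, whose hypothesis (that $\bar{\vecp}(i) \leq \vecc(i)+\eps$ for all $i \notin S_{\bar{\vecp}}$) is exactly the property just established; this is where the ``$\eps$ sufficiently small'' requirement is inherited. Part (3) of that lemma gives
\[
\vecv(S_{\vecc}) \leq \left( 1 + \frac{B}{B-\lambda B-\eps}\cdot\frac{\lambda B+\eps}{\lambda B} \right) \vecv(S_{\bar{\vecp}}),
\]
and rewriting $\frac{B}{B-\lambda B-\eps} = \frac{1}{1-\lambda-\eps/B}$ and $\frac{\lambda B+\eps}{\lambda B} = \frac{\lambda+\eps/B}{\lambda}$ turns the prefactor into the second parenthetical of the theorem statement. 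Combined with $|S_{\bar{\vecp}}| \leq |S_{\vecp}|$, this bounds $|S_{\vecc}|$ in terms of $|S_{\vecp}|$.

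Finally I would invoke the preceding lemma (the one bounding $|S_{\OPT}|$ against $|S_{\vecc}|$ for unit values under the $\lambda$-small-cost assumption), which gives $|S_{\OPT}| \leq \left(1 + \frac{\lambda}{1-\lambda}\right) |S_{\vecc}|$. Multiplying the two bounds yields
\[
|S_{\OPT}| \leq \left(1 + \frac{\lambda}{1-\lambda}\right)\left( 1 + \frac{1}{1-\lambda - \eps/B}\cdot \frac{\lambda + \eps / B}{\lambda} \right) |S_{\vecp}|,
\]
which is exactly the claim. There is essentially no obstacle here: all the real work is front-loaded into Lemma~\ref{lemma:unweighted_greedy_approx} (the prefix structure of $S_{\vecp}$ inside $S_{\vecc}$ and the resulting $2$-type approximation) and into the corollary reducing to equilibria with rejected bids at cost. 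The only points requiring care are the explicit identification $\vecv(S) = |S|$ when using the corollary's $V(\cdot)$ notation, and checking that the algebraic rewriting of the prefactors matches the stated expression.
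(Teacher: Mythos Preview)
Your proposal is correct and matches the paper's approach exactly: the paper simply states ``By combining the above two lemmas we have the following,'' and you have spelled out precisely that combination, including the preliminary reduction via the corollary to an equilibrium where rejected modules bid near cost. The only quibble is terminological: the corollary you invoke actually \emph{precedes} Lemma~\ref{lemma:unweighted_greedy_approx} in the text rather than following it, but you also call it ``the corollary above,'' so this is harmless.
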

In particular, for any fixed instance, as $\eps \to 0$, the approximation ratio tends to $\frac{2-\lambda}{(1-\lambda)^2}$.
\section{Price Competition for Weighted Buyer with Matroid Constraints}
\label{app:weighted_matroid}
In this section, we consider the pricing game when the platform uses Algorithm~\ref{alg:greedy} for module selection.
\subsection{Proof of Claim~\ref{claim:weighted_invarient}}\label{proof_of_weighted_invariant}
The invariants are trivial if $k = 1$ since $A^{k-1}  = \emptyset$.
We now assume that the invariant holds at iteration $k-1$ and prove that it remains true at iteration $k$.

For the first invariant, the only modules that increase their price is $T^{k-1} \subseteq \pi^{k-1}([k-1])$.
Since their bang-per-buck is at most that of module $\pi^{k-1}([k])$ this means that all modules in $\pi^{k-1}([k-1])$ still come before module $\pi^{k-1}([k])$.
So $\pi^{k-1}([k-1]) = \pi^k([k-1])$.
The fact that $\pi^{k-1}(k) = \pi^{k}(k)$ now easily follows because ties are broken according to $\pi^0$.

The second invariant is because prices of modules in $\pi^k([k-1])$ are only raised to meet the bang-per-buck of module in $\pi^k(k)$.

The third invariant is trivial if Line~\ref{line:budget_check} is false.
Otherwise, it is true because (i) $\vecp^{k-1}(A^{k-1})\leq B$ by the fact that the invariant holds for $k-1$.

At iteration $k$, if $\rank(\pi[k]) = \rank(\pi[k-1]) + 1$ then $\pi(k) \notin \spn(\pi[k-1])$. By inductive hypothesis, since $A^{k-1}$ is a full rank set in the matroid restricted to $\pi[k-1]$, module $\pi(k) \notin \spn(A^{k-1})$. Therefore if condition in Line~\ref{line:if_weighted} is satisfied and algorithm adds module $\pi(k)$ into the set $A^k$. This implies $\rank(A^k) = \rank(\pi[k])$. 
We now consider the case when $\rank(\pi[k]) = \rank(\pi[k-1])$. This implies that  $\pi(k) \in \spn(\pi[k-1])$  By inductive hypothesis, since $A^{k-1}$ is a full rank set in the matroid restricted to $\pi[k-1]$, module $\pi(k) \in \spn(A^{k-1})$. Therefore, $\pi(k)$ forms a unique circuit $C^k$ with $A^{k-1} \cup \pi[k]$. Since, $A^k \gets (A^{k-1} \cup \pi(k)) \setminus \{i\}$ removes a module from a circuit $C^k$, $\rank(A^k) = \rank(A^{k-1}) = \rank(\pi[k])$.  Hence, the forth invariant holds. 

To prove the fifth invariant, due to inductive hypothesis, we have that the module $i$ forms a circuit $C$ with the set of module $A^{k-1}$ where $i$ has the lowest value among the modules in $C$.
At round $k$, if $C\subseteq A^k$ then the invariant holds trivially. Otherwise, it must be the case that some module $j\in C$ was replaced by module $\pi^k(k)$.
This implies that the module $\pi^k(k)$ forms a circuit $C^k$ with $A^{k-1}$ that contains module $j$ and $j$ is the lowest valued module on circuit $C^k$.
We note that $i\notin C^k$.
By circuit axiom of matroids \cite[Lemma~1.1.3]{oxley2006matroid}, we have a circuit $C'\subseteq (C \cup C^k)\setminus j$, therefore, $\rank((C \cup C^k)\setminus j) \leq  |(C \cup C^k)\setminus j| -1$.
Since, fourth invariant implies that $A^k$ is a full rank set and $(C\cup C^k)\setminus \{i,j\} \subseteq A^k$, we have $\rank((C\cup C^k)\setminus \{i,j\}) = |(C\cup C^k)\setminus \{i,j\} |$.
We next claim that $i \in \spn((C\cup C^k)\setminus \{i, j\} )$.
If not then we have, $$\rank((C \cup C^k)\setminus j) = \rank((C \cup C^k)\setminus \{i, j\}) + 1 = |(C \cup C^k)\setminus \{i, j\}| + 1 = |(C \cup C^k)\setminus j|. $$ This contradicts the circuit axiom and therefore, $i \in \spn(C^k \cup C \setminus \{i,j\})$.
We note that $C^k \cup C \setminus \{i,j\} \subseteq A^k$ and all the modules in $C^k \cup C \setminus \{i,j\}$ has value higher than the value of module $i$. This concludes the proof. 

Next, we characterize the selected module at price $\bar {\vec p}$ which is an output of Algorithm~\ref{alg:equilibrium_dynamics_weighted}.
\begin{claim}
    \label{claim:matroid_greedy_weighted_output}
    $S_{\bar{\vecp}} = A^{k^*}$. In addition, $\SE$ is a maximum weight independent set in matroid $\mathcal I$ restricted to $\pi^0(1), \dots , \pi^0(k^*)$. 
\end{claim}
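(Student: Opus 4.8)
The statement coincides with invariant~(6) of Claim~\ref{claim:weighted_invarient}, so the plan is to derive it from invariants~(1)--(5), which are established by the induction preceding it. The core of the argument is to trace $\greedy$ (Algorithm~\ref{alg:greedy}) run on the price vector $\bar{\vecp}$ output by Algorithm~\ref{alg:equilibrium_dynamics_weighted} and to show that after $\greedy$ has inspected the first $k^*-1$ modules of its bang-per-buck order, the set it holds is exactly $A^{k^*-1}$. First one observes that the bang-per-buck order at $\bar{\vecp}$ agrees on its first $k^*-1$ entries with $\pi^{k^*}$: this is how $\pi^{k^*}$ is defined, and invariants~(1) and~(2) guarantee that the circuit-swaps and the final price rescaling never reshuffle this prefix (the modules of $A^{k^*-1}$ stay weakly above every module of $\pi^{k^*}([n])\setminus\pi^{k^*}([k^*-1])$). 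Then Claim~\ref{claim:matroid_first_k_opt} gives that, as long as the budget is not violated, after inspecting $\pi^{k^*}([j])$ the greedy held set is the maximum-weight independent set of $\I$ restricted to $\pi^{k^*}([j])$; invariant~(4) and the matching ``drop the least-valued element of the circuit'' rule identify that set with $A^{j}$, and invariant~(5) confirms that the modules of $\pi^{k^*}([j])\setminus A^{j}$ are precisely the ones the greedy swaps discard. Iterating up to $j=k^*-1$ gives the desired description of the held set.

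For the termination I would split into two cases. If Algorithm~\ref{alg:equilibrium_dynamics_weighted} ran to $k^*=n$, then the budget check in Line~\ref{line:budget_check_weighted} never fired, so invariant~(3) together with the final rescaling gives $\bar{\vecp}(A^{n})\le B$; hence $\greedy$ never breaks early, and inspecting the last module turns $A^{n-1}$ into $A^{n}$ by the same insertion-or-single-swap that Algorithm~\ref{alg:equilibrium_dynamics_weighted} performs, so $S_{\bar{\vecp}}=A^{n}$. If instead the algorithm exited through Line~\ref{line:budget_check_weighted}, then $A^{k^*}=A^{k^*-1}$ and the price-lowering there leaves $\bar{\vecp}(A^{k^*-1})\le B$ while keeping the prices high enough that the tentative set obtained by inspecting the next module (still priced at its cost) has total price exceeding $B$; thus $\greedy$ breaks exactly there and outputs $A^{k^*-1}=A^{k^*}$. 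In both cases $\SE=S_{\bar{\vecp}}=A^{k^*}$.

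For the ``in addition'' part I would observe that the way Algorithm~\ref{alg:equilibrium_dynamics_weighted} updates $A^{k}$ --- insert $\pi^0(k)$ when $\pi^0(k)\notin\spn(A^{k-1})$, otherwise delete the least-valued element of the unique circuit of $A^{k-1}\cup\{\pi^0(k)\}$ --- is precisely the Kruskal-type matroid greedy applied to the ground set $\{\pi^0(1),\dots,\pi^0(k^*)\}$ processed in non-increasing order of $\vecv/\vecc$ (hence, within a span-class, of value). As recalled in Claim~\ref{claim:matroid_first_k_opt} and Appendix~\ref{appendix:matroid_prelims}, this greedy returns a maximum-weight independent set, so $A^{k^*}=\SE$ is a maximum-weight independent set of $\I$ restricted to $\{\pi^0(1),\dots,\pi^0(k^*)\}$.

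The step I expect to be the bottleneck is the first one: reconciling the order in which $\greedy$ \emph{actually} inspects modules at $\bar{\vecp}$ (decreasing bang-per-buck with ties toward higher $\vecv/\vecc$) with the order $\pi^0$ along which $A^{k}$ was built. The delicate point is that a module which was raised and then dropped from a circuit carries a ``frozen'' price that pushes it down the order, and one must rule out that such a module re-enters the first $k^*-1$ positions and thereby perturbs the greedy trajectory; invariants~(1),~(2) and~(5) of Claim~\ref{claim:weighted_invarient} are designed exactly for this, so the work is to assemble them rather than to prove anything new. A smaller side point is to verify, in the early-termination case, that the price-lowering of Line~\ref{line:budget_check_weighted} is calibrated so that $\greedy$'s budget check is triggered at module $\pi^{k^*}(k^*)$ and not before.
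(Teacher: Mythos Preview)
Your plan matches the paper's: show that $\greedy$ on $\bar{\vecp}$ first inspects $\pi^0([k^*-1])$, holds $A^{k^*-1}$ at that point, and then either absorbs $\pi^0(n)$ (if $k^*=n$) or triggers the budget break at $\pi^0(k^*)$; the ``in addition'' part then falls out of the Kruskal-type description of the sequence $A^k$, exactly as the paper does via invariant~(5).

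There is, however, a real gap in the step you call a ``smaller side point.'' Two related issues arise. First, your intermediate claim that the greedy held set after inspecting $\pi^{k^*}([j])$ equals $A^j$ is not correct for $j<k^*-1$: invariant~(1) only gives $\pi^{k^*}([k^*-1])=\pi^0([k^*-1])$ as \emph{sets}, and the internal ordering can change --- a module that was swapped out at some step $j'<k^*$ carries a frozen bang-per-buck equal to $\vecv(\pi^0(j'))/\vecc(\pi^0(j'))$, which is strictly larger than the bang-per-buck of the modules still in $A^{k^*-1}$ after the final raise. So in general $\pi^{k^*}([j])\neq\pi^0([j])$ and the held set at step $j$ is not $A^j$. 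Second, and more importantly, this means you cannot read off budget-feasibility of the intermediate held sets from invariant~(3), and the price adjustment in Line~\ref{line:budget_check_weighted} only calibrates $\bar{\vecp}(A^{k^*-1})\le B$, not the price of every intermediate $G'$. The paper supplies the missing argument: along $\greedy$'s trajectory the running price $\bar{\vecp}(G)$ is monotone nondecreasing, because whenever $\greedy$ swaps $i$ for $i'$ one has $\vecv(i')>\vecv(i)$ (the swap rule) together with $\vecv(i)/\bar{\vecp}(i)\ge\vecv(i')/\bar{\vecp}(i')$ (bang-per-buck order), hence $\bar{\vecp}(i')>\bar{\vecp}(i)$. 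Thus every intermediate $G'$ has price at most $\bar{\vecp}(A^{k^*-1})\le B$, so $\greedy$ does not break before reaching $\pi^0(k^*)$. Without this monotonicity step, your invocation of Claim~\ref{claim:matroid_first_k_opt} (which assumes no early budget violation) is not justified.
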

\begin{proof}
    Claim~\ref{claim:weighted_invarient} shows that $S_{\bar{\vecp}} \cap \pi^{k^*}([k^*-1]) = A^{k^*-1} $. The fifth invariant of Claim~\ref{claim:weighted_invarient} implies that for any $k<k^*$, the set of modules $A^k$ forms the maximum weight independent set in the matroid $\mathcal I$ restricted on modules $\pi^0[k]$ \cite[Theorem~1.8.5]{oxley2006matroid}. Now consider the greedy algorithm (Algorithm~\ref{alg:greedy}): due to invariant 2 from Claim~\ref{claim:weighted_invarient}, we observe that it iterates over $\pi^0(1), \dots, \pi^0(k^*-1)$ and selects maximum weighted independent set in the matroid $\mathcal I$ restricted on $\pi[k^*-1]$ if it does not exhaust the budget which is precisely the set $A^{k^* -1 }$. 
    
    We need to show that for any set of selected modules $S^{k'}$ at $k'$-th iteration of the of the  greedy algorithm, $\vecp(S_{k'})< B$. Consider any module $i\in S_{k'}$, we observe that the module $i$ will eventually be swapped by some module $i' \in A^{k}$. Since module $i'$ replaces the module $i$, we have $\vec v(i') > \vec v(i)$. In addition, due to bang-per-buck ordering, we have $\frac{\vec v(i)}{\vec {\bar p}(i)} \geq  \frac{\vec v(i')}{\vec {\bar p} (i')}$. This implies that $\vec {\bar p(i')} \geq \frac{\vec v(i')}{\vec v(i)} \cdot \vec {\bar p}(i) > \vec {\bar p}(i)$.  Due to our assumption $k<k^*$, we have $\vec {\bar p})(S_{k'}) \leq \vec {\bar p(A^k)}\leq B$.

    Then the greedy algorithm visits module $\pi^0(k^*)$. By definition of $k^*$, either $k^*= n$ and the budget constraint is never violated in Line~\ref{line:budget_check_weighted} or budget check condition (Line~\ref{line:budget_check_weighted}) is not satisfied after updating prices at round $k^*$. In the first case, we can observe that $A^n$ is the maximum weight independent set in $\mathcal I$ and $\vec {\bar p}^n(A^n)\leq B$. Therefore, greedy algorithm selects modules $A^n$. 
    In the second case, $A^{k^*} = A^{k^*-1}$ which is maximum weight independent set in matroid $\mathcal I $ restricted on $\pi[k^*-1]$ with $\vec {\bar p}^{k^*}(A^{k^* }) = \vec {\bar p}^{k^*}(A^{k^* -1 })\leq B$. In addition, we must have either $\pi^0(k^*) \notin \spn(A^{k^*-1})$ or $\pi^0(k^*)$ forms circuit $C^*$ with $A^{k^*-1}$ such that $i \neq \pi^0(k^*)$ is not the minimum weight element on $C^k$. In both cases, $\vec {\bar p}^k(A^*\cup \pi^0(k^*)) > B$ or $\vec {\bar p}^k(A^*\cup \pi^0(k^*) \setminus i) > B$ respectively. Therefore, greedy algorithm does not select $\pi^0(k^*)$ and terminates by outputting $A^{k^*-1} = A^{k^*}$. That concludes the proof.    
\end{proof}

\subsection{Proof of Theorem~\ref{thm:weighted_matroid_eq_exists}}\label{proof_weighted_matroid_exists}
 \begin{proof}
     Let $\barp$ be the price computed by Algorithm~\ref{alg:equilibrium_dynamics_weighted} $\optbpb$ be the bang-per-buck of modules selected modules $\SE: A^{k^*}$ by greedy algorithm (Algorithm~\ref{alg:greedy}). Since, Algorithm~\ref{alg:equilibrium_dynamics_weighted} selected maximum weighted set in matroid $\I$ restricted to $\pi^0(1), \dots , \pi^0(k^*-1)$, we claim that no module $\pi^0(k')\notin \SE$ with $k'< k^*$ can deviate their price and get selected. For the sake of contradiction, let module $\pi^0(k')$ updates its price to $p'$ and gets selected by  greedy algorithm. We note that $\pi^0(k')$ forms a circuit $C$ with $\SE$ where $i$ has the lowest value. This implies that the greedy algorithm must have ran out of budget when it reaches to the module  with lowest bang-per-buck (in this case, module $i\in \SE\setminus \{i\}$ with lowest $\frac{\vec v(i)}{\vec c(i)}$) say module $i\in C$. Since $\frac{\vec v(\pi^0(k'))}{p'}\geq  \frac{\vec v(i)}{\barp(i)}$ and $\vec v(i) > \vec v(\pi^0(k'))$. This implies that $\barp(i) > p'$. Since $\barp(\SE)\leq B$ (invariant 3 in Claim~\ref{claim:weighted_invarient}), we conclude that such deviation for module $\pi^0(k')$ can not exist. 
     
     On the other hand, for all modules $i\notin \SE$ with $\frac{\vec v(i)}{\barp(i)}< \optbpb$, we have $\barp(i) = \vec c(i)$ due to Claim~\ref{claim:weighted_invarient}. Therefore, such modules can not deviate and get selected. Finally, to conclude the proof, we need to show that no module in $\SE$ can increase their price and get selected. We first consider the case where $\barp(\SE) = B$. In this case, if any module $i\in \SE$ increases its price then it will not be selected by greedy algorithm. On the other hand, if $\barp(\SE)< B$ then $\barp(\opt(\SE \cup k^*))>B$. Then if any module in $i\in \SE$ increases its price then greedy algorithm selects $\pi^0(k^*)$ and exhausts its budget when it reaches to module $i$. Therefore, no module in $\SE$ can increase its price and improve its utility.  Therefore, we conclude that $\barp$ is an equilibrium price. 
 \end{proof}

\subsection{Proof of Lemma~\ref{lemma:unselected_lt_k_bid_cost}}\label{proof_of_lemma_poa_high_bpb}
In this section, we assume that $\eps > 0$ is sufficiently small, i.e.~$\eps < \min_i \vecc(i)$.
We first start with a couple easy claims.

\begin{claim}
\label{claim:matroid_identical_bang_per_buck}
Fix $\eps < \min_i \vecc(i)$.
Let $\vecp$ be an $\eps$-equilibrium.
For all $i, j \in S_{\vecp}$, we have $\frac{\vecv(i)}{\vecp(i)+\eps} \leq \frac{\vecv(j)}{\vecp(j)}$.
\end{claim}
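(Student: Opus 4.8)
The plan is to argue by contradiction: if two selected modules have bang-per-buck ratios that are too far apart, then the more efficient one has a deviation that raises its utility by strictly more than $\eps$, contradicting the $\eps$-equilibrium condition. It suffices to prove the inequality for the extremal pair, so let $i \in S_{\vecp}$ maximize $\tfrac{\vecv(i)}{\vecp(i)+\eps}$ and let $j \in S_{\vecp}$ minimize $\tfrac{\vecv(j)}{\vecp(j)}$; if $\tfrac{\vecv(i)}{\vecp(i)+\eps} \le \tfrac{\vecv(j)}{\vecp(j)}$ then the bound holds for every pair. So assume for contradiction $\tfrac{\vecv(i)}{\vecp(i)+\eps} > \tfrac{\vecv(j)}{\vecp(j)}$. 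Then $\tfrac{\vecv(i)}{\vecp(i)} > \tfrac{\vecv(i)}{\vecp(i)+\eps} > \tfrac{\vecv(j)}{\vecp(j)}$, so $i \neq j$ and $i$ strictly precedes $j$ in the bang-per-buck order at $\vecp$; moreover $j$ has the worst bang-per-buck among selected modules, so $\greedy$ processes $j$ no later than the module at which it halts. Since $i, j \in S_{\vecp}$ we also have $\vecp(i) \le \vecp(S_{\vecp}) - \vecp(j) \le B - \vecc(j) < B - \eps$, using the hypothesis $\eps < \min_i \vecc(i)$; this guarantees the deviation below is a legal bid.

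First I would set up the deviation: module $i$ raises its price to $p' \coloneqq \vecp(i) + \eps + \xi$ for $\xi > 0$ chosen small enough that (a) $\tfrac{\vecv(i)}{p'} > \tfrac{\vecv(j)}{\vecp(j)}$ still holds (possible since that inequality is strict), (b) $p' \le B$, and (c) $\tfrac{\vecv(i)}{p'}$ falls strictly between two consecutive distinct bang-per-buck values of the other modules. If this deviation keeps $i$ selected by $\greedy$, then $i$'s utility rises from $\vecp(i) - \vecc(i)$ to $p' - \vecc(i) = \vecp(i) + \eps + \xi - \vecc(i)$, a gain of $\eps + \xi > \eps$, contradicting that $\vecp$ is an $\eps$-equilibrium. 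So the crux is to show $i \in S_{\vecp'}$, where $\vecp' = (p', \vecp_{-i})$.

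For that, I would invoke Claim~\ref{claim:matroid_first_k_opt}: as long as $\greedy$ has not halted, its running set after processing the first $m$ modules in bang-per-buck order is the maximum-weight independent subset of those $m$ modules, which depends only on that \emph{set}, not on its internal order. When $i$ raises its price the only modules whose order relative to $i$ changes are the ``overtakers'' $\ell \neq i$ with $\tfrac{\vecv(\ell)}{\vecp(\ell)} \in \bigl(\tfrac{\vecv(i)}{p'},\tfrac{\vecv(i)}{\vecp(i)}\bigr)$; each such $\ell$ has bang-per-buck above $\tfrac{\vecv(j)}{\vecp(j)}$, hence is processed by $\greedy$ strictly before $j$ at $\vecp$, hence lies inside the prefix of the $\vecp$-order on which $\greedy$ has not halted. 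Consequently the set of modules processed up to and including $i$ in the $\vecp'$-run is exactly a prefix of the $\vecp$-order (through $i$'s new rank), merely reordered. Since $i \in S_{\vecp}$, module $i$ is added at its original position in the $\vecp$-run and is never swapped out afterwards, so $i$ belongs to the maximum-weight independent subset of every prefix of the $\vecp$-order from its own position onward — in particular of the prefix just described. By the order-independence, that subset is precisely the running set of the $\vecp'$-run immediately after it processes $i$, so $i$ is added (possibly via a swap that discards a strictly higher-value module) and sits in the running set; from there on the two orders agree and $i$'s larger price only shrinks the remaining budget by $\eps+\xi$, so one checks the $\vecp'$-run makes the same structural choices and $i$ survives to $S_{\vecp'}$.

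The main obstacle is exactly this last step — controlling the matroid swap dynamics and the budget simultaneously. Two bad events must be excluded: (i) the reshuffling of the overtaking modules pushes the running maximum-weight independent set above $B$ \emph{before} $i$ is reached in the $\vecp'$-run, and (ii) $i$, though added when processed, is later ejected from a circuit in favour of a higher-value module. Event (i) is ruled out by the choice of $j$ as the worst-bang-per-buck selected module (which forces every overtaker, and hence every $\vecp'$-prefix up to $i$, into the safe prefix of the $\vecp$-order on which $\greedy$ has not halted) together with the fact that $i$ already lies in the maximum-weight independent set of each such prefix, so its $\vecp'$-price difference only subtracts from a budget-feasible quantity. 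Event (ii) is ruled out by the same invariant: $i$ surviving in the $\vecp$-run from its position onward means no later circuit has $i$ as its minimum-value element, and this is a statement about values only, unaffected by reordering the modules above $i$ or by $i$'s price change. Finally, ties in the bang-per-buck order (which are typical at an equilibrium price) are handled, as elsewhere in the paper, by an infinitesimal value perturbation or by appeal to the tie-breaking rule; the hypothesis $\eps < \min_i \vecc(i)$ ensures the perturbed prices still exceed costs.
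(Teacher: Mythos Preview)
Your approach is essentially the same as the paper's: both argue by contradiction that if the inequality fails for some $i,j\in S_{\vecp}$, then module $i$ can raise its price past $\vecp(i)+\eps$ and still be selected, yielding a utility gain exceeding $\eps$. Your version is more detailed—you invoke Claim~\ref{claim:matroid_first_k_opt} explicitly and separately treat the early-halting and swap-out obstacles, and you correctly deviate by $\eps+\xi$ rather than exactly $\eps$—whereas the paper compresses all of this into the one-line assertion that ``$\greedy$ inspects a subset of the modules it inspected when the prices are $\vecp$'' and that $i$, still preceding $j$ with $\eps<\vecc(j)$, is therefore budget-feasible.
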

\begin{proof}
    Suppose that there exists $i, j \in S_{\vecp}$ such that $\frac{\vecv(i)}{\vecp(i)+\eps} > \frac{\vecv(j)}{\vecp(j)}$.
    In this case, suppose that module $i$ increases its price to $\vecp(i) + \eps$.
    As the prices have only increased with $i$'s deviation, $\greedy$ inspects a subset (not necessarily strict) of the modules it inspected when the prices are $\vecp$.
    Since $\eps < \vecc(j)$, $j \in S_{\vecp}$, and module $i$ still comes before module $j$ in the bang-per-buck ordering after deviating, it means that module $i$ is budget-feasible.
    In particular, module $i$ is still inspected and accepted by $\greedy$.
    This contradicts that $\vecp$ is an $\eps$-equilibrium.
\end{proof}

\begin{claim}
\label{claim:eq_price_budget_or_bpb_tight}
Let $\vecp$ be an $\eps$-equilibrium and suppose $\greedy$ selects $k^*$ modules.
Then at least one of the following is true. Either $\vecp(S_\vecp) \geq B - \eps$ or for all $i \in S_{\vecp}$, we have $\frac{\vecv(i)}{\vecp(i)+\eps} \leq \frac{\vecv(k^*+1)}{\vecc(k^*+1)}$.
\end{claim}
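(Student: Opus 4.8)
The plan is to prove the dichotomy by assuming its first clause fails, that is $\vecp(S_\vecp) < B - \eps$, and deducing the second clause: $\frac{\vecv(i)}{\vecp(i)+\eps} \le \frac{\vecv(k^*+1)}{\vecc(k^*+1)}$ for every $i \in S_\vecp$. Two observations set this up. Since every module bids at least its cost, $\frac{\vecv(k^*+1)}{\vecc(k^*+1)} \ge \frac{\vecv(k^*+1)}{\vecp(k^*+1)}$, and since the modules are listed in bang-per-buck order w.r.t.\ $\vecp$ with $k^*$ the last selected index, $\frac{\vecv(k^*+1)}{\vecc(k^*+1)}$ is at least the bang-per-buck at $\vecp$ of every module $\greedy$ inspects from module $k^*+1$ onward. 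Moreover, $\greedy$ never permanently adds any module of index exceeding $k^*$ (else $k^*$ would not be the last selected index), so it processes $[k^*]$ without terminating; by Claim~\ref{claim:matroid_first_k_opt}, $S_\vecp$ is then exactly the maximum-weight independent set of the matroid restricted to $[k^*]$, and the set $\greedy$ holds when it inspects module $k^*+1$ is $S_\vecp$.

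The heart of the argument is a deviation. Fix $i \in S_\vecp$ and suppose, towards a contradiction, that $\frac{\vecv(i)}{\vecp(i)+\eps} > \frac{\vecv(k^*+1)}{\vecc(k^*+1)}$. Using the slack $B - \vecp(S_\vecp) > \eps$ and continuity of $t \mapsto \frac{\vecv(i)}{\vecp(i)+t}$, I would choose $\eps' \in \bigl(\eps,\, B - \vecp(S_\vecp)\bigr)$ for which $\frac{\vecv(i)}{\vecp(i)+\eps'} > \frac{\vecv(k^*+1)}{\vecc(k^*+1)}$ still, and let $i$ deviate to the price $\vecp(i)+\eps'$, producing $\vecp'$. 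Only $i$'s price changes, and by the choice of $\eps'$ its new bang-per-buck still beats that of every module of index $> k^*$, so under $\vecp'$ the set of modules $\greedy$ inspects before module $k^*+1$ is again $[k^*]$, just with $i$ reordered within it. As the maximum-weight independent set depends only on the unchanged module values, $\greedy$'s selection after inspecting $[k^*]$ is once more $S_\vecp$, of total price $\vecp(S_\vecp)+\eps' < B$; hence $i \in S_{\vecp'}$ and $i$'s utility increases by $\eps' > \eps$, contradicting that $\vecp$ is an $\eps$-equilibrium. So $\frac{\vecv(i)}{\vecp(i)+\eps} \le \frac{\vecv(k^*+1)}{\vecc(k^*+1)}$, and since $i$ was arbitrary the claim follows.

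\textbf{The main obstacle} is justifying that, under $\vecp'$, $\greedy$ still outputs $S_\vecp$ after inspecting $[k^*]$: raising $i$'s price and the induced reshuffle of the bang-per-buck list could a priori push $\greedy$ over the budget at an intermediate step it had cleared under $\vecp$, since $\greedy$'s intermediate budget usage is not monotone under reorderings or small perturbations (a swap replaces a low-value module by a possibly more expensive one). I would handle this prefix-by-prefix: on prefixes not yet containing $i$ the $\vecp'$-run coincides with the $\vecp$-run with $i$ deleted from the order, whose per-step budget usage can only go down; on prefixes containing $i$ one uses Claim~\ref{claim:matroid_first_k_opt} and the matroid exchange property to identify the current candidate with a maximum-weight independent set of a prefix of $[k^*]$ and then bounds its $\vecp'$-price by its $\vecp$-price plus the slack $\eps' < B - \vecp(S_\vecp)$. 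I expect this bookkeeping, not the deviation idea itself, to be where the real difficulty lies.
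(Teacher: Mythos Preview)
Your approach is the paper's: assume both clauses fail, pick a selected module $i$ with $\frac{\vecv(i)}{\vecp(i)+\eps} > \frac{\vecv(k^*+1)}{\vecc(k^*+1)}$, raise its price slightly beyond $\eps$, and argue it stays selected. The paper's proof is two sentences and leaves the intermediate budget-feasibility check implicit; you are right to flag it.

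However, the obstacle you identify dissolves with an observation you have already half-made. You note that a swap in $\greedy$ replaces a low-value module by a ``possibly more expensive one''; in fact it is \emph{always} more expensive: if $j$ is swapped out for the current module $m$, then $\vecv(j) < \vecv(m)$ (minimum value on the circuit) and $\frac{\vecv(j)}{\vecp'(j)} \ge \frac{\vecv(m)}{\vecp'(m)}$ ($j$ came earlier in bang-per-buck order), so $\vecp'(j) < \vecp'(m)$. Thus $\vecp'(G_\ell)$ is monotone increasing in $\ell$ along the run of $\greedy$ on $[k^*]$ under $\vecp'$. Since by Claim~\ref{claim:matroid_first_k_opt} the terminal set after processing all of $[k^*]$ is $S_\vecp$ and $\vecp'(S_\vecp) = \vecp(S_\vecp) + \eps' < B$, every intermediate $G_\ell$ also has price below $B$, and $\greedy$ never breaks early. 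This one-line monotonicity argument (which the paper uses elsewhere, e.g.\ in the proof of Claim~\ref{claim:matroid_greedy_weighted_output} and in Case~3c of Lemma~\ref{lemma:unselected_gt_k_to_cost}) replaces the prefix-by-prefix bookkeeping you outline.
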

\begin{proof}
    If neither conclusion is true then a selected module could raise its price by $\eps$ to be in front of $k^*+1$ but within the budget constraint.
    So at least one of the conclusions is true.
\end{proof}

\begin{proof}[Proof of Lemma~\ref{lemma:unselected_lt_k_bid_cost}]
    Let $i \notin [k^*] \setminus S_{\vecp}$ such that $\vecp(i) > \vecc(i) + \eps$.
    We set $\bar{\vecp}(i) = \vecc(i) + \eps$ and $\bar{\vecp}(j) = \vecp(j)$ for $j \neq i$.
    
    We now check that $\greedy$ selects the same set of modules in $\vecp$ and $\bar{\vecp}$.
    If $i$ is never selected in $\bar{\vecp}$ then it must never be selected in $\vecp$ (since its price is higher). Thus, for all other modules, $\greedy$ behaves identically in $\vecp$ and $\bar{\vecp}$.
    If $i$ is selected then it must be rejected later.
    At this point, Claim~\ref{claim:matroid_first_k_opt} shows that $\greedy$ under both $\vecp$ and $\bar{\vecp}$ must be identical.
    So thereafter, $\greedy$ is identical under both $\vecp$ and $\bar{\vecp}$.

    We need to check that $\bar{\vecp}$ is an $\eps$-equilibrium price for all $j \neq i$.
    Suppose $j$ is not selected but it can reduce its price to a point where it is budget-feasible, considered by $\greedy$, and its price exceeds its cost by $\eps$.
    Let $j' \in S_{\vecp}$ be the module that would remove $j$ if $j$'s price is $\vecc(j) + \eps$ in the price vector $\vecp$.
    If $j$ chooses a deviation to come after $j'$ then $j$ cannot be selected since $j'$ is chosen.
    Suppose that $j$ chooses a deviation to come before $j'$.
    Let $A_{j'}$ be the modules coming before $j'$, including $j'$ in $\bar{\vecp}$.
    Since the total spend by $\greedy$ is monotone increasing in each iteration and since $S_{\bar{\vecp}} \cap A_{j'}$ is budget-feasible, it must be that $\greedy$ reaches $j'$.
    Thus $j$ would be removed in this case as well.
    
    If $j \leq k^*$ and is selected then it cannot increase its price to gain more than $\eps$ utility since the solution in $\vecp$ and $\bar{\vecp}$ are identical.
    In other words, a profitable deviation in $\bar{\vecp}$ would also be a profitable deviation in $\vecp$.
\end{proof}

\subsection{Proof of Lemma~\ref{lemma:unselected_gt_k_to_cost}}\label{proof_of_key_lemma_poa}
\begin{proof}
This proof is a case analysis.
\paragraph{Case 1: $\frac{\vecv(q)}{\vecc(q) + \eps} \leq \frac{\vecv(k^*+1)}{\vecp(k^*+1)}$.}
Since $\vecp$ is an $\eps$-equilibrium, this means that module $q$ is not accepted at price $\vecc(q) + \eps$.
Let $\bar{\vecp}$ be the new price vector with $q$'s deviation to $\vecc(q) + \eps$.
Any module $q' \neq q$ coming before $k^*+1$ is unaffected and any module that comes after $k^* + 1$ cannot deviate.
To see this, if it could have deviated in $\bar{\vecp}$ then it could also have deviated in $\vecp$ since the only difference between $\bar{\vecp}$ and $\vecp$ is that in $\bar{\vecp}$, module $q$ may come in front of module $q'$.

\paragraph{Case 2: $\min_{i \in S_{\vecp}} \frac{\vecv(i)}{\vecp(i) + \eps} \geq \frac{\vecv(q)}{\vecc(q) + \eps} > \frac{\vecv(k^*+1)}{\vecp(k^*+1)}$.}
In this case, we have $\vecp(S_{\vecp}) \geq B - \eps$ by Claim~\ref{claim:eq_price_budget_or_bpb_tight}.
We claim that changing the price of module $q$ to $\vecc(q)+\eps$ gives an $\eps$-equilibrium.
No rejected module $i < k^*$ can deviate since they are already within $\eps$ of their cost.
No accepted module can increase its price by $\eps$ since it would no longer be selected as the budget constraint would be violated.
Finally, no rejected module $i > k^*$ can deviate since it would have to come in front of $k^*$ and if it was selected then that would have been a profitable deviation in $\vecp$ as well.

The last case is the most technically challenging.
\paragraph{Case 3: $\frac{\vecv(q)}{\vecc(q) + \eps} > \min_{i \in S_{\vecp}} \frac{\vecv(i)}{\vecp(i) + \eps}$.}
First, let us set $\bar{\vecp}(q) = \vecc(q) + \eps$. We consider a few subcases.
\paragraph{Case 3a: When running $\greedy$ with $q$'s deviation, $q$ could have been added but the budget was exceeded when inspecting $q$.}
In this case, we first define a temporary price $\vecp'$ by setting $\vecp'(i) = \max\left\{ \vecc(i), \frac{\vecc(q) + \eps(1-\delta)}{\vecv(q)} \cdot \vecv(i) \right\}$ for all previously accepted but now rejected modules $i$, i.e.~the new bang-per-buck of $i$ is equal to $q$ if $i$ can accept that price.
Here, $\delta \in (0, 1)$ will be a small parameter that we choose below.
Note that these modules come before $q$ in the bang-per-buck ordering.
Note that $q$ is still not accepted since now there are more modules in front of $q$.
However, it could be that some previously rejected modules are now accepted since some modules that were previously accepted cannot lower their price set by $q$.
For these modules, we raise their price so that their bang-per-buck is also $\frac{\vecv(q)}{\vecc(q) + \eps(1-\delta)}$ or their current price, whichever is greater.
We let $\bar{\vecp}$ be the price given by the three transformations discussed above (and set $\bar{\vecp}(i) = \vecp(i)$ for any module $i$ that was not discussed).

We claim that the maximum value independent set of all modules coming before $q$ is budget-feasible with the price vector $\bar{\vecp}$.
To see this, first note that the maximum value independent set has decreased in value from $\vecp$ to $\bar{\vecp}$.
Next, observe that any module that was accepted by $\greedy$ in both $\vecp$ and $\bar{\vecp}$ has its bang-per-buck either stay the same or it has increased.
On the other hand, any module that was accepted by $\greedy$ in $\vecp$ but not in $\bar{\vecp}$ may have been replaced by a module with better bang-per-buck.
We conclude that the maximum value independent set of all modules coming before $q$ must be budget-feasible with the price vector $\bar{\vecp}$.

Finally, we show that $\bar{\vecp}$ is indeed an $\eps$-equilibrium.
We require the following claim.
\begin{claim}
    \label{claim:Suq_contains_q}
    Let $S_q$ be the set selected by $\greedy$ with price vector $\bar{\vecp}$ just before $\greedy$ inspects $q$.
    If $\delta > 0$ is sufficiently small then the maximum value independent set in $S_q \cup \{q\}$ contains $q$.
\end{claim}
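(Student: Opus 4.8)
\textbf{Proof approach for Claim~\ref{claim:Suq_contains_q}.}
The plan is to recast the claim in matroid-closure language and then reduce it to the premise of Case~3a by comparing the closures induced by $\bar{\vecp}$ and by the single-deviation vector $\vecp^{\circ} := (\vecc(q)+\eps,\vecp_{-q})$. Write $r := \vecv(q)/(\vecc(q)+\eps)$ for $q$'s bang-per-buck, which is the same under $\vecp^{\circ}$ and $\bar{\vecp}$, and $r' := \vecv(q)/(\vecc(q)+\eps(1-\delta)) > r$ for the common target bang-per-buck of the re-priced modules. The set $S_q$ is exactly the prefix that $\greedy$ has accumulated just before inspecting $q$ under $\bar{\vecp}$, and the paragraph preceding the claim shows this prefix is budget-feasible; hence Claim~\ref{claim:matroid_first_k_opt} implies that $S_q$ is a maximum-value independent set of the set $W$ of modules preceding $q$ in the $\bar{\vecp}$ bang-per-buck order, and therefore $\{i\in S_q:\vecv(i)>\vecv(q)\}$ has the same span as $H := \{i\in W:\vecv(i)>\vecv(q)\}$. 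So $q$ lies in a maximum-value independent set of $S_q\cup\{q\}$ if and only if $q\notin\spn(H)$. Identically, the premise of Case~3a---that under $\vecp^{\circ}$ module $q$ could have been added when $\greedy$ inspects it---is exactly the statement $q\notin\spn(H^{\circ})$, where $W^{\circ}$ is the set of modules preceding $q$ under $\vecp^{\circ}$ and $H^{\circ}:=\{i\in W^{\circ}:\vecv(i)>\vecv(q)\}$.

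The next step is to compare $W$ with $W^{\circ}$. Three observations, each routine once set up, pin down the difference. (i) Any module whose bang-per-buck strictly exceeds $r'$ was untouched by the re-pricing, so it precedes $q$ under both vectors. (ii) Taking $\delta$ small enough---this is where distinctness of the ratios $\vecv(i)/\vecc(i)$ is used, since only finitely many thresholds must be avoided---no module lands with bang-per-buck in the open interval $(r,r')$ under $\bar{\vecp}$, so the only modules that can change sides relative to $q$ are those re-priced to bang-per-buck exactly $r'$. (iii) Among these, the ones pushed in front of $q$ form a set $D$ contained in the ``previously accepted, now rejected'' modules whose cost is low enough to afford bang-per-buck $r'$; since those modules were accepted by $\greedy$ under $\vecp$, we get $D\subseteq S_{\vecp}$. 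The ``previously rejected, now accepted'' modules that we raise to $r'$ stay on whichever side of $q$ they occupied under $\vecp^{\circ}$, and no module moves from in front of $q$ to behind it. Consequently $W = W^{\circ}\cup D$ is a disjoint union with $D\subseteq S_{\vecp}$, so $H = H^{\circ}\cup D_{>}$ with $D_{>}:=\{d\in D:\vecv(d)>\vecv(q)\}$, and it suffices to show $D_{>}\subseteq\spn(H^{\circ})$: together with $q\notin\spn(H^{\circ})$ this forces $\spn(H)=\spn(H^{\circ})$, hence $q\notin\spn(H)$, which is the claim.

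The inclusion $D_{>}\subseteq\spn(H^{\circ})$ is the crux, and I expect it to be the main obstacle. I would argue by contradiction: if $q\in\spn(H)$, pick a circuit $C\subseteq H\cup\{q\}$ through $q$; since $q\notin\spn(H^{\circ})$, the circuit $C$ must meet $D$, say at some $d$ with $\vecv(d)>\vecv(q)$. I would then exploit two structural facts: first, $d\in S_{\vecp}$, so $d$ is independent together with the modules $\greedy$ selected before it under $\vecp$; second, the Case~3a premise says that $\greedy$ under $\vecp^{\circ}$ exhausts the budget exactly when it reaches $q$, so every module on the far side of $q$ that $D$ was drawn from was dropped for budgetary rather than matroid reasons, which controls how the relative order of modules around $q$ changes as the price vector moves $\vecp\to\vecp^{\circ}\to\bar{\vecp}$. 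Using the matroid circuit/closure exchange axioms I would transfer the dependence witnessed by $C$ onto higher-value modules of $W^{\circ}$, contradicting either the independence of $d$ inside $S_{\vecp}$ or $q\notin\spn(H^{\circ})$. This circuit-chasing---keeping straight which modules survive on each side of $q$ through the two price changes, and how their relative values interact with the matroid closure---is the only delicate part of the argument, and is the place where the smallness of $\delta$ (beyond fixing the bang-per-buck of the re-priced modules to $r'$) is genuinely needed.
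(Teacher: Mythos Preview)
Your approach departs fundamentally from the paper's, and the departure is where the gap lies. The paper's proof is not a matroid-closure argument at all; it is a budget comparison. Assuming for contradiction that $q$ fails to lie in the maximum-value independent set of $S_q\cup\{q\}$, the set $S_q$ itself is the maximum-value independent set among all modules up to and including $q$ under $\bar{\vecp}$; since $A_q\cup\{q\}$ is another such independent set (the modules before $q$ under $\vecp^\circ$ sit inside those before $q$ under $\bar{\vecp}$), one gets $\vecv(S_q)\geq \vecv(A_q)+\vecv(q)$. Every module in $S_q$ has bang-per-buck at least $r'$ and every module in $A_q\cup\{q\}$ has bang-per-buck at most $r$ under the relevant prices, so this value inequality converts---up to an $O(\delta)$ error---into $\bar{\vecp}(S_q)\geq \vecp(A_q)+\vecc(q)+\eps > B$, contradicting the budget-feasibility of $S_q$ established just before the claim. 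The only role of $\delta$ is to make that $O(\delta)$ slack smaller than the fixed positive gap $\vecp(A_q)+\vecc(q)+\eps-B$.

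Your plan instead reduces to the inclusion $D_{>}\subseteq\spn(H^{\circ})$, and this is where it stalls: there is no matroid-theoretic reason for that inclusion. A module $d\in D_{>}$ is independent inside $S_{\vecp}$ and lies outside $W^{\circ}$, so nothing in the circuit or exchange axioms forces it into $\spn(H^{\circ})$. Already in a free matroid one has $\spn(H^{\circ})=H^{\circ}$, disjoint from $D$, so your inclusion amounts to $D_{>}=\emptyset$; this does hold there, but only via the budget: since $A_q\subseteq S_{\vecp}$ and $\vecp(A_q)>B-\vecc(q)-\eps$, any $d\in S_{\vecp}\setminus A_q$ satisfies $\vecp(d)<\vecc(q)+\eps$, whereas $d\in D_{>}$ would force $\vecp(d)>\vecv(d)/r>\vecc(q)+\eps$. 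Your circuit-chasing sketch never invokes the budget, and that is the missing ingredient---the claim is not a closure fact but a consequence of the price/value accounting. Your remark that the smallness of $\delta$ is needed for the matroid argument is a symptom of this misdiagnosis: $\delta$ plays no role in any closure relation, only in controlling the error term of the price comparison.
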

\begin{proof}
Let $A_q$ be the set of modules that $\greedy$ selects with only $q$'s deviation.
The premise of the case means that $A_q \cup \{q\}$ is independent but $\vecp(A_q) + \vecc(q) + \eps > B$.
Now suppose that the maximum value independent set in $S_q \cup \{q\}$ does not contain $q$.
We will show that $S_q$ must not be budget-feasible which would be a contradiction.
To do this, we will show that for sufficiently small $\delta > 0$, we have $\bar{\vecp}(S_q) - (\vecp(A_q) + \vecc(q) + \eps) > -\eta$ where $\eta = B - (\vecp(A_q) + \vecc(q) + \eps)$.
Indeed, we have
\begin{align*}
\bar{\vecp}(S_q) - (\vecp(A_q) + \vecc(q) + \eps)
& = \bar{\vecp}(S_q \cap A_q) + \bar{\vecp}(S_q \setminus A_q) - (\vecp(A_q) + \vecc(q) + \eps) \\
& \geq \vecp(S_q \cap A_q) + \bar{\vecp}(S_q \setminus A_q) - (\vecp(A_q) + \vecc(q) + \eps) \\
& = \bar{\vecp}(S_q \setminus A_q) - \vecp(A_q \setminus S_q) - \vecc(q) - \eps \\
& \geq \frac{\vecc(q) + \eps(1-\delta)}{\vecv(q)} \cdot \vecv(S_q \setminus A_q) - \frac{\vecc(q) + \eps}{\vecv(q)}{\vecv(q)} \cdot (\vecv(A_q \setminus S_q) + \vecv(q)) \\
& =  \frac{\vecc(q) + \eps}{\vecv(q)} \cdot \left( \vecv(S_q \setminus A_q) - \vecv(A_q \setminus S_q) - \vecv(q) \right) - \frac{\eps \delta \vecv(S_q \setminus A_q)}{\vecv(q)} \\
& = \frac{\vecc(q) + \eps}{\vecv(q)} \cdot \left( \vecv(S_q) - \vecv(A_q) - \vecv(q) \right) - \frac{\eps \delta \vecv(S_q \setminus A_q)}{\vecv(q)}.
\end{align*}
The first inequality is because modules in $S_q \cap A_q$ may have had their price increased from $\vecp$ to $\bar{\vecp}$.
The second inequality is because (i) any module in $S_q \setminus A_q$ is because the module had to have its price reduced after $q$'s deviation and (ii) all modules $i$ in $A_q$ (and thus in $A_q \setminus S_q$) have bang-per-buck at least $\frac{\vecv(q)}{\vecv(q) + \eps}$.
The last equality is obtained by adding and subtracting $\vecv(S_q \cap A_q)$.
Note that $\vecv(S_q) \geq \vecv(A_q) + \vecv(q)$ because $S_q$ is the maximum value independent set of all modules before and including $q$ in $\bar{\vecp}$ while $A_q \cup \{q\}$ is the maximum value independent set of all modules before and including $q$ in $\vecp$ with only $q$'s deviation (which is contained in the set of modules before and including $q$ in $\bar{\vecp}$).
We thus have
\[
\bar{\vecp}(S_q) - (\vecp(A_q) + \vecc(q) + \eps)
\geq -\frac{\eps \delta \vecv(S_q \setminus A_q)}{\vecv(q)}.
\]
A trivial upper bound on $\vecv(S_q \setminus A_q)$ is $\vecv([n])$ so taking $\delta < \frac{\eta \vecv(q)}{\eps \vecv([n])}$ is sufficient for the claim.
\end{proof}
Consider any module $i$ that is accepted.
If module $i$ deviates by more than $\eps$ then Claim~\ref{claim:Suq_contains_q} shows that either $q$ is accepted which would cause $i$ to be rejected or $\greedy$ terminates at $q$ which also means $i$ is rejected.

Now consider any module $i$ is not selected under $\bar{\vecp}$.
If $i < k^*$ was rejected in $\vecp$ and still rejected in $\bar{\vecp}$ then $i$ is bidding within $\eps$ of its cost.
If $i > k^*$ was rejected in $\vecp$ then any deviation in $\bar{\vecp}$ results in more modules in front of $i$ than in $\vecp$.
So $i$ also cannot deviate.

Finally, note that some modules $i < k^*$ which were rejected had their price increased above and then later rejected in $\bar{\vecp}$.
For these modules, we use Lemma~\ref{lemma:unselected_lt_k_bid_cost} to bring their bid back down to their cost.

\paragraph{Case 3b: When running $\greedy$ with $q$'s deviation, module $q$ is budget-feasible when $\greedy$ inspects it but is never accepted, even temporarily.}
This case is straightforward since $\greedy$ behaves identically in before and after $q$'s deviation.

\paragraph{Case 3c: When running $\greedy$ with $q$'s deviation, module $q$ is budget-feasible when $\greedy$ inspects it and is accepted but later rejected.}
In this case, we claim only $q$'s deviation is an $\eps$-equilibrium price.
First, we check that $S_{\vecp}$ remains budget-feasible.
Let $A$ be the set accepted by $\greedy$ up to but before $k^*$.
In this case, $\vecv(A) \leq \vecv(S_{\vecp})$ because, by Claim~\ref{claim:matroid_first_k_opt}, $A$ (resp.~$S_{\vecp}$) is the maximum value independent set in all the modules before $k^*$ excluding (resp.~including) $k^*$.
Moreover, we know that the price paid by $\greedy$ is always monotone increasing and $S_{\vecp}$ is budget-feasible.

Finally, we check that this is an $\eps$-equilibrium.
No accepted module can deviate because either the budget constraint is tight or its deviation would cause it to come after a module which violates the budget constraint.
Any rejected module $i < k^*$ is already bidding its cost so it cannot deviate.
Finally, let $i > k^*$ be rejected and suppose it deviates to its cost plus $\eps$.
We claim it must still be rejected.
If $\greedy$ never accepts $i$ then we are done.
Otherwise, suppose $\greedy$ does accept $i$ in $\bar{\vecp}$.
Then is must be in $\spn(S_{\vecp})$.
It also must be the case that in $\vecp$, $\greedy$ accepted $i$ but later rejected it.
Let $j$ be the last module inspected by $\greedy$ in the circuit formed by $S_{\vecp} \cup \{i\}$.
Then $j$ must be budget-feasible because $S_{\vecp}$ is and the set selected by $\greedy$ until $j$ is inspected is at most $\vecv(S_{\vecp})$.
So taking $j$ will reject $i$.
\end{proof}

\section{Proof of Theorem~\ref{thm:weighted_matroid_approx}}\label{proof_of_quality_matroid}
\begin{lemma}
\label{lemma:vecp_most_of_budget}
Let $\vecp$ be an $\eps$-equilibrium and let $k^*$ be the last module selected by $\greedy$.
Suppose that $\vecp(i) \leq \vecc(i) + \eps$ for all $i \notin S_{\vecp}$ and $i \leq k^* + 1$.
Then $\vecp(S_{\vecp}) \geq (1 - \lambda) B - \eps$ where $\lambda > \max_i \frac{\vecc(i)}{B}$.
\end{lemma}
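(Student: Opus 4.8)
The plan is to show that if $\greedy$ had left a non-trivial amount of budget unspent, then either a selected module or the module on which $\greedy$ stopped would have a profitable deviation, contradicting that $\vecp$ is an $\eps$-equilibrium. Throughout, sort modules by bang-per-buck with respect to $\vecp$ and assume these ratios are distinct, so that maximum-weight independent sets in every restriction of the matroid are unique. First dispose of the easy sub-case: if $\vecp(S_{\vecp}) \ge B - \eps$ then, since $\lambda > 0$, we already have $\vecp(S_{\vecp}) \ge B - \eps \ge (1-\lambda)B - \eps$. So assume $\vecp(S_{\vecp}) < B - \eps$.

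Under this assumption I would first argue that $\greedy$ halts because a budget check fails, at some module $m^*$; write $G'$ for the candidate set it rejected, so $\vecp(G') > B$ while the returned set $S_{\vecp}$ satisfies $\vecp(S_{\vecp}) \le B$. (If instead $\greedy$ processed all modules without a budget violation, then by Claim~\ref{claim:matroid_first_k_opt} applied at every step, together with monotonicity of the maximum-weight-independent-set value, the set maintained by $\greedy$ is constant and equal to $S_{\vecp}$ from the moment module $k^*$ is added; one then checks that $k^*$ can raise its price slightly while keeping its position in the bang-per-buck order and staying within budget, and so remains selected — a profitable deviation unless $B - \vecp(S_{\vecp}) \le \eps$, which is excluded. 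The sub-case where $k^*$'s position-keeping slack is below $\eps$ is handled by having $k^*$ move just past the next module.) Since $S_{\vecp} \subseteq \{1,\dots,k^*\}$ and $m^*$ is processed after $k^*$, we have $m^* \ge k^* + 1$; moreover the same monotonicity argument forces every module strictly between $k^*$ and $m^*$ to be spanned by $S_{\vecp}$ and to be the minimum-value element of its induced circuit, so that $G$ is still equal to $S_{\vecp}$ when $\greedy$ inspects $m^*$. Hence $G'$ is either $S_{\vecp} \cup \{m^*\}$ (when $m^* \notin \spn(S_{\vecp})$) or $S_{\vecp} \cup \{m^*\} \setminus \{\ell\}$, where $\ell \in S_{\vecp}$ is the minimum-value element of the circuit; since $\ell$ precedes $m^*$ in bang-per-buck order while $\vecv(\ell) < \vecv(m^*)$, we get $\vecp(\ell) \le \vecp(m^*)$, and therefore in both cases $\vecp(S_{\vecp}) > B - \vecp(m^*)$.

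It remains to show $\vecp(m^*) \le \lambda B + \eps$. If $m^* = k^* + 1$ this is immediate from the hypothesis (as $m^* \notin S_{\vecp}$ and $m^* \le k^*+1$): $\vecp(m^*) \le \vecc(m^*) + \eps < \lambda B + \eps$, so $\vecp(S_{\vecp}) > (1-\lambda)B - \eps$ and we are done. If $m^* > k^* + 1$, suppose toward a contradiction that $\vecp(S_{\vecp}) < B - \vecc(m^*) - \eps$, i.e. $p' := B - \vecp(S_{\vecp}) > \vecc(m^*) + \eps$, and consider module $m^*$ deviating to price $p'$. I would choose the exact repositioning of $m^*$ so that when $\greedy$ reaches it, the selected-so-far set spans no more than $S_{\vecp}$ does (which guarantees $m^*$ is matroid-feasible there, using $\spn(S_{\vecp}) = \spn(\{1,\dots,k^*\})$ in the no-circuit case, or that $m^*$ precedes the element it would displace otherwise), and so that inserting $m^*$ brings the total spend to exactly $B$. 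After that point no further module can be inserted without a swap, and a swap would bring in a module at least as expensive as the one it removes (the incoming module is processed later in bang-per-buck order than the outgoing one, which is the minimum-value element of a circuit, hence strictly more expensive), pushing the candidate set above $B$ and triggering a break; thus $m^*$ is never evicted. So $m^*$ is selected and realizes utility $p' - \vecc(m^*) > \eps$, contradicting the $\eps$-equilibrium property. Hence $\vecp(S_{\vecp}) \ge B - \vecc(m^*) - \eps \ge (1-\lambda)B - \eps$.

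The main obstacle is the last paragraph: making the repositioning of $m^*$ precise and then verifying, against the full greedy-with-swaps dynamics, that $m^*$ ends up in the selected set and is never swapped out. The crucial structural fact to pin down is that once the spent budget reaches $B$ no subsequent swap can keep the total within budget, because any incoming module is processed later than the outgoing one in bang-per-buck order while the outgoing one is the minimum-value element of its circuit, so the incoming module is strictly more expensive. A secondary obstacle is the no-budget-violation case, which needs the separate deviation argument for the last selected module $k^*$ sketched above, including its small-bang-per-buck-margin sub-case.
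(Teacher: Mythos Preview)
Your route differs from the paper's. You organize the argument around the module $m^*$ where $\greedy$'s budget check first fails, and split on whether $m^* = k^*+1$ or $m^* > k^*+1$. The paper instead splits on whether the maximum-value independent set in $[k^*+1]$ contains $k^*+1$. These two dichotomies coincide: $m^* = k^*+1$ exactly when that max-value set contains $k^*+1$ (otherwise processing $k^*+1$ leaves $G$ unchanged and the budget check passes). In the case $m^* = k^*+1$ your argument is cleaner than the paper's: you simply invoke the hypothesis $\vecp(k^*+1)\le \vecc(k^*+1)+\eps<\lambda B+\eps$ together with $\vecp(S_{\vecp})>B-\vecp(m^*)$, whereas the paper runs a sub-case analysis on whether $\vecp(k^*+1)>\lambda B$ and exhibits a deviation for $k^*+1$.

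The divergence is in the other case, and here the paper's approach is much simpler and avoids the obstacle you flag. When the max-value set in $[k^*+1]$ does not contain $k^*+1$, the paper does not touch $m^*$ at all: it observes that $S_{\vecp}$ is then the unique max-value independent set in $[k^*+1]$, so any accepted module $i$ can raise its price by more than $\eps$ and, since max-value independent sets are determined by values (not prices), $\greedy$ still reconstructs $S_{\vecp}$ once it has processed $[k^*+1]$; if $\vecp(S_{\vecp})<B-\eps$ the budget still holds, giving a profitable deviation. This yields the stronger bound $\vecp(S_{\vecp})\ge B-\eps$ in one line.

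Your deviation-by-$m^*$ argument, by contrast, has a real gap you have not closed. With $p'=B-\vecp(S_{\vecp})$ fixed, the new bang-per-buck position $j'$ of $m^*$ is determined, not chosen; it may land before $k^*$, in which case the running total right after inserting $m^*$ is $\vecp(G_{j'-1})+p'<B$ rather than exactly $B$, so your ``no further module can be inserted'' claim fails and many more greedy steps occur before the budget tightens. In the sub-case $m^*\notin\spn(S_{\vecp})$ one can still push the argument through (since $m^*$ is independent of all of $[m^*-1]$ it can never be in a circuit with earlier modules and hence is never evicted), but in the sub-case $m^*\in\spn(S_{\vecp})$ you would need to show that no intermediate module $j\in\{k^*+1,\dots,m^*-1\}$, whose fundamental circuit with $S_{\vecp}$ contained $\ell$, acquires a new fundamental circuit with the post-swap set in which $m^*$ becomes the minimum-value element. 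This can be argued (from $v(j)<v(\ell)<v(m^*)$ whenever $\ell\in C_j$), but it is exactly the ``full greedy-with-swaps dynamics'' verification you defer, and it is substantially more work than the paper's one-line alternative.
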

\begin{proof}
    We assume that the maximum value set in $[k^* + 1]$ contains $k^* + 1$.
    If not and $\vecp(S_{\vecp}) < B - \eps$ then $\vecp$ is not an $\eps$-equilibrium since accepted modules can still raise their price past the bang-per-buck of $k^* + 1$.

    Let $p$ be the price of the module that $k^* + 1$ kicks out if it is added (with $p = 0$ if it does not kick out any module).
    Suppose that $\vecp(k^*+1) > \lambda \cdot B$.
    Then we claim that $\vecp(S_{\vecp}) \geq (1-\lambda)B - \eps$.
    If not, then let us take $\vecp'(k^* + 1) = B - \vecp(S_{\vecp}) + p$.
    Note that $B - \vecp(S_{\vecp}) + p \geq \lambda B + p + \eps > \vecc(k^*+1) + \eps$.
    So this means that $k^*+1$ has a profitable deviation contradicting that $\vecp(S_{\vecp}) < (1-\lambda) \cdot B - \eps$.
    
    On the other hand, if $\vecp(k^*+1) \leq \lambda \cdot B$ and $k^*+1$ was not taken then it must be that $\vecp(S_{\vecp}) \geq (1 - \lambda) B$.
\end{proof}

\begin{lemma}
Let $\vecp$ be an $\eps$-equilibrium and $\lambda = \max_i \vecc(i) / B$ and suppose that $\lambda < 1/3$.
Then
\[
    \vecv(S_{\vecc}) \leq \left( 1 + \frac{2\lambda}{1-3\lambda} + \frac{\eps / \lambda B}{1-\lambda - \eps / B} \right) \vecv(S_{\vecp}).
\]
\end{lemma}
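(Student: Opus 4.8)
The plan is to compare the equilibrium outcome $S_\vecp$ against the cost-aware greedy outcome $S_\vecc$, in the same spirit as the additive case (Lemma~\ref{lem:additive_eps_eq_deviation}) and the unweighted-matroid case (Lemma~\ref{lemma:unweighted_greedy_approx}), but with extra care for the circuit swaps that Algorithm~\ref{alg:greedy} performs. First I would use the two reduction lemmas: by Lemma~\ref{lemma:unselected_lt_k_bid_cost} together with repeated applications of Lemma~\ref{lemma:unselected_gt_k_to_cost} --- each of which strictly lowers the largest index of a non-selected module bidding above cost, and only decreases $\vecv(S_\vecp)$ --- it suffices to prove the bound for an $\eps$-equilibrium $\vecp$ in which every non-selected module $i$ with $i \le k^*+1$ has $\vecp(i) \le \vecc(i)+\eps$, where $k^*$ is the worst-bang-per-buck module selected by $\greedy$ on $\vecp$; the set $S_\vecc$ is untouched by the reduction. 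The degenerate case $n \le k^*+1$ is handled directly (then $\greedy$ on $\vecc$ essentially returns $S_\vecp$), so assume $n \ge k^*+2$, and also assume $\greedy$ on $\vecc$ actually violates the budget (otherwise $S_\vecc$ is the global maximum-weight independent set and so is $S_\vecp$, making the bound trivial).

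Next I would lower-bound $\vecv(S_\vecp)$. Set $\beta \coloneqq \min_{i\in S_\vecp} \vecv(i)/\vecp(i)$. By Claim~\ref{claim:matroid_identical_bang_per_buck}, every selected module has bang-per-buck within a factor $1+\eps/\min_i\vecc(i)$ of $\beta$, so $\beta$ is essentially the common equilibrium bang-per-buck. Lemma~\ref{lemma:vecp_most_of_budget} gives $\vecp(S_\vecp) \ge (1-\lambda)B - \eps$, hence $\vecv(S_\vecp) = \sum_{i\in S_\vecp} \frac{\vecv(i)}{\vecp(i)}\,\vecp(i) \ge \beta\bigl((1-\lambda)B-\eps\bigr)$.

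The heart of the argument is an upper bound on the ``tail'' $\vecv(S_\vecc\setminus S_\vecp)$, after which $\vecv(S_\vecc) = \vecv(S_\vecc\cap S_\vecp) + \vecv(S_\vecc\setminus S_\vecp) \le \vecv(S_\vecp) + \vecv(S_\vecc\setminus S_\vecp)$ closes the proof. I would show that every module $i \in S_\vecc\setminus S_\vecp$ has true bang-per-buck $\vecv(i)/\vecc(i)$ at most $\beta$, up to the same small multiplicative slack: because $i$ is rejected in $\vecp$ it bids its cost, so it sits at the same place, relative to the other cost-bidders, in the $\vecc$-order and in the $\vecp$-order; if its bang-per-buck strictly exceeded the equilibrium value then no selected module could precede it in the $\vecp$-order, so $i$ would form a circuit (as the minimum-value element) with cost-bidding predecessors, and since those predecessors also precede $i$ in the $\vecc$-order, Claim~\ref{claim:matroid_first_k_opt} (every prefix produced by $\greedy$ on $\vecc$ is a maximum-weight independent set) forces $i\notin S_\vecc$ as well --- a contradiction. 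Combining this bound with $\vecc(S_\vecc) \le B$, Claim~\ref{claim:eq_price_budget_or_bpb_tight}, and the fact that $\greedy$ on $\vecc$ has spent more than $(1-\lambda)B$ on $S_\vecc$ (the module causing the break costs at most $\lambda B$, even accounting for one swap), one gets $\vecv(S_\vecc\setminus S_\vecp) \le \beta\,\bigl(1 + O(\lambda) + O(\eps/\lambda B)\bigr) B$; dividing by the lower bound on $\vecv(S_\vecp)$ and collecting the error terms produces the claimed factor. The hypothesis $\lambda < 1/3$ is exactly what keeps the relevant denominator $1-3\lambda$ positive: the $-3\lambda$ accounts for discarding up to three modules' worth of budget --- the break module of $\greedy$ on $\vecc$, the break module of $\greedy$ on $\vecp$, and the single non-selected module that the reduction cannot push down to its cost.

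I expect the structural step of the third paragraph to be the main obstacle. Unlike the additive setting and the Algorithm~\ref{alg:modified_greedy} setting, here $S_\vecp \not\subseteq S_\vecc$, and the bang-per-buck orders under $\vecc$ and $\vecp$ disagree precisely on the selected modules, which bid above cost and therefore slide downward in the $\vecp$-order. Turning the qualitative slogan ``a non-selected module with better-than-equilibrium bang-per-buck can never be selected by $\greedy$ on $\vecc$ either'' into the precise quantitative estimate requires a prefix-by-prefix comparison of the two greedy runs --- matching each circuit swap of one run with a swap of the other, tracking which intermediate choices survive, and being careful with the $\eps$-slack in every tie/threshold comparison (this is where the $\eps/\lambda B$ term is born). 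The rest of the calculation is then bookkeeping along the lines of Lemma~\ref{lemma:unweighted_greedy_approx}.
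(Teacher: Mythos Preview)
Your plan has the right overall shape, but the central structural step does not go through as written, and the source of the $\tfrac{2\lambda}{1-3\lambda}$ term is misidentified.

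The claim that every $i\in S_{\vecc}\setminus S_{\vecp}$ has $\vecv(i)/\vecc(i)\lesssim\beta$ relies on the argument that such an $i$, sitting ahead of all $S_\vecp$-modules in the $\vecp$-order, is rejected because it is the minimum-value element on a circuit of cost-bidding predecessors. But Algorithm~\ref{alg:greedy} may accept $i$ temporarily and swap it out \emph{later}, once an $S_\vecp$-module (bidding above cost) arrives and forms a circuit containing $i$. In that case $i$ is not the minimum-value element on any circuit confined to cost-bidders, and nothing you have said forces $i\notin S_\vecc$. More concretely, the reductions only guarantee that unselected modules $i\neq k^*+1$ bid near cost; module $k^*+1$ may still satisfy $\vecv(k^*+1)/\vecc(k^*+1)\gg\beta$ and belong to $S_\vecc$, which directly falsifies your structural claim.

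The $\tfrac{2\lambda}{1-3\lambda}$ term is not ``three modules' worth of budget'' --- it comes from a dedicated equilibrium-deviation argument for this exceptional module $k^*+1$. The paper decomposes $\vecv(S_\vecc)=\vecv(S_\vecc\cap[k^*+1])+\vecv(S_\vecc\setminus[k^*+1])$ (indices in $\vecp$-order). For the first piece, Claim~\ref{claim:matroid_first_k_opt} gives $\vecv(S_\vecc\cap[k^*+1])\le\vecv(S_\vecp)+\vecv(k^*+1)$ directly, avoiding the $S_\vecc\setminus S_\vecp$ analysis entirely; for the second, every module past $k^*+1$ bids near cost and has bang-per-buck at most $\vecv(k^*)/\vecp(k^*)$, yielding the $\tfrac{1+\eps/\lambda B}{1-\lambda-\eps/B}$ term. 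The remaining task is to bound $\vecv(k^*+1)$: one shows that the set $A=\{i\in S_\vecp:\vecv(i)/\vecp(i)\le\vecv(S_\vecp)/((1-3\lambda)B)\}$ has $\vecp(A)\ge\lambda B$, and then that if $\vecv(k^*+1)>\tfrac{2\lambda}{1-3\lambda}\vecv(S_\vecp)$, module $k^*+1$ could deviate to price $2\lambda B\ge\vecc(k^*+1)+\eps$, land ahead of all of $A$, and be selected with room in the budget --- contradicting that $\vecp$ is an $\eps$-equilibrium. This deviation argument, absent from your proposal, is where $\lambda<1/3$ is used.
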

\begin{proof}
We assume that the modules are sorted in bang-per-buck order according to $\vecp$, i.e.~$\frac{\vecv(1)}{\vecp(1)} \geq \ldots \geq \frac{\vecv(n)}{\vecp(n)}$.

By Lemma~\ref{lemma:unselected_lt_k_bid_cost} and Lemma~\ref{lemma:unselected_gt_k_to_cost}, we can assume that if $k^*$ is the last module selected by $\greedy$ then for all un-selected modules $i \neq k^* + 1$, we have $\vecp(i) \leq \vecc(i) + \eps$.

First observe that
\begin{equation}
    \label{eqn:Sp_lb}
    \vecv(S_{\vecp}) \geq \left[(1 - \lambda) B - \eps\right] \frac{\vecv(k^*)}{\vecp(k^*)},
\end{equation}
where we used Lemma~\ref{lemma:vecp_most_of_budget}.

Next we bound $\vecv(S_{\vecc} \setminus [k^*+1])$.
For any $i \geq k^* + 1$, we have that
\[
    \frac{\vecv(i)}{\vecc(i) + \eps} \leq \frac{\vecv(k^*)}{\vecp(k^*)}.
\]
Let $\delta = \frac{\eps}{\max_i c_i} = \frac{\eps}{\lambda B}$ so that $\eps \leq \delta \vecc(i)$ for all $i$.
We thus have $\frac{\vecv(i)}{\vecc(i) + \eps} \geq \frac{\vecv(i)}{(1+\delta) \vecc(i)}$ for all $i$ which implies that for any $i \geq k^* + 1$, we have
\[
    \frac{\vecv(i)}{\vecc(i)} \leq 
    (1+\delta) \cdot \frac{\vecv(k^*)}{\vecp(k^*)}.
\]
Thus,
\begin{equation}
    \label{eqn:Sc_ub}
    \vecv(S_{\vecc} \setminus [k^* + 1])
    \leq (1+\delta) \frac{\vecv(k^*)}{\vecp(k^*)} \cdot B.
\end{equation}

We now consider two cases.
\paragraph{Case 1: The maximum value independent set in $[k^*+1]$ does not contain $k^* + 1$.}
In this case, we have 
\[
    \vecv(S_{\vecp}) = \vecv(S_{\vecp} \cap [k^*+1])
    \geq \vecv(S_{\vecc} \cap [k^*+1])
\]
since Claim~\ref{claim:matroid_first_k_opt} shows that $\greedy$ on $\vecp$ finds the maximum value independent set in $[k^*]$ which, in this case, is also the maximum value independent set in $[k^* + 1]$.
We thus have
\begin{align*}
    \vecv(S_{\vecc}) - \vecv(S_{\vecp})
    & = \vecv(S_{\vecc} \cap [k^* + 1]) - \vecv(S_{\vecp}) + \vecv(S_{\vecc} \setminus [k^* + 1]) \\
    & \leq (1 + \delta) \cdot \frac{\vecv(k^*)}{\vecp(k^*)} \cdot B \\
    & \leq \frac{1+\delta}{1-\lambda - \eps / B} \vecv(S_{\vecp}),
\end{align*}
using Eq.~\eqref{eqn:Sp_lb} for the last inequality.
Rearranging, we have
\[
    \vecv(S_{\vecc}) \leq \left( 1 + \frac{1+\delta}{1-\lambda - \eps / B} \right) \cdot \vecv(S_{\vecp}).
\]
\paragraph{Case 2a: The maximum value independent set in $[k^*+1]$ does contain $k^* + 1$ and $\lambda < 1/5$.}
First, let $A = \left\{ i \in S_{\vecp} \,:\, \frac{\vecv(i)}{\vecp(i)} \leq \frac{\vecv(S_{\vecp})}{(1-3\lambda) B} \right\}$.
We claim that $\vecp(A) \geq \lambda B$.
Suppose, for the sake of contradiction, that $\vecp(A) < \lambda B$.
Then $\vecp(S_{\vecp} \setminus A) > (1-\lambda) B - 2\lambda B = (1-3\lambda)B$.
We would thus have
\[
    \vecv(S_{\vecp})
    \geq \vecv(S_{\vecp} \setminus A)
    > \frac{\vecv(S_{\vecp})}{(1-3\lambda)B} (1-3\lambda) B = \vecv(S_{\vecp}).
\]
This is absurd, so we conclude that $\vecp(A) \geq \lambda B$.

We now claim that $\vecv(k^*+1) \leq \frac{2\lambda}{1-3\lambda} \vecv(S_{\vecp})$.
Again, suppose for the sake of contradiction, that $\vecv(k^*+1) > \frac{2\lambda}{1-3\lambda} \vecv(S_{\vecp})$.
Then if it deviated to $2 \lambda B \geq 2 \vecc(k^*+1) \geq \vecc(k^*+1) + \eps$,
its bang-per-buck would be
\[
    \frac{\vecv(k^*+1)}{\vecc(k^*+1)}
    > \frac{\vecv(S_{\vecp})}{(1-3\lambda)B},
\]
which would put it in front of $A$ and there would be sufficient budget when $\greedy$ visits $k^*+1$.
Since $k^*+1$ is in the maximum value independent set in $[k^*+1]$, it must therefore be selected by $\greedy$.
This would contradict that $\vecp$ was an $\eps$-equilibrium.
We conclude that $\vecv(k^*+1) \leq \frac{2\lambda}{1-3\lambda} \vecv(S_{\vecp})$.

Now continuing as in case 1, we have
\begin{align}
    \vecv(S_{\vecc}) - \vecv(S_{\vecp})
    & = \vecv(S_{\vecc} \cap [k^* + 1]) - \vecv(S_{\vecp}) + \vecv(k^*+1) + \vecv(S_{\vecc} \setminus [k^* + 1]) \\
    & \leq \vecv(k^*+1) + (1 + \delta) \cdot \frac{\vecv(k^*)}{\vecp(k^*)} \cdot B \label{eq:recall_analysis}\\
    & \leq \left( \frac{2\lambda}{1-3\lambda} + \frac{1+\delta}{1-\lambda - \eps/B}\right) \vecv(S_{\vecp}). \qedhere
\end{align}
\end{proof}

\begin{lemma}\label{lem:greedy_approx}
Let $S_{\OPT} \in \argmax \{ S \,:\, \vecc(S) \leq B, S \in \I \}$.
Let $\lambda = \max_i \vecc(i) / B$.
Then $\vecv(S_{\OPT}) \leq \left( 1 + \frac{\lambda}{1-\lambda} \right) \cdot \vecv(S_{\vecc})$.
\end{lemma}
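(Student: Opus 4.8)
The plan is to follow the same two-step template as the additive case (Lemma~\ref{lemma:additive_greedy_approx_with_cost}), replacing the trivial ``take the first $k+1$ items'' bound with the matroid optimality of the greedy prefix (Claim~\ref{claim:matroid_first_k_opt}). Re-index the modules in decreasing bang-per-buck with respect to cost, $\frac{\vecv(1)}{\vecc(1)} \ge \dots \ge \frac{\vecv(n)}{\vecc(n)}$ (the order $\greedy$ uses when run on $\vecc$), and let $k^*$ be the index of the last module $\greedy$ commits to $S_{\vecc}$. If $k^* = n$, then by Claim~\ref{claim:matroid_first_k_opt} the set $S_{\vecc}$ is a maximum-value independent set of $\M$, which dominates the independent set $S_{\OPT}$, and we are done; so assume $k^* < n$, set $j = k^*+1$ and $\mu = \vecv(j)/\vecc(j)$. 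Observe that every module of $S_{\vecc}$ has index at most $k^*$, hence bang-per-buck at least $\mu$.

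First I would show that $\greedy$ nearly exhausts the budget. Since $\greedy$ stops at $j$, the candidate set $G'$ obtained from $S_{\vecc}$ by inserting $j$ (and deleting the lowest-value module of the circuit that $j$ may close) satisfies $\vecc(G') > B$; a short case check---either $G' = S_{\vecc}\cup\{j\}$, or $G' = S_{\vecc}\cup\{j\}\setminus\{i\}$ with $\vecc(i) < \vecc(j)$ (otherwise $\vecc(G') \le \vecc(S_{\vecc}) \le B$)---gives $\vecc(S_{\vecc}) > B - \vecc(j) \ge (1-\lambda)B$. Hence $\vecv(S_{\vecc}) \ge \mu\,\vecc(S_{\vecc}) > \mu(1-\lambda)B$, so $\vecv(j) = \mu\,\vecc(j) \le \mu\lambda B < \frac{\lambda}{1-\lambda}\,\vecv(S_{\vecc})$. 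It therefore suffices to prove $\vecv(S_{\OPT}) \le \vecv(S_{\vecc}) + \vecv(j)$, since then $\vecv(S_{\OPT}) \le \big(1 + \frac{\lambda}{1-\lambda}\big)\vecv(S_{\vecc})$, as required.

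For the key inequality, write $S_{\OPT} = I_{\mathrm{in}} \cup I_{\mathrm{out}}$ with $I_{\mathrm{in}} = S_{\OPT}\cap[k^*]$ and $I_{\mathrm{out}} = S_{\OPT}\setminus[k^*]$; every module of $I_{\mathrm{out}}$ then has bang-per-buck at most $\mu$, and $\vecc(I_{\mathrm{out}}) \le B - \vecc(I_{\mathrm{in}})$. Extend the independent set $I_{\mathrm{in}}$ to a maximum-cost independent set $B_1 \subseteq [k^*]$ containing it. By Claim~\ref{claim:matroid_first_k_opt}, $\vecv(B_1) \le \vecv(S_{\vecc})$; and since every module of $B_1\setminus I_{\mathrm{in}}$ has bang-per-buck at least $\mu$, we get $\vecv(I_{\mathrm{in}}) = \vecv(B_1) - \vecv(B_1\setminus I_{\mathrm{in}}) \le \vecv(B_1) - \mu\big(\vecc(B_1)-\vecc(I_{\mathrm{in}})\big)$. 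Adding $\vecv(I_{\mathrm{out}}) \le \mu\,\vecc(I_{\mathrm{out}}) \le \mu\big(B - \vecc(I_{\mathrm{in}})\big)$ yields $\vecv(S_{\OPT}) \le \vecv(S_{\vecc}) + \mu\big(B - \vecc(B_1)\big)$. It then remains to show that the maximum-cost extension $B_1$ can be taken with $\vecc(B_1) \ge B - \vecc(j)$, which makes the last term at most $\mu\,\vecc(j) = \vecv(j)$ and completes the proof; I would derive this from $\vecc(S_{\vecc}) > B - \vecc(j)$ (the previous paragraph) by extending $I_{\mathrm{in}}$ through the basis $S_{\vecc}$ via matroid exchange and tracking that the cost stays above $B-\vecc(j)$.

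The main obstacle is precisely this last point. In the additive setting one simply bounds $\vecv(S_{\OPT})$ by a prefix of the bang-per-buck order; in the matroid setting the value order and the bang-per-buck order differ, so $S_{\OPT}$ may genuinely use low-bang-per-buck modules outside $[k^*]$, and the ``prefix'' reasoning must be replaced by showing that $I_{\mathrm{in}}$ extends within $[k^*]$ to an independent set whose cost is at least $B-\vecc(j)$. Establishing this---via the strong basis-exchange property between $I_{\mathrm{in}}$ and $S_{\vecc}$, together with bookkeeping of which elements of $S_{\vecc}$ are forced out---is the step that requires care.
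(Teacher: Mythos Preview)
Your overall structure mirrors the paper's proof exactly: both reduce the lemma to the single inequality $\vecv(S_{\OPT}) \le \vecv(S_{\vecc}) + \vecv(j)$ (with $j = k^*+1$), and both then bound $\vecv(j)$ by $\frac{\lambda}{1-\lambda}\vecv(S_{\vecc})$ using $\vecc(S_{\vecc}) > (1-\lambda)B$. The difference is that the paper simply invokes this inequality as ``a standard fact'' (essentially the Lagrangian/LP bound for budgeted matroid optimization) and moves on, whereas you try to prove it from scratch via the decomposition $S_{\OPT} = I_{\mathrm{in}} \cup I_{\mathrm{out}}$.

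The gap you flag is real, and the fix you sketch does not close it. Your argument arrives at $\vecv(S_{\OPT}) \le \vecv(S_{\vecc}) + \mu(B - \vecc(B_1))$ for any independent $B_1$ with $I_{\mathrm{in}} \subseteq B_1 \subseteq [k^*]$, so you need some such $B_1$ with $\vecc(B_1) \ge B - \vecc(j)$. But matroid basis exchange gives you no handle on \emph{costs}: when you extend $I_{\mathrm{in}}$ to a basis of $[k^*]$ using elements of $S_{\vecc}$, the elements of $S_{\vecc}$ that get displaced (one per element of $I_{\mathrm{in}}\setminus S_{\vecc}$) may each be far more expensive than the element of $I_{\mathrm{in}}$ replacing it, so $\vecc(B_1)$ can drop well below $\vecc(S_{\vecc})$. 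The strong exchange property only lets you match elements bijectively, not compare their costs; and while max-value optimality of $S_{\vecc}$ tells you $\vecv(e)\le\vecv(\phi(e))$ for such a matching $\phi$, the bang-per-buck lower bound $\vecv(\cdot)\ge\mu\,\vecc(\cdot)$ on $[k^*]$ points the wrong way to convert this into a cost comparison. So ``tracking that the cost stays above $B-\vecc(j)$'' is exactly the statement to be proved, and nothing in your outline forces it.

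If you want to actually prove the inequality rather than cite it, the clean route is Lagrangian relaxation: for $\mu=\vecv(j)/\vecc(j)$ one has $\vecv(S_{\OPT}) \le \mu B + \max_{I\in\I}\bigl(\vecv(I)-\mu\,\vecc(I)\bigr)$, and the inner maximum is over independent sets supported on $[k^*]$ (since $\vecv-\mu\vecc\le 0$ elsewhere). Bounding that inner maximum against $\vecv(S_{\vecc})-\mu(B-\vecc(j))$ is still where the work is, but it is a statement about a single canonical set (the max-$(\vecv-\mu\vecc)$ basis of $[k^*]$) rather than about an extension constrained to contain an arbitrary $I_{\mathrm{in}}$, and this is where the existing proofs in the budgeted-matroid literature operate.
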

\begin{proof}
    Sort the modules such that $\frac{\vecv(1)}{\vecc(1)} \geq \ldots \geq \frac{\vecv(n)}{\vecc(n)}$.
    Let $k^*$ be the last module selected by $\greedy$.
    If has the same value as $S_{\OPT}$ then we are done.
    Otherwise, a standard fact is that taking the first $k^*+1$ modules that form an independent set is an upper bound on $\vecv(S_{\OPT})$.
    We have that $\vecv(S_{\vecc}) \geq \frac{\vecv(k^*)}{\vecc(k^*)} \cdot (1-\lambda) \cdot B$ and
    $\vecv(S_{\OPT}) - \vecv(S_{\vecc}) \leq \vecv(k^* + 1) \leq \frac{\vecv(k^*)}{\vecc(k^*)} \cdot \lambda B \leq \frac{\lambda}{1-\lambda} \vecv(S_{\vecc})$.
    Rearranging gives the lemma.
\end{proof}

Combining the results above, we have prove the main theorem.

\section{Missing Proofs of Section~\ref{sec:convergence_dynamics_of_learning}}
In this section, we present the missing proofs from Section~\ref{sec:convergence_dynamics_of_learning}.
In the price competition analysis, we described an algorithm to compute $\eps$-equilibrium prices for the modules for the price competition game instance $\instance$ where $\mathcal I$ is a matroid. However, the modules can not implement an $\eps$-equilibrium price without full information about the game instance $\instance$. In this section, we show that a simple price dynamics where each module implements a multiplicative weight type learning algorithm (Definition~\ref{def:multiplicative_weight_learning}) converges to an $\eps$-equilibrium.  

More formally, we consider a repeated setting of price competition game instance $\instance$ where $\mathcal I$ is a matroid and $B=1$. At each time step $t$, each module $m_i$ specifies a price $\vec p^t(i)$ from a discretized set $\mathcal B = \{\delta, 2\delta, \ldots, 1\}$. 
The platform then chooses a subset of the modules according to the greedy algorithm (Algorithm~\ref{alg:greedy}).
Importantly, the module owners are only aware of their own cost and feedback provided by the platform after each round, which includes whether they are accepted or rejected at a given price and what is the maximum price at which the module could be selected at the current round.
In particular, they do not know the buyer's valuation of their module or the existence of other modules in the market, other than what they can infer from the feedback described above. 

\paragraph{Assumptions on Discretization}Throughout the section, we assume that $\vec c(i) > \delta^{1/3}$. We note that we can assume this without loss of generality by letting the discretization be more finer. In addition, we assume that $1 \leq \vec v(i)\leq n^2$ and $\delta < \frac{1}{n^3}$. Finally, to avoid tie-breaking at the equilibrium price, when $\barp(\SE)< B$ and $\SE'$ being the optimal set of modules in matroid $\mathcal I $ restricted to $\pi[k^*]$, we have $\barp(\SE' )> B + 2\cdot \sqrt \delta$. Here, $k^*$ is the last iteration of Algorithm~\ref{alg:equilibrium_dynamics_weighted}.

\subsection{Proof of Lemma~\ref{lem:eqm_price_dominates}}
We observe that for any module $j\in H$, $\frac{\vec v(j)}{\vec c(j)} \leq  \frac{\vec v(i)}{\barp(i)}$ for $i\in \SE$. Therefore, at any price vector $\vec p_{-i}$, $\frac{\vec v(j)}{\vec p(j)} \leq  \frac{\vec v(i)}{\barp(i)}$.

Let $\pi$ be the bang-per-buck ordering over the set of modules at price $\vec p$ and $\bar S$ be the set of modules selected by the greedy algorithm at price $\vec p$. At price $\vec p$, we observe that when the greedy algorithm (Algorithm~\ref{alg:greedy}) reaches module $i$, it adds module $i$ in the working solution $G$. Since $i$ gets selected at price $\vec p$, we have that no module with a lower bang-per-buck than the module can swap module $i$ from the working solution $G$. We let $\pi(\bar i) \in H$ be the module with the highest bang-per-buck such that module $i$ does not belong to the maximum weight independent set in matroid $\mathcal I$ restricted on $\pi(1),\dots, \pi(\bar i)$. We note the greedy algorithm does not reach the module $\pi(\bar i)$ otherwise module $i$ could have been swapped out from the set $\bar S$.

Let $\pi'$ be the bang-per-buck ordering over the modules at price $\vec p'$ and $S_{\vec p'}$ be the set of selected modules at price $\vec p'$. Next, we claim that at price $\vec p'$, greedy algorithm (Algorithm~\ref{alg:greedy}) reaches the module $i$ and adds it into the working solution. Let $L'$ be the set of modules with bang-per-buck $\leq \frac{\vec v(i)}{\barp(i)}$ at price $\vec p'$. Since $H\cap L' = \emptyset$, we have that $L'\subseteq L$. Since, $i\in S_\vec p$, if the greedy algorithm (Algorithm~\ref{alg:greedy}) reaches module $i$ then it adds module $i$ in the working solution.

Let $\pi(\bar i) = \pi'(i')$. We note that since bang-per-buck of module $\frac{\vec v(\pi(\bar i))}{\vec p(\pi(\bar i))} < \frac{\vec v(i)}{\barp(i)}$, we have $\{\pi(1),\dots , \pi(\bar i)\} = \{\pi(1),\dots, \pi(i')\}$. In addition, the price of all modules that come after module $i$ on the bang-per-buck ordering $\pi'$ is unchanged from the price vector $\vec p$, we conclude that the greedy algorithm can not reach module $\pi(\bar i)$ on the ordering $\pi'$ which concludes the proof.

\subsection{Proof of Lemma~\ref{lem:module_with_worst_bpb_rejected}}
We let $\pi$ be the bang-per-buck over the modules at price $\vec p$. Let $\pi(\ell):= i^*$. We observe that $ \SE \subseteq  \pi[\ell -1]$. Let $S^*[\ell]$ be the maximum weight independent set in matroid restricted to $\pi[\ell -1]$ chosen by greedy algorithm (Algorithm~\ref{alg:greedy}). We claim that $\vec p(S^*[\ell]) >B$. 

First We note that $S^*[\ell] \cap L = \emptyset$ as $\SE \subseteq S^*[\ell]$. In addition, for any module $i\in S^*[\ell] \setminus \SE'$ must satisfy that $i\in H$. We note that $i\in \SE \setminus S^*[\ell]$ forms circuit with $S^*[\ell]$ and can be exchanged by module $j\in \SE' \setminus S^*[\ell]$ such that $j\in H$ and $\vec v(j)>\vec v(i)$. Note that for $j\in H$, $\frac{\vec v(j)}{\vec p(j)}< \frac{\vec v(i)}{\barp(i)}$ and since $\vec v(i)< \vec v(j)$, we have $\barp(i)<\vec p(j)$. This implies that $\vec p(S^*[\ell])> \vec p(\SE')> B$. Therefore, $i^*$ can not be selected by the greedy algorithm (Algorithm~\ref{alg:greedy}). This further implies that $\vec p(i^* \cup S_\vec p)>B$.

\subsection{Proof of Lemma~\ref{lem:stability_of_eqm_price}}
    We consider module $\pi^0(k^*)$ where $k^*$ denotes the last iteration of Algorithm~\ref{alg:equilibrium_dynamics_weighted}. Now, we consider module $i\in \SE$ that submits price $\geq \barp(i)(1 + \sqrt \delta)$. We first show that $\frac{\vec v(i)}{\barp(i) (1+ \sqrt \delta)} / \frac{\vec v(j)}{\barp(j) +10\delta} < 1$. This follows because $\vec c(i) > \sqrt \delta$ and $\frac{\vec v(i)}{\barp(i)} = \frac{\vec v(j)}{\barp(j)} = \optbpb$. Now,  if $\barp(\SE) = B$ then since rest of the modules comes before module $i$ in the bang-per-buck ordering, it can no-longer be affordable as $\vecp(\SE)> B - 10\delta \cdot n + \sqrt \delta >B + \sqrt \delta -\delta^{2/3}>B$ assuming $\delta < \frac{1}{n^3}$. On the other hand, if $\vec p(\SE)< B$ then let $\SE'$ be the optimal set of modules in matroid restricted to $\pi^0[k^*]$. We have that $\barp(\SE') > B$. Again, since $\frac{\vec v(i)}{\barp(i)} = \optbpb$ for modules $i\in \SE'$. Therefore, via a similar argument module $i\in \SE'$ with $\barp(i) + \delta$ can not be affordable by a greedy algorithm.
    
    Now, we consider that module $i\in \SE'\setminus \{\pi^0(k^*)\}$ sets price $\geq \barp(i) - \sqrt \delta$. In this case, module $i$ will be selected as it comes before rest of the modules in bang-per-buck ordering and since $\barp(\SE')> B + \sqrt \delta$, total price of $\vec p(\SE')> \barp(\SE') > B -10\cdot \delta \cdot n - \sqrt \delta > B - \sqrt \delta -10n\delta $ by assumption on discretization. This implies that the greedy algorithm only visits elements in $\pi^0[k^*]$ before it runs out of budget and module $i$ gets selected. Now, due to Lemma~\ref{lem:eqm_price_dominates}, we have that module $i$ also gets selected at price $\barp(i)$. 
    
\subsection*{Proof of Theorem~\ref{thm:matroid_convergence}}
We then leverage Lemma~\ref{lem:eqm_price_dominates} to define an event stating that after a sufficiently large number of iterations, all modules in $\SE$ start setting their price higher than their equilibrium prices.  More formally, we let $T_0$ be sufficiently large the round (determined later) and define $\mathcal E^0 = \{ \forall s\geq T_0:\vec p^t(i) > \Delta_i,  \forall i \in \SE  \}$.

We first observe that given any price vector $\vec p$ and conditioned on the event $\mathcal E^0$, the module with the worst bang-per-buck in the set $\SE'$ defined as $\SE' :=\SE \cup \{\pi^0(k^*)\}$ where $k^*$ is the last iteration of Algorithm~\ref{alg:equilibrium_dynamics_weighted}, will not be selected due to Lemma~\ref{lem:module_with_worst_bpb_rejected}.
Our overall proof approach is to show that under conditioned on the event $\mathcal E^0$, the modules in $\SE'$ literately stop posting the prices that lead to smaller bang-per-buck for the platform since it will not be selected at such a higher price. In order to demonstrate that, we define an order over the possible prices w.r.t. their bang-per-buck values for the buyer. Next, we define an order over the prices with respect to their bang-per-buck value. We iteratively define 
$$(  b_{(k)},   i_{(k)}): = \argmin_{\{(b,i)\in \bar{ \mathcal B}  \setminus \{(   b_{(1)},   i_{(1)}),\dots ,(   b_{(k-1)},   i_{(k-1)}) \} \}} \left \{\frac{\vec v(i)}{b} \right \}.$$
Here, $(   b_{(1)},   i_{(1)}): = \argmin_{(b,i) \in \bar{\mathcal B}} \{\frac{\vec v(i)}{b} \}$ and $\bar{\mathcal B}:=\bigcup_{i\in \SE'} \{(\Delta_i + \delta, i) , \dots , (1,i) \}$.  First, we prove the following claim that shows that the module $i_{(1)}$ will never price $b_{(1)}$ with high probability after $\poly(1/\delta)$ many rounds as it will never be selected.

Next, we extend this argument and iteratively show that the modules will stop posting prices larger than their respective $\Delta_i$s. We let $T_1 < T_2 <\dots <T_K$ and $T^*_1 < T^*_2 <\dots <T^*_K$ such that $T_k \leq T_{k}^*$ for some $K < \frac{n}{\delta}$ and the events $$\mathcal E^p:= \{ \forall s\geq T_{p} \text{ and } j\in M: b_{i_{(p)}} < b_{(p)}\},$$  satisfying the following condition: 
\begin{enumerate}
	\item We now inductively define $T_k$ as follows: suppose we are given $T_1,\dots , T_{k-1}$; $T^*_0,\dots , T^*_{k-1}$ and conditioned on the events $\mathcal E^1,\dots , \mathcal E^{k-1}$, we let $T_k^* > T^*_{k-1}$ be the smallest round (if exists) such that, 
	\begin{equation}
		\sum_{s\leq T^*_k} u_{i_{(k)},s}(\Delta_{i_{(k)}},\vec p^s_{-i_{(k)}}) - u_{i_{(k)},s}(b,\vec p^s_{-i_{(k)}}) \geq \delta^6 \cdot T_k^*
	\end{equation}
	\item  We let $T_k\leq T_k^*$ be the smallest stage such that there exists some $b_{i_{(k)}}' \notin\{b_{(1)},\dots .b_{(k-1)}\}$ for all $s \in (T_k,T_k^*]$, such that, 
	\begin{equation}
		\sum_{t\leq s} u_{i_{(k)},t} (b'_{i_{(k)},t},\vec p^s_{-i_{(k)}}) - u_{i_{(k)},t}(b_{i_{(k)}}, \vec p^s_{-i_{(k)}}) \geq \frac{C}{\gamma_s \cdot \delta}.
	\end{equation}
\end{enumerate}
Above,  $\mathcal E^p$ captures the event where the module $i_{(p)}$ sets their price with higher bang-per-buck than $p$-th lowest bang-per-buck in the set of bids. In addition, we can observe that by the definition of $T_p$,  we have that the module $i_{(p)}$ begins setting the price lower than $b_{i_{(p)}}$ with high probability due to existence of some lower bid $b$ which has significantly higher cumulative utility. Therefore, if we show an existence of bounded $T_i $s then we can essentially show that any module $i\in \SE'$ starts setting price lower than their corresponding $\Delta_i + 2\cdot \delta$ after some finite many rounds while event $\mathcal E^0$ ensures that module $i\in \SE'$ is bidding higher than $\Delta_i$. We first make the following observation: 
\begin{observation}\label{obs:dropped_price_with_hp_weighted}
	For any $p<K$ and $s\geq T_p$, we have, $$\Pr\left [\text{Price $b_{(p-1)}$ for module $i_{(p)}$ is not selected at round }s \mid \bigcap_{p'\leq p} \mathcal E^{p'} \right] = 1.$$
\end{observation}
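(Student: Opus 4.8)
The statement is deterministic given the conditioning — once we condition on $\mathcal E^0 \cap \bigcap_{p'\le p}\mathcal E^{p'}$ nothing random remains at round $s$ — so the plan is to show that on this event, at every round $s \ge T_p$ the module $i_{(p)}$, were it to submit the relevant ``too high'' bid $b$, would realize the strictly worst bang-per-buck of any module in $\SE'$, and then to quote Lemma~\ref{lem:module_with_worst_bpb_rejected}, which says such a module is not selected by $\greedy$ (and moreover $\vec p^s(\{i_{(p)}\}\cup S_{\vec p^s}) > B$). Here $b = b_{(p)}$ is the main case; the bid $b_{(p-1)}$ named in the observation corresponds to the degenerate situation $i_{(p)}=i_{(p-1)}$, in which $b_{(p-1)}>b_{(p)}$ by the worst-to-best ordering of the peeled sequence, so it only makes $i_{(p)}$'s bang-per-buck smaller and the argument holds a fortiori.

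First I would fix a round $s \ge T_p$ and record the bid profile implied by the conditioning: by $\mathcal E^0$, every $i \in \SE$ bids $\vec p^s(i) > \Delta_i = \barp(i) - 10\cdot\delta$; by $\mathcal E^{p'}$ ($1 \le p' \le p$), the module $i_{(p')}$ bids $\vec p^s(i_{(p')}) < b_{(p')}$. I then claim that for every $i \in \SE'$ the realized pair $(\vec p^s(i), i)$ lies in $\bar{\mathcal B}$ and does not equal any of $(b_{(1)},i_{(1)}),\dots,(b_{(p-1)},i_{(p-1)})$: membership $\vec p^s(i) \ge \Delta_i + \delta$ follows from $\mathcal E^0$ and the discretization assumptions ($\vec c(i) > \delta^{1/3}$, $1 \le \vec v(i) \le n^2$, $\delta < 1/n^3$, and $\barp(\SE') > B + 2\sqrt\delta$), while if $i = i_{(p')}$ for some peeled index $p' \le p-1$ then $\mathcal E^{p'}$ — together with the worst-to-best ordering, which shows the smallest $b_{(q)}$ over peeled indices $q$ with $i_{(q)}=i$ is the binding one — forces the realized pair to have bang-per-buck strictly above that of $(b_{(p')},i_{(p')})$. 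Since $(b_{(p)},i_{(p)})$ is by construction the minimizer of $\vec v(\cdot)/(\cdot)$ over $\bar{\mathcal B}\setminus\{(b_{(1)},i_{(1)}),\dots,(b_{(p-1)},i_{(p-1)})\}$, every module in $\SE'$ has round-$s$ bang-per-buck at least $\vec v(i_{(p)})/b_{(p)}$, which equals (or bounds from above, in the degenerate case) the bang-per-buck of $i_{(p)}$ at bid $b$; with ties broken in favour of higher $\vec v(\cdot)/\vec c(\cdot)$ as in Theorem~\ref{thm:weighted_matroid_eq_exists}, $i_{(p)}$ is then exactly the $\argmin$ appearing in Lemma~\ref{lem:module_with_worst_bpb_rejected}, so that lemma gives $i_{(p)} \notin S_{\vec p^s}$, which is the claim (with conditional probability exactly $1$).

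The main obstacle I anticipate is the bookkeeping in this middle step. First, the hypothesis of Lemma~\ref{lem:module_with_worst_bpb_rejected} asks for $\vec p^s(i) \ge \barp(i)$ on $\SE$, whereas $\mathcal E^0$ only supplies $\vec p^s(i) > \barp(i) - 10\delta$; one therefore has to re-run that lemma's exchange argument with the $O(n\delta)$ aggregate slack, which is harmless because it is dominated by the $2\sqrt\delta$ margin in $\barp(\SE') > B + 2\sqrt\delta$ (using $\delta < 1/n^3$) — or, alternatively, one strengthens the definition of $\mathcal E^0$. Second, one must treat with care the possibility that a single module index recurs several times in the peeled sequence $(b_{(1)},i_{(1)}),(b_{(2)},i_{(2)}),\dots$, verifying that the binding peeling constraint still keeps that module's current pair strictly above the threshold $\vec v(i_{(p)})/b_{(p)}$; the worst-to-best ordering of the sequence, and the fact that repeated occurrences of an index are pairwise ordered by price, are exactly what makes this go through.
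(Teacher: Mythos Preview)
Your proposal is correct and follows essentially the same approach as the paper: both argue that, conditioned on $\bigcap_{p'\le p}\mathcal E^{p'}$, the price in question realizes the worst bang-per-buck among $\SE'$ at round $s$, and then invoke Lemma~\ref{lem:module_with_worst_bpb_rejected}. Your version is considerably more detailed---you spell out the peeling/ordering argument, handle repeated module indices, and flag the $O(n\delta)$ slack between the hypothesis of Lemma~\ref{lem:module_with_worst_bpb_rejected} and what $\mathcal E^0$ actually guarantees---whereas the paper's proof is two sentences that gloss over these bookkeeping points.
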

\begin{proof}
	The proof of the observation follows from the fact that the price $b_{(p-1)}$ has the worst bang-per-buck at stage $s$ once conditioned on the event $\bigcap_{p'\leq p} \mathcal E^{p'}$. Therefore, Lemma~\ref{lem:module_with_worst_bpb_rejected} implies that the greedy algorithm does not select price $b_{(p-1)}$. 
\end{proof}
Above, we note that if Condition~\ref{condition1_weighted} satisfies for for some $k$ then for $T_k = T_k^*$, Condition~\ref{condition2_weighted} trivially satisfies. We next prove an upper bound on $T_i$ which shows an existence of desired $T_i$s which is one of the most crucial steps in proving Theorem~\ref{thm:matroid_convergence}. We recall our distorted payment rule: for initial rounds  $t\leq T_0$, each selected module $m_i\in M$ gets payment of $\vec p^t(i) + \delta^2\cdot \vec p^t(i)$  and the rest of the modules get payment of $\delta^2 \cdot \vec p^t(i)$ and later after round $t > T_0$, all selected module $m_i\in M$ gets payment of $\vec p^t(i) + \frac{\delta^4}{ \vec p^t(i)}$  and the rest of the modules get payment of $\frac{\delta^4}{ \vec p^t(i)}$.  

\begin{lemma}\label{lem:bounding_iterations}
	For any $k \geq 1$, conditioning on the event $\bigcap_{i=1}^{k} \mathcal E^i$, with probability $1$, we have,
	\begin{equation*}
		T_k \leq \frac{1}{\delta^6} \cdot \left( \sum_{i=0}^{k-1}\sqrt {T_i} + (1+\delta^2)\cdot T_0\right).
	\end{equation*}
\end{lemma}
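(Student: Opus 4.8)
The plan is to bound $T_k^*$ and then deduce $T_k \le T_k^*$: once Condition~\eqref{condition1_weighted} holds at a round $r$, Condition~\eqref{condition2_weighted} holds at $T_k = r$ (take $b' = \Delta_{i_{(k)}}$, whose cumulative reward surplus over $b_{(k)}$ is $\ge \delta^6 r \ge C/(\gamma_r\delta)$ for $r$ polynomially large). So, writing $T := \tfrac{1}{\delta^6}\bigl(\sum_{i=0}^{k-1}\sqrt{T_i} + (1+\delta^2)T_0\bigr)$, it suffices to show that on $\mathcal E^0$ and $\bigcap_{j=1}^{k}\mathcal E^j$ we have $\sum_{s\le T} A_s(b) \ge \delta^6 T$ for every admissible $b \ge \Delta_{i_{(k)}}+\delta$, where $A_s(b) := u_{i_{(k)},s}(\Delta_{i_{(k)}},\vec p^s_{-i_{(k)}}) - u_{i_{(k)},s}(b,\vec p^s_{-i_{(k)}})$; note $A_s(b)$ depends only on the other modules' bids and on the distortion rule active at round $s$, and only $\mathcal E^0,\dots,\mathcal E^{k-1}$ will actually be used. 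Using that the $T_j$ are increasing, I would partition $\{1,\dots,T\}$ into the startup window $\{s\le T_0\}$, the intermediate windows $(T_{j-1},T_j]$ for $1\le j\le k-1$, and the tail window $(T_{k-1},T]$.

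On the startup window there is no control over bids, but under the $\delta^2\vec p$-distorted payments every per-round utility lies in $[0,1+\delta^2]$, so $A_s(b)\ge -(1+\delta^2)$ and this window contributes at least $-(1+\delta^2)T_0$. On the tail window all of $\mathcal E^1,\dots,\mathcal E^{k-1}$ are in force, so no module plays a bid with bang-per-buck worse than $v(i_{(k)})/b_{(k)}$; hence by Observation~\ref{obs:dropped_price_with_hp_weighted}, which feeds on Lemma~\ref{lem:module_with_worst_bpb_rejected}, the hypothetical bid $b$ of $i_{(k)}$ is rejected whenever $b\ge b_{(k)}$, while for the $O(1)$ many $b$ in $[\Delta_{i_{(k)}}+\delta, b_{(k)})$ one invokes the own-price monotonicity of $\greedy$ (the argument behind Lemma~\ref{lem:eqm_price_dominates}, using that $i_{(k)}$'s bang-per-buck at $\Delta_{i_{(k)}}$ is at least $\optbpb$, together with $\mathcal E^0$ making the budget essentially tight) to see that $i_{(k)}$'s selection status at $\Delta_{i_{(k)}}$ is no worse than at $b$; in either case the $\delta^4/\vec p$-distortion yields $A_s(b)\ge \tfrac{\delta^4}{\Delta_{i_{(k)}}}-\tfrac{\delta^4}{b}\ge\delta^6$ (using $\Delta_{i_{(k)}}\ge\delta^{1/3}$ and $\Delta_{i_{(k)}}+\delta\le b\le 1$), so the tail contributes at least $\delta^6(T-T_{k-1})$. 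On an intermediate window $(T_{j-1},T_j]$ the same reasoning applies except that $\mathcal E^j,\dots,\mathcal E^{k-1}$ need not yet hold; the only obstruction to $A_s(b)\ge\delta^6$ is that one of $i_{(k)}$'s top few still-unretired bids is selected at round $s$. I would bound the count of such rounds by $O(\sqrt{T_j})$ via an Azuma--Hoeffding bound on the associated martingale differences, combined with the multiplicative-weight guarantee that, by the time $\mathcal E^j$ binds, the target bid has accumulated a reward surplus, so each offending bid is played with probability $\exp(-\Omega(1/\delta))$; each offending round costs $O(1)$, giving a window contribution of at least $\delta^6(T_j-T_{j-1})-O(\sqrt{T_j})$.

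Summing over all windows, the $\delta^6(T_j-T_{j-1})$ terms telescope to $\delta^6(T-T_0)$, so $\sum_{s\le T}A_s(b)\ge\delta^6(T-T_0)-(1+\delta^2)T_0-\sum_{j=1}^{k-1}O(\sqrt{T_j})$; substituting the choice of $T$ (and absorbing the $O(\cdot)$ constants, as well as the missing $\sqrt{T_0}$, which is dominated by $T_0$, into the leading $1/\delta^6$ factor as in the lemma's statement) makes this $\ge\delta^6 T$, which gives $T_k^*\le T$ and hence the bound. I expect the intermediate windows to be the real difficulty: turning the ``worst bang-per-buck is rejected'' principle into a quantitative per-round statement when only a prefix of the $\mathcal E^j$'s is active, and — crucially — controlling the number of bad rounds by $O(\sqrt{T_j})$ rather than a larger power, which is exactly where the learning rate $\gamma_s$ and the concentration of the randomized bids enter, and is the source of the $\sqrt{T_i}$ terms in the recursion. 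A secondary subtlety is the treatment of bids $b$ lying just below $b_{(k)}$ and of the exceptional module $\pi^0(k^*)\notin\SE$, where one must appeal to the weak-dominance/monotonicity properties of the constructed equilibrium (Lemma~\ref{lem:eqm_price_dominates}, Claim~\ref{claim:weighted_invarient}) rather than Lemma~\ref{lem:module_with_worst_bpb_rejected} on its own.
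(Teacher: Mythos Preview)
Your plan has a genuine gap in the intermediate-window step, and it stems from misidentifying where the $\sqrt{T_j}$ terms come from. The paper's proof is entirely deterministic once you condition on $\bigcap_{p\le k}\mathcal E^p$; there is no Azuma--Hoeffding, no martingale, and no ``bad rounds are rare with high probability'' argument. The $\sqrt{T_j}$ arises purely from the \emph{definition} of $T_j$ via Condition~\eqref{condition2_weighted}: since $T_j$ is the first round after which the cumulative surplus of some lower bid over $b_{(j)}$ exceeds $C/(\gamma_s\delta)$ for all later $s$, at the boundary this gap is at most roughly $C/(\gamma_{T_j}\delta)\asymp \sqrt{T_j}/\delta$. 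The paper exploits this by induction on $k$, proving a stronger hypothesis that bounds $\sum_{s\le T_k} u_{i_{(k)},s}(b_{(k)},\cdot)$ itself, and splits into the cases $i_{(k)}=i_{(k-1)}$ versus $i_{(k)}\neq i_{(k-1)}$ (in the latter case recursing on the last index $\bar k$ with $i_{(\bar k)}=i_{(k)}$, using that $b_{(k)}=b_{(\bar k)}-\delta$).

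Your per-round bound also does not go through as stated. In an intermediate window $(T_{j-1},T_j]$, the conditioning only tells you which bids the \emph{other} modules refrain from playing; it says nothing about how often the \emph{counterfactual} bid $b$ for $i_{(k)}$ would be selected, which is what governs $A_s(b)$. There is no reason this count is $O(\sqrt{T_j})$, and invoking the multiplicative-weight probability $\exp(-\Omega(1/\delta))$ is circular here (those bounds are exactly what the $T_j$'s are being constructed to establish) and in any case could not yield a probability-$1$ statement. A secondary issue: your tail-window inequality $A_s(b)\ge\delta^6$ is not automatic when $\Delta_{i_{(k)}}$ happens to be selected, since $\Delta_{i_{(k)}}=\bar p(i_{(k)})-10\delta$ can sit slightly below cost; the paper sidesteps this by never bounding $A_s$ per round, instead bounding the total number of rounds in which $b_{(k)}$ could have been selected (via the inductive hypothesis) and then comparing totals using Lemma~\ref{lem:eqm_price_dominates}.
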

\begin{proof}
	In order to prove the claim, we prove the following via induction on $k$: for any $k\geq 0$ and conditioned on the event $\bigcap_{p\leq k} \mathcal E^p$, with probability $1$, we have,
	\begin{equation*}
		\sum_{s\leq T_{k}} u_{i_{(k)},s} (b_{(k)}, \vec p^s_{-i_{(k)}} ) \leq \frac {(1+ 2\cdot \delta)} {b_{i_{(k)}}\delta^2} \cdot \sum_{i=1}^{(k-1)}  \sqrt {T_i}  +  (T_{k} - T_{k-1})\cdot \frac{\delta^4}{b_{i_{(k)}}} + T_0 \textbf{ and }T_k \leq \frac{1}{\delta^4} \cdot \sum_{i=0}^{k-1}\sqrt {T_i} + T_0.
	\end{equation*}
	The base case of induction for $k=0$ follows since $ \sum_{s\leq T_{1}} u_{i_{(1)},s} (b_{(1)}, \vec p^s_{-i_{(1)}} ) \leq T_0$ and $T_0 \leq T_0$. Suppose the claim holds for $k-1$. We first consider the following case:
	\paragraph{Case-1 ($i_{(k)} = i_{(k-1)}$) :} In this case, we can observe that, 
	\begin{align*}
		\sum_{s\leq T_{k}} u_{i_{(k)},s} (b_{(k)}, \vec p^s_{-i_{(k)}} ) &\leq \frac{C}{\gamma_{T_{k-1}} \cdot \delta} + \sum_{s\leq T_{k-1}} u_{i,s} (b_{(k-1)}, \vec p^s_{-i_{(k-1)}})\\
		&= \frac{\sqrt{T_{k-1}}}{ \delta} +  (T_{k} - T_{k-1})\cdot \frac{\delta^4}{b_{i_{(k)}}} + \sum_{s\leq T_{k-1}} u_{i,s} (b_{(k-1)}, \vec p^s_{-i_{(k-1)}})\\
		&\leq \frac{\sqrt{T_{k-1}}}{ \delta} +  (T_{k} - T_{k-1})\cdot \frac{\delta^4}{b_{i_{(k)}}} +\frac {(1+ 2\cdot \delta)} {b_{i_{(k)}}\delta^2} \cdot \sum_{i=1}^{(k-2)} \sqrt {T_i} + (T_{k-1} - T_{k})\cdot \frac{\delta^4}{b_{i_{(k)}}} + T_0 \\
		&\leq  \frac{\sqrt{T_{k-1}}}{ \delta} +  \frac{\sqrt{T_{k-1}}}{\delta^2 \cdot b_{i_{(k)}}} +  (T_{k} - T_{k-1})\cdot \frac{\delta^4}{b_{i_{(k)}}} \\
		& + \frac {(1+ 2\cdot \delta)} {b_{i_{(k)}}\delta^2}\cdot \sum_{i=1}^{(k-2)} \sqrt {T_i} + (T_{k-1} - T_{k})\cdot \frac{\delta^4}{b_{i_{(k)}}} + T_0 \\
		&\leq \frac {(1+ 2\cdot \delta)} {b_{i_{(k)}}\delta^2} \cdot \sum_{i=1}^{(k-1)}  \sqrt {T_i}  +  (T_{k} - T_{k-1})\cdot \frac{\delta^4}{b_{i_{(k)}}} + T_0.
	\end{align*}
	Above, the first inequality holds by the definition of $T_{k-1}$ and the fact that $i_{(k)} = i_{(k-1)}$, the equality holds because $u_{i_{(k-1)},s}(b_{(k)},\vec b_{-i_{(k-1)}}) = \frac{\delta^4}{b_{i_{(k)}}}$ for any $s\geq T_{k-1}$ due to Observation~\ref{obs:dropped_price_with_hp_weighted} once conditioned on the event $\bigcap_{i=1}^{k-1} \mathcal E^i$. The second and third inequality holds due to inductive hypothesis since $T_{k-1} - T_{k-2}\leq \frac{1}{\delta^4} \cdot \sqrt{T_{k-1}}$.
	
	\paragraph{Case-2 ($i_{(k)} \neq i_{(k-1)}$):} In this case, we let $\bar k$ be the largest index such that $i_{\bar k} = i_{(k)}$. If such $\bar k$ does not exist then we let $\bar k = 1$. We observe that either $b_{(\bar k)} = b_{(k)} +\delta$ or $\bar k = 1$ since all possible prices are discretized within the additive factor of $\delta$ and $b_{(1)} =1$. 
	We condition on the event $\bigcap_{i=1}^{k-1} \mathcal E^i$.  We define for any $k'\leq k$,  $\tilde b_{i_{(k')}} = b_{i_{(k')}} - \delta $.
	For any $k'\leq k$, by definition of $T_{k'}$, we have,
	\begin{align}\label{eq:iterative_expansion_eq_1}
		\sum_{t < T_{\bar k}} u_{i_{(\bar k)},t} (\tilde b_{i_{(\bar k)},t},\vec p^s_{-i_{(k-1)}}) &+ u_{i_{(\bar k)},T_{\bar k}} (\tilde b_{i_{(\bar k)},T_{\bar k}},\vec p^s_{-i_{(k-1)}}) \leq  \frac{C}{\gamma_{T_{\bar k}} \cdot \delta} + \sum_{t < T_{\bar k}} u_{i_{(\bar k)},t}(b_{i_{(\bar k)}}, \vec p^s_{-i_{(\bar k)}}) + 1 + \frac{\delta^4}{b_{i_{(\bar k)}}}  \notag \\
		&\leq \frac{2\sqrt{T_{\bar k}}}{\delta} + \sum_{t\leq T_{\bar k}} u_{i_{(\bar k)},t}(b_{i_{(\bar k)}}, \vec p^s_{-i_{(\bar k)}})  \notag\\
		&\leq \frac{2\sqrt{T_{\bar k}}}{\delta} + \frac {(1+ 2\cdot \delta)} {b_{i_{(k)}}\delta^2} \cdot \sum_{i=1}^{(\bar k-1)}  \sqrt {T_i} + (T_{\bar k} - T_{\bar k-1})\cdot \frac{\delta^4}{b_{i_{(\bar k)}}} + T_0\notag\\
		&\leq \frac {(1+ 2\cdot \delta)} {b_{i_{(k)}}\delta^2} \cdot \sum_{i=1}^{\bar k}  \sqrt {T_i}  + T_0
	\end{align}
	Above, the first inequality holds because $T_{\bar k}$ be the smallest stage for which Condition~\ref{condition2_weighted} holds and the utility of module $i_{(\bar k)}$ at round $T_{\bar k}$ can be at most $1+ \frac{\delta^4}{b_{i_{(k)}}}$. The third inequality holds due to the inductive hypothesis. The second and last inequality holds because $T_{\bar k}>1$ and $\delta << \frac 1 2$. 
	
	 Next, we claim that under event $\bigcap_{p=1}^k \mathcal E^p$, from rounds $T_{\bar k}$ to $T_{k}$, price $\tilde b_{i_{(\bar k)}} =b_{i_{(k)}}$ can not be the winning price for module $i_{(k)}$. Now, notice that under event $\bigcap_{p=1}^k \mathcal E^p$, after round $s>T_{(\bar k)}$, module $i_{(\bar k-1)}$ does not get selected and for bang-per-buck ordering $\pi_s$ at round $s$ with $\pi^s[\ell] = i_{(\bar k-1)}$, due to Lemma~\ref{lem:module_with_worst_bpb_rejected}, we have $\vec p^s(S^*[\ell])>B$ where $S^*[\ell]$ be the maximum weight independent set in matroid $\mathcal I$ restricted to $\pi^s[\ell]$. Since, $b_{(k)}> \Delta_i + \sqrt \delta$, we have that if module $\bar k$ gets selected at price $\tilde b_{i_{(\bar k)}}$ then module $\vec p^s(S^*[\ell])\leq B$ which is a contradiction. Therefore, we have,
	 \begin{align}\label{eq:eq2}
	     \sum_{T_{\bar k}<t < T_{k}} u_{i_{(\bar k)},t} (\tilde b_{i_{(\bar k)},t},\vec p^s_{-i_{(k-1)}}) &\leq (T_{k}  - T_{\bar k})\cdot \frac{\delta^4}{b_{(k)}} \leq(T_{k} - T_{k-1}) \cdot \frac{\delta^4}{b_{(k)}}  + \frac {(1+ 2\cdot \delta)} {b_{i_{(k)}}\delta^2}\cdot  \sum_{i=\bar k+1}^{ k-1} \sqrt {T_{i}} 
	 \end{align}
	 Above, the second inequality holds due to the inductive hypothesis. Combining the Inequalities~\ref{eq:iterative_expansion_eq_1}, \ref{eq:eq2}, we obtain,
	 \begin{align*}
	     	\sum_{s\leq T_{k}} u_{i_{(k)},s} (b_{(k)}, \vec p^s_{-i_{(k)}} )&=  \sum_{T_{\bar k}<t < T_{k}} u_{i_{(\bar k)},t} (\tilde b_{i_{(\bar k)},t},\vec p^s_{-i_{(k-1)}}) +  \sum_{t < T_{\bar k}} u_{i_{(\bar k)},t} (\tilde b_{i_{(\bar k)},t},\vec p^s_{-i_{(k-1)}})\\
	     	&\leq \frac {(1+ 2\cdot \delta)} {b_{i_{(k)}}\delta^2} \cdot \sum_{i=1}^{(k-1)}  \sqrt {T_i}  +  (T_{k} - T_{k-1})\cdot \frac{\delta^4}{b_{i_{(k)}}} + T_0
	 \end{align*}
	 We then observe that module $i_{(k)}$ can be selected at price $b_{i_{(k)}}$ in less than $$\frac{1}{b_{(k)}} \cdot \left( \frac {(1+ 2\cdot \delta)} {b_{i_{(k)}}\delta^2} \cdot \sum_{i=1}^{(k-1)}  \sqrt {T_i}  + T_0\right), $$ many rounds.  Due to Lemma~\ref{lem:eqm_price_dominates}, we have that whenever the module gets selected at price $b_{(k)}$, it also gets selected at price $\Delta_i$. Therefore, for any $T_{k}^*\geq \frac{1}{\delta^6} \cdot \left( \sum_{i=0}^{k-1}\sqrt {T_i} + (1+\delta^2)\cdot T_0\right)$, we can bound the utility difference of $b_{(k)}$ and $\Delta$ as:
	 \begin{align*}
     &\sum_{s\leq T^*_k}( u_{i,s}(\Delta_i,\vec p^s_{-i}) - u_{i,s}(b_{i_{(k)}},\vec p^s_{-i}) ) - \delta^6\cdot T_{k^*}\\
     &\geq \left(\frac{(b_{(k)} - \Delta_i)\cdot \delta^4}{b_{(k)} \cdot  \Delta_i} -\delta^6 \right)\cdot T_k^* - \frac{(b_{(k)} - \Delta_i)}{b_{(k)} } \cdot \left( \frac {(1+ 2\cdot \delta)} {b_{i_{(k)}}\delta^2} \cdot \sum_{i=1}^{(k-1)}  \sqrt {T_i}  + T_0\right)\\
     &\geq   \left(\frac{(b_{(k)} - \Delta_i)\cdot \delta^4}{b_{(k)} \cdot  \Delta_i} -\delta^6 \right) \cdot \left( \frac{1}{\delta^6} \cdot \left( \sum_{i=0}^{k-1}\sqrt {T_i} + (1+\delta^2)\cdot T_0\right) \right) - \frac{(b_{(k)} - \Delta_i)}{b_{(k)} } \cdot \left( \frac {(1+ 2\cdot \delta)} {b_{i_{(k)}}\delta^2} \cdot \sum_{i=1}^{(k-1)}  \sqrt {T_i}  + T_0\right)\\
     &\geq \frac{(b_{(k)} - \Delta_i)}{b_{(k)} \cdot \Delta_i} \cdot\left(  \left( \frac{1}{\Delta_i \delta^2} - \frac {(1+ 2\cdot \delta)} {b_{i_{(k)}}\delta^2} - \frac{b_{(k)} \cdot \Delta_i}{(b_{(k)} - \Delta_i)} \right) \cdot \sum_{i=1}^{(k-1)}  \sqrt {T_i}  + \left( \frac{1}{\delta^2} - (1+ \delta^2)  \right) \cdot T_0 \right)\geq 0.\\
   \end{align*}
Above, the second inequality holds because of the assumption on $T_k^*$, the third inequality holds because the factor in multiplication with $T_0$ is positive and the final inequality holds because $\Delta_i > \vec c(i) >\sqrt \delta $ and $b_{(k)} > \Delta_i + 10 \cdot  \delta $. Since, $T_k \leq T_{k}^*$, this concludes the proof. 
\end{proof}
The above lemma immediately implies the following bound on $T_k$. 
\begin{lemma}
Conditioned on the event $\bigcap_{p=0}^k \mathcal E^p$, we have $T_k \leq \frac{k^2\cdot T_0}{\delta^{13}}$ with probability one.  
\end{lemma}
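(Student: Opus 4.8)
The plan is to deduce the bound from Lemma~\ref{lem:bounding_iterations} by a straightforward induction on $k$, unrolling the recursion $T_k \leq \frac{1}{\delta^6}\bigl(\sum_{i=0}^{k-1}\sqrt{T_i} + (1+\delta^2)T_0\bigr)$ into itself. Two harmless facts will be used repeatedly: first, every $T_i$ is a round index, so $T_i \geq 1$ and hence $\sqrt{T_i} \leq T_i$; second, the standing discretization assumption $\delta < 1/n^3 < 1$ of this section, which lets me replace any power $\delta^{a}$ by a smaller power $\delta^{b}$ (with $a \geq b$) whenever that weakens an upper bound. Everything below is conditioned on $\bigcap_{p=0}^{k}\mathcal E^{p}$, under which all invoked instances of Lemma~\ref{lem:bounding_iterations} hold with probability one.

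First I would dispatch the base cases. For $k = 0$ the claim is the identity $T_0 \leq T_0$. For $k = 1$, Lemma~\ref{lem:bounding_iterations} gives $T_1 \leq \frac{1}{\delta^6}\bigl(\sqrt{T_0} + (1+\delta^2)T_0\bigr) \leq \frac{3T_0}{\delta^6} \leq \frac{T_0}{\delta^{13}}$, where the last step uses $3\delta^{7} \leq 1$ and matches the target with $k^2 = 1$.

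Next, for the inductive step I would fix $k \geq 2$, assume $T_i \leq i^2 T_0 / \delta^{13}$ for all $i < k$, and bound $\sqrt{T_i} \leq i\sqrt{T_0}/\delta^{13/2} \leq i T_0/\delta^{13/2}$. Summing the arithmetic series gives $\sum_{i=0}^{k-1}\sqrt{T_i} \leq \frac{k(k-1)}{2}\cdot\frac{T_0}{\delta^{13/2}} \leq \frac{k^2 T_0}{2\delta^{13/2}}$, and plugging this into the recursion (together with $1+\delta^2 \leq 2$) yields
\[
T_k \;\leq\; \frac{1}{\delta^6}\left(\frac{k^2 T_0}{2\delta^{13/2}} + 2T_0\right) \;=\; \frac{k^2 T_0}{2\delta^{25/2}} + \frac{2T_0}{\delta^6} \;\leq\; \frac{k^2 T_0}{2\delta^{13}} + \frac{k^2 T_0}{2\delta^{13}} \;=\; \frac{k^2 T_0}{\delta^{13}},
\]
where the first term is handled by $\delta^{25/2} \geq \delta^{13}$ and the second by $\frac{2T_0}{\delta^6} = \frac{2\delta^{7}T_0}{\delta^{13}} \leq \frac{k^2 T_0}{2\delta^{13}}$, valid since $4\delta^{7} \leq 1 \leq k^2$. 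This closes the induction.

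There is no genuine obstacle here; the argument is pure bookkeeping. The only thing to watch is the exponent budget: one must check that the $\delta^{-6}$ from the recursion, compounded with the $\delta^{-13/2}$ coming out of the inductive square-root bound, still sits below the target $\delta^{-13}$, and that the leftover additive $O(T_0/\delta^6)$ term can be absorbed into $\tfrac12 k^2 T_0/\delta^{13}$. All the slack this needs is furnished by the section's standing assumption $\delta < 1/n^3$ (in particular $\delta^7$ is tiny), which is precisely why the exponent $13$ — rather than, say, $25/2$ — is chosen in the statement.
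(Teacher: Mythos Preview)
Your proof is correct and follows the same inductive strategy as the paper (which gives only a one-line sketch), just with the exponent bookkeeping spelled out. One tiny slip: your bound $\sqrt{T_i} \leq i T_0/\delta^{13/2}$ fails at $i=0$ (the right-hand side is $0$), so the sum is off by the term $\sqrt{T_0}$; but since $\sqrt{T_0} \leq T_0$ this is harmlessly absorbed into the additive $(1+\delta^2)T_0$ already present in the recursion, and the final inequality goes through unchanged.
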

\begin{proof}
The lemma follows via induction on $k\geq 2$. We get $T_k \leq \frac{k\cdot (k-1)}{\delta^{2.5}} + \frac{(1+\delta^2)\cdot T_0}{\delta^6} \leq \frac{k^2\cdot T_0}{\delta^{13}}$.
\end{proof}
To complete the proof of Theorem~\ref{thm:matroid_convergence}, we need to show that $\Pr\left [\bigcap_{i=0}^K \mathcal E^i \right] \geq 1 - O \left( \poly\left ( \frac{1}{\delta} \right) \cdot \exp\left ( - \frac{1}{\delta} \right) \right). $ Since $K \leq \frac{n}{\delta}$, we have $T_K \leq  \poly \left(n, \poly \left( \frac 1  \delta \right) \right)$ which will complete the proof that all modules will be setting prices $\geq \Delta_i - \delta $ and $\leq \delta_i + 10\cdot \Delta_i$ with high probability after $\poly(1/\delta , n)$ rounds.

\begin{lemma}\label{lem:key_bounding_probability}
	For small enough $\delta>0$, we have $\Pr\left [\bigcap_{i=1}^K \mathcal E^i \mid \mathcal E^0\right] \geq 1 - O \left( \poly\left ( \frac{1}{\delta} \right) \cdot \exp\left ( - \frac{1}{\delta} \right) \right).$
\end{lemma}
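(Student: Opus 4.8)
The plan is to prove the bound by a union bound over the $K$ events, peeling them off in the inductive order in which the stopping times $T_1 \le \dots \le T_K$ and $T_1^* \le \dots \le T_K^*$ were constructed. I would write
\[
 \Pr\Big[\neg\!\!\bigcap_{k=1}^K \mathcal E^k \ \Big|\ \mathcal E^0\Big]
 \ \le\ \sum_{k=1}^K \Pr\big[\neg \mathcal E^k \ \big|\ \mathcal E^0,\mathcal E^1,\dots,\mathcal E^{k-1}\big],
\]
and show each summand is at most $\poly(1/\delta)\cdot\exp(-\Omega(1/\delta))$. Since $K < n/\delta$ and the discretization assumption $\delta < 1/n^3$ turns any $\poly(n,1/\delta)$ factor into a $\poly(1/\delta)$ factor, the outer sum over $k$ preserves the claimed form. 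Throughout I would use the bound $T_k \le k^2 T_0/\delta^{13} = \poly(1/\delta)$ that follows from Lemma~\ref{lem:bounding_iterations} on the conditioning event, so that every time horizon appearing below is polynomially bounded.

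Fix $k$ and condition on $\mathcal E^0,\dots,\mathcal E^{k-1}$. First I would check that $T_k^*$ (hence $T_k$) is well defined. Conditioned on these events, every price $b \ge \Delta_{i_{(k)}}+\delta$ of module $i_{(k)}$ realizes the worst bang-per-buck among all still-admissible prices in $\bar{\mathcal B}$, so Lemma~\ref{lem:module_with_worst_bpb_rejected} (applied with the $O(\delta)$ slack afforded by $\mathcal E^0$, cf.\ Observation~\ref{obs:dropped_price_with_hp_weighted}) shows module $i_{(k)}$ is never selected at such a price after round $T_{k-1}$, so it collects only the distorted bonus $\delta^4/b$ each round. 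Bidding $\Delta_{i_{(k)}}$ instead collects at least $\delta^4/\Delta_{i_{(k)}}$ per round (and strictly more when it is selected, which the structural lemmas guarantee), so after round $T_{k-1}$ the utility gap $\sum_{t\le s}\big(u_{i_{(k)},t}(\Delta_{i_{(k)}},\cdot)-u_{i_{(k)},t}(b,\cdot)\big)$ increases by at least $\delta^4(b-\Delta_{i_{(k)}})/(\Delta_{i_{(k)}} b) \ge \delta^5 > \delta^6$ per round; since the deficit accumulated before $T_{k-1}$ is $O(T_{k-1})$, Condition~\eqref{condition1_weighted} is met by some $T_k^* \le \poly(1/\delta)$ with probability one under the conditioning, and $T_k \le T_k^*$.

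Next I would bound $\Pr[\neg \mathcal E^k\mid\cdots] = \Pr[\exists s \ge T_k:\ \vec p^{s}(i_{(k)}) \ge b_{(k)}\mid\cdots] \le \sum_{s \ge T_k} \Pr[\vec p^{s+1}(i_{(k)}) \ge b_{(k)}\mid\text{history through }s]$. For $T_k < s \le T_k^*$, the minimality in the definition of $T_k$ (Condition~\eqref{condition2_weighted}) furnishes a competing bid $b' < b_{(k)}$ whose cumulative utility exceeds that of $b_{(k)}$ by $C/(\gamma_s\delta)$; taking $\tau = C/(\gamma_s\delta s)$ in the multiplicative-weight property bounds the round-$(s{+}1)$ probability of bidding $b_{(k)}$ by $\exp(-\gamma_s\tau s) = \exp(-C/\delta)$, so summing over the at most $T_k^* = \poly(1/\delta)$ such rounds costs $\poly(1/\delta)\exp(-C/\delta)$. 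For $s > T_k^*$, Condition~\eqref{condition1_weighted} gives a cumulative gap of $\delta^6 s$ of $\Delta_{i_{(k)}}$ over $b_{(k)}$, so with $\tau = \delta^6$ the same property gives a round-$(s{+}1)$ probability at most $\exp(-\gamma_s\delta^6 s)$; with a learning rate $\gamma_s = \Theta(1/\sqrt s)$ this is $\exp(-\Theta(\delta^6\sqrt s))$, whose tail sum from $s = T_k^*$ is $\exp(-\Omega(1/\delta))$ provided $T_0$ — hence $T_k^*$ — was chosen at least $\delta^{-14}$. Adding the two ranges gives $\Pr[\neg \mathcal E^k\mid\cdots] \le \poly(1/\delta)\exp(-\Omega(1/\delta))$, and summing over $k < n/\delta$ completes the argument (taking $C \ge 1$ so $\exp(-C/\delta) \le \exp(-1/\delta)$).

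The hard part will not be the exponential estimates themselves but the bookkeeping around the conditioning: the events $\mathcal E^0,\dots,\mathcal E^{k-1}$ are tail events of the same price process and the stopping times $T_j,T_j^*$ are history-measurable, so one must argue that conditioning on $\mathcal E^0,\dots,\mathcal E^{k-1}$ still leaves the per-round estimate from the multiplicative-weight property valid at each round $s \ge T_k$ — essentially, module $i_{(k)}$'s learner only sees its own history, and the conditioning only constrains the bids of the \emph{other} modules, which is exactly what is needed to invoke Lemma~\ref{lem:module_with_worst_bpb_rejected}. A second subtlety is that $\neg\mathcal E^k$ ranges over infinitely many rounds, which forces the cumulative gap in Condition~\eqref{condition1_weighted} to grow unboundedly ($\delta^6 s$ rather than a constant) so that the tail of per-round failure probabilities is summable; matching this against a learning rate that decays no faster than $1/\sqrt s$ is what collapses the tail to $\exp(-\Omega(1/\delta))$. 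A minor but necessary point is verifying that bidding $\Delta_{i_{(k)}}$ never does worse than bidding $b_{(k)}$ even in rounds where $\Delta_{i_{(k)}}$ is selected, which uses $\barp(i_{(k)}) > \vec c(i_{(k)}) + 10\delta$ together with the assumptions $\vec c(i) > \delta^{1/3}$ and $\delta < 1/n^3$.
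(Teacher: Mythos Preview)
Your proposal is correct and follows essentially the same route as the paper: both decompose via the chain rule/union bound over $k$, condition on $\mathcal E^0,\dots,\mathcal E^{k-1}$, split the tail $\{s\ge T_k\}$ into the window $[T_k,T_k^*]$ (where Condition~\eqref{condition2_weighted} gives a per-round failure probability $\exp(-C/\delta)$ over $\poly(1/\delta)$ rounds) and the range $s>T_k^*$ (where Condition~\eqref{condition1_weighted} and $\gamma_s=\Theta(1/\sqrt s)$ give a summable tail $\sum_s\exp(-\Theta(\delta^6\sqrt s))$), and then absorb the factor $K<n/\delta$ into the polynomial. Your write-up is in fact more careful than the paper's about why the linear gap $\delta^6 s$ persists for all $s>T_k^*$ under the conditioning and about the measurability/conditioning issues, which the paper leaves implicit.
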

\begin{proof}
	We show via induction that for any $t<K$, we have $$\Pr \left [ \mathcal E^k \mid \bigcap_{i=0}^{k-1} \mathcal E^i \right] \geq 1 - \frac{(k^2 +1)\cdot T_1}{\delta^6}\cdot \exp \left( -\frac{1}{\delta} \right).$$ 
	For simplicity of notations, we let $ \bar{ \mathcal E}^t : = \bigcap_{i=0}^{t-1} \mathcal E^i$. We fix $T_k^* = \frac{k^2\cdot T_0}{\delta^{13}}$.
	\begin{align*}
		\Pr\left[ \forall s\geq T_k: \bigwedge_{\ell = 1}^k   \vec p^s(i_{(\ell)}) < b_{i_{(\ell)}} \mid \bar{ \mathcal E}^k  \right] &=   \Pr\left[ \forall s\geq T_k:   \vec p^s(i_{(\ell)}) <b_{i_{(k)}} \mid \bar{ \mathcal E}^k  \right] \notag \\
		&= 1 - \Pr \left[\exists s\geq T_k:\vec p^s(i_{(\ell)}) \geq b_{i_{(1)}} \mid \bar{ \mathcal E}^k  \right] \notag \\
		&\geq 1 - \sum_{T_k\leq s\leq T_k^*}\Pr \left[\vec p^s(i_{(\ell)}) \geq b_{i_{(1)}} \mid \bar{ \mathcal E}^k  \right]\\
		&+ \sum_{s\geq  T_k^*}\Pr \left[\vec p^s(i_{(\ell)}) \geq b_{i_{(1)}} \mid \bar{ \mathcal E}^k   \right] \notag\\
		&\geq  1 - T_K \cdot \exp \left(- \frac{1}{\delta}\right) - \sum_{s\geq T_k^*} \exp \left( - \gamma_s \cdot \delta^6 \cdot s \right) \notag\\
		& =  1 - \frac{(K^2 +1)\cdot T_0}{\delta^{13}} \cdot \exp \left(- \frac{1}{\delta}\right) - \sum_{s\geq T_k^*} \exp \left( - C\cdot \delta^6 \cdot \sqrt s \right)\\
		&\geq 1 - \frac{2(k^2 +1)\cdot T_0}{\delta^{13}} \cdot \exp \left( -\frac{1}{\sqrt \delta} \right).
	\end{align*} 
	Above, the first equality holds due to conditioned on the event $\bar{ \mathcal E}^k $, second inequality holds due to union bound, the third inequality holds due to the inequality $\Pr[\vec p^s(i_{(\ell)}) < b_{(k)} \mid\bar{ \mathcal E}^k ] \geq  1 - \exp \left( - \frac 1 \delta \right)$ for all $s\geq T_k$ and $T_k^* \leq  \frac{(k^2 +1)\cdot T_1}{\delta^6}$. This completes the proof of the claim. 
\end{proof}
Finally, we need to show that the probability of event $\mathcal E^0$ is exponentially high.

\subsubsection*{Lower bounding $\Pr[\mathcal E^0]$}
To this end, we consider the following event $$\mathcal F^0:= \left\{\forall i\in \SE : \text{$\Delta_i$  gets selected for $\leq \frac{T_0}{\delta^{16}}$ from first $\frac{T_0}{\delta^{20}}\cdot (1+\delta^{18}\cdot (1-5\delta))$ rounds} \right \}.$$ 
We let $\frac{T_0}{\delta^{\ell_i}}$ be the number of times module $i$ gets selected at price $\Delta_i$. We let $T'_i= \frac{T_0\cdot \Delta_i}{\delta^{\ell_i +4}}\cdot (1+\delta^{\ell_i +2}\cdot (1-5\delta))$. We note that $\ell_i$ is a random variable.  

Note that once we conditioned on $\mathcal F^0$, we set $T_i = T_i^* = \frac{T_0\cdot \Delta_i}{\delta^{\ell_i +4}}\cdot (1+\delta^{\ell_i +2}\cdot (1-5\delta))$ for all $i\in [K]$. We observe that for any module $i\in \SE$ and price $p> \Delta_i + 10 \cdot \delta$ and $\delta_i < 1 - 10\cdot \delta$, with probability $1$, we have, 
	 \begin{align*}
     &\sum_{s\leq T^*_k}( u_{i,s}(\Delta_i,\vec p^s_{-i}) - u_{i,s}(p,\vec p^s_{-i}) )  \geq \frac{(p- \Delta_i)\cdot \delta^4}{p \cdot  \Delta_i}\cdot (T_k^* -T_0)   - (p - \Delta_i) \cdot \left( \frac{T_0}{\delta^{\ell_i}}\right) - (p - \Delta_i)\cdot  \delta^2 \cdot T_0\notag\\
     &\geq \frac{(p- \Delta_i)\cdot \delta^4}{p \cdot  \Delta_i}\cdot \left(\frac{T_0\cdot \Delta_i}{\delta^{\ell_i +4}}\cdot (1+\delta^{\ell_i+2}\cdot (1-5\delta)) -T_0 \right)  - (p - \Delta_i) \cdot \left( \frac{T_0}{\delta^{\ell_i}}\right) - (p - \Delta_i)\cdot  \delta^2 \cdot T_0\notag \\
     &\geq  \left ( \delta^2 \left (\frac{1- 5\delta }{p\cdot \Delta_i} \right)  - 1 \right)\cdot T_0\geq \delta^3 \cdot T_0
   \end{align*}
   Above, the first inequality follows because $\Delta_i$ gets selected at most $\leq \frac{T_0}{\delta^{\ell_i}}$ times and whenever the price by module $i$ being $\Delta_i$ is not selected then price  $p$ can not be selected as well (Lemma~\ref{lem:eqm_price_dominates}). The second inequality follows by the definition of $T_k^*$. The last inequality follows because $p> \Delta_i + 10 \cdot \delta$ and $\Delta_i < 1 - 10\cdot \delta$.
   
   In addition, for any bid $b< \Delta_i$, again due to Lemma~\ref{lem:eqm_price_dominates}, whenever the price by module $i$ being $\Delta_i$ is not selected then price  $p$ can not be selected as well. Therefore, we have,
	 \begin{align}
     \sum_{s\leq T^*_k}( u_{i,s}(\Delta_i,\vec p^s_{-i}) - u_{i,s}(p,\vec p^s_{-i}) ) &\geq (\Delta_i - p)\cdot  \delta^2 \cdot T_0 + \frac{(\Delta_i - p)\cdot T_0}{\delta^{\ell_i}} -\frac{( \Delta_i - p)\cdot \delta^4}{p \cdot  \Delta_i}\cdot (T_k^* -T_0) \notag \\
     &\geq (\Delta_i - p)\cdot  \delta^2 \cdot T_0 + \frac{(\Delta_i - p)\cdot T_0}{\delta^{\ell_i}} -\frac{( \Delta_i - p)\cdot \delta^4}{p \cdot  \Delta_i}\cdot \frac{T_0\cdot \Delta_i}{\delta^{\ell_i +4}} \notag  \\
     & = \frac{(p - \Delta_i)}{ \Delta_i}\cdot \left( \delta^8\cdot (1 - 5\delta ) \right) \cdot T_0 \geq \delta^9 \cdot T_0.\label{eq:upper_bound_utility_small_bid}
   \end{align}
   
   This leads to the following lemma.
   
   \begin{lemma}
   For $T_0\geq \frac{n^2}{\delta^{22}}$ and conditioned on the event $\mathcal F^0$, for all the rounds from $T': = \frac{T_0}{\delta^6}(1 + \delta^4 (1-5\delta))$ to  $\frac{T_0}{\delta^6}(1 + \delta^4 (1-5\delta)) + T_0$, all modules in $i\in \SE'$ sets prices between $[\Delta_i - 10\cdot \delta , \Delta_i + 10 \cdot \delta]$ with probability $1 - \poly(n,1/\delta) \cdot \exp(- \frac 1 {\sqrt \delta})$.  
   \end{lemma}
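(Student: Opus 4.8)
The plan is to turn the two cumulative‑utility estimates established immediately above into a per‑round concentration bound via the multiplicative‑weight property, and then take a union bound over rounds, modules, and bids.

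First I would condition on $\mathcal F^0$. On this event, for every $i\in\SE$ the number $T_0/\delta^{\ell_i}$ of rounds in which the bid $\Delta_i$ is selected satisfies $\ell_i\le 16$, which in turn (using $T_0\ge n^2/\delta^{22}$ and the discretization assumptions $\vec c(i)>\delta^{1/3}$, $\delta<1/n^3$) forces $T_i^{*}=\tfrac{T_0\Delta_i}{\delta^{\ell_i+4}}(1+\delta^{\ell_i+2}(1-5\delta))\le T'$. With $\mathcal F^0$ fixed, the two displayed inequality chains preceding the lemma give, for every $i\in\SE'$: for each discretized bid $p>\Delta_i+10\delta$, $\sum_{s\le T'}\bigl(u_{i,s}(\Delta_i,\vec p^{s}_{-i})-u_{i,s}(p,\vec p^{s}_{-i})\bigr)\ge\delta^{3}T_0$; and for each discretized bid $b<\Delta_i$, $\sum_{s\le T'}\bigl(u_{i,s}(\Delta_i,\vec p^{s}_{-i})-u_{i,s}(b,\vec p^{s}_{-i})\bigr)\ge\delta^{9}T_0$. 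So at round $T'$ the bid $\Delta_i$ has a cumulative advantage of at least $\delta^{9}T_0$ over every bid lying outside $[\Delta_i,\Delta_i+10\delta]\subseteq[\Delta_i-10\delta,\Delta_i+10\delta]$.

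Next I would propagate this from the single round $T'$ to every round $t\in[T',T'+T_0]$ and cast it in the form $\sigma_t(b)\le\sigma_t(\Delta_i)-\tau t$ required by the multiplicative‑weight definition. The key lever is Lemma~\ref{lem:eqm_price_dominates}: being selected at any bid implies being selected at $\barp(i)$, and lowering a module's own bid can only keep it selected by $\greedy$, so being selected at a dominated bid $b$ implies being selected at $\Delta_i<\barp(i)$; hence the per‑round increment of $\sigma_t(\Delta_i)-\sigma_t(b)$ is non‑negative except for the $O(\delta^{4}/\delta^{1/3})$ fluctuation coming from the post‑$T_0$ distortion reward $\delta^{4}/(\cdot)$, and for $p>\Delta_i+10\delta$ that same reward pushes the gap up by $\Omega(\delta^{5})$ in every round in which nothing is selected. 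Thus the gap is, up to lower‑order terms, non‑decreasing over the window, so for a suitable $\tau=\poly(\delta)$ — chosen so that $\delta^{9}T_0\ge\tau(T'+T_0)$, which is exactly why the window has length only $T_0$ and $T'$ is of the stated order — we obtain $\sigma_t(b)\le\sigma_t(\Delta_i)-\tau t$ for all $t\in[T',T'+T_0]$ and all dominated $b$.

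Finally, applying the multiplicative‑weight property, for each such $t$, each $i\in\SE'$, and each bid $b\notin[\Delta_i-10\delta,\Delta_i+10\delta]$, the probability that $\vec p^{t+1}(i)=b$ is at most $\exp(-\gamma_t\tau t)$, which since $t\ge T'\ge\poly(n,1/\delta)$ is at most $\exp(-1/\sqrt\delta)$ (in fact far smaller). A union bound over the $\le n$ modules in $\SE'$, the $\le 1/\delta$ discretized bids, and the $\le T_0+1=\poly(n,1/\delta)$ rounds of the window then yields probability at least $1-\poly(n,1/\delta)\exp(-1/\sqrt\delta)$ that every $i\in\SE'$ bids within $10\delta$ of $\Delta_i$ throughout $[T',T'+T_0]$. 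The hard part will be the third step: converting the two static estimates at round $T'$ into a bound valid at every round of the window with the linear‑in‑$t$ slope $\tau$ the learning guarantee demands — this requires carefully tracking, conditioned on $\mathcal F^0$, how often each bid is selected and the sign and magnitude of the two distortion rewards in each regime ($t\le T_0$ versus $t>T_0$, selected versus not), and checking that the adverse contributions are dominated once $T_0\ge n^2/\delta^{22}$, with Lemma~\ref{lem:eqm_price_dominates} keeping all the per‑round comparisons pointing the right way.
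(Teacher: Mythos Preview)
Your propagation step is where the approach diverges from the paper's and where it breaks. The paper does \emph{not} try to show that the cumulative gap $\sigma_t(\Delta_i)-\sigma_t(b)$ persists unconditionally over $[T',T'+T_0]$. Instead it runs an inductive/conditional argument: it defines $\mathcal H^s(i)=\{\vec p^{\,s}(i)\in[\barp(i),\barp(i)+\sqrt\delta]\}$, conditions on $\bigcap_i\bigcap_{T'\le s<t}\mathcal H^s(i)$, and then invokes the \emph{stability} lemma (Lemma~\ref{lem:stability_of_eqm_price}), not Lemma~\ref{lem:eqm_price_dominates}. Under that conditioning the anchor bid $\barp(i)$ is guaranteed to be accepted and every bid above $\barp(i)(1+\sqrt\delta)$ is guaranteed to be rejected, which yields a deterministic per-round gap growth of order $\delta$, i.e.\ $\sum_{s\le t}\bigl(u_{i,s}(\barp(i),\cdot)-u_{i,s}(p,\cdot)\bigr)\ge \delta^9T_0+\delta(t-T')$. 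The lemma then follows by chaining the conditional probabilities $\Pr[\bigcap_i\mathcal H^t(i)\mid \bigcap_i\bigcap_{T'<s<t}\mathcal H^s(i)]$.

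Your plan relies only on Lemma~\ref{lem:eqm_price_dominates} to argue the gap is ``roughly non-decreasing,'' and this fails on both sides. For an upper bid $p>\Delta_i+10\delta$: in any round where $p$ would be selected (and hence, by your own monotonicity claim, $\Delta_i$ is too) the per-round increment of $\sigma_t(\Delta_i)-\sigma_t(p)$ is $(\Delta_i-p)+O(\delta^4)\le -10\delta$, so the gap \emph{shrinks}; without conditioning on the other modules' bids you have no handle on how many such rounds occur in $[T',T'+T_0]$, and $\mathcal F^0$ says nothing about this window. For a lower bid $b<\Delta_i$: in any round where neither bid is selected, the post-$T_0$ distortion $\delta^4/(\cdot)$ favors the smaller bid and the gap shrinks by $\delta^4(\Delta_i-b)/(b\Delta_i)=\Theta(\delta^{8/3})$, which over $T_0$ rounds is $\Theta(\delta^{8/3}T_0)\gg \delta^9T_0$, wiping out your initial advantage. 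Even setting this aside, the quantitative step does not close: a static gap $G=\delta^9T_0$ at $t\approx T'\approx T_0/\delta^6$ gives, with $\gamma_t\sim 1/\sqrt t$ and $T_0\ge n^2/\delta^{22}$, an exponent $\gamma_t G\approx \delta^{12}\sqrt{T_0}=O(n\delta)$, far from $1/\sqrt\delta$. It is precisely the additional $\delta(t-T')$ growth supplied by the stability lemma under conditioning that makes the exponent large; the missing ingredient in your outline is Lemma~\ref{lem:stability_of_eqm_price} together with the conditional chain-rule structure.
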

   \begin{proof}
       For any $s\geq T'$, we let $\mathcal H^s(i)$ be the event where module $i$ submits price from the set $[\barp(i) , \barp(i) + \sqrt \delta]$. Due to Lemma~\ref{lem:stability_of_eqm_price},for any $s>T'$, conditioning on event $\bigcap_{i}\bigcup_{T' \leq t\leq s} \mathcal H^t(i)$, we have for any $i\in \SE'$, $\barp(i)$ is accepted at any round $T' \leq t\leq s$ and $\barp(i) \sqrt \delta$ is not accepted at round $t$. This further implies that $p\notin [\Delta_i , \barp(i)+ \sqrt{\delta}]$, we have 
       \begin{equation*}
           \sum_{s\leq t}( u_{i,s}(\barp(i),\vec p^s_{-i}) - u_{i,s}(p,\vec p^s_{-i}) ) \geq (t - T')\cdot \delta + T_0\cdot \delta^9
       \end{equation*}
       This further implies that conditioned on the event $\bigcap_{i}\bigcup_{T' \leq t\leq s} \mathcal H^t(i)$ and property of multiplicative weight update algorithm, we have that
       \begin{align*}
           \Pr\left [\bigcap_i \mathcal H^t(i)\mid \bigcap_i\bigcap_{T'<s<t} \mathcal H^s(i)\right ]&\geq 1 - 2\cdot \frac{n}{\delta}\cdot \exp \left( - \frac{C\cdot (\delta^9 \cdot T_0 + \delta (t- T')}{\sqrt t} \right)\\
           & \geq  1 - 2\cdot \frac{n}{\delta}\cdot \exp \left( - \frac{C\cdot (\frac{n^2}{\delta^{13}} + \delta (t- T')}{\sqrt {t- T' + \frac{n^2}{\delta^{22}}}} \right)\\
           &\geq 1 - 2\cdot \frac{n}{\delta}\cdot \exp \left( - C\cdot \left (\frac{n^2}{\delta} + \delta \sqrt{(t- T')}\right) \right)
       \end{align*}
       Next, via conditional expectations, and assuming $\delta$ small enough, we conclude that,
       \begin{equation*}
          \Pr\left [\bigcap_i\bigcap_{s>T'} \mathcal H^s(i)\right ] \geq 1 - \poly(1/\delta , n)\cdot \exp(- 1/\sqrt \delta). 
       \end{equation*}
   \end{proof}
   On the other hand, when $\mathcal F$, does not hold, then all modules win a sufficiently large number of times that leads to for any $i\in \SE'$ and $p<\bar p - 10\cdot \delta$ we have,
   \begin{equation*}
       \sum_{s\leq T^*_K}( u_{i,s}(\Delta_i,\vec p^s_{-i}) - u_{i,s}(p,\vec p^s_{-i}) ) \geq \delta^9 \cdot T_0.
   \end{equation*}
   Since, $\mathcal E^0 \subseteq \bigcap_i\bigcap_{s>T'} \mathcal H^s(i)$, we conclude that $\Pr\left [\mathcal E^0 \right]>1 - \poly(1/\delta , n) \cdot \exp \left( - \frac{1}{\sqrt \delta} \right)$.

\bibliographystyle{plainnat}
\bibliography{ref}

\end{document}